\newtheorem{theorem}{Theorem}
\newtheorem{lemma}{Lemma}
\newtheorem{definition}{Definition}
\newcounter{note}[section]
\newcommand{\poly}{\text{poly}}
\newcommand{\mcA}{\mathcal{A}}
\newcommand{\mcC}{\mathcal{C}}
\newcommand{\mcG}{\mathcal{G}}
\newcommand{\mcH}{\mathcal{H}}
\newcommand{\mcM}{\mathcal{M}}
\newcommand{\mcP}{\mathcal{P}}
\newcommand{\mcT}{\mathcal{T}}
\newcommand{\TBFS}{\ensuremath{T_\text{BFS}}}
\newcommand{\highV}{\text{high}}
\newcommand{\lowV}{\text{low}}
\newcommand{\intV}{\text{internal}}
\newcommand{\lca}{\text{lca}}
\newcommand{\mcPHJ}{\ensuremath{H_{\text{HJ}}}}
\title{$O(1)$ Steiner Point Removal in Series-Parallel Graphs}
\author{
    \begin{tabular}[t]{c@{\extracolsep{2em}}c} 
        D Ellis Hershkowitz & \qquad Jason Li \\
        Carnegie Mellon University & \qquad Carnegie Mellon University \\
        \small \texttt{dhershko@cs.cmu.edu} & \qquad \small \texttt{jmli@andrew.cmu.edu}
    \end{tabular}
}
\date{}
\begin{document}

\maketitle\thispagestyle{empty}

\begin{abstract}
    We study how to vertex-sparsify a graph while preserving both the graph's metric and structure. Specifically, we study the Steiner point removal (SPR) problem where we are given a weighted graph $G=(V,E,w)$ and terminal set $V' \subseteq V$ and must compute a weighted minor $G'=(V',E', w')$ of $G$ which approximates $G$'s metric on $V'$. A major open question in the area of metric embeddings is the existence of $O(1)$ multiplicative distortion SPR solutions for every (non-trivial) minor-closed family of graphs. To this end prior work has studied SPR on trees, cactus and outerplanar graphs and showed that in these graphs such a minor exists with $O(1)$ distortion.
    
    We give $O(1)$ distortion SPR solutions for series-parallel graphs, extending the frontier of this line of work. The main engine of our approach is a new metric decomposition for series-parallel graphs which we call a hammock decomposition. Roughly, a hammock decomposition is a forest-like structure that preserves certain critical parts of the metric induced by a series-parallel graph.
\end{abstract}

\newpage

{\hypersetup{linkcolor=Blue}
    \tableofcontents
}\thispagestyle{empty}\newpage

\newpage\setcounter{page}{1}
\section{Introduction}
    
    Graph sparsification and metric embeddings aim to produce compact representations of graphs that approximately preserve desirable properties of the input graph. For instance, a great deal of work has focused on how, given some input graph $G$, we can produce a simpler graph $G'$ whose metric is a good proxy for $G$'s metric; see, for example, work on tree embeddings \cite{bartal1996probabilistic,fakcharoenphol2004tight}, distance oracles \cite{thorup2005approximate,roditty2005deterministic} and graph spanners \cite{althofer1993sparse,ahmed2020graph} among many other lines of work. Simple representations of graph metrics enable faster and more space efficient algorithms, especially when the input graph is very large. For this reason these techniques are the foundation of many modern algorithms for massive graphs.
    
    Compact representations of graphs are particularly interesting when we assume that $G$, while prohibitively large and so in need of compact representations, is a member of a minor-closed graph family such as tree, cactus, series-parallel or planar graphs.\footnote{A graph $G'$ is a minor of a graph $G$ if $G'$ can be attained (up to isomorphism) from $G$ by edge contractions as well as vertex and edge deletions. A graph is $F$-minor-free if it does not have $F$ as a minor. A family of graphs $\mcG$ is said to be minor-closed if for any $G \in \mcG$ if $G'$ is a minor of $G$ then $G' \in \mcG$. A seminal work of Robertson and Seymour \cite{robertson2004graph} demonstrated that every minor-closed family of graphs is fully characterized by a finite collection of ``forbidden'' minors. In particular, if $\mcG$ is a minor-closed family then there exists a finite collection of graphs $\mcM$ where $G \in \mcG$ iff $G$ does not have any graph in $\mcM$ as a minor. Here and throughout this work we will use ``minor-closed'' to refer to all non-trivial minor-closed families of graphs; in particular, we exclude the family of all graphs which is minor-closed but trivially so.} As many algorithmic problems are significantly easier on such families---see e.g.\ \cite{hadlock1975finding,arora1998polynomial,wimer1989linear}---it is desirable that $G'$ is not only a simple approximation of $G$'s metric but that it also belongs to the same family as $G$.
    
    Steiner point removal (SPR) formalizes the problem of producing a simple $G'$ in the same graph family as $G$ that preserves $G$'s metric. In SPR we are given a weighted graph $G = (V, E, w)$ and a terminal set $V' \subseteq V$. We must return a weighted graph $G' = (V', E', w')$ where:
    \begin{enumerate}
        \item $G'$ is a minor of $G$;
        \item $ d_{G}(u,v) \leq d_{G'}(u,v) \leq \alpha \cdot d_{G}(u,v)$ for every $u, v \in V'$;
    \end{enumerate}
    and our aim is to minimize the multiplicative distortion $\alpha$ where we refer to a $G'$ with distortion $\alpha$ as an $\alpha$-SPR solution. In the above $d_G$ and $d_{G'}$ give the distances in $G$ and $G'$ respectively. 
    
    If we only required that $G'$ satisfy the second condition then we could always achieve $\alpha = 1$ by letting $G'$ be the complete graph on $V'$ where $w'(\{u, v\}) = d_G(u,v)$ for every $u,v \in V'$. However, such a $G'$ forfeits any nice structure that $G$ may have exhibited. Thus, the first condition ensures that if $G$ belongs to a minor-closed family then so does $G'$. The second condition ensures that $G'$'s metric is a good proxy for $G$'s metric. Thus, $G'$ is simpler than $G$ since it is a graph only on $V'$ while $G'$ is a proxy for $G$'s metric by approximately preserving distances on $V'$.
    
    As \citet{gupta2001steiner} observed, even for the simple case of trees we must have $\alpha > 1$. For example, consider the star graph with unit weight edges where $V'$ consists of the leaves of the star. Any tree $G' =(V',E', w')$ has at least two vertices $u$ and $v$ whose connecting path consists of at least two edges. On the other hand, the length of any edge in $G'$ is at least $2$ and so $d_{G'}(u,v) \geq 4$. Since $d_G(u,v)=2$ it follows that $\alpha \geq 2$. While this simple example rules out the possibility of a $G'$ which perfectly preserves $G$'s metric on trees, it leaves open the possibility of small distortion solutions for minor-closed families.
    
    In this vein several works have posed the existence of $O(1)$-SPR solutions for minor-closed families as an open question: see, for example, \cite{basu2008steiner,filtser2020scattering,chan2006tight,krauthgamer2014preserving,cheung2016graph} among other works. A line of work (summarized in \Cref{fig:priorWork}) has been steadily making progress on this question for the past two decades. \citet{gupta2001steiner} showed that trees (i.e. $K_3$-minor-free graphs) admit $8$-SPR solutions.\footnote{Strictly speaking trees are not the class of all $K_3$-minor-free graphs nor are they really minor-closed since they are not closed under vertex and edge deletion. Really, the class of $K_3$-minor-free graphs are all \emph{forests}. The stated results regarding trees also hold for forests but the literature seems to generally gloss over this detail and so we do so here and throughout the paper.} \citet{filtser2018relaxed} recently gave a simpler proof of this result. \citet{chan2006tight} proved this was tight by showing that $\alpha \geq 8$ for trees which remains the best known lower bound for $K_h$-minor-free graphs. In an exciting recent work, \citet{filtser2020scattering} reduced $O(1)$-SPR in $K_h$-minor-free graphs to computing ``$O(1)$ scattering partitions'' and showed how to compute these partitions for several graph classes, including cactus graphs (i.e.\ all $F$-minor-free graphs where $F$ is $K_4$ missing one edge). Lastly, a work of  \citet{basu2008steiner} generalizes these results by showing that outerplanar graphs (i.e.\ graphs which are both $K_4$ and $K_{2,3}$-minor-free) have $\alpha = O(1)$ solutions.
    
\begin{figure}
    \centering
    \includegraphics[width=\textwidth,trim=0mm 0mm 0mm 130mm, clip]{./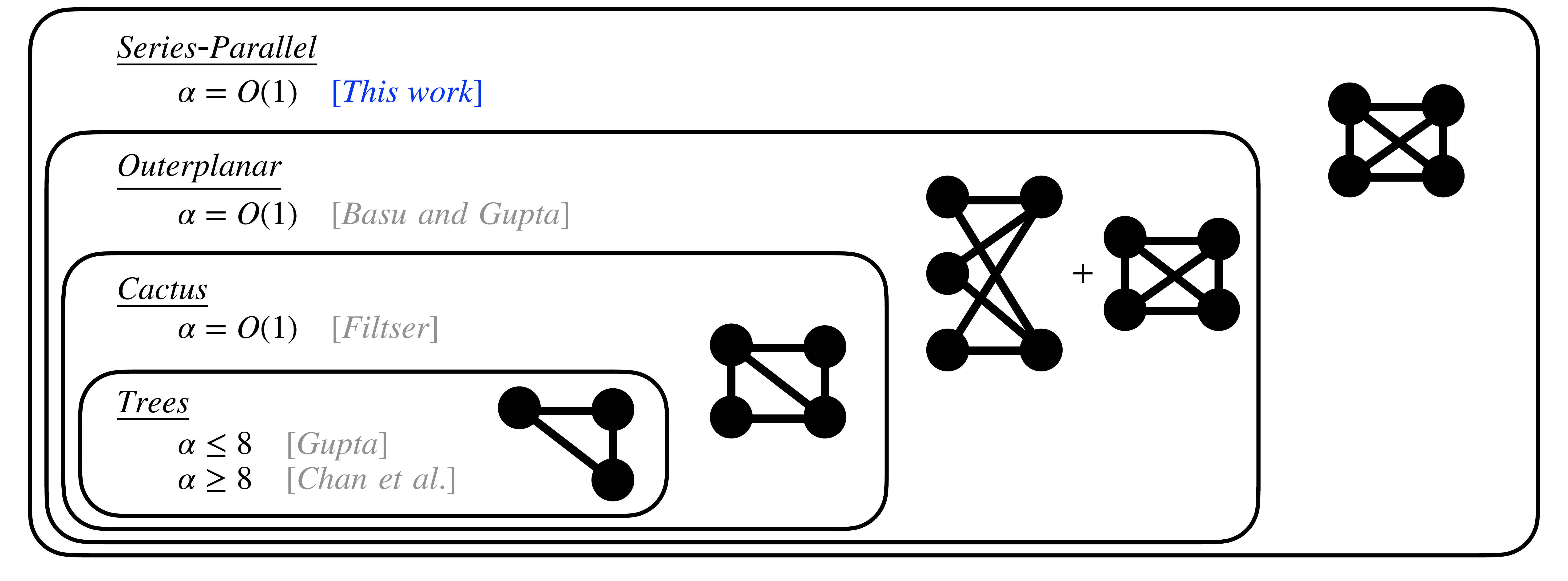}
    \caption{A summary of the SPR distortion for $K_h$-minor-free graphs achieved in prior work and our own. Graph classes illustrated according to containment. We also give the forbidden minors for each graph family.}\label{fig:priorWork}
\end{figure}    
    
\subsection{Our Contributions}

In this work, we advance the state-of-the-art for Steiner point removal in minor-closed graph families. We show that series-parallel graphs (i.e. graphs which are $K_4$-minor-free) have $O(1)$-SPR solutions. Series-parallel graphs are a strict superset of all of the aforementioned graph classes for which $O(1)$-SPR solutions were previously known; again, see \Cref{fig:priorWork}. The following theorem summarizes the main result of our paper.

\begin{restatable}{theorem}{majorThm}\label{thm:maj}
    Every series-parallel graph $G = (V,E, w)$ with terminal set $V' \subseteq V$ has a weighted minor $G' = (V', E', w')$ such that for any $u,v \in V'$ we have
    \begin{align*}
        d_G(u,v) \leq d_{G'}(u,v) \leq O(1) \cdot d_G(u,v).
    \end{align*}
    Moreover, $G'$ is poly-time computable by a deterministic algorithm.
\end{restatable}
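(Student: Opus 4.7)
The plan is to solve SPR on a series-parallel graph $G$ by first reducing the instance to a collection of simpler, more structured subproblems via the hammock decomposition advertised in the abstract, solving each subproblem locally, and then stitching the solutions together using a tree-like global structure.

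First, I would construct the hammock decomposition of $G$. A hammock is a $2$-terminal subgraph meeting the rest of $G$ at exactly two ``attachment'' vertices; intuitively it is a maximal bundle of internally vertex-disjoint paths between those two vertices, possibly with some additional nested series-parallel structure hanging off. Because $G$ is $K_4$-minor-free, the SPQR/SP-tree description guarantees that such hammocks partition the $2$-connected parts of $G$, and contracting each hammock down to a single edge between its attachment vertices yields a tree (or forest) skeleton $T$. All the metrically interesting content is concentrated inside the hammocks, while the skeleton is genuinely tree-like.

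Second, I would solve SPR locally inside each hammock. Within a single hammock we have, up to recursion, a parallel composition of paths between the two attachment vertices, some of which contain terminals in their interiors. For such a parallel-paths instance, I would choose a small number of ``backbone'' paths to retain, and contract each terminal lying on a non-backbone path onto the nearest appropriate vertex of the backbone. The key inequality is that in a parallel composition of two shortest paths between common endpoints, the distance between any two vertices is at most a constant times the length of either path, so terminal distances inside the hammock are distorted by only an $O(1)$ factor after these contractions. I would also add the two attachment vertices to the terminal set during this step to preserve the interface with the skeleton.

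Third, I would assemble these local solutions into a global minor. Since the skeleton $T$ augmented with hammock-attachment vertices as terminals is a tree, I would apply Gupta's $8$-SPR for trees (or equivalently the scattering-partition framework of Filtser) to produce a tree minor $T'$ on the augmented terminal set with $O(1)$ distortion. Combining $T'$ with the per-hammock minors via edge identification at the attachment vertices gives the desired weighted minor $G'$ of $G$, and the overall distortion is at most the product of the per-hammock distortion and the tree-SPR distortion, which remains $O(1)$. The efficiency claim follows because every ingredient (SP-tree construction, hammock identification, backbone selection, and tree SPR) is known to be deterministic polynomial time.

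The main obstacle is defining the hammock decomposition so that both (i) the skeleton is truly tree-like, so that tree SPR can be imported essentially as a black box, and (ii) each hammock is metrically ``simple enough'' that local SPR within it can be solved with constant distortion without destroying information used by neighboring hammocks. A subtler technical hurdle is nesting: a hammock may itself contain parallel compositions whose inner pieces resemble smaller hammocks, and a naive recursion would multiply the distortion at every level of the SP-tree. I expect the fix to be a careful non-recursive definition of hammocks that flattens this nesting in one shot, together with a charging argument that keeps error additive rather than multiplicative across the decomposition.
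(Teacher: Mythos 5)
Your proposal uses a fundamentally different notion of ``hammock'' from the paper's, and follows a different high-level plan, but the plan has a concrete gap that I do not think can be repaired along the lines you sketch. In the paper a hammock is a \emph{metric} object: two vertex-disjoint subtrees of a fixed BFS tree together with all cross edges between them, and the forest of hammocks is used only to \emph{perturb} the KPR $\Delta$-chops so that every short shortest path is cut $O(1)$ times. The SPR solution is then obtained not by local contraction and stitching, but by the black-box reduction (\Cref{thm:SPRReduction}) of \citet{filtser2020scattering} from SPR to scattering partitions, i.e.\ from producing a minor to producing a low-diameter partition that respects shortest paths. Your proposal, by contrast, treats hammocks as $2$-attachment-vertex SPQR-style pieces, contracts each to an edge to get a skeleton tree, solves SPR locally inside each piece, and multiplies distortions. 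That is not the paper's route.

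The concrete gap is the local step. You claim that inside a hammock (a parallel bundle of paths between attachment vertices $s,t$) one can keep a few ``backbone'' paths, contract each terminal on a non-backbone path onto the nearest backbone vertex, and incur only $O(1)$ distortion. This is false. Take $k$ internally disjoint $s$--$t$ paths each of length $L$ and put a terminal at the midpoint of each. All $\binom{k}{2}$ pairwise terminal distances equal $L$, since any path between two such midpoints must pass through $s$ or $t$. Contracting them all onto the midpoint of a single backbone path collapses all those distances to $0$, i.e.\ unbounded distortion. Even keeping $O(1)$ backbone paths, at least $k-O(1)$ of those terminals get mapped to the same backbone vertex. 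There is no $O(1)$-part backbone scheme that works here; this is exactly the sort of obstruction that forces the paper to avoid arguing about minors directly and instead go through scattering partitions, where a single part of a partition is allowed to contain many terminals as long as its diameter is small. A second, softer gap is the distortion accounting: shortest paths between two terminals in different hammocks traverse the skeleton \emph{and} interiors, and the tree-SPR distortion you invoke is on contracted-edge lengths, not on the actual entry/exit behavior inside hammocks, so the ``product of distortions'' bound does not follow without a careful argument that your proposal does not supply.

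If you want to salvage the decompose-and-conquer intuition, the lesson of the paper is to decompose the \emph{metric} rather than the graph topology: build a structure (the forest of hammocks of two BFS subtrees plus their cross edges) whose job is only to guarantee that a $\Delta$-chop can be locally perturbed so short shortest paths cross $O(1)$ boundaries, and then let \Cref{thm:SPRReduction} manufacture the minor for you. Your SPQR-based skeleton is not metric-aware in this sense, which is exactly why the local step breaks.
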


In addition to making progress on the existence of $O(1)$-SPR solutions for every minor-closed family, our work also explicitly settles several open questions. The existence of $O(1)$-SPR solutions for series-parallel graphs was stated as an open question by both \citet{basu2008steiner} and \citet{chan2006tight}; our result answers this question in the affirmative. Furthermore, \citet{filtser2018relaxed} posed the existence of $O(1)$ scattering partitions for outerplanar and series-parallel graphs as an open question; we prove our main result by showing that series-parallel graphs admit $O(1)$ scattering partitions, settling both of these questions.

At least two aspects of our techniques may be of independent interest. First, much of our approach generalizes to any $K_h$-minor-free graph so our approach seems like a promising avenue for future work on $O(1)$-SPR in minor-closed families. Specifically, we prove our result by beginning with the ``chops'' used by \citet{klein1993excluded} to build low diameter decompositions for $K_h$-minor-free graphs. We then slightly perturb these chops to respect the shortest path structure of the graph, resulting in  what we call $O(1)$-scattering chops. The result of repeated such scattering chops is a scattering partition which by the results of \citet{filtser2020scattering} can be used to construct an $O(1)$-SPR solution. The key to this strategy is arguing that series-parallel graphs admit a certain structure---which we call a hammock decompositions---that enables one to perform these perturbations in a principled way. If one could demonstrate a similar structure for $K_h$-minor-free graphs or otherwise demonstrate the existence of $O(1)$-scattering chops for such graphs, then the techniques laid out in our work would immediately give $O(1)$-SPR solutions for all $K_h$-minor-free graphs.

Second, our hammock decompositions are a new metric decomposition for series-parallel graphs which may be interesting in their own right. We give significantly more detail in \Cref{sec:hamDec} but briefly summarize our decomposition for now. We show that for any fixed BFS tree $\TBFS$ there is a forest-like subgraph---which we call a forest of hammocks---which \emph{exactly} preserves the shortest paths between all cross edges of $\TBFS$. A hammock graph consists of two subtrees of a BFS tree and the cross edges between them. A forest of hammocks is a subgraph for which all cycles are fully contained inside one of the constituent hammocks. See \Cref{sfig:hamDecomp} for a visual preview of our decomposition. Our hammock decompositions stand in contrast to the fact that the usual way in which one embeds a graph into a tree---probabilistic tree embeddings---are known to incur distortion $\Omega(\log n)$ in series-parallel graphs \cite{gupta2004cuts}. Furthermore, our decomposition can be seen as a metric-strengthening of the classic nested ear decompositions for series-parallel graphs of \citet{khuller1989ear} and \citet{eppstein1992parallel}. In general, a nested ear decomposition need not reflect the input metric. However, not only can one almost immediately recover a nested ear decomposition from a hammock decomposition, but the output nested ear decomposition interacts with the graph's metric in a highly structured way (see \Cref{sec:metricNestedEar}). We defer a more thorough overview of our techniques to \Cref{sec:overview}

\section{Related Work}

We briefly review additional related work.

\subsection{SPR and Related Problems}

Since the introduction of SPR by \citet{gupta2001steiner}, a variety of works have studied the bounds achievable for well-behaved families of graphs for several very similar problems. \citet{krauthgamer2014preserving} studied a problem like SPR but where distances in $G$ must be exactly preserved by $G'$ and the number of Steiner vertices---that is, vertices not in $V'$---must be made as small as possible; this work showed that while $O(k^4)$ Steiner vertices suffice (where $k = |V'|$) for general graphs, better bounds are possible for well-behaved families of graphs. More generally, \citet{cheung2016graph} studied how to trade off between the number of terminals and distortion of $G'$, notably showing $(1+\epsilon)$ distortion is possible in planar graphs with $\tilde{O}(k^2/\epsilon^2)$ Steiner vertices. \citet{englert2014vertex} showed that in minor-closed graphs distances can be preserved up to $O(1)$ multiplicative distortion in expectation by a distribution over minors as opposed to preserving distances deterministically with a single minor as in SPR.

A variety of recent works have also studied how to find minors which preserve properties of $G$ other than $G$'s metric. \citet{englert2014vertex} studied a flow/cut version of SPR where the goal is for $G'$ to be a minor of $G$ just on the specified terminals while preserving the congestion of multicommodity flows between terminals: this work showed that a convex combination of planar graphs can preserve congestion on $V'$ up to a constant while for general graphs a convex combination of trees preserves congestion up to an $O(\log k)$. Similarly, \citet{krauthgamer2020refined} studied how to find minimum-size planar graphs which preserve terminal cuts. \citet{goranci2020improved} studied how to find a minor of a directed graph with as few Steiner vertices and which preserves the reachability relationships between all $k$ terminals, showing that $O(k^3)$ vertices suffices for general graphs but $O(\log k \cdot k^2)$ vertices suffices for planar graphs.

There has been considerable effort in the past few years on developing good SPR solutions for general graphs. \citet{kamma2015cutting} gave $O(\log ^ 5 k)$-SPR solutions for general graphs. This was improved by \citet{cheung2018steiner} who gave $O(\log ^ 2 k)$-SPR solutions which was, in turn, improved by \citet{filtser2018steinerNV} and \citet{filtser2018steinerRV} who gave $O(\log k)$-SPR solutions for general graphs. We also note that \citet{filtser2020scattering} also achieved similar results by way of scattering partitions, albeit with a worse poly-log factor. 

\subsection{Series-Parallel Graphs}

Series-parallel graphs are significantly more general than trees but still sufficiently structured so as to make many NP-hard problems such as maximum matchings, maximum independent sets, minimum dominating sets solvable in polynomial-time \cite{bern1987linear,kikuno1983linear,takamizawa1982linear,chebolu2012exact}. Series-parallel graphs have received a good deal of attention from the parallel algorithms community as the structure of series-parallel graphs lends itself well to parallelism; see e.g.\ \cite{bodlaender1996parallel,caspi1995edge,de1997parallel}. Particular attention has been paid to the problem of efficiently recognizing whether or not a graph is series-parallel \cite{eppstein1992parallel, he1987parallel}. Lastly, we note that seminal works of \citet{khuller1989ear} and \citet{eppstein1992parallel} also observed that series-parallel graphs admit forest-like decompositions; so-called nested ear decompositions. As mentioned above our hammock decompositions can be seen as a strengthening of the ear decomposition so as to to reflect the structure of the input metric (see \Cref{sec:metricNestedEar} for more details).

\section{Preliminaries}
Before giving an overview of our approach we summarize the characterization of series-parallel graphs we use throughout this work as well as the scattering partition framework of \citet{filtser2020scattering} on which we build.

\begin{figure}
    \centering
    \begin{subfigure}[b]{0.49\textwidth}
        \centering
        \includegraphics[width=\textwidth,trim=0mm 0mm 0mm 0mm, clip]{./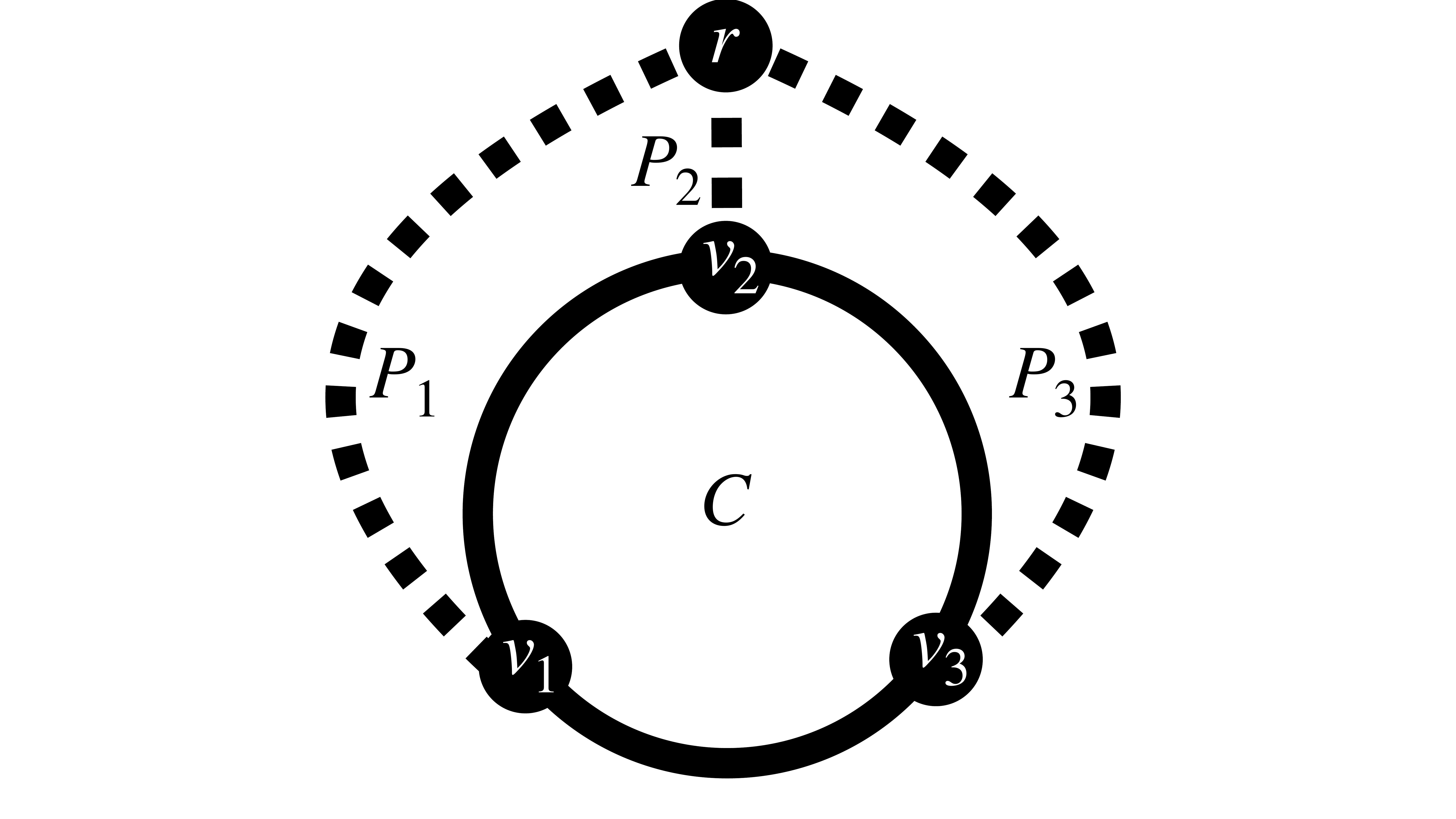}
        \caption{A clawed cycle.}\label{sfig:clawedCyc}
    \end{subfigure}
    \hfill
    \begin{subfigure}[b]{0.49\textwidth}
        \centering
        \includegraphics[width=\textwidth,trim=0mm 0mm 0mm 0mm, clip]{./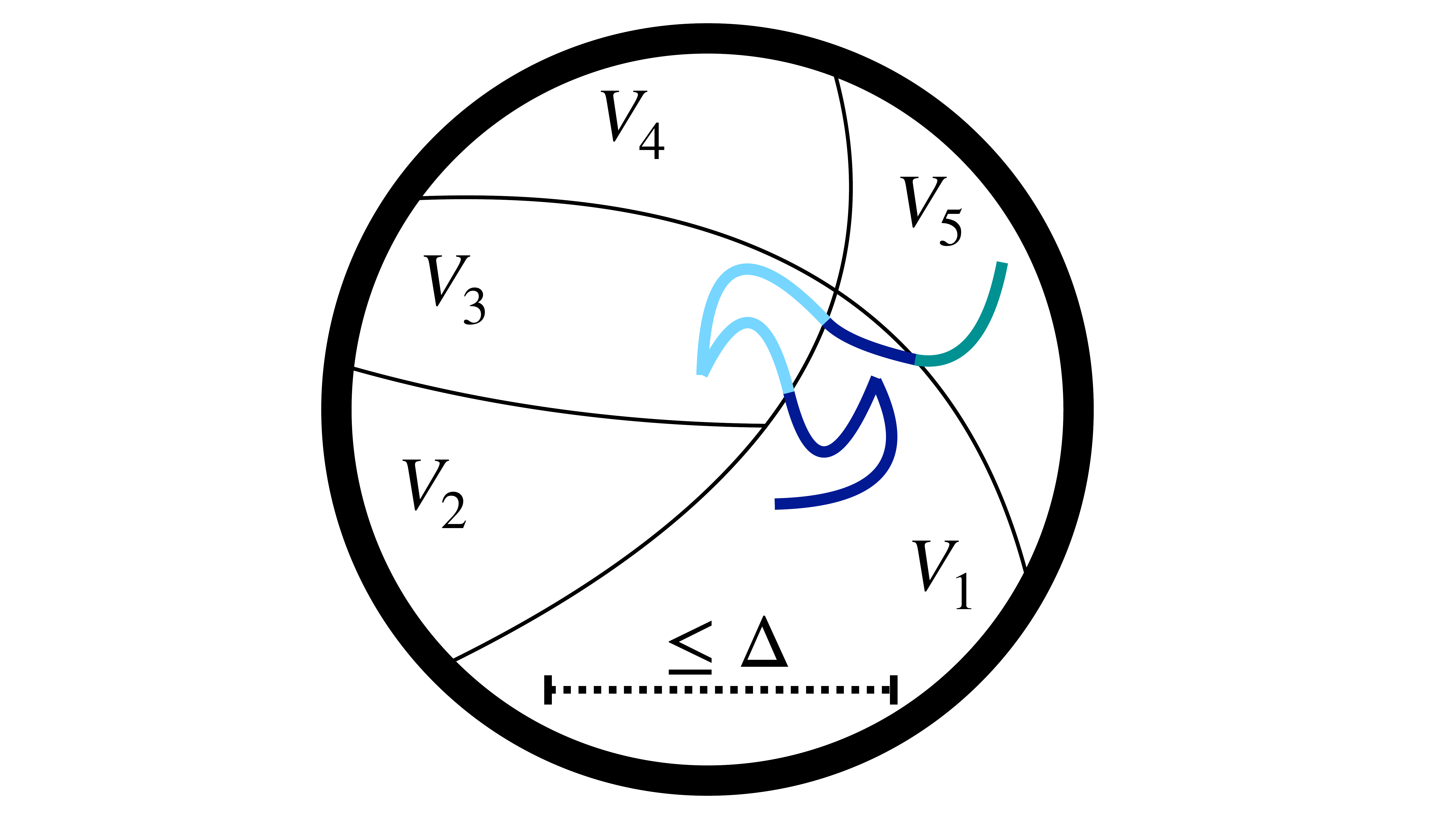}
        \caption{A scattering partition.}\label{sfig:SP}
    \end{subfigure}
    \caption{In (a) we illustrate a clawed cycle where the cycle $C$ is given in solid black and each path is given in dotted black. In (b) we illustrate a scattering partition with $\tau = 3$ and how one path $P$ of length at most $\Delta$ is incident to at most three parts where we color the subpaths of $P$ according to the incident part.}\label{fig:preLims}
\end{figure}

\subsection{Characterizations of Series-Parallel Graphs}\label{sec:characterizations}

There are some minor inconsistencies in the literature regarding what is considered a series-parallel graph and so we clarify which notion of series-parallel we use throughout this paper. Some works---e.g.\ \citet{eppstein1992parallel}---take series parallel graphs to be those which can be computed by iterating parallel and series compositions of graphs. Call these series-parallel A graphs.\footnote{The following is a definition of series-parallel A graphs due to \citet{eppstein1992parallel}. A graph is two-terminal if it has a distinct source $s$ and sink $t$. Let $G$ and $H$ be two two-terminal graphs with sources $s$ and $s'$ and sinks $t$ and $t'$. Then the series composition of $G$ and $H$ is the graph resulting from identifying $t$ and $s'$ as the same vertex. The parallel composition of $G$ and $H$ is the graph resulting from identifying $s$ and $s'$ as the same vertex and $t$ and $t'$ as the same vertex. A two-terminal series-parallel graph is a two-terminal graph which is either a single edge or the graph resulting from the series or parallel composition of two two-terminal series-parallel graphs. A graph is series-parallel A if it has some pair of vertices with respect to which it is two-terminal series-parallel.} Strictly speaking, series-parallel A graphs are not even minor-closed as they are not closed under edge or vertex deletion. Other works---e.g.\ \citet{filtser2020scattering}---take series-parallel graphs to be graphs whose biconnected\footnote{A connected component $C$ is biconnected if $C$ remains connected even after the deletion of any one vertex in $C$.} components are each series-parallel A graphs; call these series-parallel B graphs. Series-parallel B graphs clearly contain series-parallel A graphs and, moreover, are minor-closed.  For the rest of this work we will use the more expansive series-parallel B notion; henceforth we use ``series-parallel'' to mean series-parallel B.

It is well-known that a graph is $K_4$-minor-free iff it is series-parallel \cite{bodlaender1998partial}. Similarly a graph has treewidth at most $2$ iff it is series-parallel \cite{bodlaender1998partial}. In this work we will use an alternate definition in terms of ``clawed cycles'' which we illustrate in \Cref{sfig:clawedCyc}.

\begin{definition}[Clawed Cycle]\label{dfn:clawedCyc}
    A clawed cycle is a graph consisting of a root $r$, a cycle $C$ and three paths $P_1$ $P_2$ and $P_3$ from $r$ to vertices $v_1, v_2, v_3 \in C$ where $v_1 \neq v_2 \neq v_3$
\end{definition}

The fact that series-parallel graphs are exactly those that do not have any clawed cycles as a subgraph seems to be folklore; we give a proof for completeness.
\begin{lemma}
    A graph $G$ is series-parallel iff it does not contain a clawed cycle as a subgraph.
\end{lemma}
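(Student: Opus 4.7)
The plan is to reduce the lemma to the well-known equivalence between series-parallel graphs and $K_4$-minor-free graphs. First I would invoke the cited fact that $G$ is series-parallel iff $G$ is $K_4$-minor-free \cite{bodlaender1998partial}. Second, I would appeal to the classical observation that, since $K_4$ has maximum degree $3$, having $K_4$ as a minor is equivalent to containing a topological $K_4$, i.e., a subdivision of $K_4$ as a subgraph (for any graph $H$ of maximum degree at most $3$, $H$-minor-free and $H$-topological-minor-free coincide). So it suffices to show: $G$ contains a clawed cycle as a subgraph iff $G$ contains a subdivision of $K_4$.

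The crux is that a clawed cycle is essentially the same object as a $K_4$-subdivision with a distinguished branch vertex. For the ``only if'' direction, given a clawed cycle with root $r$, cycle $C$, and paths $P_1, P_2, P_3$ from $r$ to distinct $v_1, v_2, v_3 \in C$, I would split $C$ at $v_1, v_2, v_3$ into three internally vertex-disjoint $v_i$-$v_j$ subpaths $Q_1, Q_2, Q_3$. The six paths $P_1, P_2, P_3, Q_1, Q_2, Q_3$ then connect the four branch vertices $r, v_1, v_2, v_3$ in exactly the pattern of the edges of $K_4$, giving the required subdivision. For the ``if'' direction, from a $K_4$-subdivision with branch vertices $r, v_1, v_2, v_3$ I would take the three internally disjoint $v_i$-$v_j$ paths to form $C$ and keep the three $r$-$v_i$ paths as $P_1, P_2, P_3$; this is a clawed cycle by definition.

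The main thing to be careful about is the internal-disjointness condition implicit in the definition of a clawed cycle: to quote a $K_4$-subdivision one needs $P_1, P_2, P_3$ to be pairwise internally vertex-disjoint and to meet $C$ only at the respective $v_i$. I read the phrase ``consisting of a root $r$, a cycle $C$ and three paths'' in \Cref{dfn:clawedCyc} as guaranteeing this, so no additional work is needed. If a reader prefers a weaker reading of the definition, the obstacle becomes extracting a genuinely disjoint sub-structure from overlapping paths; standard uncrossing arguments (take shortest $P_i$ and truncate at the first intersection with another $P_j$ or with $C$) handle this cleanly, so this is not a serious difficulty.
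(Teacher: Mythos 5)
Your proof is correct and uses the same central reduction as the paper: translate the lemma into the known equivalence between series-parallel and $K_4$-minor-free graphs. The paper's own write-up is terser and, as written, slightly incomplete in one direction: the sentence ``$K_4$ is itself a clawed cycle and so a graph with no clawed cycle subgraphs is $K_4$-minor-free'' only rules out $K_4$ as a \emph{subgraph}, not as a \emph{minor}. The missing bridge is exactly the classical observation you invoke---since $K_4$ has maximum degree $3$, $K_4$-minor and $K_4$-topological-minor coincide, and any subdivision of $K_4$ is itself a clawed cycle---so your version is not a different route, just a more careful rendering of the one the paper intends. The other direction (reading a $K_4$-subdivision off a clawed cycle, versus the paper's contraction to a $K_4$-minor) is the same in substance. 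Your remark about the implicit internal-disjointness of $P_1, P_2, P_3$ in the definition of a clawed cycle is the right reading: the paper relies on that disjointness every time it exhibits a clawed cycle, and without it the forward implication (clawed cycle $\Rightarrow$ $K_4$-minor) would in fact be false, e.g.\ if all three paths shared a cut vertex separating $r$ from $C$.
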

\begin{proof}
    $K_4$ is itself a clawed cycle and so a graph with no clawed cycle subgraphs is $K_4$-minor-free and therefore series-parallel. If a graph contains a clawed cycle then we can construct a $K_4$ minor by arbitrarily contracting the graph into $v_1, v_2, v_3$ and $r$, as defined in \Cref{dfn:clawedCyc}.
\end{proof}

\subsection{Scattering Partitions}

Our result will be based on a new graph partition introduced by \citet{filtser2020scattering}, the scattering partition. Roughly speaking, a scattering partition of a graph is a low-diameter partition which respects the shortest path structure of the graph; see \Cref{sfig:SP}.\footnote{We drop one of the parameters of the definition of \citet{filtser2020scattering} as it will not be necessary for our purposes.}

\begin{definition}[Scattering Partition]\label{dfn:SP}
    Given weighted graph $G = (V, E, w)$, a partition $\mcP = \{V_i\}_i$ of $V$ is a ($\tau$, $\Delta$) scattering partition if:
    \begin{enumerate}
        \item \textbf{Connected:} Each $V_i \in \mcP$ is connected;
        \item \textbf{Low Weak Diameter:} For each $V_i \in \mcP$ and $u,v \in V_i$ we have $d_G(u,v) \leq \Delta$;
        \item \textbf{Scattering:} Every shortest path $P$ in $G$ of length at most $\Delta$ satisfies $|\{V_i : V_i \cap P \neq \emptyset\}| \leq \tau$.
    \end{enumerate}
\end{definition}

\citet{filtser2020scattering} extended these partitions to the notion of a scatterable graph.
\begin{definition}[Scatterable Graph]\label{dfn:scatterableGraph}
    A weighted graph $G = (V, E, w)$ is $\tau$-scatterable if it has a $(\tau, \Delta)$-scattering partition for every $\Delta \geq 0$. 
\end{definition}
We will say that $G$ is deterministic poly-time $\tau$-scatterable if for every $\Delta \geq 0$ a $(\tau, \Delta)$-scattering partition is computable in deterministic poly-time.

Lastly, the main result of \citet{filtser2020scattering} is that solving SPR reduces to showing that every induced subgraph is scatterable. In the following $G[A]$ is the subgraph of $G$ induced by the vertex set $A$.

\begin{theorem}[\citet{filtser2020scattering}]\label{thm:SPRReduction}
    A weighted graph $G = (V, E, w)$ with terminal set $V' \subseteq V$ has an $O(\tau^3)$-SPR solution if $G[A]$ is $\tau$-scatterable for every $A \subseteq V$. Furthermore, if $G[A]$ is deterministic poly-time scatterable for every $A \subseteq V$ then the $O(\tau^3)$-SPR solution is computable in deterministic poly-time.
\end{theorem}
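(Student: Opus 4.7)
The plan is to construct the SPR minor $G'$ by contracting $G$ along a carefully chosen partition $\mathcal{P} = \{V_t\}_{t \in V'}$ of $V$, where each $V_t$ is a connected set containing the single terminal $t$. After contracting each $V_t$ to $t$ and setting $w'(\{s,t\}) = d_G(s,t)$, the result is automatically a minor with $d_{G'}(u,v) \geq d_G(u,v)$ for all terminals $u,v$. Hence the entire content of the theorem is the upper bound $d_{G'}(u,v) \leq O(\tau^3)\, d_G(u,v)$, and the partition $\mathcal{P}$ must be chosen so that walking between any two terminals in $G'$ traces at most a constant-factor detour around the shortest path in $G$.

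To build $\mathcal{P}$ I would construct a hierarchical, laminar family of scattering partitions at geometric distance scales $\Delta_i = 2^i$, for $i$ ranging up to $\log(\mathrm{diam}(G)/w_{\min})$. At the coarsest level, apply the scatterability hypothesis to $G$ itself at scale equal to its diameter. Recursively, for every part $V$ of the partition $\mathcal{P}_{i+1}$ at the coarser scale, invoke the scatterability of the induced subgraph $G[V]$ at scale $\Delta_i$ to obtain a refinement $\mathcal{P}_i$; this step is precisely where the ``for every $A \subseteq V$'' quantifier in the hypothesis is essential, because the recursion is done inside a varying sequence of induced subgraphs. At the finest scale, each atomic part is assigned to a nearby terminal in a Voronoi-like manner, and the clusters $V_t$ are formed as the unions of the laminar chains attached to each $t$.

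To bound $d_{G'}(u,v)$ when $d_G(u,v) = d$, trace the shortest $u$--$v$ path $P$ in $G$ and consider the smallest scale $i$ with $\Delta_i \geq d$. By the scattering property at scale $i$, the path $P$ meets at most $\tau$ parts of $\mathcal{P}_i$, each of weak diameter $O(d)$. Walking from part to part contributes one factor of $\tau$ and a detour of $O(d)$ per crossing; the further factors of $\tau$ come from routing, within each crossed part, down through the recursive hierarchy to reach that part's terminal representative, where each level in the recursion can add another constant-times-$\tau$ detour. Summing geometrically over levels and multiplying the three sources of $\tau$ yields the $O(\tau^3)$ distortion.

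The main obstacle is to make these recursive choices \textbf{glue consistently} into a single connected cluster $V_t$ per terminal while simultaneously keeping the scattering bound valid at every scale. A priori, refining a $(\tau,\Delta_{i+1})$-partition by locally scattering each part might produce shortest paths that cross many small parts in ways that do not telescope cleanly. The hypothesis that every induced subgraph is $\tau$-scatterable is what lets one perform the refinement locally without losing control; establishing that the three sources of $\tau$ combine multiplicatively rather than, say, as $\tau^{\log n}$ --- for instance via a careful potential or charging argument against the levels of the laminar family --- is the technical heart that I would expect most of the work to go into.
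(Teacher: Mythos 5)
This theorem is quoted from \citet{filtser2020scattering} and is not proved in the present paper, so there is no internal proof to compare against; I compare instead to the construction in the cited work.

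Your setup is right (contract a connected $V_t$ onto each terminal $t$, set $w'(\{s,t\})=d_G(s,t)$, and observe that only the upper bound needs proving), but your construction diverges from Filtser's at the outset. Filtser does not build a hierarchical laminar family of scattering partitions across $\log(\mathrm{diam}/w_{\min})$ geometric scales. The construction is the one-shot \emph{Relaxed Voronoi} clustering (after \citet{filtser2018relaxed}): terminals are processed in a fixed order, each terminal greedily claims a cluster among the vertices not yet claimed by earlier terminals, and the scattering hypothesis is invoked on the induced subgraph of the currently unclustered vertices. This is why the theorem is stated with the quantifier ``$G[A]$ is $\tau$-scatterable for every $A\subseteq V$'': the relevant subsets $A$ are the shrinking residual vertex sets during iterative clustering, not parts of a coarser partition in a multi-scale refinement. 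The $O(\tau^3)$ bound then comes from a direct, depth-free charging argument that bounds both the weak diameter of a single cluster and the number of clusters a shortest terminal--terminal path can enter.

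The gap in your proposal is the step you flag as the ``technical heart'' and leave open. You assert that the three $\tau$-factors combine multiplicatively rather than compounding over the $\Theta(\log n)$ levels of the laminar family, but the recursion as you describe it does not support that. To route to the terminal representative inside a scale-$i$ part you may have to cross up to $\tau$ scale-$(i{-}1)$ subparts, each of which requires crossing up to $\tau$ scale-$(i{-}2)$ parts, and so on; the detour contributed at scale $j$ is of order $\tau^{\,i-j}\cdot\Delta_j$, and the sum over $j$ is $O(\Delta_i)$ only when $\tau<2$. For $\tau\ge 2$ the sum is dominated by the bottom level and grows like $\tau^{\Theta(\log n)}$. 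The scattering property bounds how many parts a single short path meets at one scale; it gives no control over the arborescence of detour paths spawned by recursing inside each crossed part, which is precisely what explodes. Replacing the multi-scale hierarchy with the single-pass Relaxed Voronoi clustering, where the scattering guarantee is applied once per terminal on a shrinking vertex set rather than once per scale, is what makes a depth-independent $O(\tau^3)$ possible, and it is a genuinely different construction from the one you sketch.
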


\section{Intuition and Overview of Techniques}\label{sec:overview}

We now give intuition and a high-level overview of our techniques. As discussed in the previous section, solving SPR with $O(1)$ distortion for any fixed graph reduces to showing that the subgraph induced by every subset of vertices is $O(1)$-scatterable. Moreover, since every subgraph of a $K_h$-minor-free graph is itself a $K_h$-minor-free graph, it follows that in order to solve SPR on any fixed $K_h$-minor-free graph, it suffices to argue that every $K_h$-minor-free graph is $O(1)$-scatterable. 

Thus, the fact that we dedicate the rest of this document to showing is as follows.
\begin{theorem}\label{thm:main}
    Every series-parallel graph $G$ is deterministic, poly-time $O(1)$-scatterable.
\end{theorem}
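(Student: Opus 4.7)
The plan is to construct a $(\tau, \Delta)$-scattering partition for every $\Delta \geq 0$ with $\tau = O(1)$, via iterated ``scattering chops'' in the style of \citet{klein1993excluded}'s low-diameter decomposition. A single chop at scale $\Delta$ cuts $G$ into pieces, each of which is still series-parallel (so the construction can be recursed) and has strictly simpler hammock structure; after $O(1)$ recursion levels every leaf piece has weak diameter at most $\Delta$, and the whole construction is a scattering partition. Deterministic polynomial time is automatic since all the tools (BFS, hammock decomposition, canonical edge selection) are deterministic and polynomial.

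First I would fix a BFS tree $\TBFS$ of $G$ rooted at an arbitrary vertex and compute the associated hammock decomposition promised in the introduction. Recall that this decomposition certifies that all cross edges of $\TBFS$ lie in a forest of hammocks whose only cycles live inside individual hammocks, where each hammock consists of two BFS subtrees joined by their cross edges. A vanilla Klein-style chop assigns each vertex $v$ to the band $\lfloor d_G(\text{root}, v)/\Delta \rfloor$; this already guarantees connectivity and weak diameter $O(\Delta)$ per band, but a shortest path can re-enter a band many times through different cross edges, violating the scattering property. I would fix this by perturbing the boundary within each hammock that straddles a band boundary: cut the hammock along exactly one canonical cross edge, say the one closest to the root on a shortest root-to-hammock path. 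Because every hammock's cycles remain confined to the hammock, the perturbation only shifts boundary vertices along the two subtrees of that hammock and each band stays connected.

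The main obstacle, and the place the hammock decomposition really earns its keep, is bounding the number of times a shortest path $P$ of length at most $\Delta$ can cross perturbed band boundaries. The central claim I would aim to prove is: if $P$ crosses the same boundary twice through two distinct hammocks, then the two hammock cross edges on $P$, together with the tree paths from their endpoints to a common ancestor in $\TBFS$ and the shortest $\TBFS$-path from the root to that ancestor, combine to form a clawed cycle. This contradicts the clawed-cycle-free characterization of series-parallel graphs established in the preliminaries. Consequently $P$ can cross each perturbed boundary $O(1)$ times, and since $P$ has total length at most $\Delta$ and bands have width $\Delta$, $P$ touches only $O(1)$ bands in total. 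Making this rigorous---identifying the correct three internally vertex-disjoint root-to-cycle paths, and ensuring the perturbation itself does not create new shortest-path crossings---is the delicate piece and is exactly where $K_4$-minor-freeness is consumed.

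Finally, to convert one scattering chop into a full scattering partition I would recurse on each band as a series-parallel induced subgraph, using the same construction. Each recursion strictly decreases a natural hammock complexity measure---for example, the height of the hammock forest or the count of hammocks crossing any boundary---which for series-parallel graphs should bottom out in $O(1)$ levels. Each level charges only $O(1)$ new parts to any fixed shortest path of length at most $\Delta$, so the total scattering parameter remains $\tau = O(1)$, and every leaf piece has weak diameter at most $\Delta$. This produces a deterministic poly-time $(O(1), \Delta)$-scattering partition for every $\Delta \geq 0$, establishing \Cref{thm:main} and, via \Cref{thm:SPRReduction}, the main theorem.
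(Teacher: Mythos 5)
Your high-level plan matches the paper's (hammock decomposition $+$ perturbed KPR chops $+$ recursion), but there are two concrete gaps in how you close the argument.

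The more serious gap is the termination argument for the recursion. You claim that ``each recursion strictly decreases a natural hammock complexity measure---for example, the height of the hammock forest or the count of hammocks crossing any boundary---which should bottom out in $O(1)$ levels.'' This is not established and there is no reason to believe it: a connected piece produced by one chop is a fresh series-parallel graph with its own BFS tree and its own hammock forest, and nothing guarantees that the new hammock forest is simpler than the old one. The paper does not use anything like this. Instead, it extends KPR's analysis to show that $h-1$ levels of \emph{fuzzy} $\Delta$-chops in a $K_h$-minor-free graph already produce components of weak diameter $O(h\Delta)$ (\Cref{lem:fuzChop}). For series-parallel ($h=4$) this is $3$ levels, independent of the hammock structure of any piece. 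Without some version of that lemma---i.e.\ without showing that KPR's diameter guarantee survives bounded boundary perturbation---your recursion has no reason to terminate in $O(1)$ depth, and the weak-diameter property fails.

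The second gap is in the perturbation and its scattering analysis. Your proposed perturbation (``cut the hammock along exactly one canonical cross edge closest to the root'') is underspecified: annuli partition vertices, so you need a concrete per-vertex reassignment rule, and it is unclear how a single chosen cross edge yields one. The paper's construction shifts all vertices of a hammock $H_j$ near a band boundary up or down by $O(c\Delta)$ depending on the position of $r_j$ relative to the boundary, and crucially uses the lca-respecting structure to guarantee the shifts of different hammocks do not conflict. Your central scattering claim---that two crossings of the same boundary through two distinct hammocks yield a clawed cycle---is also not the right invariant: a shortest path may legitimately traverse many hammocks. The paper's analysis instead decomposes a shortest path into monotone BFS subpaths and a cross-edge subpath, shows monotone subpaths are cut $O(1)$ times via the lca-respecting structure (\Cref{lem:monotone}), and shows the cross-edge subpath is cut $O(1)$ times via the tree-of-hammocks structure plus a separate clawed-cycle lemma bounding the number of cross edges of a shortest path inside a \emph{single} hammock to two (\Cref{lem:single-hammock}). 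Identifying those two specific structural facts, rather than a single two-crossings claim, is the crux of the scattering bound.
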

Combining this with \Cref{thm:SPRReduction} immediately implies \Cref{thm:maj}.

\subsection{General Approach}

Given a series-parallel graph $G$ and some $\Delta \geq 1$, our goal is to compute an $(O(1), \Delta)$-scattering partition for $G$. Such a partition has two non-trivial properties to satisfy: (1) each constituent part must have weak diameter at most $\Delta$ and (2) each path of length at most $\Delta$ must be in at most $O(1)$ parts (a property we will call ``scattering'').

A well-known technique of \citet{klein1993excluded}---henceforth ``KPR''---has proven useful in finding so-called low diameter decompositions for $K_h$-minor-free graphs and so one might reasonably expect these techniques to prove useful for finding scattering partitions. Specifically, KPR shows that computing low diameter decompositions in a $K_h$-minor-free graph can be accomplished by $O(h)$ levels of recursive ``$\Delta$-chops''. In particular, fix a root $r$ and a BFS tree $\TBFS$ rooted at $r$. Then, a $\Delta$-chop consists of the deletion of every edge with one vertex at depth $i \cdot \Delta$ and another vertex at depth $i \cdot \Delta + 1$ for every $i \in \mathbb{Z}_{\geq 1}$; that is, it consists of cutting edges between each pair of adjacent $\Delta$-width annuli. KPR proved that if one performs a $\Delta$-chop and then recurses on each of the resulting connected component then after $O(h)$ levels of recursive depth in a $K_h$-minor free graph the resulting components all have diameter at most $O(\Delta)$. We illustrate KPR on the grid graph in \Cref{fig:KPRGrid}.

\begin{figure}
    \centering
    \begin{subfigure}[b]{0.32\textwidth}
        \centering
        \includegraphics[width=\textwidth,trim=30mm 0mm 30mm 180mm, clip]{./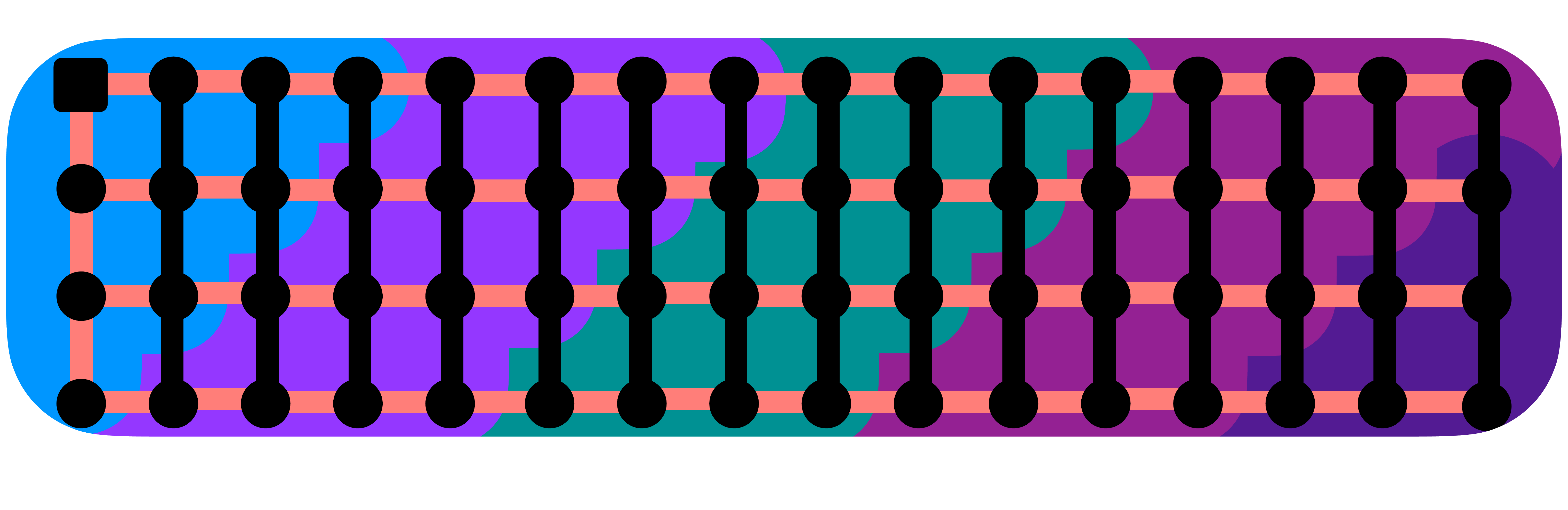}
        \caption{First $\Delta$-chop.}
    \end{subfigure}    \hfill
    \begin{subfigure}[b]{0.32\textwidth}
        \centering
        \includegraphics[width=\textwidth,trim=30mm 0mm 30mm 180mm, clip]{./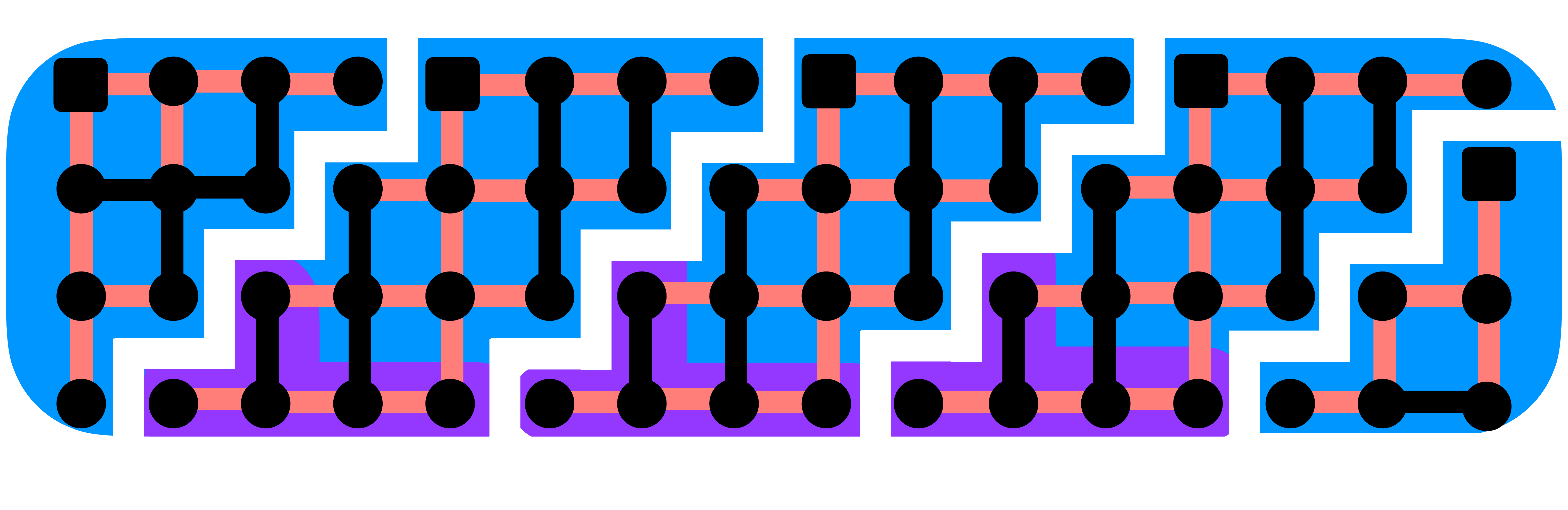}
        \caption{Second $\Delta$-chop.}
    \end{subfigure}    \hfill
    \begin{subfigure}[b]{0.32\textwidth}
        \centering
        \includegraphics[width=\textwidth,trim=30mm 0mm 30mm 180mm, clip]{./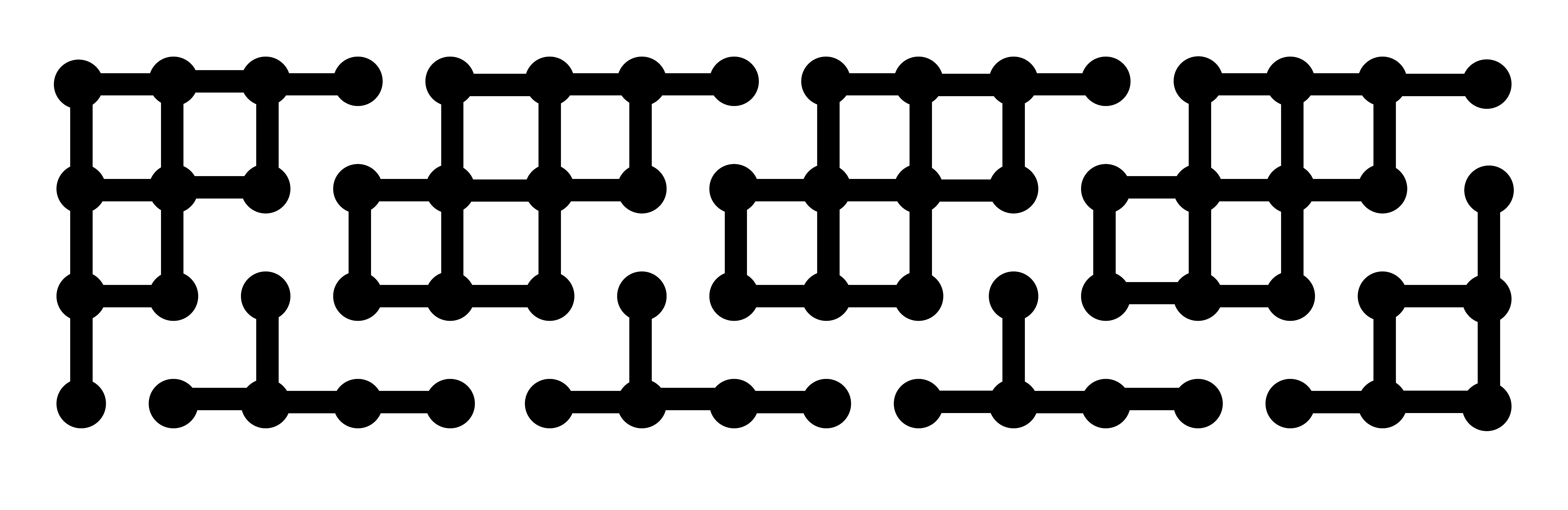}
        \caption{Resulting connected components.}
    \end{subfigure}
    \caption{Two levels of $\Delta$-chops on the grid graph for $\Delta=3$. We give the edges of the BFS trees we use in pink; roots of these trees are given as squares. Background colors give the annuli of nodes.}\label{fig:KPRGrid}
\end{figure}

Thus, we could simply apply $\Delta$-chops $O(h)$ times to satisfy our diameter constraints (up to constants) and hope that the resulting partition is also scattering. Unfortunately, it is quite easy to see that (even after just $1$ $\Delta$-chop!) a path of length at most $\Delta$ can end up in arbitrarily many parts of the resulting partition. For example, consider the example in \ref{sfig:badEG1} and \ref{sfig:badEG2} where a shortest path moves repeatedly moves back and forth between two annuli. Nonetheless, this example is suggestive of the basic approach of our work. In particular, if we merely perturbed our first $\Delta$-chop to cut ``around'' said path as in Figures \ref{sfig:badEG3} and \ref{sfig:badEG4} then we could ensure that this path ends up in a small number of partitions.

\begin{figure}
    \centering
    \begin{subfigure}[b]{0.24\textwidth}
        \centering
        \includegraphics[width=\textwidth,trim=20mm 0mm 20mm 10mm, clip]{./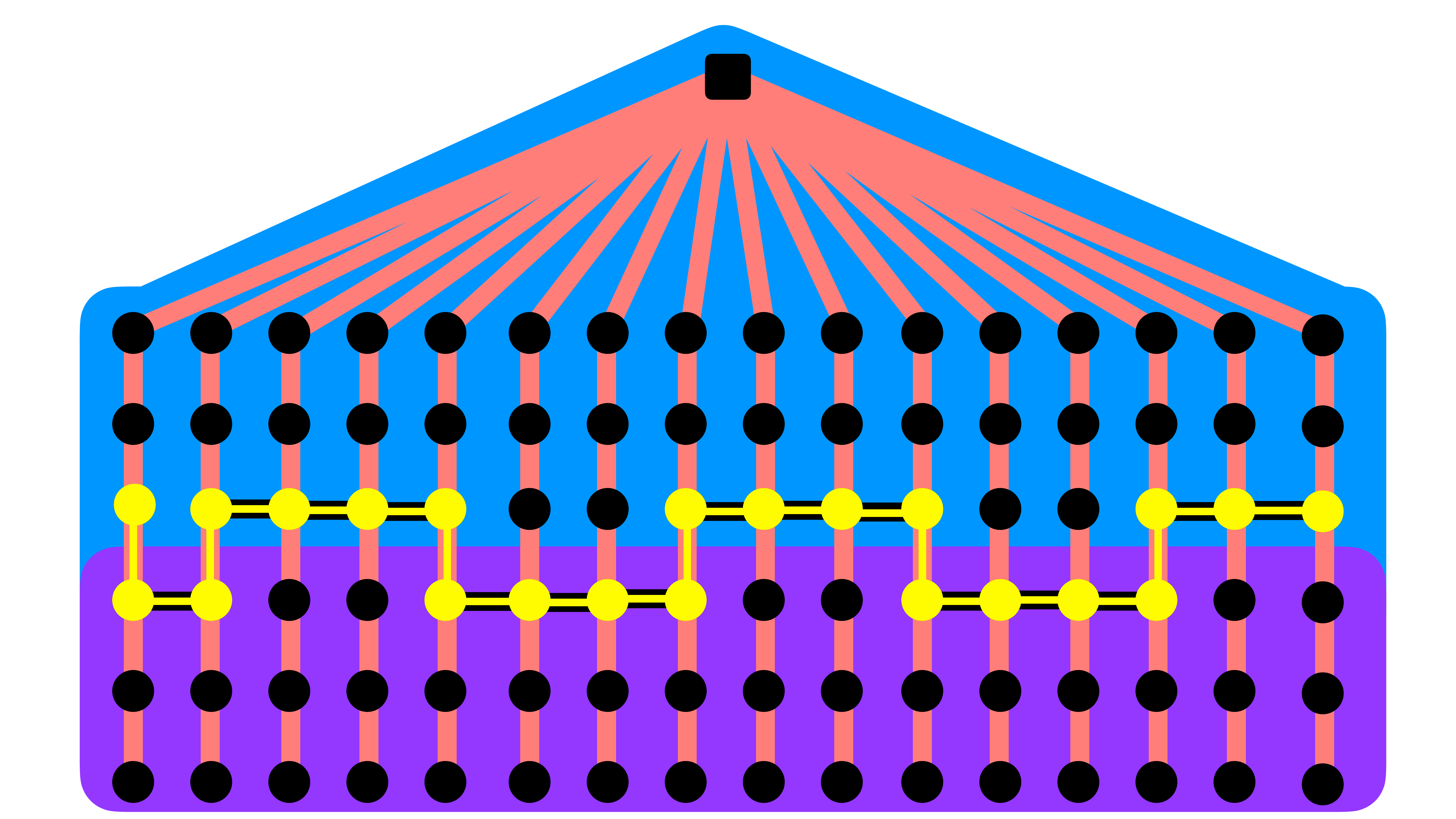}
        \caption{A $\Delta$-chop}\label{sfig:badEG1}
    \end{subfigure}    \hfill
    \begin{subfigure}[b]{0.24\textwidth}
        \centering
        \includegraphics[width=\textwidth,trim=20mm 0mm 20mm 10mm, clip]{./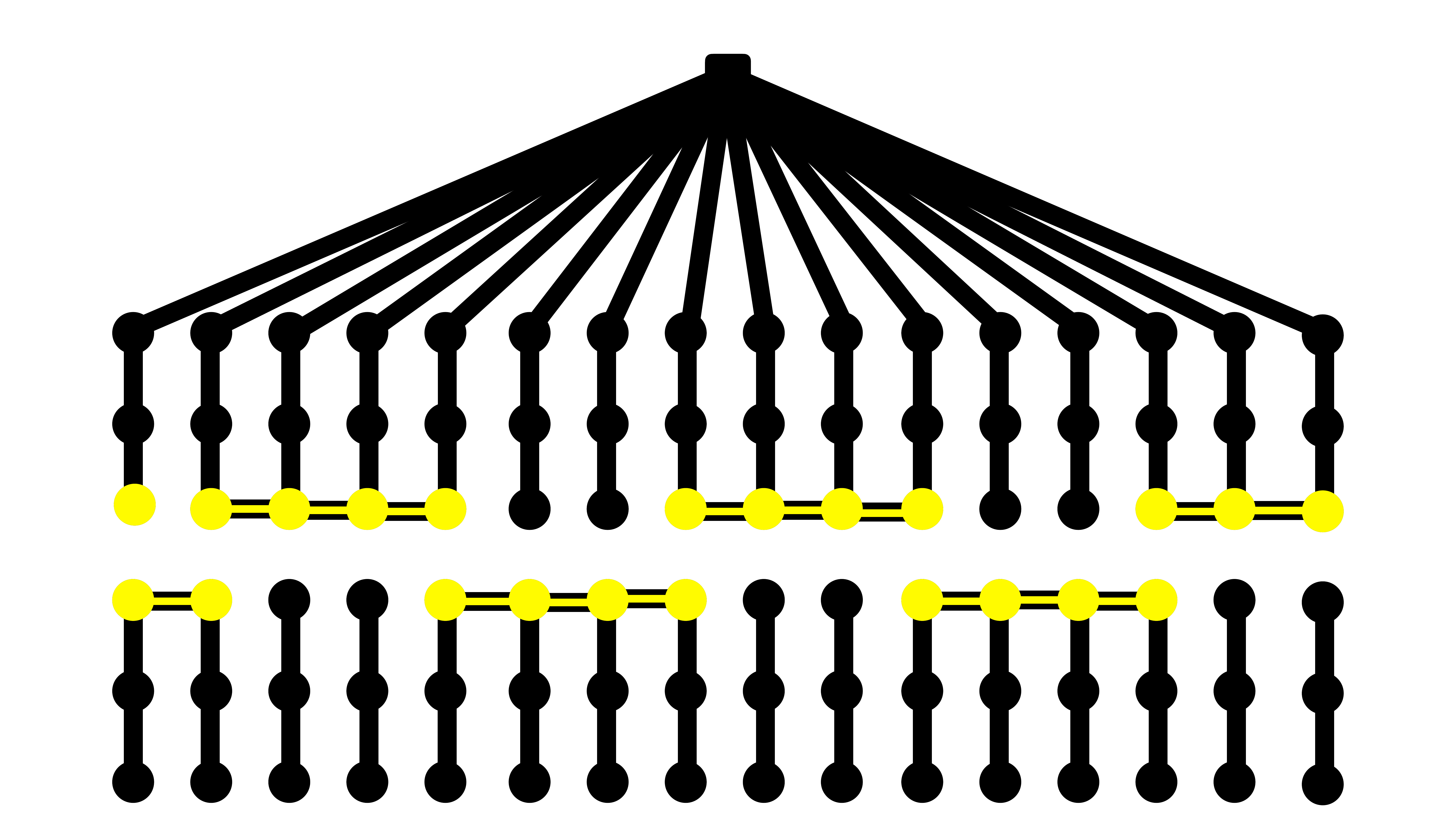}
        \caption{Components not scattering.}\label{sfig:badEG2}
    \end{subfigure}    \hfill
    \begin{subfigure}[b]{0.24\textwidth}
        \centering
        \includegraphics[width=\textwidth,trim=20mm 0mm 20mm 10mm, clip]{./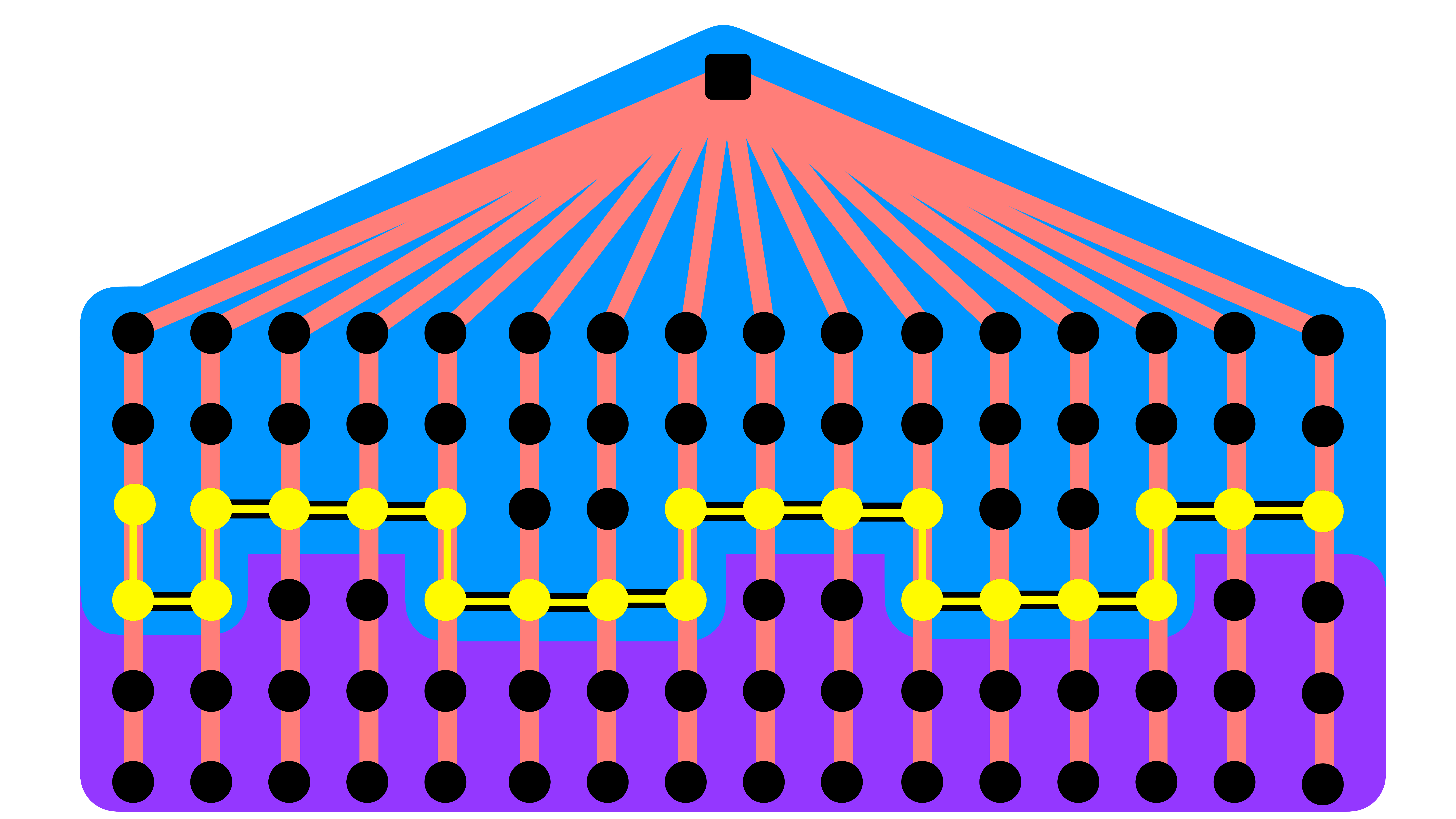}
        \caption{A perturbed $\Delta$-chop.}\label{sfig:badEG3}
    \end{subfigure}    \hfill
    \begin{subfigure}[b]{0.24\textwidth}
        \centering
        \includegraphics[width=\textwidth,trim=20mm 0mm 20mm 10mm, clip]{./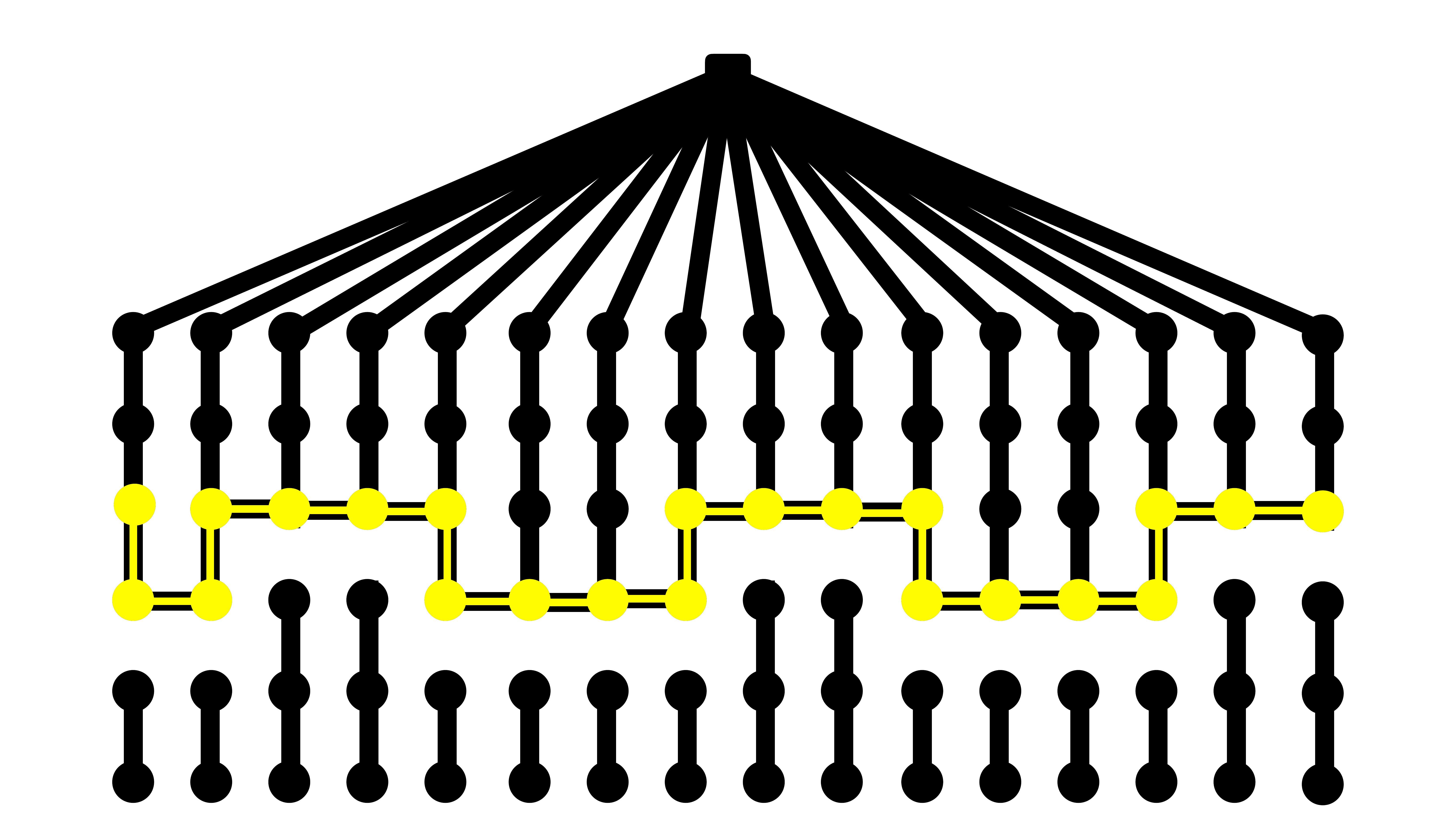}
        \caption{Components scattering.}\label{sfig:badEG4}
    \end{subfigure}
    \caption{An example where a $\Delta$-chop does not produce a scattering partition but how perturbing said chop does. Here, we imagine that the root is at the top of the graph and each edge incident to the root has length $\Delta-3$. We highlight the path $P$ that either ends up in many or one connected component depending on whether we perturb our $\Delta$-chop in yellow.}\label{fig:KPRGrid}
\end{figure}

More generally, the approach we take in this work is to start with the KPR chops but then slightly perturb these chops so that they do not cut \emph{any} shortest path of length at most $\Delta$ more than $O(1)$ times. That is, all but $O(1)$ edges of any such path will have both vertices in the same (perturbed) annuli. We then repeat this recursively on each of the resulting connected components to a constant recursion depth. Since each subpath of a shortest path of length at most $\Delta$ is itself a shortest path with length at most $\Delta$, we know that each such shortest path is broken into a constantly-many-more shortest paths at each level of recursion. Moreover, since we recurse a constant number of times, each path ends up in a constant number of components.

Implementing this strategy requires meeting two challenges. First, it is not clear that the components resulting from KPR still have low diameter if we allow ourselves to perturb our chops. Second, it is not clear how to perturb a chop so that it works \emph{simultaneously} for every shortest path. Solving the first challenge will be somewhat straightforward while solving the second will be significantly more involved. In particular, what makes the second challenge difficult is that we cannot, in general, perturb a chop on the basis of one violated shortest path as in the previous example; doing so might cause other paths to be cut too many times which will then require additional, possibly conflicting, perturbations and so on. Rather, we must somehow perturb our chops in a way that takes every shortest path into account all at once.

\subsection{Scattering Chops}

The easier issue to solve will be how to ensure that our components have low diameter even if we perturb our chops. Here, by closely tracking various constants through a known analysis of KPR we show that the components resulting from KPR with (boundedly) perturbed cuts are still low diameter.

We summarize this fact and the above discussion with the idea of a scattering chop. A $(\tau,\Delta)$-scattering chop consists of cutting all edges at \emph{about} every $\Delta$ levels in the BFS tree in such a way that no shortest path of length at most $\Delta$ is cut more than $\tau$ times. Our analysis shows that if all $K_h$-minor-free graphs admit $(O(1), \Delta)$-scattering chops for every $\Delta$ then they are also $O(1)$-scatterable and therefore also admit $O(1)$-SPR solutions; this holds even for $h > 4$.

\subsection{Hammock Decompositions and How to Use Them}

The more challenging issue we must overcome is how to perturb our chops so that every shortest path of length at most $\Delta$ is only cut $O(1)$ times. Moreover, we must do so in a way that does not perturb our boundaries by too much so as to meet the requirements of a scattering chop. We solve this issue with our new metric decomposition for series-parallel graphs, the hammock decomposition.

Consider a shortest path $P$ of length at most $\Delta$. Such a path can be partitioned into a (possibly empty) prefix consisting of only edges in $\TBFS$, a middle portion whose first and last edges are cross edges of $\TBFS$ and a (possibly empty) suffix which also only has edges in $\TBFS$. Thus, if we want to compute a scattering chop, it suffices to guarantee than any shortest path of length at most $\Delta$ which is either fully contained in $\TBFS$ or which is between two cross edges of $\TBFS$ is only cut $O(1)$ times by our chop; call the former a BFS path and the latter a cross edge path.

Next, notice that all BFS paths are only cut $O(1)$ times by our initial KPR chops. Specifically, each BFS path can be partitioned into a subpath which goes ``up'' in the BFS tree and a subpath which goes ``down'' in the BFS tree. As our initial KPR chops are $\Delta$ apart and each such subpath is of length at most $\Delta$, each such subpath is cut at most $O(1)$ times. Thus provided our perturbations do not interfere \emph{too much} with the initial structure of our KPR chops we should expect that our BFS paths will only be cut $O(1)$ times.





Thus, our goal will be to perturb our KPR chops to not cut any cross edge path more than $O(1)$ times while mostly preserving the initial structure of our KPR chops. Our hammock decompositions will allow us to do exactly this. They will have two key components. 


The first part is a ``forest of hammocks.'' Suppose for a moment that our input graph had a forest subgraph $F$ that contained all cross edge paths of our graph which were also shortest paths. Then, it is not too hard to see how to use $F$ to perturb our chops to be scattering for all cross edge paths. In particular, for each edge $\{u,v\} \in F$ where $u$ is in some annuli $A$ but $v$ is in some other annuli $A'$ (before any perturbation) then we propagate $A$ an additional $\Theta(\Delta)$ deeper into $F$; that is if we imagine that $v$ is the child of $u$ in $F$ then we move all descendants of $u$ in $F$ within $\Theta(\Delta)$ of $u$ into $A$. A simple amortized analysis shows that after performing these perturbations every cross edge path is cut $O(1)$ times. Unfortunately, it is relatively easy to see that such an $F$ may not exist in a series-parallel graph. The forest of hammocks component of our decompositions is a subgraph which will be ``close enough'' to such an $F$, thereby allowing us to perturb our chops similarly to the above-mentioned way. As mentioned in the introduction, a hammock graph consists of two subtrees of a BFS tree and the cross edges between them. A forest of hammocks is a graph partitioned into hammocks where every cycle is fully contained in one of the constituent hammocks. While the above perturbation will guarantee that our cross edge paths are not cut too often, it is not clear that such a perturbation does not change the structure of our initial chops in a way that causes our BFS paths to be cut too many times.



The second part of our hammock decompositions is what we use to guarantee that our BFS paths are not cut too many times by preserving the structure of our initial KPR chops. Specifically, the forest structure of our hammocks will reflect the structure of $\TBFS$. In particular, we can naturally associate each hammock $H_i$ with a single vertex, namely the lca of any $u$ and $v$ where $u$ is in one tree of $H_i$ and $v$ is in the other. Then, our forest of hammocks will satisfy the property that if hammock $H_i$ is a ``parent'' of hammock $H_j$ in our forest of hammocks then the lca corresponding to $H_i$ is an ancestor of the lca corresponding to $H_j$ in $\TBFS$; even stronger, the lca of $H_j$ will be contained in $H_i$. Roughly, the fact that our forest of hammocks mimics the structure of $\TBFS$ in this way will allow us to argue that the above perturbation does not alter the initial structure of our KPR chops too much, thereby ensuring that BFS paths are not cut too many times.

 


The computation of our hammock decompositions constitutes the bulk of our technical work but is somewhat involved. The basic idea is as follows. We will partition all cross edges into equivalence classes where each cross edge in an equivalence class share an lca in $\TBFS$ (though there may be multiple, distinct equivalence classes with the same lca). Each such equivalence class will eventually correspond to one hammock in our forest of hammocks. To compute our forest of hammocks we first connect up all cross edges in the same equivalence class. Next we connect our equivalence classes to one another by cross edge paths which run between them. We then extend our hammocks along paths towards their lcas to ensure the above-mentioned lca properties. Finally, we add so far unassigned subtrees of $\TBFS$ to our hammocks. We will argue that when this process fails it shows the existence of a $K_4$-minor and, in particular, a clawed cycle.




\section{Notation and Conventions}\label{sec:notationAssumptions}

Before proceeding ot our formal results we specify the notation we use throughout this work as well as some of the assumptions we make on our input series-parallel graph without loss of generality (WLOG).

\textbf{Graphs:} For a weighted graph $G = (V, E, w)$, we let $V(G) = V$ and $E(G) = E$ give the vertex set and edge sets of $G$ respectively. We will sometimes abuse notation and use $G$ to stand for $V$ or $E$ when it is clear from context if we mean $G$'s vertex or edge set. $w : E \to \mathbb{Z}_{\geq 1}$ will give the weights. Given graphs $G$ and $H$, we will use the notation $H \subseteq G$ to indicate that $H$ is a subgraph of $G$. The weak diameter of a subgraph $H$ is $\max_{u,v \in V(H)} d_G(u,v)$.

\textbf{Assumption of Unique Shortest Paths and Unit Weights:} We will assume throughout this work that in our input series-parallel graph for any vertices $u$ and $v$ the shortest path between $u$ and $v$ is unique. This is without loss of generality: one can randomly perturb the initial weights of the input graph so as to guarantee this condition at an arbitrarily small cost in the quality of the resulting SPR solution. Similarly, in this work we will assume without loss of generality that $w(e) = \poly(n)$ for every $e$ for our input series-parallel graph where $n = |V|$. This assumption allows us to simplify presentation by assuming that our edges all have unit weights. In particular, it allows us to assume that $w(e) = 1$ for every $e$ while only increasing the number of vertices by a polynomial factor since we can expand an edge of weight $w(e)$ into a path of $w(e)$ edges while preserving series-parallelness and the metric. For this reason, we will assume that our input series-parallel graph has unique shortest paths and unit weight edges henceforth, but, again, none of our results rely on these assumptions.

\textbf{Induced Graphs and Edges:} Given an edge set $E$ and disjoint vertex sets $V_1$ and $V_2$, we let $E(V_1, V_2) := \{e = \{v_1, v_2\} \in E : v_1 \in V_1, v_2 \in V_2\}$ be all edges between $V_1$ and $V_2$. Given graph $G = (V,E)$ and a vertex set $U \subseteq V$, we let $G[U] = (U, E_U)$ be the ``induced subgraph'' of $G$ where $\{u', v'\} \in E_U$ iff $\{u', v'\} \in E$. Given a collection of subgraphs $\mcH = \{H_i\}_i$ of a graph we call $G[\mcH] := (\bigcup_i V(H_i), \bigcup_i E(H_i))$ the induced subgraph of $\mcH$. Similarly, we will let $E(\mcH) := \bigcup_i E(H_i)$ give the edges of $\mcH$.

\textbf{Paths:} Given a path $P = (v_0, v_1, \ldots, v_k, v_{k+1})$ we will use $\intV(P) := \{v_1, \ldots, v_k\}$ to refer to the internal vertices of $P$. We will say that a path $P$ is between two vertex sets $U$ and $W$ if its first and last vertices are in $U$ and $W$ respectively and $\intV(P) \cap U = \emptyset$ and $\intV(P) \cap W = \emptyset$. We will sometimes abuse notation and use $P$ and $E(P)$ interchangeably. We will also sometimes say such a path is ``from'' $U$ to $W$ interchangeably with a path is ``between'' $U$ and $W$. We will use $P \oplus P'$ to refer to the concatenation of two paths which share an endpoint throughout this paper. For a tree $T$, we will let $T(u, v)$ stand for the unique path between $u$ and $v$ in $T$ for $u,v \in V(T)$. We will sometimes assume that a path from a vertex set to another vertex set is directed in the natural way.

\textbf{BFS Tree Notation:} For much of this work we will fix a a series-parallel graph $G= (V,E)$ along with a fixed but arbitrary root $r\in V$ and a fixed but arbitrary BFS tree $\TBFS$ with respect to $r$. When we do so we will let $E_c := E \setminus E(\TBFS)$ be all cross edges of $\TBFS$. For $u,v \in V$, if $u \in \TBFS(r,v) \setminus \{v\}$ then we say that $u$ is an ancestor of $v$. In this case, we also say that $v$ is a descendant of $u$. If $u$ is an ancestor of $v$ or $v$ is an ancestor of $u$ then we say that $u$ and $v$ are related; otherwise, we say that $u$ and $v$ are unrelated. For two vertices $u, v \in V$ we will use the notation $u \prec v$ to indicate that $v$ is an ancestor of $u$ and we will use the notation $u \preceq v$ to indicate that $v$ is an ancestor of or equal to $u$. It is easy to verify that $\preceq$ induces a partial order. We let $\TBFS(v) := \TBFS[\{v\} \cup \{u \in V : \text{$u$ is a descendant of $v$}\}]$ be the subtree of $\TBFS$ rooted at $v$. Given a connected subgraph $T \subseteq \TBFS$, we will let $\highV(T)$ be the vertex in $V(T)$ which is an ancestor of all vertices in $V(T)$. Given a path $P \subseteq \TBFS$ we will say that $P$ is monotone if $\highV(P)$ is an ancestor of all vertices in $P$ and there is some vertex $\lowV(P)$ which is a descendant of all vertices in $P$. We let $h(v)$ give the height of a vertex in $\TBFS$ (where we imagine that the nodes furthest from $r$ are at height $0$). We let $\lca(e)$ be the least common ancestor of $u$ and $v$ in $\TBFS$ for each $e = \{u,v\} \in E$.

\textbf{Miscellaneous:} We will use $\sqcup$ for disjoint set union throughout this paper. That is $A \sqcup B$ is equal to $A \cup B$ but indicates that $A \cap B = \emptyset$.


\section{Perturbing KPR and Scattering Chops}
In this section we show that KPR still gives low diameter components even if its boundaries are perturbed and  therefore somewhat ``fuzzy.'' We then observe that this fact shows that ``$O(1)$-scattering chops'' imply the existence of $O(1)$-scattering partitions for $K_h$-minor-free graphs and therefore $O(1)$-SPR solutions.

\subsection{Perturbing KPR}
We will repeatedly take the connected components of annuli with ``fuzzy'' boundaries. We formalize this with the idea of a $c$-fuzzy $\Delta$-chop; see \Cref{sfig:sChop1} for an illustration.


\begin{definition}[$c$-Fuzzy $\Delta$-Chop]
   Let $G=(V, E, w)$ be a weighted graph with root $r$ and fix $0 \leq c < 1$ and $\Delta \geq 1$. Then a $c$-fuzzy $\Delta$-chop is a partition of $V$ into ``fuzzy annuli'' $\mcA = \{A_1, A_2, \ldots\}$ where for every $i$ and $v \in A_i$ we have
    \begin{align*}
    (i-1)\Delta - \frac{c \Delta}{2} \leq d(r, v) < i \cdot \Delta + \frac{c \Delta}{2} .
    \end{align*}
\end{definition}

As each fuzzy annuli in a fuzzy chop may contain many connected components we must be careful when specifying how recursive application of these chops break a graph into connected components; hence the following definitions. Given fuzzy annuli $A_i$, we let $\mcC_i$ be the connected components of $A_i$.
\begin{definition}[Components Resulting from a $c$-Fuzzy $\Delta$-Chop]
    Let $G=(V, E, w)$ be a weighted graph and let $\mcC$ be a partition of $V$ into connected components. Then we say that $\mcC$ results from one level of $c$-fuzzy $\Delta$-chops if there is a $c$-fuzzy $\Delta$-chop $\mcA$ with respect to some root $r \in V$ satisfying $\mcC = \bigcup_{i : A_i \in \mcA} \mcC_i$. Similarly, for $h \geq 2$ we say that $\mcC$ results from $h$-levels of $c$-fuzzy $\Delta$-chops if there is some $\mcC'$ which results from one level of $c$-fuzzy $\Delta$-chops and $\mcC$ is the union of the result of $h-1$ levels of $c$-fuzzy $\Delta$-chops on each $C' \in \mcC'$.
\end{definition}



We will now claim that taking $h-1$ levels of fuzzy chops in a $K_h$-minor-free graph will result in a connected, low weak diameter partition. In particular, we show the following lemma, the main result of this section.

\begin{restatable}{lemma}{fuzzyChop}\label{lem:fuzChop}
    Let $\Delta$ and $h$ satisfy $2 \leq h$, $\Delta \geq 1$ and fix constant $0 \leq c <1$. Suppose $\mcC$ is the result of $h-1$ levels of $c$-fuzzy $\Delta$-chops in a $K_h$-minor-free weighted graph $G$. Then, the weak diameter of every $C \in \mcC$ is at most $O(h \cdot \Delta)$.
\end{restatable}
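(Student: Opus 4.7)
The plan is to replay the classical KPR weak-diameter argument of \citet{klein1993excluded}, which shows that $h-1$ levels of ordinary $\Delta$-chops on a $K_h$-minor-free graph yield components of weak diameter $O(h \Delta)$, and verify that the fuzziness parameter $c < 1$ only perturbs the analysis by a multiplicative $(1+c) = O(1)$ factor. The central structural observation is that each fuzzy annulus in a $c$-fuzzy $\Delta$-chop has depth-width at most $(1+c)\Delta < 2\Delta$: all vertices $x \in A_i$ satisfy $d_G(r, x) \in [(i-1)\Delta - c\Delta/2,\, i\Delta + c\Delta/2)$ for the BFS root $r$ of that chop. Because $c < 1$, distinct fuzzy annuli at the same level are still disjoint vertex sets, separated by at least $(1-c)\Delta > 0$ in the $d_G(r, \cdot)$ order.

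First I would unwind the recursive structure. For a final component $C \in \mcC$, there is a nested chain $C = C_{h-1} \subseteq C_{h-2} \subseteq \cdots \subseteq C_0 \subseteq V(G)$, where $C_i$ is a connected component of a fuzzy annulus arising from the $i$-th level chop applied to $C_{i-1}$, with some BFS root $r_i \in C_{i-1}$. Fix any $u, v \in C$; I would show by contradiction that $d_G(u,v) = O(h (1+c) \Delta)$ by assuming $d_G(u,v) > M h (1+c) \Delta$ for a sufficiently large absolute constant $M$ and then producing a $K_h$-minor, contradicting $K_h$-minor-freeness.

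Following the KPR construction, I would build $h$ pairwise-disjoint connected subgraphs $S_0, S_1, \ldots, S_{h-1}$ of $G$ that are pairwise adjacent, yielding the $K_h$-minor. Informally, $S_0$ comes from the shortest $u$-$v$ path in $G$, and each $S_i$ for $i \geq 1$ is built from a suitable prefix of the BFS path at level $i$ from $u$ to $r_i$, truncated to lie inside the fuzzy annulus at level $i$ containing $u$. Disjointness across levels holds because $S_i$ lives inside the fuzzy annulus at level $i$, whose $d_G(r_i, \cdot)$-window is disjoint from the windows used to build the other $S_j$'s (they are swept out by different BFS roots or in different annuli); adjacency between $S_i$ and $S_j$ follows because the chosen prefixes end at a fuzzy chop boundary, which by definition witnesses a cut edge.

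The main obstacle is making this $K_h$-minor construction robust to the fuzzy boundary shifts of $\pm c\Delta/2$ at each level. Two properties must be verified: that the $S_i$'s remain vertex-disjoint despite these shifts, and that each $S_i$ is long enough to reach across its fuzzy boundary so it genuinely abuts the adjacent $S_j$'s. Both follow from $c < 1$: adjacent annuli at any single level never overlap (so the $S_i$'s can be cleanly separated by the annular structure), and the assumed weak diameter of more than $M h (1+c) \Delta$ with $M$ a large absolute constant leaves $\Omega(\Delta)$ slack at every level to absorb the $O(c\Delta)$ fuzziness when truncating each $S_i$ to its annulus. Taking the contrapositive and absorbing all $O(1)$ factors coming from $c$ into the big-$O$ yields the stated $O(h\Delta)$ weak-diameter bound.
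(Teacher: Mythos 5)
Your high-level plan (replay the KPR minor-construction and observe that the $O(c\Delta)$ fuzziness only costs a constant factor) is the right instinct and matches the paper's strategy, but the concrete $K_h$-minor construction you sketch has a genuine gap: the objects you describe are not disjoint, and the argument you give for disjointness is wrong. You build each $S_i$ ($i \geq 1$) from a prefix of the BFS path from $u$ to the level-$i$ root $r_i$, truncated to the level-$i$ annulus containing $u$. Since every level-$i$ annulus contains $u$ (and these annuli are \emph{nested}, $C_{h-1}\subseteq C_{h-2}\subseteq\cdots$, not pairwise disjoint), all of your $S_i$'s start at $u$ and therefore intersect. Your stated reason for disjointness---that the ``$d_G(r_i,\cdot)$-windows are disjoint''---does not hold across levels: the windows are defined with respect to \emph{different} roots $r_i$, so there is no shared coordinate along which to compare them, and in any case the deeper component lies entirely inside the shallower annulus. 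Your adjacency claim is likewise unsupported: it is not clear why a truncated prefix toward $r_i$ would abut a truncated prefix toward $r_j$ for $j\ne i$.

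The missing idea is the one the paper's proof is actually built around: the induction does not produce a flat list of $h$ paths but a sequence of $K_j$-minors ($j=1,\ldots,h$) whose supernodes are tracked via \emph{representatives} that start $\Omega(h\Delta)$ apart in a level-$j$ boundary set $S$ and lose only $O(\Delta)$ of that separation per level (formalized by the paper's notions of $\Delta$-dense and $R$-represented). At each step one grows each existing supernode by following a short path toward $r$ to the next boundary set, and adds a single new supernode near the root; disjointness comes from the quantitative separation of the representatives, not from any claim that the annuli themselves are disjoint. Without that invariant the disjointness argument has nothing to stand on, so the proposal as written does not go through. The part of your proposal that is correct and matches the paper---namely that $c<1$ gives each fuzzy annulus width at most $(1+c)\Delta$ and only perturbs the KPR constants---is the easy part; the hard part is the representative-separation bookkeeping, which is what the paper devotes the proof to.
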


For the rest of this section we identify the nodes of a minor of graph $G$ with ``supernodes.'' In particular, we will think of each of the vertices of the minor as corresponding to a disjoint, connected subset of vertices in $G$ (a supernode) where the minor can be formed from $G$ (up to isomorphism) by contracting the constituent nodes of each such supernodes.

Our proof will closely track a known analysis of KPR \cite{leeKPR}. The sketch of this strategy is as follows. We will argue that if we fail to produce parts with low diameter then we have found $K_h$ as a minor. Our proof will be by induction on the number of levels of fuzzy chops. Suppose $\mcC$ is produced by $h-1$ levels of fuzzy chops; in particular, suppose $\mcC$ is produced by taking some fuzzy chop to get $\mcC'$ and then taking $h-2$ levels of fuzzy chops on each $C' \in \mcC'$. Also assume that there is some $C \in \mcC$ which has \emph{large} diameter. Then, $C$ must result from taking $h-2$ levels of fuzzy chops on some $C' \in \mcC$ where $C'$ lies in some fuzzy annuli $A_i$ of $G$. By our inductive hypothesis it follows that $C$ contains $K_{h-1}$ as a minor. Our goal is to add one more supernode to this minor to get a $K_h$ minor. We will do so by finding disjoint paths of length $O(\Delta)$ in the annuli above $A_i$ from each of the $K_{h-1}$ supernodes all of which converge on a single connected component. By adding these paths to their respective supernodes in the $K_{h-1}$ minor and adding the connected component on which these paths converge to our collection of supernodes, we will end up with a $K_h$ minor.

The main challenge in this strategy is to show how to find paths as above which are disjoint. We will do so by choosing these paths from a ``representative'' from each supernode where initially the representatives are $\Omega(h\Delta)$ far-apart and grow at most $O(\Delta)$ closer at each level of chops; since we do at most $O(h)$ levels of chops, the paths we choose will never intersect.

To formalize this strategy we must state a few definitions which will aid in arguing that these representatives are far apart.

\begin{definition}[$\Delta$-Dense]
    Given sets $S,U \subseteq V$ we say $S$ is $\Delta$-dense in $U$ if $d(u, S) \leq \Delta$ for every $u \in U$.
\end{definition}

\begin{definition}[$R$-Represented]
    A $K_h$ minor is $R$-represented by set $S \subseteq V$ if each supernode of the minor in $G$ contains a representative in $S$ and these representatives are pairwise at least $R$ apart in $G$.
\end{definition}

Since $V$ is clearly $(1+c)\Delta$-dense in $V$, we can set $S$ to $V$ and $j$ to $h-1$ in the following lemma to get \Cref{lem:fuzChop}.
\begin{lemma}
    Fix $0 \leq c < 1$ and $h > j \geq 1$. Let $S$ be any set which is $(1+c)\Delta$-dense in $V$ and suppose $\mcC$ is the result of $j$ levels of $c$-fuzzy $\Delta$-chops and some $C \in \mcC$ has weak diameter more than $22h \Delta$. Then there exists a $K_{j+1}$ minor which is $8(h-j)\Delta$-represented by $S$.
\end{lemma}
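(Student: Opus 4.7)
The plan is to prove this by induction on $j$.

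For the base case $j = 1$, I would pick $u, v \in C$ with $d_G(u, v) > 22h\Delta$ and use the $(1+c)\Delta$-density of $S$ to obtain $s_1, s_2 \in S$ with $d_G(s_1, u), d_G(s_2, v) \leq (1+c)\Delta$. This gives $d_G(s_1, s_2) > 22h\Delta - 2(1+c)\Delta \geq 8(h-1)\Delta$ using $c < 1$ and $h \geq 2$. I would then split any shortest $s_1$-to-$s_2$ path in $G$ at one interior edge; the two halves are the connected, edge-adjacent supernodes of a $K_2$ minor $8(h-1)\Delta$-represented by $\{s_1, s_2\}$.

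For the inductive step ($j \geq 2$), decompose $\mcC$ as one initial $c$-fuzzy $\Delta$-chop of $G$ with some root $r$ and BFS tree $T$ (producing fuzzy annuli $\{A_i\}_i$ and connected components $\mcC'$) followed by $j-1$ further levels of chops on each $C' \in \mcC'$. Let $C' \in \mcC'$ be the component containing the offending $C$; it sits in some fuzzy annulus $A_i$. Since $A_1$ has weak diameter at most $(2+c)\Delta < 3\Delta < 22h\Delta$, we must have $i \geq 2$, so the annulus $A_{i-1}$ exists. I would apply the inductive hypothesis to the $j-1$ chops producing $C$ inside $C'$, using $S$ itself as the dense set (it is $(1+c)\Delta$-dense in $V(C') \subseteq V$ a fortiori), obtaining a $K_j$ minor with supernodes $U_1, \ldots, U_j \subseteq V(C')$ and representatives $t_1, \ldots, t_j \in S \cap V(C')$ pairwise at $G$-distance $\geq 8(h-j+1)\Delta$.

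To build the required $K_{j+1}$ minor, I would add a single new supernode $W$ in the region above $A_i$. For each $k$, let $y_k$ be the deepest $T$-ancestor of $t_k$ still in $A_i$ and $z_k := \text{parent}_T(y_k) \in A_{i-1}$; append the $T$-path from $t_k$ to $y_k$ (of length at most $(1+c)\Delta$, contained in $A_i$) to $U_k$. Let $W$ be the union of the $T$-paths from each $z_k$ upward to $r$, which is connected (all paths meet at $r$), adjacent to each extended $U_k$ via the tree edge $\{y_k, z_k\}$, and disjoint from every extended $U_k$ because $W \subseteq A_1 \cup \cdots \cup A_{i-1}$ while each extended $U_k \subseteq A_i$ and the fuzzy annuli partition $V$. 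Pick $s_{j+1} \in S$ within $(1+c)\Delta$ of $r$ by density, extending $W$ by that short path. The $t_k$'s retain their $\geq 8(h-j+1)\Delta \geq 8(h-j)\Delta$ separation, and one checks $d_G(s_{j+1}, t_k) \geq (i-1)\Delta - c\Delta/2 - (1+c)\Delta \geq 8(h-j)\Delta$ via the annulus depth gap combined with triangle inequality.

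The main obstacles are twofold. First, the $U_k$'s produced by the IH are a priori arbitrary connected subsets of $V(C')$ with no explicit geometric control, so one must verify that the upward extensions of each $U_k$ and the short $r$-to-$s_{j+1}$ extension of $W$ do not collide with another $U_{k'}$. The natural fix is to strengthen the inductive invariant so that each $U_k$ is contained in an $O((h-j)\Delta)$-ball around $t_k$; this invariant is preserved because at each recursion level new supernodes are built from $O(\Delta)$-length tree extensions, and the pairwise separations $\geq 8(h-j+1)\Delta$ then easily absorb the $O(\Delta)$ collision radius. Second, the distance bound for $d_G(s_{j+1}, t_k)$ relies on $i$ being at least of order $h-j$ and fails for small $i$; the resolution is to instead choose $s_{j+1}$ at an intermediate depth within $W$ where the bound still holds, or symmetrically to grow $W$ downward into $A_{i+1}$ rather than upward, yielding a construction valid for any $i \geq 2$. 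Assembling these refinements with the $8\Delta$ per-level slack between $8(h-j+1)\Delta$ and $8(h-j)\Delta$ is the heart of the argument.
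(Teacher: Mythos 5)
Your proposal has the right skeleton but a gap at its heart: you apply the inductive hypothesis on $C'$ \emph{with the same set $S$}, claiming representatives $t_1,\dots,t_j\in S\cap V(C')$. Being $(1+c)\Delta$-dense in $V$ does make $S$ $(1+c)\Delta$-dense in $V(C')$, but when the lemma is invoked on $C'$ as the ambient graph, the dense set must be a subset of $V(C')$ (the supernodes of the resulting $K_j$ minor live in $C'$, so representatives must lie in the dense set intersected with $V(C')$). There is no reason for $S\cap V(C')$ to be nonempty, let alone dense in $V(C')$: a vertex $u\in V(C')$ may be within $(1+c)\Delta$ of $S$ only via vertices outside $C'$. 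The paper sidesteps this by instead feeding the IH the ``upper boundary'' $S'$ of $C'$ --- the vertices of $C'$ whose shortest $r$-path re-enters $C'$ only at themselves --- which is automatically contained in $V(C')$ and is $(1+c)\Delta$-dense in $V(C')$ because the fuzzy annulus has radial width at most $(1+c)\Delta$. It then bridges from $S'$ to $S$ by growing each supernode $2\Delta$ toward $r$ and a further $\leq(1+c)\Delta$ to the nearest vertex of $S$. That $S'$ choice is not an optimization; it is load-bearing, and its absence makes the inductive step unsound.

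The upper-boundary choice also dissolves your first ``obstacle'' for free: since each representative $v_i\in S'$ lies on the upper boundary, the extension path from $v_i$ toward $r$ leaves $C'$ immediately and never returns, so it cannot hit another inherited supernode $V_j\subseteq V(C')$. There is no need to strengthen the invariant to $O((h-j)\Delta)$-balls. Your second obstacle is also not really an obstacle once you push the calculation you already started: you observed $A_1$ has weak diameter at most $(2+c)\Delta$, but the same reasoning shows that a fuzzy annulus $A_k$ has weak diameter at most $(2k+c)\Delta$, so weak diameter $>22h\Delta$ forces $k\geq 9h+1$ --- plenty of room to separate the root-side representative from the $t_k$'s at distance $8(h-j)\Delta$. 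Your proposed patches (choosing $s_{j+1}$ at an intermediate depth, or growing $W$ downward) are both unproved and unnecessary once you notice that the large-diameter hypothesis itself rules out small annulus indices.
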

\begin{proof}

    Our proof is by induction on $j$. The base case of $j=0$ is trivial as $K_1$ is a minor of any graph with a supernode $+\infty$-represented by any single vertex in $V$.
    
    Now consider the inductive step on graph $G = (V, E)$. Fix some set $S$ which is $(1+c)\Delta$-dense in $V$ and let $\mcC$ be the result of $j$ levels of $c$-fuzzy chops using root $r$ with some $C' \in \mcC$ of diameter more than $22h\Delta$. Suppose $C'$ is in fuzzy annuli $A_k$ and suppose that $C'$ is the result of applying $j-1$ levels of $c$-fuzzy chops to some $C$ which resulted from 1 level of $c$-fuzzy chops in $G$; note that $C$ is a connected component of $A_k$ and that $C'$ is contained in $C$.
    
     As an inductive hypothesis we suppose that any $j-1$ levels of $c$-fuzzy $\Delta$-chops on any graph $H$ which results in a cluster of weak diameter more than $22 h \Delta $ demonstrates the existence of a $K_j$ minor in $H$ which is $8(h-j - 1)\Delta$-represented by any set $S'$ which is $(1+c)\Delta$-dense in $V(H)$. Here weak diameter is with respect to the distances induced by the original input graph.
    
     Thus, by our inductive hypothesis we therefore know that $C$ contains a $K_{j}$ minor which is $8(h-j - 1)\Delta$-represented by any $S' \subseteq V(C)$ which is $(1+c)\Delta$-dense in $V(C)$. In particular, we may let $S'$ be the ``upper boundary'' of $C$; that is, we let $S'$ be all vertices $v$ in $C$ such that the shortest path from $v$ to $r$ does not contain any vertices in $C$.
     Clearly the shortest path from any vertex in $C$ to $r$ intersects a node in $S'$; moreover, when restricted to $C$ this shortest path has length at most $\Delta + c\Delta$ (since $C$ is contained in $A_k$) which is to say that $S'$ is $(1+c)\Delta$-dense in $C$. Thus, by our induction hypothesis there is a $K_j$ minor in $C$ which is $8(h-j-1)\Delta$-represented by $S'$. Let $V_1, \ldots, V_j$ be the nodes in the supernodes of the $K_j$ minor. 
    
    
    We now describe how to extend the $K_j$ minor to a $K_{j+1}$ minor which is $8(h-j)\Delta$-represented by $S$. We may assume that $k \geq 9h + 1$; otherwise the distance from every node in $A_k$ to $r$ would be at most $(9h+1) \Delta + \frac{c\Delta}{2} \leq (9h+\frac{3}{2})\Delta$ and so the weak diameter of $C'$ would be at most $(18h+3)\Delta \leq 21h\Delta$, contradicting our assumption on $C'$'s diameter. It follows that for every $v \in A_k$ we have
    \begin{align}\label{eq:distLB}
    d(v, r) \geq (k-1)\Delta - \frac{c\Delta}{2} \geq 9h\Delta - \frac{c\Delta}{2} \geq 8h\Delta.
    \end{align}
    
    We first describe how we grow each supernode $V_i$ from the $K_j$ minor to a new supernode $V_i'$. Let $v_i$ be the representative in $S'$ for $V_i$. Consider the path which consists of following the shortest path from $v_i$ to $r$ for distance $2\Delta$ and then continuing on to the nearest node in $S$; let $v_i'$ be this nearest node; this path from $v_i$ to $v_i'$ has length at most $(3+c)\Delta$ since $S$ is $(1+c)\Delta$-dense in $V(G)$. Let $V_i'$ be the union of $V_i$ with the vertices in this path. Since each of these paths is of length at most $(3+c)\Delta \leq 4 \Delta$, it follows that each of these paths for each $i$ must be disjoint since each $v_i$ is at least $8(h-j+1)\Delta > 8 \Delta$ apart. Further, every $v_i'$ must also, therefore, be at least $8(h-j)\Delta$ apart. Therefore, we let these $v_i'$ form the representatives in $S$ for each of the $V_i'$.
    
    We now describe how we construct the additional supernode, $V_0$, which we add to our minor to get a $K_{j+1}$ minor. $V_0$ will ``grow'' from the root to $S$ and each of the $V_i'$. In particular, let $u_i \in V_i'$ be the node in $V_i'$ which is closest to $r$ and let $P_i$ be the shortest path from $r$ to $u_i$, excluding $u_i$. Similarly, let $v_0'$ be the node in $S$ closest to $r$ and let $P_0$ be the shortest path from $r$ to $v_0'$, including $v_0'$. Then, we let $V_0$ be $P_0 \cup P_1 \cup \ldots \cup P_j$ and we let $v_0'$ be the representative for $V_0$ in $S$. We claim that for every $i \geq 1$ we have 
    \begin{align}\label{eq:PNot}
    d(P_0, V_i') \geq 8(h-j)\Delta.
    \end{align}
    In particular, notice that since $S$ is $(1+c)\Delta$-dense in $V(G)$ we know that $d(r, v_0') \leq (1+c)\Delta \leq 2 \Delta$ and since $d(v, r) \geq 8h\Delta$ for every $v \in A_k$ by \Cref{eq:distLB} and $d(v_i', A_k) \leq (3+c) \Delta \leq 4 \Delta$, it follows that $d(P_0, V_i')  \geq (8h - 6) \Delta \geq 8(h-j)\Delta$. Consequently, $d(v_0', v_i') \geq 8(h-j)\Delta$ for every $i \geq 1$. Thus, our representatives of each supernode are appropriately far apart.
    
    It remains to show that our supernodes indeed form a $K_{j+1}$ minor; clearly by construction they are all pair-wise adjacent and so it remains only to show that they are all disjoint from one another. We already argued above that for $i, j \geq 1$ any $V_i'$ and $V_j'$ are disjoint so we need only argue that $V_0'$ is disjoint from each $V_i'$ for $i \geq 1$. $P_0$ must be disjoint from each $V_i'$ for $i \geq 1$ by \Cref{eq:PNot} and so we need only verify that $P_i$ is disjoint from $V_j'$ for $i,j \geq 1$; 
    
    By construction if $i = j$ we know that $P_i$ is disjoint from $V_j'$ so we assume $i \neq j$ and that $P_i$ intersects $V_j'$ for the sake of contradiction. Notice that each $P_i$ has length at most $k\Delta + \frac{c\Delta}{2} - 2 \Delta = k\Delta + c\Delta - 2 \Delta - \frac{c \Delta}{2} < (k-1)\Delta - \frac{c \Delta}{2}$ by how we construct $V_i'$. Thus, $P_i$ must be disjoint from $A_k$. It follows that if $P_i$ intersects $V_j'$ then it must intersect $V_j' \setminus V_j$. However, since $d(v_i, v_j) \geq 8(h-j + 1)\Delta \geq 16 \Delta$ and the length of paths $V_i' \setminus V_i$ and $V_j' \setminus V_j$ are at most $4\Delta$ we know that $d(V_i' \setminus V_i, V_j' \setminus V_j) \geq 8(h-j)\Delta \geq 8 \Delta$. Thus, after intersecting $V_j' \setminus V_j$ and then continuing on to a vertex adjacent to $V_i' \setminus V_i$, we know $P_i$ must travel at least $8 \Delta$; since the vertices of $P_i$ are monotonically further and further from $r$, and the vertex in $V_j'\setminus V_j$ that $P_i$ intersects must be distance at least $(k-1)\Delta - \frac{c \Delta}{2} - 4 \Delta \geq (k-5)\Delta$ from $r$, then the last vertex of $P_i$ must be distance at least $(k+3)\Delta$ from $r$, meaning $P_i$ must intersect annuli $A_k$, a contradiction.
\end{proof}

\subsection{Scattering Chops}

Using \Cref{lem:fuzChop} we can reduce computing a good scattering partition and therefore computing a good SPR solution to finding what we call a scattering chop. The following definitions are somewhat analogous to \Cref{dfn:SP} and \Cref{dfn:scatterableGraph}. However, notice that the second definition is for a family of graphs (as opposed to a single graph as in \Cref{dfn:scatterableGraph}). We illustrate a $\tau$-scattering chop in \Cref{fig:scatChops}.

\begin{figure}
    \centering
    \begin{subfigure}[b]{0.49\textwidth}
        \centering
        \includegraphics[width=\textwidth,trim=0mm 0mm 0mm 90mm, clip]{./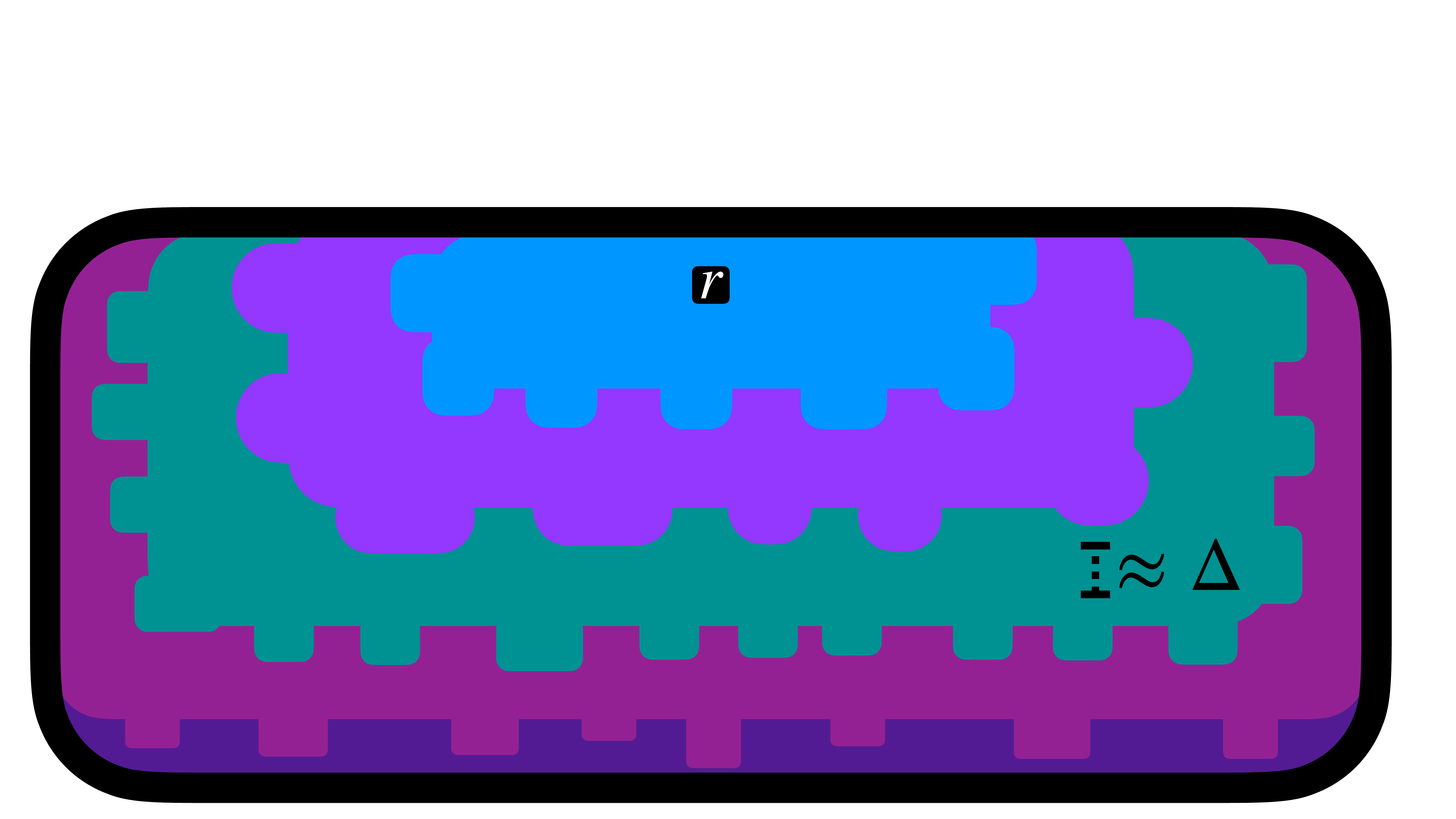}
        \caption{A $c$-fuzzy $\Delta$-chop.}\label{sfig:sChop1}
    \end{subfigure}
    \hfill
    \begin{subfigure}[b]{0.49\textwidth}
        \centering
        \includegraphics[width=\textwidth,trim=0mm 0mm 0mm 90mm, clip]{./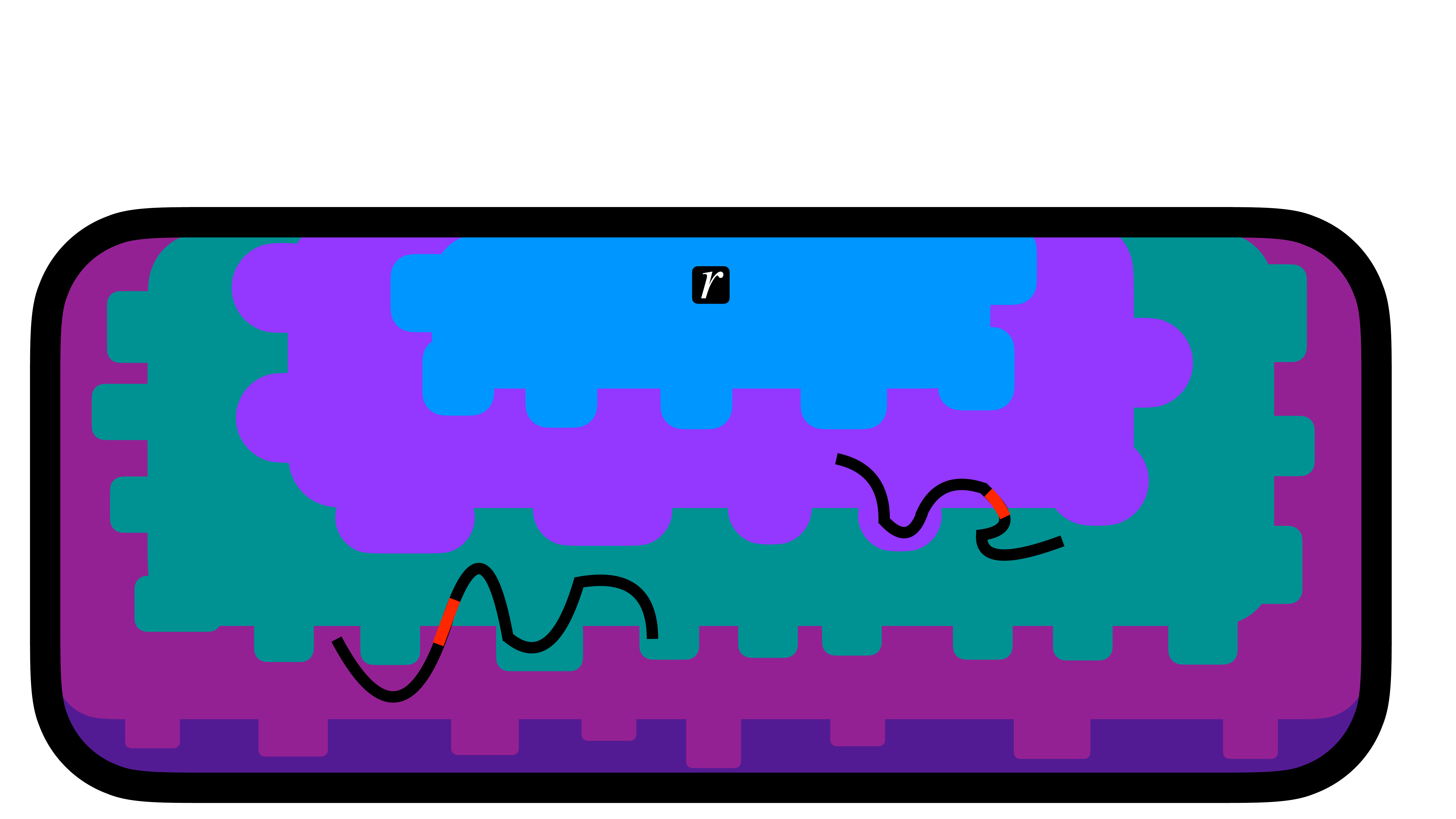}
        \caption{Visualizing some paths.}\label{sfig:sChop2}
    \end{subfigure}
    \caption{A $c$-fuzzy $\Delta$-chop that is $1$-scattering. We draw each fuzzy annuli in a distinct color. In (b) we visualize some shortest paths of length at most $\Delta$ and highlight cut edges in red.}\label{fig:scatChops}
\end{figure}

\begin{definition}[$\tau$-Scattering Chop]
    Given weighted graph $G=(V,E,w)$, let $\mcA$ be a $c$-fuzzy $\Delta$-chop with respect to some root $r \in V$. $\mcA$ is a $\tau$-scattering chop if each shortest path of length at most $\Delta$ has at most $\tau$ edges cut by $\mcA$ where we say that an edge is cut by $\mcA$ if it has endpoints in different fuzzy annuli of $\mcA$.
\end{definition}


\begin{definition}[$\tau$-Scatter-Choppable Graphs]\label{dfn:scatChop}
    A family of graphs $\mcG$ is $\tau$-scatter-choppable if there exists a constant $0 \leq c < 1$ such that for any $G \in \mcG$ and $\Delta \geq 1$ there is some $\tau$-scattering and $c$-fuzzy $\Delta$-chop $\mcA$ with respect to some root.
\end{definition}
We will say that $\mcG$ is deterministic poly-time $\tau$-scatter-choppable if the above chop $\mcA$ for each $G \in \mcG$ can be computed in deterministic poly-time.

Lastly, we conclude that to give an $O(1)$-scattering partition---and therefore to give an $O(1)$-SPR solution---for a $K_h$-minor-free graph family it suffices to show that such a family is $O(1)$-scatter choppable.
\begin{lemma}\label{lem:scatterChop}
    Fix a constant $h \geq 2$ and let $\mcG_h$ be all $K_h$-minor-free graphs. Then, if $\mcG_h$ is $\tau$-scatter-choppable then every $G \in \mcG_h$ is $O(\tau^{h-1})$-scatterable.
\end{lemma}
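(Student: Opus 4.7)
The plan is to produce an $(O(\tau^{h-1}), \Delta)$-scattering partition by recursively applying $\tau$-scattering chops $h-1$ times, using the fact that induced subgraphs of $K_h$-minor-free graphs are themselves $K_h$-minor-free (so they remain in $\mcG_h$ and inherit scatter-choppability).

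Fix $G \in \mcG_h$ and $\Delta \geq 0$, let $c \in [0,1)$ be the constant witnessing that $\mcG_h$ is $\tau$-scatter-choppable, and let $C_h$ be the implicit constant from \Cref{lem:fuzChop} so that $h-1$ levels of $c$-fuzzy $\Delta'$-chops yield weak diameter at most $C_h \cdot \Delta'$. Set $\Delta' := \Delta/C_h$. Starting from $\mcC_0 = \{V\}$, for $i = 1,\ldots,h-1$, for each $C \in \mcC_{i-1}$ apply a $\tau$-scattering $c$-fuzzy $\Delta'$-chop to $G[C] \in \mcG_h$, take the resulting connected components, and let $\mcC_i$ be the union of all such components. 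Output $\mcC_{h-1}$.

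I would then check the three requirements of \Cref{dfn:SP} for this partition. Connectedness is immediate from the construction. Weak diameter at most $C_h \cdot \Delta' = \Delta$ follows directly from \Cref{lem:fuzChop} applied to $h-1$ levels of $c$-fuzzy $\Delta'$-chops (noting $h$ is a constant so the hypothesis $h \geq 2$ is satisfied and $c < 1$ is fixed).

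For the scattering property, fix a shortest path $P$ in $G$ of length at most $\Delta$. The key observation I would establish is the following recursion: at each level, if a subpath $P'$ of $P$ lies in some component $C$ and is a shortest path of length at most $\Delta$ in $G[C]$, then applying a $\tau$-scattering $c$-fuzzy $\Delta'$-chop to $G[C]$ cuts $P'$ into at most $C_h \tau + 1$ pieces. This holds because $P'$ can be partitioned into at most $\lceil \Delta/\Delta' \rceil = C_h$ consecutive subpaths each of length at most $\Delta'$, each of which is itself a shortest path (subpath of a shortest path) of length at most $\Delta'$, hence cut at most $\tau$ times by the scattering chop. Moreover, each resulting piece remains a shortest path of length at most $\Delta$ in its new component, since (a) a subpath of a shortest path in $G$ is a shortest path in $G$, and (b) any path in an induced subgraph $G[C']$ is also a path in $G$, so a shortest-in-$G$ path contained in $G[C']$ is also shortest in $G[C']$. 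Iterating this argument over $h-1$ levels yields that $P$ intersects at most $(C_h \tau + 1)^{h-1} = O(\tau^{h-1})$ parts of $\mcC_{h-1}$, where the constant absorbed into the $O(\cdot)$ depends only on $h$.

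The main subtlety to get right is the second clause of the key observation: ensuring that after a cut, each fragment is still a \emph{shortest} path in its induced subgraph, so the inductive step may reapply the $\tau$-scattering hypothesis. Beyond that, the argument is bookkeeping: tracking that $\Delta' = \Delta/C_h$ simultaneously yields the correct weak diameter bound via \Cref{lem:fuzChop} and the correct per-level cut bound $C_h \tau$ via the length-at-most-$\Delta$ scattering argument, with both constants depending only on $h$.
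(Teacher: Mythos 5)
Your proof is correct and follows essentially the same route as the paper: both recursively apply $\tau$-scattering chops at scale $\Theta(\Delta/h)$ for $h-1$ levels, invoke \Cref{lem:fuzChop} for the diameter bound, and exploit the closure of $\mcG_h$ and of shortest paths under taking induced subgraphs and subpaths. The only cosmetic difference is in the accounting: the paper partitions $P$ once at the outset into $1/c'$ subpaths of length at most $c'\Delta$ and tracks each through all $h-1$ levels to get $\tau^{h-1}/c'$, whereas you re-partition the surviving fragments into $\leq C_h$ short subpaths at every level, yielding the slightly looser $(C_h\tau+1)^{h-1}$ --- both are $O(\tau^{h-1})$ since $h$ and $C_h$ are constants. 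You are also more explicit than the paper about why each fragment remains a shortest path in the induced subgraph of its component (namely, paths in induced subgraphs are paths in $G$), which is exactly the point needed to re-invoke the scatter-choppability hypothesis at each level.
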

\begin{proof}
    The claim is almost immediate from \Cref{lem:fuzChop} and the fact that all subpaths of a shortest path are themselves shortest paths. 
    
    In particular, first fix a sufficiently small constant $c'$ to be chosen later. Then, consider a $G \in \mcG_h$ and fix a $\Delta$. By assumption we know that $G$ is $\tau$-scatter-choppable and since each subgraph of $G$ is in $\mcG_h$ so too is each subgraph of $G$. Thus, we may let $\mcC$ be the connected components resulting from $h-1$ levels of $c$-fuzzy and $(c'\Delta)$-chops which are $\tau$-scattering. 
    
    We claim that for sufficiently small $c'$ we have that $\mcC$ is a $\left(\frac{\tau^{h-1}}{c'}, \Delta\right)$-scattering partition. By \Cref{lem:fuzChop} the diameter of each part in $\mcC$ is at most $O(c' \cdot h \cdot \Delta) \leq \Delta$ for sufficiently small $c'$. Next, consider a shortest path $P$ of length at most $\Delta$. We can partition the edges of $P$ into at most $\frac{1}{c'}$ shortest paths $P_1, P_2, \ldots$, each of length at most $c' \cdot \Delta$. Thus, it suffices to show that each $P_i$ satisfies $|\{C \in \mcC: P_i \cap C \neq \emptyset \}| \leq \tau^{h-1}$. 
    
    We argue by induction on the number of levels of chops that after $h' < h$ chops we have $|\{C \in \mcC: P_i \cap C \neq \emptyset \}| \leq \tau^{h'}$. Suppose we perform just one chop; i.e.\ $h'=1$. Then, since our chops are $\tau$-scattering we know that $P$ will be cut at most $\tau$ times and so be incident to at most $\tau$ components of $\mcC$ as required. Next, suppose we perform $h' > 1$ levels of chops. Then our top-level chop will partition the vertices of $P_i$ into at most $\tau$ components. By induction and the fact that each subpath of $P_i$ is itself a shortest path of length at most $c' \Delta$, we know that the vertices of $P_i$ in each such component are broken into at most $\tau^{h'-1}$ components and so $P_i$ will be incident to at most $\tau^{h'}$ components as required. As we perform $h-1$ levels of chops, it follows that $\mcC$ is indeed a $\left(\frac{\tau^{h-1}}{c'}, \Delta\right)$-scattering partition.
\end{proof}



\section{Hammock Decompositions}\label{sec:hamDec}

In this section we formally define our hammock decompositions and give some of their properties. For the rest of this section we will assume we are working with a fixed graph $G = (V,E)$ with a fixed but arbitrary root $r\in V$ and a fixed but arbitrary BFS tree $\TBFS$ rooted at $r$. Throughout this section we will extensively use the notational conventions specified in \Cref{sec:notationAssumptions}, especially the BFS tree notation.

\subsection{Trees of Hammocks}

We begin by formalizing and establishing the properties of the key component of our hammock decompositions, the tree of hammocks. Roughly speaking, a tree of hammocks will be a tree in the usual graph-theoretic sense whose nodes are structured subgraphs which we call hammocks. Much of this section will be concerned with showing that a tree of hammocks, despite having subgraphs as its nodes, satisfies nice properties analogous to those of a tree of nodes.

We begin by defining a hammock graph as two subtrees of $\TBFS$ along with all cross edges between them. We illustrate a hammock in \Cref{sfig:ham}. Recall that $E_c := E \setminus E(\TBFS)$ gives the ``cross'' edges of $\TBFS$ and $\highV$ gives the vertex closest to the root of $\TBFS$.

\begin{figure}
    \centering
    \begin{subfigure}[b]{0.49\textwidth}
        \centering
        \includegraphics[width=\textwidth,trim=0mm 0mm 0mm 0mm, clip]{./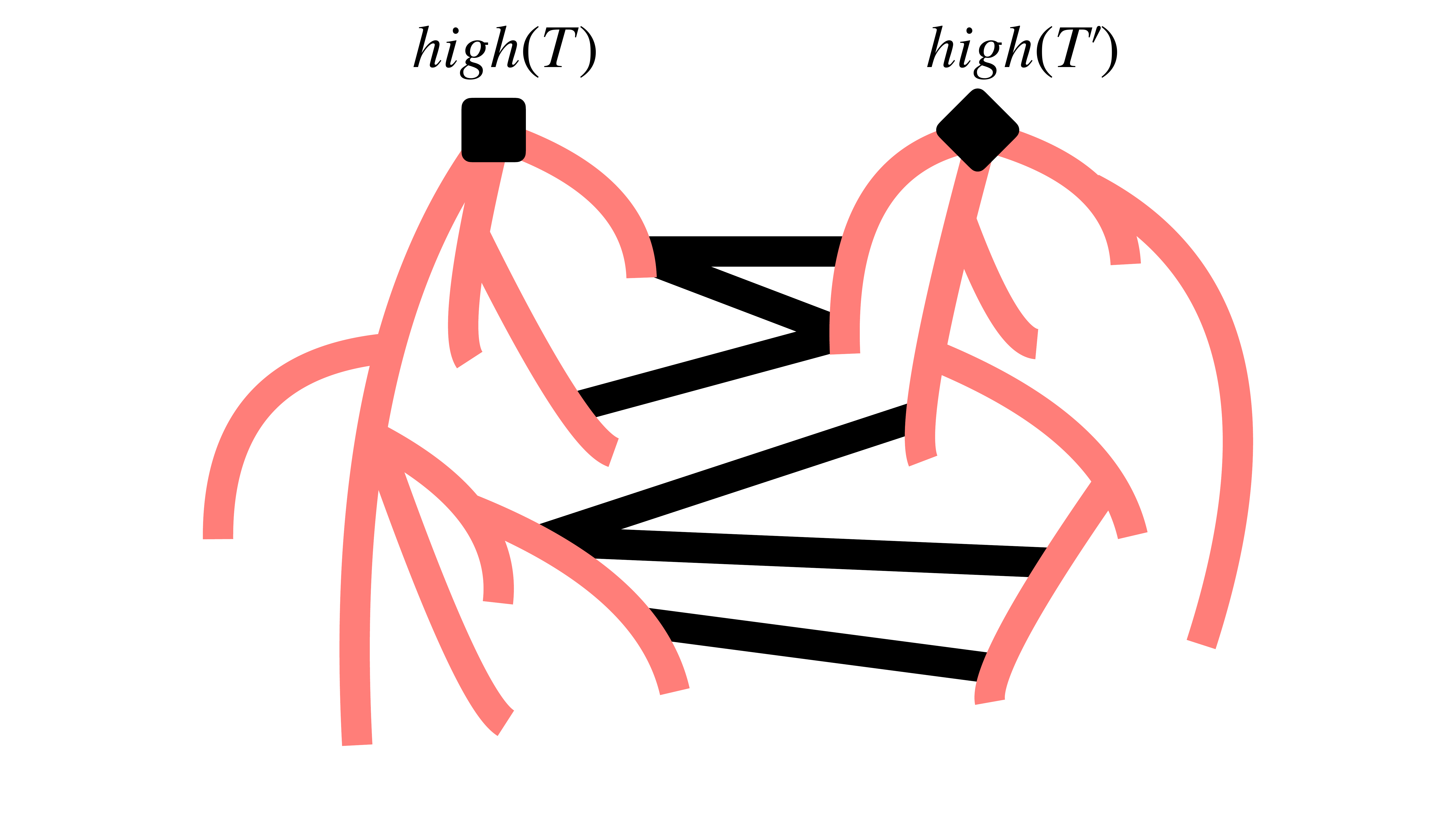}
        \caption{A hammock.}\label{sfig:ham}
    \end{subfigure}
    \hfill
    \begin{subfigure}[b]{0.49\textwidth}
        \centering
        \includegraphics[width=\textwidth,trim=0mm 0mm 0mm 0mm, clip]{./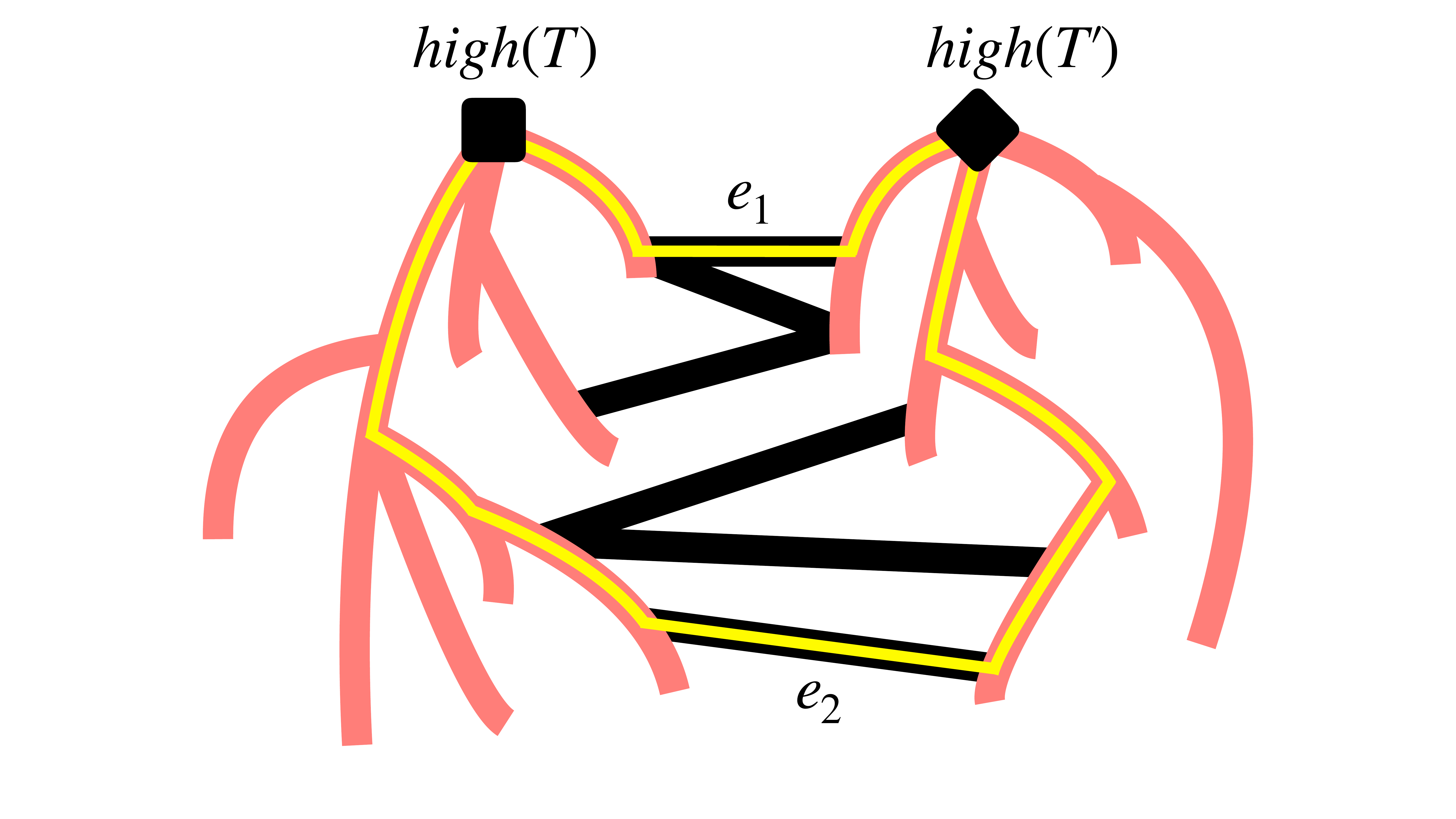}
        \caption{A hammock-fundamental cycle.}\label{sfig:hamCyc}
    \end{subfigure}
    \caption{An illustration of a hammock and a hammock-fundamental cycle for $e_1, e_2 \in E_c$. Edges of $\TBFS$ in pink, cross edges in black, hammock roots are  a black square and diamond and the hammock-fundamental cycle is in yellow.}\label{fig:ham}
\end{figure}

\begin{definition}[Hammock Graph]
    We say that subgraph $H \subseteq G$ is a hammock if $H = G[V(H)]$ and $V(H)$ can be partitioned into sets $V_1$ and $V_2$ where its ``hammock trees'' $T := \TBFS[V_1]$ and $T' := \TBFS[V_2]$ are connected, $E_c(T, T') \neq \emptyset$ and the ``hammock roots'' $\highV(T)$ and $\highV(T')$ are unrelated.
\end{definition}

Notice that given a hammock $H$ with two distinct edges $e_1 = \{v, v'\}, e_2 = \{u, u'\} \in E_c(T, T')$ where $v, u \in T$ and $v', u' \in T'$, we have that there is a unique cycle containing $e_1$ and $e_2$, namely $T(v, u) \oplus e_1 \oplus T'(u', v') \oplus e_2$. We will refer to this cycle as the hammock-fundamental cycle of $e_1$ and $e_2$. We illustrate a hammock-fundamental cycle in \Cref{sfig:hamCyc}.

For the following definition of a tree of hammocks, recall that if $\mcH$ is a collection of subgraphs of $G$ then $G[\mcH]$ gives the graph induced by the union of the subgraphs contained in $\mcH$.
\begin{definition}[Tree of Hammocks]\label{dfn:forsHam}
    We say that a collection of edge-disjoint hammocks $\mcH = \{H_i \subseteq G\}_i$ forms a tree of hammocks if every simple cycle $C$ in $G[\mcH]$ satisfies $|H_i : E(C) \cap H_i \neq \emptyset| = 1$ and $G[\mcH]$ has a single connected component.
\end{definition}

We illustrate a tree of hammocks in \Cref{sfig:treeHam}. Just as it is often useful to root a tree of vertices, so too will it be useful for us to root our trees of hammocks.

\begin{definition}[Rooted Tree of Hammocks]
    Suppose $\mcH = \{H_i\}_i$ forms a tree of hammocks. Then we say that $\mcH$ forms a rooted tree of hammocks if some $H_k \in \mcH$ is designated a ``root hammock.'' 
\end{definition}

While the above definitions are concerned with trees of hammocks, they extend naturally to a notion of forests of hammocks.

\begin{definition}[Rooted Forests of Hammocks]
    We say that a collection of edge-disjoint hammocks  $\mcH = \{H_i\}_i$ forms a (rooted) forest of hammocks if for each connected component $W$ of $G[\mcH]$ we have that $\{H_i : V(H_i) \cap V(W) \neq \emptyset\}$ forms a (rooted) tree of hammocks. We call the trees of hammocks formed by each such connected component the trees of hammocks of $\mcH$.
\end{definition}

All of our notation and definitions will extend from trees of hammocks to forests of hammocks in the natural way.

\subsubsection{Parents and Ancestors in Trees of Hammocks}

While defining a root for a tree of nodes immediately determines the parents and ancestors of each node, it is not so clear that defining a root hammock in a tree of hammocks determines a reasonable notion of parents and ancestors of hammocks. In this section, we provide some simple observations that, in turn, will allow us to coherently define parent and ancestor relationships in a tree of hammock. Along the way we will provide an explicit tree representation of a tree of hammocks $\mcH$ as a tree $T_{\mcH}$.

\begin{figure}
    \centering
    \begin{subfigure}[b]{0.49\textwidth}
        \centering
        \includegraphics[width=\textwidth,trim=50mm 20mm 20mm 15mm, clip]{./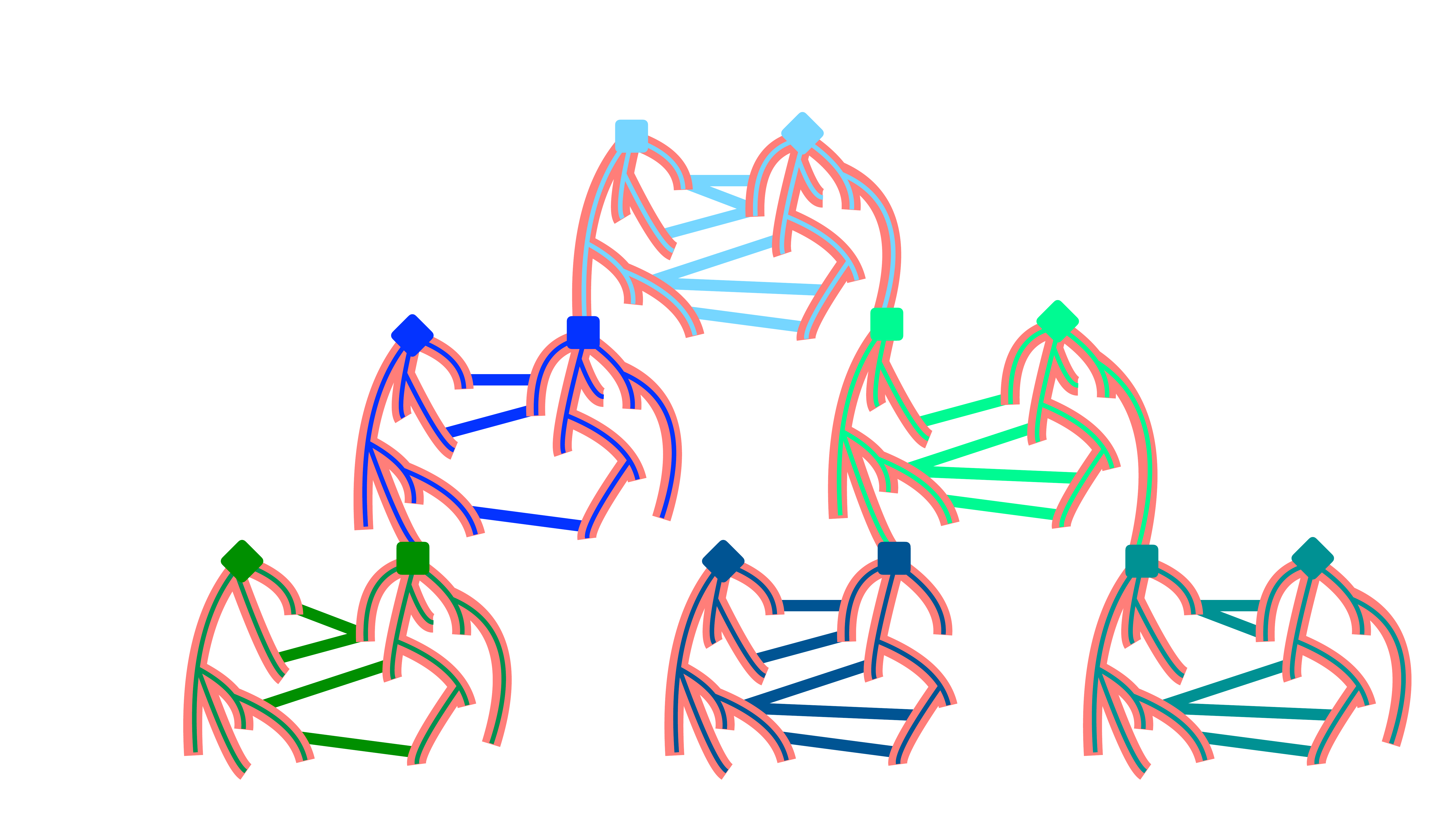}
        \caption{A tree of hammocks.}\label{sfig:treeHam}
    \end{subfigure}
    \hfill
    \begin{subfigure}[b]{0.49\textwidth}
        \centering
        \includegraphics[width=\textwidth,trim=0mm 50mm 0mm 0mm, clip]{./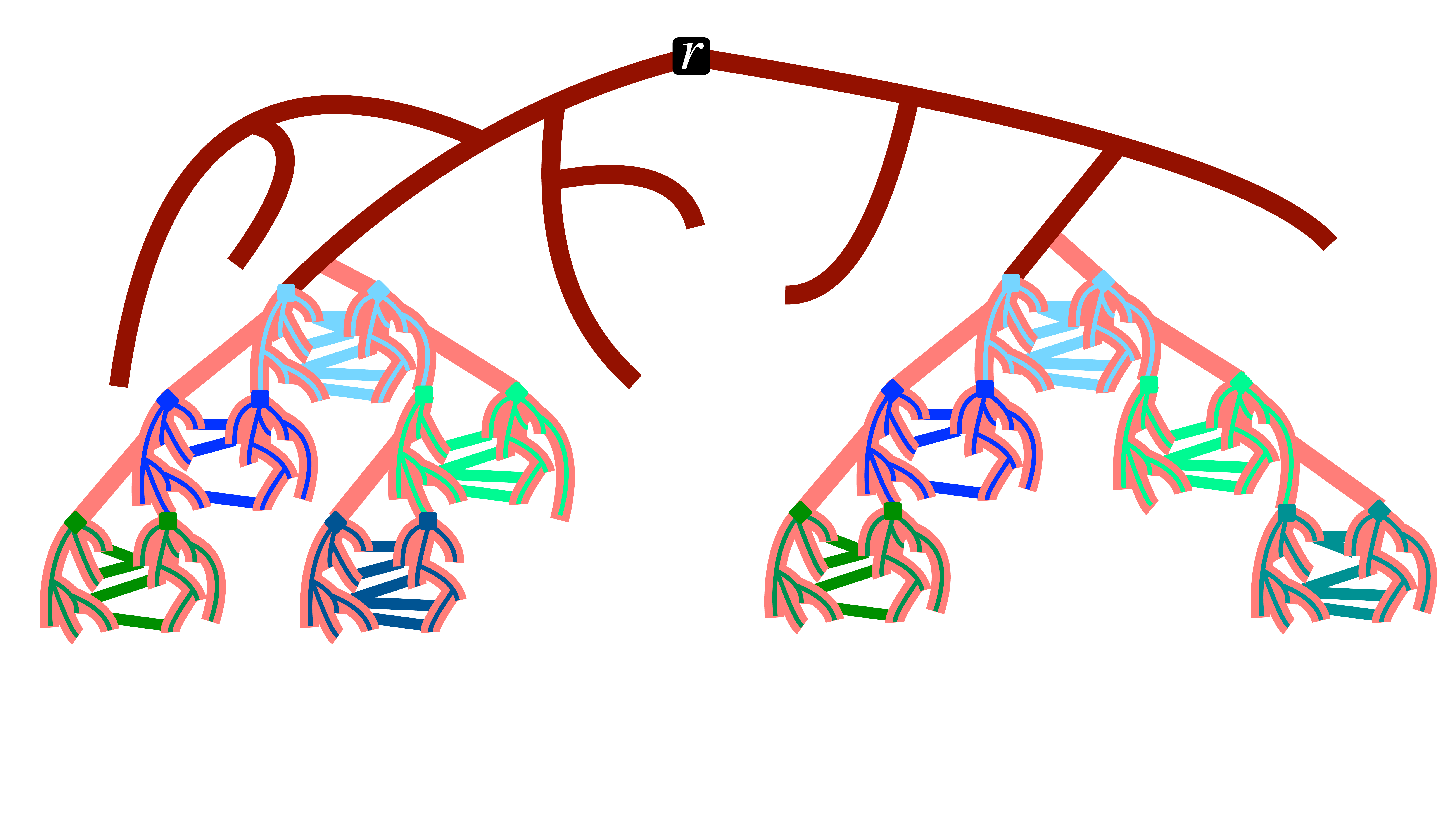}
        \caption{A hammock decomposition.}\label{sfig:hamDecomp}
    \end{subfigure}
    \caption{An illustration of one tree of hammocks in a hammock decomposition $\{H_i\}_i$ and a hammock decomposition consisting of two trees of hammocks. Each $r_i$ and $r_i'$ given as a square and diamond colored according to corresponding $H_i$. Edges in $\TBFS$ in pink and highlighted according to the $H_i$ which contains them. Edges of $E_c$ colored according to the $H_i$ which contains them. $T_0$ in the hammock decomposition given in dark red. 
    }\label{fig:hamTree}
\end{figure}

We begin by showing how, given a fixed root hammock in a tree of hammocks, we can define what it means for one hammock to be the parent of another hammock. To do so we need the following simple observation. Let $\mcH$ be a forest of hammocks and let $P$ be a path in the graph induced by $\mcH$ between hammocks $H_i$ and $H_j$. We say that $P$ \emph{passes through} hammock $H_l \in \mcH$ if it contains at least one edge of $H_l$.
Then, we have the following fact which is analogous to the fact that there is a unique simple path in a tree between any two vertices in a tree.  We remind the reader that a path is from subgraph $H_i$ to subgraph $H_j$ (in what follows these subgraphs are hammocks) iff its first and last vertices are in $H_i$ and $H_j$ and these are the only vertices of the path in $H_i$ and $H_j$.
\begin{lemma}\label{lem:uniquePath}
    Suppose $\mcH$ forms a tree of hammocks. Then for any $H_i, H_j \in \mcH$ if $P$ and $P'$ are both from $H_i$ to $H_j$ then both $P$ and $P'$ pass through the same hammocks of $\mcH$ in the same order.
\end{lemma}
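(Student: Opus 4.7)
My plan is to work via an auxiliary bipartite graph $B$ whose two parts are $\mcH$ and $V(G[\mcH])$, with an edge $\{H, v\}$ whenever $v \in V(H)$. I will show that $B$ is a tree and then argue that every simple path $P$ from $H_i$ to $H_j$ in $G[\mcH]$ traces a simple $H_i$-to-$H_j$ path in $B$. Uniqueness of simple paths in a tree then pins down the hammock sequence, and the same conclusion applies to $P'$.

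Connectivity of $B$ is immediate from connectivity of $G[\mcH]$: any path in $G[\mcH]$ can be read off hammock-by-hammock to produce a walk in $B$ between the relevant hammock-nodes. For acyclicity, suppose $B$ has a simple cycle $H_{c_0} - v_0 - H_{c_1} - v_1 - \cdots - H_{c_{r-1}} - v_{r-1} - H_{c_0}$ of length $2r \geq 4$, so the hammocks and vertices are pairwise distinct with $v_\ell \in H_{c_\ell} \cap H_{c_{\ell+1}}$. Using that each hammock is connected (its two hammock-trees are connected subtrees of $\TBFS$ joined by a cross edge), pick a nontrivial simple path $\rho_\ell \subseteq H_{c_\ell}$ from $v_{\ell-1}$ to $v_\ell$. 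The concatenation $\rho_0 \oplus \rho_1 \oplus \cdots \oplus \rho_{r-1}$ is a closed walk whose edges lie in the $\mathbb{F}_2$ cycle space of $G[\mcH]$. By \Cref{dfn:forsHam}, this cycle space is generated by simple cycles each contained in a single hammock, so the walk's edge set decomposes as an edge-disjoint union of within-hammock simple cycles. Restricting this decomposition to $H_{c_\ell}$ recovers exactly $E(\rho_\ell)$, forcing a nontrivial simple path to equal a union of cycles---impossible since it has two odd-degree endpoints.

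With $B$ a tree, consider any simple path $P$ from $H_i$ to $H_j$ with $v_0 \in H_i$ and $v_n \in H_j$. Edge-disjointness of hammocks partitions $E(P)$ into maximal sub-paths $\pi_0, \ldots, \pi_k$ inside hammocks $H_{a_0}, \ldots, H_{a_k}$, with transition vertices $w_1, \ldots, w_k$ between consecutive sub-paths. Because $P$'s internal vertices lie outside $H_i \cup H_j$, no edge of $P$ is in $H_i$ or $H_j$, so every $H_{a_\ell} \notin \{H_i, H_j\}$. Prepending $H_i$ and appending $H_j$ gives a walk $H_i - v_0 - H_{a_0} - w_1 - \cdots - H_{a_k} - v_n - H_j$ in $B$. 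Simplicity of $P$ makes $w_1, \ldots, w_k, v_0, v_n$ pairwise distinct, and consecutive hammocks $H_{a_{\ell-1}}, H_{a_\ell}$ differ by maximality, so the walk is non-backtracking. In a tree, a non-backtracking walk between two nodes is the unique simple path between them; hence the intermediate hammock sequence $H_{a_0}, \ldots, H_{a_k}$ is determined by $H_i$ and $H_j$ alone, and by symmetry the same conclusion holds for $P'$.

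The main obstacle I anticipate is the acyclicity of $B$. Its proof hinges on the $\mathbb{F}_2$ cycle-space observation that the tree-of-hammocks hypothesis lets one decompose a closed walk built from the alleged $B$-cycle into within-hammock simple cycles, and restricting this decomposition to a single hammock $H_{c_\ell}$ would then express the nontrivial simple path $\rho_\ell$ as a union of cycles, which is impossible. Once acyclicity is in hand, the passage from $G[\mcH]$-paths to non-backtracking walks in $B$ and the invocation of tree-uniqueness are routine.
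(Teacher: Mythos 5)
Your proof is correct, but it takes a genuinely different route from the paper. The paper argues directly: it looks at the first position where the two hammock sequences diverge, the next position where they re-converge, and stitches the two intervening sub-walks together with connecting paths inside the agreeing hammocks to form a cycle meeting at least two hammocks, contradicting \Cref{dfn:forsHam}. You instead build the bipartite incidence graph $B$ between hammocks and vertices, prove $B$ is a tree, and then reduce the statement to uniqueness of simple paths in a tree. Your acyclicity argument is the real work, and it is sound: since the hammocks on a putative $B$-cycle are pairwise distinct hence pairwise edge-disjoint, the concatenated $\rho_\ell$ are edge-disjoint and their union is an even subgraph, hence lies in the $\mathbb{F}_2$ cycle space; because every simple cycle of $G[\mcH]$ lies in a single hammock, restricting the cycle decomposition to $E(H_{c_\ell})$ forces the nontrivial simple path $\rho_\ell$ to be an even subgraph, which is impossible. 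The passage from a simple $G[\mcH]$-path to a non-backtracking $B$-walk and the fact that non-backtracking walks in a tree are simple are both standard and handled correctly (including the check that no edge of $P$ lies in $H_i$ or $H_j$). Your approach is heavier machinery for this one lemma, but it buys more: treeness of $B$ directly encodes that adjacent hammocks share exactly one vertex, which is essentially \Cref{lem:artPoint}, so you have in effect proved both results at once. One small wording nit: "the walk's edge set decomposes as an edge-disjoint union of within-hammock simple cycles" is the even-subgraph decomposition, not literally the $\mathbb{F}_2$-sum; either viewpoint works here since the hammocks are edge-disjoint, but the phrasing blurs the two.
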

\begin{proof}
    Suppose that $P$ passes through $H_1, H_2, \ldots$ and $P'$ passes through $H_1', H_2', \ldots$. Moreover, let $H_0 = H_0' := H_i$ and let $H_k = H_k' := H_j$. Then, let $x$ be any index such that $H_x = H_x'$ but $H_{x+1} \neq H_{x+1}'$ and let $y$ be the lowest index larger than $x$ such that $H_y = H_y'$. By our choice of $H_0$ and our assumption that both paths pass through different hammocks we know that $x$ is well-defined while our choice of $H_k$ shows that $y$ is well-defined. Let $\bar{P}$ and $\bar{P}'$ be $P$ and $P'$ restricted to the corresponding subpath between $H_x$ and $H_y$ and let $v_x$ and $v_y$ and $v_x'$ and $v_y'$ be the endpoints of $\bar{P}$ and $\bar{P}'$ in $H_x$ and $H_y$ respectively. Then we have that $\bar{P}$ combined with $\bar{P}'$ along with the path between $v_x$ and $v_x'$ in $H_x$ and the path between $v_y$ and $v_y'$ in $H_y$ is a cycle; call this cycle $C$. Moreover, since $\bar{P}$ and $\bar{P}'$ each pass through at least 1 distinct hammock we have that such a cycle satisfies $|H_l : E(C) \cap H_l \neq \emptyset| \geq 2$, contradicting our assumption that $\mcH$ is a forest of hammocks as defined in \Cref{dfn:forsHam}.
\end{proof}
The above lemma now allows us to formally define what it means for a hammock to be the parent of another hammock in a tree of hammocks.

\begin{definition}[Parent/Child Hammocks]
    Suppose $\mcH$ is a rooted tree of hammocks with root $H_k$. Then, we say that $H_k$ is the parent of all $H_j$ for $j\neq k$ which share a vertex with $H_k$. Similarly, for any other $H_j \neq H_k$ we let the parent of $H_j$ be $H_i$ where $H_i$ is the unique hammock any path from $H_j$ to $H_k$ first passes through (as guaranteed to exist by \Cref{lem:uniquePath}). Symmetrically, $H_j$ is a child of hammock $H_i$ if $H_i$ is a parent of $H_j$.
\end{definition}

More generally, we can use this notion of what it means for a hammock to be a parent of another hammock in order to make the ``treeness'' of a tree of hammocks more explicit and to define what it means for a hammock to be the ancestor of another hammock.

\begin{definition}[Tree Representation $T_\mcH$]
    Suppose $\mcH = \{H_i\}_i$ forms a rooted tree of hammocks with root $H_k$. Then $T_\mcH$ is the graph with vertex set $\{H_i\}_i$ and root $H_k$ which has an edge between $H_j$ and $H_i$ iff $H_i$ is the parent of $H_j$ in $\mcH$.
\end{definition}

\begin{lemma}
    If $\mcH$ is a rooted tree of hammocks then $T_\mcH$ is a tree where $H_i$ is a parent of $H_j$ in $T_\mcH$ iff $H_i$ is a parent of $H_j$ in $\mcH$.
\end{lemma}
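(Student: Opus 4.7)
I will show $T_{\mcH}$ is a tree by (a) defining a ``level'' function on hammocks that strictly decreases along the $\mcH$-parent map, (b) using this to deduce connectedness and to count edges, and (c) checking that the tree-theoretic parent in $T_{\mcH}$ agrees with the $\mcH$-parent.

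\textbf{Step 1 (Level function).} For each $H_i \in \mcH$, consider any path $P$ in $G[\mcH]$ from a vertex of $H_i$ to a vertex of $H_k$. By Lemma~\ref{lem:uniquePath} all such paths pass through the same sequence of hammocks in the same order; call the length of this sequence $d(H_i)$, with $d(H_k)=1$. For $H_i \neq H_k$, let $H_{i'}$ denote the $\mcH$-parent of $H_i$; by the definition of parent $H_{i'}$ is the second hammock in the common sequence. Taking any path from $H_i$ to $H_k$ and truncating it to start at the first vertex it contains in $H_{i'}$ gives a path from $H_{i'}$ to $H_k$ whose sequence of visited hammocks is the original sequence with $H_i$ deleted. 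Hence $d(H_{i'}) = d(H_i) - 1$, so iterating the $\mcH$-parent map strictly decreases $d$ and, in particular, cannot cycle.

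\textbf{Step 2 (Connectedness and edge count).} For any $H_j$, the chain $H_j,\, \mcH\text{-parent}(H_j),\, \mcH\text{-parent}(\mcH\text{-parent}(H_j)),\,\ldots$ has strictly decreasing $d$-values and thus terminates at $H_k$ after exactly $d(H_j)-1$ steps. The consecutive pairs along this chain are precisely edges of $T_{\mcH}$, so $T_{\mcH}$ is connected. Moreover, every non-root hammock $H_j$ contributes exactly one edge to $T_{\mcH}$ (namely $\{H_j, \mcH\text{-parent}(H_j)\}$), and distinct choices of $H_j$ yield distinct edges, so $T_{\mcH}$ has exactly $|\mcH|-1$ edges. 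A connected graph on $|\mcH|$ vertices with $|\mcH|-1$ edges is a tree.

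\textbf{Step 3 (Parent consistency).} With $T_{\mcH}$ rooted at $H_k$, the tree-theoretic parent of $H_j$ is the unique neighbor of $H_j$ lying on the (unique) $H_j$-to-$H_k$ path in $T_{\mcH}$. The chain of $\mcH$-parents exhibited in Step~2 is such a path, and by the uniqueness of paths in a tree it is the only one; its first edge is $\{H_j, \mcH\text{-parent}(H_j)\}$, so the tree-parent of $H_j$ equals $\mcH\text{-parent}(H_j)$, giving the claimed ``iff''.

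\textbf{Main obstacle.} The only subtlety is in Step~1: I must verify that truncating a path from $H_i$ to $H_k$ at its first vertex in $H_{i'}$ produces a path that is genuinely ``from $H_{i'}$ to $H_k$'' in the strict sense of Section~\ref{sec:notationAssumptions} (first vertex in $H_{i'}$, last in $H_k$, no internal vertex in either), so that Lemma~\ref{lem:uniquePath} applies and yields the sequence identity used to justify $d(H_{i'})=d(H_i)-1$. Everything else is bookkeeping once this is established.
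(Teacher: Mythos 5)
Your proof takes a genuinely different route from the paper's. The paper argues by contradiction: if $T_\mcH$ had a cycle, then (since each hammock has at most one parent) the cycle would orient into a directed cycle of parent pointers, and transitively unwinding the definition of parent would force every path from some $H_i$ to the root $H_k$ to pass through $H_i$ itself---impossible since $G[\mcH]$ is connected. You instead define a level function $d$ that strictly decreases under the $\mcH$-parent map, which at one stroke gives acyclicity, connectedness (by following parent chains to $H_k$), and the edge count $|\mcH|-1$. Both arguments rest only on \Cref{lem:uniquePath}, the sole tree-of-hammocks lemma proved before this point. Your version is arguably more explicit and also hands you a height function on $T_\mcH$ for free, whereas the paper's is shorter and more local.

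There is a real gap precisely where you flag the obstacle in Step~1. Truncating a path $P$ from $H_i$ to $H_k$ at the \emph{first} vertex of $H_{i'}$ on $P$ does \emph{not} yield a path from $H_{i'}$ to $H_k$ in the strict sense of \Cref{sec:notationAssumptions}: since $P$ passes through $H_{i'}$ it contains at least one edge of $H_{i'}$, hence at least two vertices of $H_{i'}$, so the truncated path would still have internal vertices in $H_{i'}$. The fix is to truncate at the \emph{last} vertex of $P$ in $H_{i'}$. That in turn requires the observation (not yet available from earlier lemmas, but quick to prove) that $P$'s visit to $H_{i'}$ is contiguous: if $P$ left $H_{i'}$ and re-entered at some later vertex, closing the intervening subpath through the connected subgraph $H_{i'}$ would give a simple cycle with edges in at least two hammocks, violating \Cref{dfn:forsHam}. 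With ``last'' in place of ``first'' plus this contiguity argument, your $d(H_{i'}) = d(H_i) - 1$ claim, and therefore Steps~2 and~3, go through as written.
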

\begin{proof}
    Let $H_k$ be the designated root hammock in $\mcH$. Now assume for the sake of contradiction that we have a cycle $H_0, H_1, \ldots, H_{x-1}$ in $\mcH$. Since each vertex/hammock in $\mcH$ has at most one parent, it follows that $H_{i+1 \mod x}$ is the parent of $H_{i}$ for every $i \in [x-1] $ (where we have assumed WLOG that $H_{i+1 \mod x}$ is the parent of $H_{i}$ and not that $H_i$ is the parent of $H_{i+1 \mod x}$). By definition of what it means for $H_{i+1 \mod x}$ to be the parent of $H_i$, it follows that every path from $H_i$ to $H_k$ must pass through $H_{i+1 \mod x}$, $H_{i+2 \mod x}$ and so on. However, since we have a cycle it then follows that every path from $H_i$ to $H_k$ must pass through $H_i$ itself, contradicting the fact that there is a path in $G[\mcH]$ between $H_i$ and $H_k$ which does not pass through $H_i$ since $G[\mcH]$ is connected.
    
    Moreover, by how we define $T_{\mcH}$ it follows that $H_i$ is a parent of $H_j$ in $T_\mcH$ iff $H_i$ is a parent of $H_j$ in $\mcH$.
\end{proof}

As we have established that $T_{\mcH}$ is a tree, we henceforth assume it is rooted at $H_k$, the designated root hammock of $\mcH$. With this in mind, we can define ancestors and descendants in $\mcH$.

\begin{definition}[Ancestor/Descendant Hammock]
    Let $\mcH = \{H_i\}_i$ be a rooted tree of hammocks. $H_i$ is an ancestor (resp. descendant) of $H_j$ in $\mcH$ iff $H_i$ is an ancestor (resp. descendant) of $H_j$ in $T_{\mcH}$.
\end{definition}

Similarly to our BFS tree notation, we will use the notation $H_j \preceq_{\mcH} H_i$ to indicate that $H_j$ is a descendant of $H_i$ in tree of hammocks $\mcH$.





\subsubsection{The Structure of Paths in Trees of Hammocks}

We now observe that if $\mcH$ is a tree of hammocks then any path in $G[\mcH]$ coincides with the corresponding path in $T_{\mcH}$.

We begin by observing the following simple technical lemma.

\begin{lemma}\label{lem:artPoint}
    If $\mcH = \{H_i\}_i$ is a tree of hammocks then for every $i,j$ we have $H_i \cap H_j \neq \emptyset$ implies that $H_i \cap H_j$ consists of a single vertex which is an articulation vertex for $G[\mcH]$.
\end{lemma}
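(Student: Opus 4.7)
The plan is to handle the two assertions separately, and in both cases obtain a contradiction with the defining cycle condition of a tree of hammocks (\Cref{dfn:forsHam})---that every simple cycle of $G[\mcH]$ lies entirely in one hammock---by exhibiting a simple cycle whose edges come from two distinct hammocks. The edge-disjointness of the hammocks (also built into \Cref{dfn:forsHam}) will be the key feature that turns any two internally disjoint paths through different hammocks into a ``two-colored'' simple cycle.

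For the first claim, that $|V(H_i) \cap V(H_j)| = 1$, I would proceed by contradiction. Assume two distinct vertices lie in the intersection and, among all such pairs, pick $u \neq v$ in $V(H_i) \cap V(H_j)$ minimizing $d_{H_i}(u,v) + d_{H_j}(u,v)$. Let $P_i \subseteq H_i$ and $P_j \subseteq H_j$ be shortest $u$--$v$ paths in the respective hammocks. If $P_i$ and $P_j$ share any internal vertex $x$, then $x \in V(H_i) \cap V(H_j)$ with $x \notin \{u,v\}$, and the subpath identity for shortest paths yields
\[
d_{H_i}(u,x) + d_{H_j}(u,x) \;<\; d_{H_i}(u,v) + d_{H_j}(u,v),
\]
contradicting the choice of $(u,v)$. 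Hence $P_i$ and $P_j$ are internally disjoint, so $P_i \oplus P_j^{-1}$ is a simple cycle, and by edge-disjointness of $H_i$ and $H_j$ this cycle contains edges from two hammocks, violating \Cref{dfn:forsHam}.

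For the second claim, let $v$ denote the unique shared vertex. Since every hammock contains at least one cross edge and is connected on $\geq 2$ vertices, $v$ has positive degree in both $H_i$ and $H_j$; choose $e_i \in E(H_i)$ and $e_j \in E(H_j)$ incident to $v$, with $e_i \neq e_j$ by edge-disjointness. I would then invoke the block (biconnected component) decomposition of $G[\mcH]$: if $v$ were not an articulation vertex, all edges incident to $v$ would lie in a single block, and that block, containing the two distinct edges $e_i$ and $e_j$, cannot be a bridge-block and must be a 2-connected subgraph. The classical fact that in a 2-connected graph any two edges share a common simple cycle then produces a simple cycle in $G[\mcH]$ through both $e_i$ and $e_j$, hence containing edges of two hammocks---another contradiction.

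The main technical obstacle is the second claim. Part (i) is essentially a minimality-plus-concatenation argument, but part (ii) needs a clean way to extract a \emph{simple} cycle touching both $H_i$ and $H_j$ from only the hypothesis that some $v$-avoiding $u$--$w$ path exists. An elementary attempt---gluing a hypothetical $u$--$w$ path with intra-hammock paths to $v$ inside $H_i$ and $H_j$---produces only a closed walk, and arguing that one can extract from it a simple subcycle guaranteed to use edges of both hammocks is delicate. Invoking the standard block-theoretic characterization of non-articulation vertices side-steps this bookkeeping entirely, so that is the route I would take.
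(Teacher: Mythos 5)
Your proof is correct and addresses both claims, but your argument for the articulation-vertex part is heavier than the paper's, which is a direct version of exactly the ``elementary attempt'' you dismissed. The fix that dissolves the closed-walk bookkeeping is to choose the endpoints of the path to be \emph{neighbors} of $v$: pick $v_i$ adjacent to $v$ in $H_i$ and $v_j$ adjacent to $v$ in $H_j$ (these are distinct edges, hence $v_i \neq v_j$, by edge-disjointness of the hammocks). If $v$ is not an articulation vertex, then $G[\mcH] - v$ is connected, so there is a simple $v_i$--$v_j$ path $P$ avoiding $v$; now $\{v,v_i\} \oplus P \oplus \{v_j,v\}$ is automatically a \emph{simple cycle}---not merely a closed walk---because you close $P$ with two single edges rather than two intra-hammock paths. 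This cycle contains $\{v,v_i\} \in E(H_i)$ and $\{v,v_j\} \in E(H_j)$, so it meets two hammocks and contradicts \Cref{dfn:forsHam}. Your block-decomposition route reaches the same conclusion, but imports the fact that any two edges of a 2-connected graph lie on a common cycle, which the direct argument avoids. On the other hand, your minimality argument for the single-vertex claim is a genuine contribution: the paper's proof as written only establishes that every shared vertex is an articulation vertex, and never explicitly rules out two hammocks meeting in two or more vertices, so your part (i) fills a gap that the paper leaves unargued.
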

\begin{proof}
    Suppose for the sake of contradiction that two hammocks $H_i$ and $H_j$ share a vertex $v$ which is not an articulation point for the graph induced by $\mcH$ and let $v_i$ and $v_j$ be vertices adjacent to $v$ in $H_i$ and $H_j$ respectively. Then, since $v$ is not an articulation point there is a path $P$ from $v_i$ to $v_j$ not containing $v$. It follows that $C := \{v, v_i\} \oplus P \oplus \{v_j, v\}$ is a cycle. However, we then have $\{H_i, H_j \} \subseteq \{H_l : E(C) \cap H_l \neq \emptyset\}$, violating \Cref{dfn:forsHam}.
\end{proof}


Concluding, we have the fact that the paths in $G[\mcH]$ adhere to the structure of $T_{\mcH}$. This fact is obvious if one inspects a picture of a tree of hammocks (see e.g. \Cref{sfig:treeHam}), but formalizing it requires a little bit of careful thought and notation. Again recall that if a path is between two subgraphs $H_i$ and $H_j$ then by definition the only vertices of $H_i$ and $H_j$ in this path are its first and last vertices; for this reason the indices of $P_x$ begin at $1$ and end at $l-1$ in the following definition. 
\begin{lemma}\label{lem:pathStruct}
    Let $\mcH$ be a rooted tree of hammocks, let $P$ be a path in $G[\mcH]$ between hammocks $H_i$ and $H_j$ and let $T_{\mcH}(H_i, H_j) = (H_i = H_0, H_1, H_2, \ldots, H_l = H_j)$ be the path between $H_i$ and $H_j$ in the tree representation $T_{\mcH}$ of $\mcH$. Then $P$ is of the form $P_1 \oplus P_2 \ldots \oplus P_{l-1}$ where for each $x \in [l-1]$ we have:
    \begin{enumerate}
        \item $P_x$ is a subpath of $P$ gotten by restricting $P$ to $V(H_x)$ where $E(P_x) \subseteq E(H_x)$ and;
        \item The first and last vertices of $P_x$ are articulation vertices $b_{(x-1)x}$ and $b_{x(x+1)}$ where $V(H_{x-1}) \cap V(H_{x}) = \{b_{(x-1)x}\}$ and $V(H_{x}) \cap V(H_{x+1}) = \{b_{x(x+1)}\}$.
    \end{enumerate}
\end{lemma}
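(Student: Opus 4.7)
The plan is to combine \Cref{lem:uniquePath} (hammock sequences along any path are unique) with \Cref{lem:artPoint} (shared vertices of distinct hammocks are unique articulation points) to pin down the shape of $P$. I will first construct a witness path $Q$ in $G[\mcH]$ from $H_i$ to $H_j$ that explicitly traverses $H_0, H_1, \ldots, H_l$ in order. Each parent--child edge $\{H_{x-1}, H_x\}$ of $T_{\mcH}$ records that $V(H_{x-1}) \cap V(H_x) \neq \emptyset$, and by \Cref{lem:artPoint} this intersection is a single articulation vertex which I call $b_{(x-1)x}$. Since each hammock is connected, I can route inside $H_x$ from $b_{(x-1)x}$ to $b_{x(x+1)}$ for every $x \in \{1, \dots, l-1\}$ and concatenate these routes to obtain $Q$.

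Next I convert the existence of $Q$ into structural information about $P$. Applying \Cref{lem:uniquePath} to the two paths $P$ and $Q$ forces $P$ to pass through exactly the same sequence of hammocks, namely $H_1, H_2, \ldots, H_{l-1}$ in this order; note that the endpoint hammocks $H_0$ and $H_l$ are only touched at the first and last vertex of $P$ by the definition of a path between two subgraphs. Because the hammocks in $\mcH$ are edge-disjoint, each edge of $P$ carries a unique hammock label, and the uniqueness of the hammock sequence then forces these labels to form a single contiguous run of $H_1$'s, followed by a run of $H_2$'s, and so on up to a run of $H_{l-1}$'s. Letting $P_x$ be the sub-walk of $P$ whose edges are labeled $H_x$, contiguity of these runs yields the decomposition $P = P_1 \oplus P_2 \oplus \cdots \oplus P_{l-1}$ with $E(P_x) \subseteq E(H_x)$, and hence $V(P_x) \subseteq V(H_x)$ as well.

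It remains to identify the endpoints of each $P_x$ with the claimed articulation vertices, and this is where \Cref{lem:artPoint} does the final work. At the transition between $P_{x-1}$ and $P_x$, the shared vertex $v$ is incident to both an $H_{x-1}$-edge and an $H_x$-edge, so $v \in V(H_{x-1}) \cap V(H_x) = \{b_{(x-1)x}\}$. A parallel argument at the very start of $P$, using that the first vertex of $P$ lies in $V(H_0)$ and is incident to an $H_1$-edge, identifies the first vertex of $P_1$ as $b_{01}$, and symmetrically the last vertex of $P_{l-1}$ equals $b_{(l-1)l}$. The main obstacle I anticipate is precisely the contiguity step: without the ``same order'' clause of \Cref{lem:uniquePath}, $P$ could a priori enter $H_x$, leave, and re-enter via a different articulation vertex, breaking the clean concatenation. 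Fortunately the uniqueness lemma is exactly strong enough to forbid this, so what remains is routine bookkeeping about edge ownership and which vertices lie in which hammock.
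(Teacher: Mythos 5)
Your approach is the same as the paper's at the strategic level: build a witness path $Q$ (the paper calls it $P'$) by concatenating intra-hammock routes between consecutive articulation vertices along $T_{\mcH}(H_i,H_j)$, transfer the hammock sequence from $Q$ to $P$ via \Cref{lem:uniquePath}, and pin down the transition vertices via \Cref{lem:artPoint}. But there is a genuine gap in the construction of $Q$. You claim $Q$ ``explicitly traverses $H_0, H_1, \ldots, H_l$ in order,'' which requires each intra-hammock route from $b_{(x-1)x}$ to $b_{x(x+1)}$ to contain at least one edge. Nothing rules out $b_{(x-1)x}=b_{x(x+1)}$: in particular, for the hammock $H_m$ of maximum height along $T_{\mcH}(H_i,H_j)$ (the common ancestor of both endpoints), $H_m$ is the parent of both $H_{m-1}$ and $H_{m+1}$, and they may both attach to $H_m$ at the same articulation vertex. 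Then the route inside $H_m$ is the single vertex $b_{(m-1)m}=b_{m(m+1)}$, $Q$ contains no edge of $H_m$, so ``passing through'' as defined in the paper fails for $H_m$. Applying \Cref{lem:uniquePath} then only tells you $P$ passes through $H_1,\ldots,H_{m-1},H_{m+1},\ldots,H_{l-1}$ — it is silent about $H_m$, so your contiguous-run decomposition produces $P=P_1\oplus\cdots\oplus P_{m-1}\oplus P_{m+1}\oplus\cdots\oplus P_{l-1}$ and leaves the required trivial middle piece $P_m$ (and the identity of its single vertex) unproved.

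The paper addresses exactly this: it shows that its witness path passes through every hammock on $T_{\mcH}(H_i,H_j)$ except possibly $H_m$, and in the degenerate case it separately proves that the shared vertex between $H_{m-1}$ and $H_{m+1}$ must in fact lie in $V(H_m)$ (otherwise one can build a cycle hitting more than one hammock, violating \Cref{dfn:forsHam}). This is the step your argument is missing. Incidentally, the obstacle you flagged — $P$ leaving and re-entering a hammock — is real but is more cleanly dispatched by the paper's first observation in this proof (once $P$ leaves a hammock via an edge it never returns), which directly yields contiguity of the runs without leaning on the ``same order'' clause of \Cref{lem:uniquePath}, whose exact meaning for sequences with repeated hammocks is not pinned down.
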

\begin{proof}
    
    We begin with the following simple observation. Suppose $e = \{u, v\} \in P$. Then if $u$ is in some $H_x$ but $v \not \in H_x$ then the suffix of $P$ after and including $v$ cannot include any vertices of $H_x$. Suppose for the sake of contradiction that it did and let $P'$ be the subpath of $P$ from $u$ back to some $u' \in H_x$. Then combining $P'$ with the path in $H_x$ connecting $u$ and $u'$ gives a cycle with edges in more than one hammock, a contradiction. The symmetric statement holds for prefixes of $P$.
    
    As our hammocks in $\mcH$ are edge-disjoint we know that we can partition the edges of $P$ into their constituent hammocks. By our above observation we know that for each $H_x \in \mcH$ it holds that $H_x \cap E(P)$ is a connected (possibly singleton) subpath of $P$. We let $P'_x := H_x \cap E(P)$ be each such subpath and let $\mcP'$ be all such induced non-empty subpaths by hammocks along $T_{\mcH}(H_i, H_j)$.
    
    We proceed to show that $P_x'$ is the $x$th path in $P$ among all paths in $\mcP'$. Let $H_m$ be the hammock at maximum height in $T_{\mcH}$ among all hammocks in $T_{\mcH}(H_i, H_j)$. By definition of a parent, we know that in any path incident to a vertex in $H_x$, if said path has a vertex of $H_x$'s parent's parent then such a path must pass through $H_x$'s parent. Moreover, as each hammock is connected and shares a vertex with its parent, we know that there is a path from $H_i$ to $H_j$. It follows that there is a path $P'$ from $H_i$ to $H_j$ which passes through every hammock in $T_{\mcH}(H_i, H_j)$ except for possibly $H_m$. If $P'$ passes through $H_m$ then our claim holds by \Cref{lem:uniquePath} and our above observation. If $P'$ does not pass through $H_m$ then by \Cref{lem:uniquePath} we must verify that the vertex in both $P'_{m-1}$ and $P_{m+1}'$ is in $H_m$. However, since $H_m$ is the parent of both $H_{m-1}$ and $H_{m+1}$ we know that any vertices shared by $H_{m-1}$ and $H_{m+1}$ must be in $H_m$ as it is easy to see that otherwise we could construct a cycle with edges in multiple hammocks by connecting $H_{m-1}$ and $H_{m+1}$ via this vertex as well as through $H_m$.
    
    Lastly, we note that each $b_{x(x+1)}$ is an articulation vertex and the only vertex in $V(H_x) \cap V(H_{x+1})$ by \Cref{lem:artPoint}.
\end{proof}

\subsection{Hammock Decompositions}

We now define a hammock decomposition. Roughly, a hammock decomposition is a forest of hammocks which both contains all shortest paths which start and end with cross edges and whose forest structure reflects the lca structure of the BFS tree $\TBFS$. Recall that, ultimately, we will show that every series-parallel graph has such a decomposition and use this decomposition to perturb our KPR chops.

We will call a path a cross edge path if its first and last edges are in the cross edges of $\TBFS$ (that is, in $E_c$) and a shortest cross edge path if it is both a cross edge path and it is a shortest path in $G$. For a hammock $H_i$, we will henceforth use $T_i$ and $T_i'$ to stand for the hammock trees of $H_i$ and use $r_i = \highV(T_i)$ and $r_i' = \highV(T_i')$ to stand for the hammock roots of $H_i$. In the below definitions we will without loss of generality (WLOG) assume that each $r_i$ satisfies certain properties and that $r_i'$ satisfies certain other properties.

We begin by formalizing the sense in which a forest of hammocks can reflect the lca structure of the BFS tree. Specifically, recall that given a hammock $H_i$ with hammock roots $r_i$ and $r_i'$, by definition we know $r_i$ and $r_i'$ are unrelated in $\TBFS$. Thus, we can naturally associate hammock $H_i$ with the vertex $\lca(r_i, r_i')$. Then the condition we would like to enforce is that if hammock $H_i$ is an ancestor of $H_j$ in a forest of hammocks $\mcH$ then $\lca(r_i, r_i')$ is an ancestor of $\lca(r_j, r_j')$ in $\TBFS$. In fact, we will be able to enforce an even stronger condition than this in our hammock decompositions, as formalized by the following notion of an lca-respecting forest of hammocks.
\begin{definition}[lca-Respecting Forest of Hammocks]\label{dfn:lcaRespecting}
    Let $\mcH = \{H_i\}_i$ be a rooted forest of hammocks . Then, we say that $\mcH$ is lca-respecting if for every pair $H_i, H_j$ where $H_i$ is the parent of $H_j$ in $\mcH$ then:
    \begin{enumerate}
        \item $V(H_i) \cap V(H_j) = \{r_j\}$;
        \item The parent of $r_j'$ in $\TBFS$ is $\lca(r_j, r_j')$ and $\lca(r_j, r_j') \in V(H_i) \cup \{\lca(r_i, r_i')\}$.
    \end{enumerate}
    Furthermore, we say that $\mcH$ is lca-respecting with base tree $T_0 \subseteq \TBFS$ if for each root hammock $H_k \in \mcH$ we also have
   \begin{enumerate}
        \item $r_k \in V(T_0)$;
        \item The parent of $r_k$ and $r_k'$ in $\TBFS$ is $\lca(r_k, r_k')$ and $\lca(r_k, r_k') \in V(T_0)$.
    \end{enumerate}
\end{definition}

Notice that it follows that if $\mcH$ is lca-respecting then if $H_i$ is an ancestor of $H_j$ in $\mcH$ then $\lca(r_i, r_i')$ is an ancestor of $\lca(r_j, r_j')$ in $\TBFS$. Even stronger, though, we know that if $H_i$ is a parent of $H_j$ then $\lca(r_j, r_j') \in V(H_i)$ or $\lca(r_j, r_j')$ is equal to $\lca(r_i, r_i')$.

Concluding we may now give our definition of a hammock decomposition. Since every hammock contains a cross edge we cannot, in general, expect to decompose a graph into hammocks. For example, if the input graph was just a tree then said graph would contain no hammocks. Thus, our hammock decomposition will partition a series-parallel graph into a forest of hammocks along with a tree $T_0$ and a ``parent edge'' for each hammock in our forest of hammocks. 

\begin{definition}[Hammock Decomposition]\label{dfn:PCHD}
    A hammock decomposition of graph $G = (V,E)$ with root $r$ and BFS tree $\TBFS$ is a partition of $E$ into $E(T_0) \sqcup E(\mcH) \sqcup E_p$ where:
    \begin{enumerate}
        \item $T_0$ is a subtree of $\TBFS$ containing $r$;
        \item $\mcH$ is an lca-respecting rooted forest of hammocks with base tree $T_0$ such that $G[\mcH]$ contains every shortest cross edge path in $G$;
        \item  $E_p := \{e_i : H_i \in \mcH\}$ where $e_i$ is the parent edge of $r_i'$ in $\TBFS$.
    \end{enumerate}
\end{definition}

Thus, roughly, a hammock decomposition consists of a base tree $T_0$ with trees of hammocks ``hanging off'' of $T_0$. We illustrate a hammock decomposition where $\mcH$ consists of two trees of hammocks in \Cref{sfig:hamDecomp}.

\begin{figure}
    \centering
    \begin{subfigure}[b]{0.49\textwidth}
        \centering
        \includegraphics[width=\textwidth,trim=0mm 0mm 0mm 0mm, clip]{./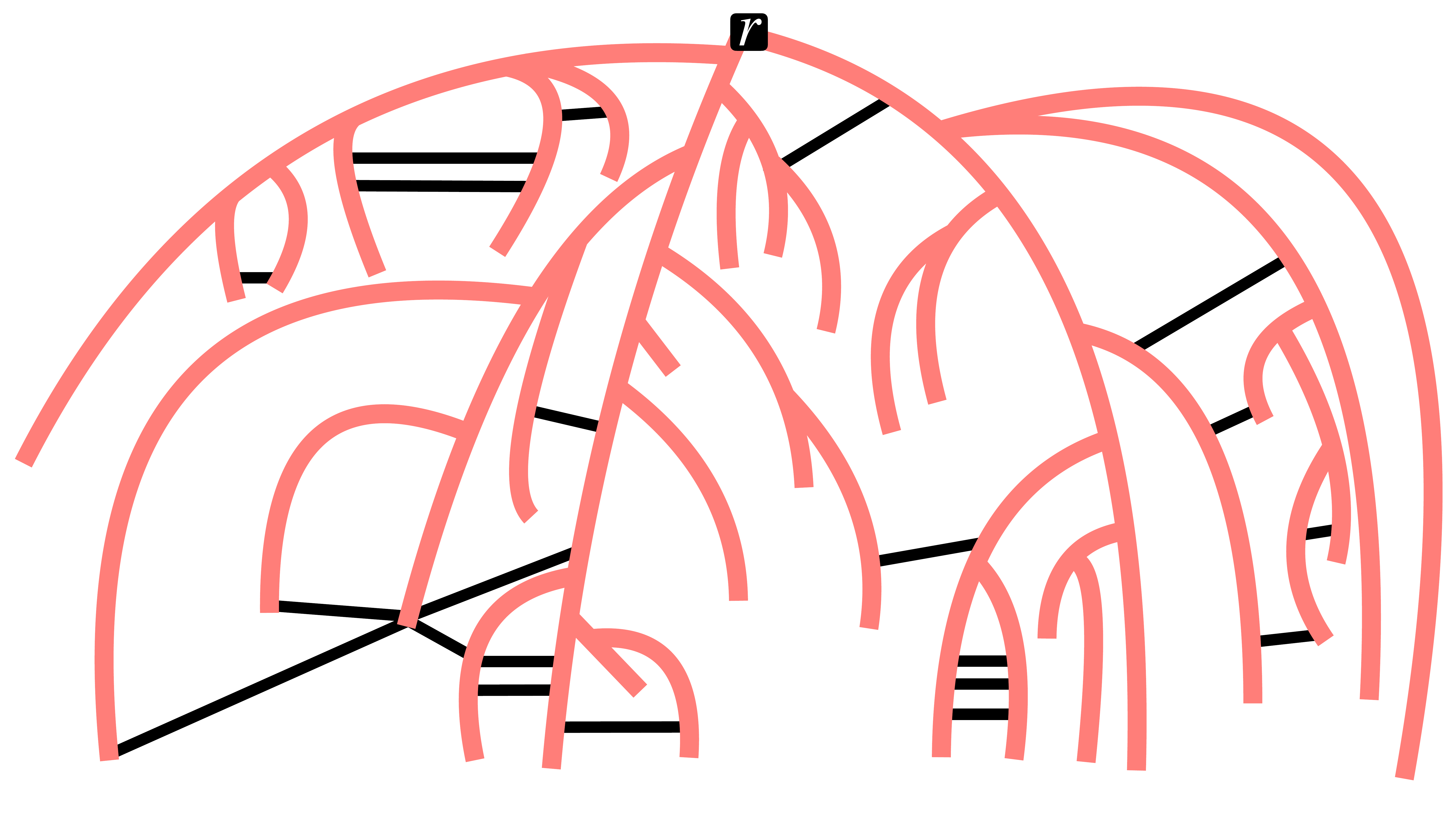}
        \caption{Series-parallel graph $G$.}\label{sfig:graphEG}
    \end{subfigure}
    \hfill
    \begin{subfigure}[b]{0.49\textwidth}
        \centering
        \includegraphics[width=\textwidth,trim=0mm 0mm 0mm 0mm, clip]{./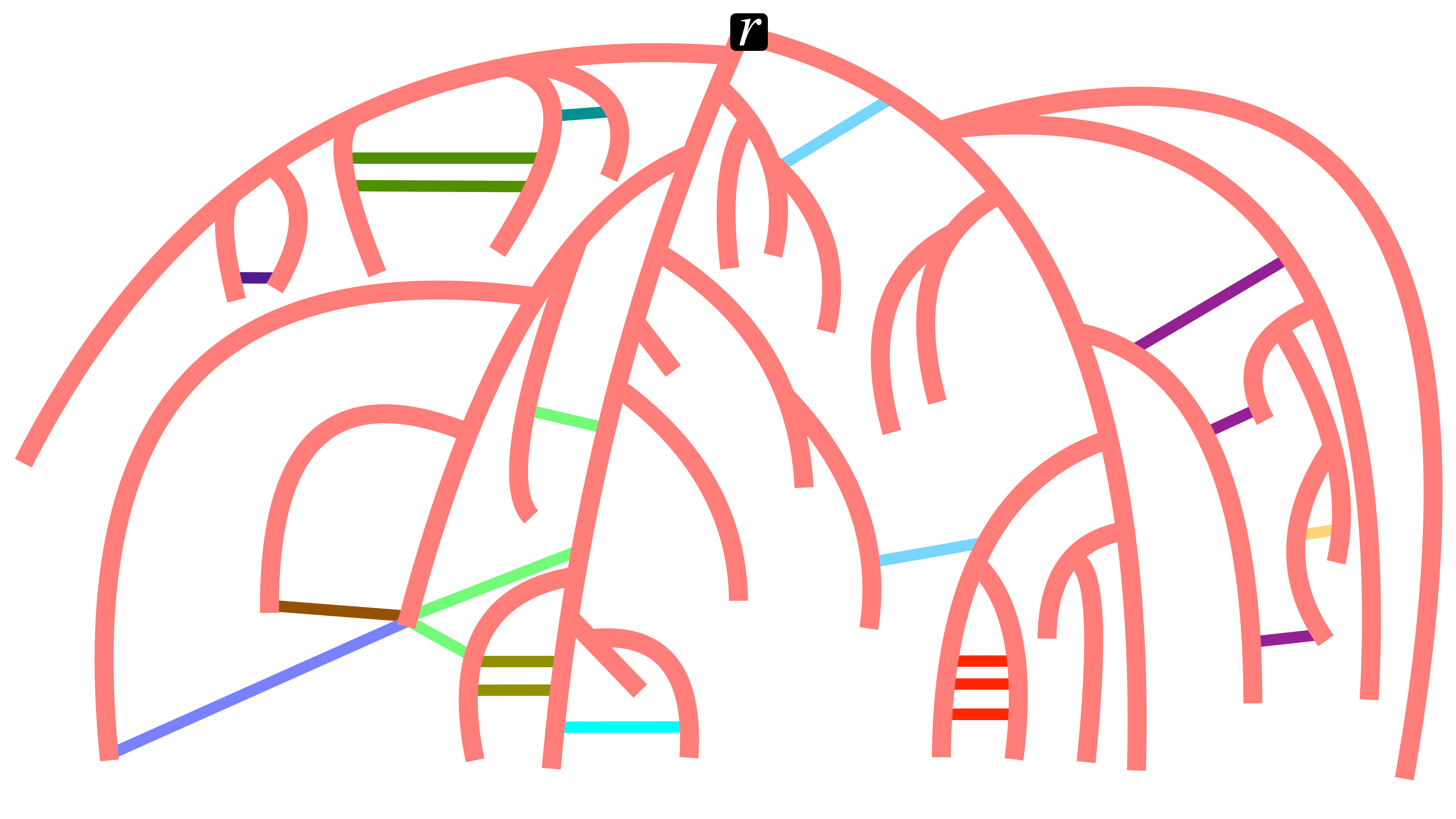}
        \caption{Lca-equivalence classes.}\label{sfig:lcaEquivClasses}
    \end{subfigure}
    \caption{An illustration of the lca-equivalence classes of a series-parallel graph $G$. Edges of $\TBFS$ in pink. Edges of $E_c$ in black in \Cref{sfig:graphEG} and colored according to their lca-equivalence class in \Cref{sfig:lcaEquivClasses}. Notice that edges with the same lca can belong to distinct equivalence classes.}\label{fig:constrPart}
\end{figure}

\section{Computing Hammock Decompositions for Series-Parallel Graphs}

In this section we show how to construct hammock decomposition for a series-parallel graph.  For the rest of this section we will assume we are working with a fixed series-parallel graph $G = (V,E)$ with a fixed but arbitrary root $r\in V$ and a fixed but arbitrary BFS tree $\TBFS$ rooted at $r$. As in the previous section we will extensively make use of the notation laid out in \Cref{sec:notationAssumptions}. The following theorem summarizes the main result of this section.
\begin{restatable}{theorem}{hamDecPC}\label{thm:hamDecPC}
    Every series-parallel graph has a hammock decomposition which can be computed in deterministic poly-time.
\end{restatable}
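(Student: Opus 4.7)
The plan is to build the decomposition in five stages: (1) partition $E_c$ into lca-based equivalence classes; (2) form one initial hammock per class; (3) extend each hammock upward along $\TBFS$ so that the lca-respecting condition holds; (4) organize the hammocks into a rooted forest; and (5) absorb any leftover $\TBFS$-subtrees into hammocks or into the base tree $T_0$.

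To begin, I define an equivalence relation on $E_c$: declare $e_1,e_2\in E_c$ related if $\lca(e_1)=\lca(e_2)$ and some shortest cross edge path between them has all internal vertices strictly below $\lca(e_1)$ in $\TBFS$, then take the transitive closure. Each class $\mcE_i$ spawns one initial hammock $H_i$ whose hammock trees $T_i, T_i'$ are the smallest subtrees of $\TBFS$ spanning the respective endpoints of $\mcE_i$ on each side of the common lca, and which includes $\mcE_i$ together with every shortest cross edge path whose endpoints both lie in $\mcE_i$. A clawed-cycle argument on $\TBFS$ shows that no such path can have both endpoints on the same side of the lca, so $H_i$ is a well-defined hammock, and by construction no two hammocks share a cross edge.

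Third, I extend each hammock upward to meet \Cref{dfn:lcaRespecting}. For hammock $H_i$, I extend $T_i'$ along the $\TBFS$-path from $r_i'$ up to the child of $\lca(r_i,r_i')$, and set the parent edge $e_i$ to the single edge from the new $r_i'$ to $\lca(r_i,r_i')$. Fourth, I define the forest structure: declare $H_i$ to cover $H_j$ whenever $\lca(r_j,r_j')\in V(H_i)$, and take the transitive reduction; the root hammocks are those whose lca vertex is not covered by any other hammock. Fifth, any $\TBFS$-subtree pendant below a hammock that has not been claimed is absorbed into that hammock by extending the corresponding hammock tree downward, and all remaining tree edges are placed in $T_0$ together with the $\TBFS$-path from $r$ to each root hammock's lca vertex.

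The main obstacle, and where the bulk of the technical work lies, is verifying two global invariants of the final $\mcH$: (a) every simple cycle of $G[\mcH]$ lies inside a single hammock; and (b) every shortest cross edge path of $G$ is entirely contained in $G[\mcH]$. Both should be proved by contradiction via clawed cycles. For (a), a cycle $C$ with edges in two distinct hammocks $H_i,H_j$ must enter and leave each hammock through articulation vertices and use $\TBFS$-portions together with cross edges; combining these with one further root-to-vertex $\TBFS$-path derived from the opposite hammock tree of $H_i$ yields three internally-disjoint paths from some vertex to three distinct vertices of a cycle, a clawed cycle, contradicting the characterization from the preliminaries. For (b), any shortest cross edge path whose endpoints sit in distinct equivalence classes with different lcas would similarly combine with the $\TBFS$-paths down to the two lcas into a clawed cycle. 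The lca-respecting requirements then follow directly from how the upward extensions were chosen. Deterministic polynomial time is immediate: computing $\TBFS$, all cross-edge lcas, the needed shortest paths, the equivalence classes, and each tree extension or assignment is a standard polynomial-time graph operation.
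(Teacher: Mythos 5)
Your construction is missing an entire stage that the paper devotes roughly a third of the proof to: joining initial hammocks to one another along ``hammock-joining'' paths (the $\hat{\mcH} \to \bar{\mcH}$ step). After building one initial hammock per lca-equivalence class and extending upward toward the lca, you never add any edges connecting distinct initial hammocks. This matters for three reasons. First, a hammock decomposition must partition \emph{all} of $E$ into $E(T_0) \sqcup E(\mcH) \sqcup E_p$, and the tree edges on a path between two initial hammocks (say between a leaf of $\hat{T}_i$ and a leaf of $\hat{T}_j$ with $i\neq j$) are not placed anywhere by your scheme -- they are neither in a hammock, nor ``dangling'' below one, nor in $T_0$, nor a designated parent edge. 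Second, a forest of hammocks is by definition organized by the connectivity of $G[\mcH]$ (\Cref{dfn:forsHam}), not by an abstract covering relation; without joining paths, the hammocks you build are typically pairwise vertex-disjoint, so $G[\mcH]$ splits into one trivial tree of hammocks per class and the lca-respecting condition $V(H_i)\cap V(H_j)=\{r_j\}$ for parent--child pairs is vacuous. Third, and most importantly, the requirement that $G[\mcH]$ contain every shortest cross edge path is simply false for your $\mcH$.

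Your justification for point (b) -- that a shortest cross edge path whose two cross edges lie in classes with different lcas would yield a clawed cycle -- is not correct. Such paths are abundant in series-parallel graphs: take any series composition of two cycles, fix a $\TBFS$, and the shortest path from a cross edge of one cycle to a cross edge of the other is exactly such a path, with no clawed cycle anywhere in sight. Indeed the paper's \Cref{dfn:hamJoin} and \Cref{lem:rootedHDForest} exist precisely because these paths must be accounted for, and getting them into the hammocks without destroying the forest structure is delicate: one must choose, for each connected component $T$ of the hammock-joining graph, an assignment $\pi(T)$ to a single incident hammock, prove any valid assignment yields a forest (\Cref{lem:allAssignGiveForest}), and then prove a specific ``local-max'' assignment exists and is well-defined (\Cref{lem:someParent}, \Cref{lem:atLeastOneVal}). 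Nothing in your proposal addresses this, so the statement you would actually prove is that the initial hammocks with upward and downward extensions form a hammock decomposition -- which is false. The remaining differences (your transitive-closure definition of the equivalence classes vs.\ the paper's explicit lca condition; your covering-relation definition of the forest structure) are secondary, but would also need to be reconciled with the paper's definitions once the joining step is added.
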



The main idea of our construction is to partition all cross edges into equivalence classes based on the behavior of the cross edges' least common ancestors. Notably, two cross edges with the same least common ancestors may end up in different equivalence classes. Each such equivalence class will correspond to one hammock in each of our hammock decomposition. 

More specifically, we build a series of forests of hammocks where at each step we add on to our forest of hammocks to guarantee one of the required properties of our hammock decomposition. We will use different notation for each of these forests of hammocks. For example $\hat{\mcH}$ will be our first forest of hammocks where a constituent hammock will be notated $\hat{H}_i$ with hammock trees $\hat{T}_i$ and $\hat{T}_i'$ and hammock roots $\hat{r}_i$ and $\hat{r}_i'$. We will also let $\hat{H}$ be the subgraph induced by $\hat{\mcH}$; and use symmetric notation for our other forests of hammocks. Our progression of forests of hammocks is as follows.
\begin{enumerate}
    \item $\hat{\mcH}$: we will initialize our forest of hammocks by connecting each cross edge that falls into the same equivalence class. The result of this will be one hammock for each of our equivalence classes.
    \item $\bar{\mcH}$: Next, we will extend each $\hat{H}_i$ to $\bar{H}_i$ by connecting these hammocks to one another along ``hammock-joining'' paths. The result of this will be a forest of hammocks $\bar{\mcH}$ which contains every shortest cross edge path.
    \item $\tilde{\mcH}$: Next we will extend each $\bar{H}_i$ along paths towards $\lca(\bar{r}_i, \bar{r}_i')$, resulting in $\tilde{H}_i$, to make our forest of hammocks lca-respecting
    \item $\mcH$: Lastly, we will add any edges of $\TBFS$ which do not appear in a hammock to an incident hammock (with the exception of the connected component of unassigned edges which contain $r$). This step will ensure that our hammock decomposition indeed partitions all edges of the graph.
\end{enumerate}


We will illustrate this process on the series-parallel graph given in \Cref{sfig:graphEG} throughout this section where \Cref{sfig:finalHD} gives the final hammock decomposition we compute for this graph. We also give all illustrations of the construction in a single figure in the appendix in \Cref{fig:allConstructionsAtOnce}.

Most of our proofs will revolve around finding clawed cycles when the above procedure fails to produce a forest of hammocks with the desired properties. As our proofs are quite lengthy, we will briefly highlight three of the major conceptual milestones of this section before proceeding.
\begin{enumerate}
    \item \textbf{Connected Below Subgraphs:} Connected below subgraphs will be a useful abstraction to help us find clawed cycles. An edge $e = \{u,v\} \in H$ where $u$ is the parent of $v$ in $\TBFS$ is connected below if there is a path from $r$ to $\TBFS(v)$ which only intersects $\TBFS(u)$ at $\TBFS(v)$. A subgraph will be connected below if each of its edges are. We will argue that the subgraph induced by our construction is connected below. As the paths guaranteed to exist for any connected below edges $\{u, v_l\}$ and $\{u, v_r\}$ (for $v_l$ and $v_r$ children of $u$ and $v_l \neq v_r$) have distinct endpoints in $\TBFS(u)$, we will often use these paths along with $\TBFS(r,u)$ as the paths of our clawed cycles. \Cref{lem:conBelGivesForest} gives a concise summary of this technique, showing that any connected below subgraph with at most one edge in each $C_i$ is a forest. The forest structure given by this lemma will be at the core of our proof of why our construction gives a \emph{forest} of hammocks.
    \item \textbf{Assignments of Hammock-Joining Path Components:} As alluded to above, after computing our $\hat{H}_i$s we will then connect these subgraphs to one another. We will do this by considering the graph induced by all ``hammock-joining paths''---roughly, the shortest paths which connect cross edges in different equivalence classes. We will then assign each connected component in this graph to one of its incident hammocks. The main idea here is to first argue that any assignment which is valid (in some later-described technical sense) will result in a forest of hammocks; see \Cref{lem:allAssignGiveForest}. Next, we will argue that no matter which valid assignment we use, for each collection of hammocks incident to one of the components which we are assigning there is some hammock which will always end up as an ancestor hammock of the other incident hammocks in the resulting forest of hammocks (\Cref{lem:someParent}). We will therefore assign each component to this always-ancestor hammock. That we may assume that each component is assigned to a hammock which is the ancestor of all other incident hammocks will further assist in arguing that our construction gives a forest of hammocks.
    \item \textbf{Tree of Hammock Ancestry $\leftrightarrow$ Lca Ancestry:} The idea that forms the foundation of how we extend our $\bar{H}_i$s to our $\tilde{H}_i$s to be lca-respecting is as follows. We will argue that if $\bar{H}_j$ is a descendant of $\bar{H}_i$ in $\bar{\mcH}$ then the lca corresponding to the cross edges of $\bar{H}_j$ must be a descendant of the lca corresponding to the cross edges of $\bar{H}_i$ in $\TBFS$. \Cref{lem:parChildEquiv} summarizes this fact. The fact that $\bar{\mcH}$'s forest structure reflects the lca structure of $\TBFS$ will be what allows us to ensure that our hammocks are lca-respecting.
\end{enumerate}

 

\subsection{Initial Hammocks $\hat{\mcH}$ by Connecting Equivalence Classes}
In this section we describe our initial forest of hammocks $\hat{\mcH}$. Roughly, we will define an equivalence relation for cross edges and then if two cross edges fall into the same equivalence we will connect the cross edges to one another. We will also introduce the notion of connected below subgraphs which will help us to argue that the result of this (and the next step in our construction) is a forest of hammocks.

\subsubsection{Lca-Equivalent Edges}

In this section we define the behavior of cross edges' least common ancestors which we will use to partition our cross edges into equivalence classes. Formally, we partition our cross edges based on ``lca-equivalence'' which we define as follows. In the following we WLOG distinguish between the endpoints of $e$ and $e'$.

\begin{definition}[Lca-Equivalent Edges]
    Let $e = \{u, v\}$ and $e' = \{u', v'\}$. Then we say that $e$ and $e'$ are lca-equivalent if $l(e) = l(e')$, $l(u, u') \prec l(e)$ and $l(v, v') \prec l(e)$.
\end{definition}

We emphasize that two cross edges with the same lca may end up in different equivalence classes. We illustrate these equivalence classes in \Cref{sfig:lcaEquivClasses}. Next, we verify that, indeed, these sets form an equivalence relation.

\begin{lemma}
    The set of lca-equivalent edges forms an equivalence relation.
\end{lemma}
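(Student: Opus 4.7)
The plan is to recast lca-equivalence in terms of which pair of child subtrees of the common lca a cross edge spans, and then read the three equivalence-relation properties directly off that characterization.

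First I would observe that for every cross edge $e = \{u, v\} \in E_c$ the endpoints $u$ and $v$ are unrelated in $\TBFS$. Indeed, in an unweighted BFS tree any edge joins vertices at the same level or at consecutive levels, and an ancestor-descendant pair at consecutive levels must be joined by a tree edge. Hence $l(e) = \lca(u,v)$ is a strict ancestor of both $u$ and $v$, and there exist well-defined distinct children $c_u(e), c_v(e)$ of $l(e)$ with $u \in V(\TBFS(c_u(e)))$ and $v \in V(\TBFS(c_v(e)))$.

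Second, the key observation is the following reformulation: for cross edges $e = \{u,v\}$ and $e' = \{u',v'\}$ with $l(e) = l(e')$, the condition $l(u,u') \prec l(e)$ holds iff $u$ and $u'$ lie in a common subtree $\TBFS(c)$ for some child $c$ of $l(e)$, equivalently $c_u(e) = c_{u'}(e')$; and symmetrically for $v, v'$. This is simply the statement that $u$ and $u'$ have a common ancestor strictly below $l(e)$ exactly when they first diverge below $l(e)$. Therefore $e$ and $e'$ are lca-equivalent exactly when $l(e) = l(e')$ and the unordered pairs of children
\begin{equation*}
\{c_u(e), c_v(e)\} = \{c_{u'}(e'), c_{v'}(e')\}
\end{equation*}
coincide; in words, the two cross edges span the same pair of child subtrees of their common lca.

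Finally, from this characterization the three properties are immediate. Reflexivity is trivial. Symmetry is visible from the symmetric form of the characterization (swapping the labellings $u,v$ and $u',v'$ on either side if necessary). Transitivity reduces to the transitivity of equality of unordered pairs: if $e \sim e' \sim e''$ then all three cross edges share a common lca and the three unordered pairs of spanning children are equal pairwise, so $e \sim e''$. The only subtlety is that the written definition fixes particular labellings of endpoints, so ``$\sim$'' is to be read as asserting the existence of compatible labellings; the reformulation makes this interpretation canonical. I do not expect any substantive obstacle here; the only real content is the child-subtree reformulation in the second paragraph.
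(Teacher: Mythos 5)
Your proof is correct, and it takes a genuinely different route from the paper's. The paper proves transitivity directly: it looks at the monotone path in $\TBFS$ from $u'$ up to $l(e)$, notes that this path contains both $l(u,u')$ and $l(u',u'')$, and observes that whichever of the two is higher witnesses that $l(u,u'') \prec l(e)$. You instead identify a canonical invariant: each cross edge $e$ determines its lca together with the unordered pair of children of that lca whose subtrees contain the two endpoints, and lca-equivalence is precisely equality of this invariant. Transitivity (and the other two properties) then reduce to transitivity of equality. This is arguably cleaner — it gives an explicit labeling of the equivalence classes (vertex plus unordered pair of its children) rather than just verifying the axioms — and it dispels the mild awkwardness of the ``WLOG distinguish the endpoints'' phrasing in the paper's definition by making the unordered-pair reading canonical. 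The one extra step you need, and correctly supply, is the observation that cross-edge endpoints are unrelated in $\TBFS$ (so that $l(e)$ is a \emph{strict} ancestor of both endpoints and the child subtrees $c_u(e), c_v(e)$ are well defined); the paper's proof implicitly relies on the same fact when it treats the path from $u'$ to $l(e)$ as nontrivial and monotone, but does not call it out.
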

\begin{proof}
    Reflexivity and symmetry are trivial. We prove transitivity. Suppose $e = \{u, v\}$ and $e' = \{u', v'\}$ are lca-equivalent and $e'$ and $e'' = \{u'', v''\}$ are lca-equivalent. We claim that $e$ and $e''$ are lca-equivalent. In particular, we have $l(e) = l(e')$ and $l(e') = l(e'')$ so $l(e) = l(e'')$. Now consider the (monotone) path from $u'$ to $l(e)$; this path contains both $l(u, u')$ and $l(u', u'')$; WLOG suppose $l(u', u'')$ occurs higher in $\TBFS$ in this path. It follows that $l(u', u'')$ has both $u$ and $u''$ as a descendant and so $l(u, u'') \preceq l(u', u'') \prec l(e)$. A symmetric argument shows $l(v, v'') \prec l(e)$ and so we conclude that $e$ and $e''$ are lca-equivalent.
\end{proof}

For the remainder of this section we will let $\mcC := \{C_i\}_i$ be the equivalence classes of the above equivalence relation. We will let $l(C_i)$ give the lca of any edge in $C_i$. Similarly, we will let $h(C_i)$ give the height of $l(C_i)$ in $\TBFS$. We will also slightly abuse notation and let $\preceq$ be the partial ordering of these equivalence classes according to their lcas: $C_j \preceq C_i$ iff $l(C_j) \preceq l(C_i)$ and $C_j \prec C_i$ iff $l(C_j) \prec l(C_i)$ where again $\prec$ and $\preceq$ give the ``ancestry'' partial ordering in $\TBFS$.

\subsubsection{Connected Below Subgraphs}

The crucial property of the edges which we use to connect edges in the same lca-equivalence class and to connect our $\hat{H}_i$s will be the following notion of ``connected below.'' This property of these edges will aid us in finding disjoint paths which will, in turn, allow us to construct clawed cycles when our hammock decomposition construction fails.

\begin{definition}[Connected Below]
    We say that $e = \{u, v\} \in \TBFS$ where $v \prec u$ is connected below if there is a path $P$ in $G$ between $r$ and $\TBFS(v)$ which satisfies $P \cap \TBFS(u) \subseteq \TBFS(v)$.  We say that a subgraph $G_U = (U, E_U) \subseteq G$ is connected below if each edge of $E_U \cap \TBFS$ is connected below.
\end{definition}

The following gives a slightly different but useful characterization of what it means for an edge to be connected below.
\begin{lemma}\label{lem:conBelowAlternateDfn}
    Let $e = \{u, v\} \in \TBFS$ be connected below where $v \prec u$ and let $F$ be a non-empty subgraph of $G$ with vertices contained in $\TBFS(v)$. Then there is a path $P$ in $G$ between $r$ and $F$ where $P \cap \TBFS(u) \subseteq \TBFS(v)$.
\end{lemma}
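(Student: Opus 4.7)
The plan is to directly leverage the path $P$ guaranteed by the connected below property, truncate it at its first entry into $\TBFS(v)$, and then extend it through $\TBFS(v)$ to reach a vertex of $F$. The extension is safe because $\TBFS(v)$ is connected (being a subtree of $\TBFS$) and $F$ has at least one vertex in $\TBFS(v)$ by hypothesis.

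More concretely, by the definition of connected below applied to $e=\{u,v\}$, there is a path $P = (r = p_0, p_1, \ldots, p_k)$ in $G$ from $r$ to some $p_k \in \TBFS(v)$ with $P \cap \TBFS(u) \subseteq \TBFS(v)$. Let $i$ be the smallest index with $p_i \in \TBFS(v)$; this is well-defined because $p_k \in \TBFS(v)$. Form the prefix $P_1 := (p_0, \ldots, p_i)$. By the containment $P \cap \TBFS(u) \subseteq \TBFS(v)$, every vertex of $P_1$ that lies in $\TBFS(u)$ must lie in $\TBFS(v)$; combined with the minimality of $i$, this forces $P_1 \cap \TBFS(u) = \{p_i\}$, which is indeed contained in $\TBFS(v)$.

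Next, pick any $y \in V(F) \subseteq \TBFS(v)$ and let $P_2 := \TBFS(v)(p_i, y)$ be the unique tree path in $\TBFS(v)$ from $p_i$ to $y$. I then propose the concatenation $P' := P_1 \oplus P_2$. Simplicity of $P'$ follows because internal vertices of $P_1$ before $p_i$ lie outside $\TBFS(u)$ (and hence outside $\TBFS(v)$), whereas $V(P_2) \subseteq \TBFS(v)$; so the only common vertex is $p_i$, the shared endpoint. Finally, $P' \cap \TBFS(u) = (P_1 \cap \TBFS(u)) \cup V(P_2) = \{p_i\} \cup V(P_2) \subseteq \TBFS(v)$, as required.

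I do not anticipate a serious obstacle here: the only subtlety is ensuring the concatenated walk is a genuine simple path, and this is precisely what the truncation to the first entry into $\TBFS(v)$ achieves. Everything else is a direct unpacking of the connected below definition together with the fact that $\TBFS(v)$ is a connected subtree containing all vertices of $F$.
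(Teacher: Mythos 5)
Your proof is correct and follows essentially the same approach as the paper's: take the path guaranteed by the connected-below definition and extend it through the subtree $\TBFS(v)$ to reach $F$. You spell out the truncation at the first entry into $\TBFS(v)$ and the simplicity of the concatenation more explicitly than the paper does, but the underlying idea and the invariant $P \cap \TBFS(u) \subseteq \TBFS(v)$ are handled identically.
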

\begin{proof}
    By the definition of $e$ being connected below we know that there is a path $P'$ from $r$ to $\TBFS(v)$ satisfying $P' \cap \TBFS(u) \subseteq \TBFS(v)$. Extending this path through $\TBFS(v)$ to $F$ gives $P$.
\end{proof}

A simple proof by contradiction shows that two adjacent connected-below edges cannot also have a path connecting their children vertices in a series-parallel graph.
\begin{lemma}\label{lem:noFundCycle}
    There does not exist a vertex $x$ with distinct children $v_l$ and $v_r$ in $\TBFS$ such that $\{x, v_l\}$ and $\{x, v_r\}$ are connected below and there exists a path in $G$ contained in $\TBFS(x) \setminus \{x\}$ between $v_l$ and $v_r$
\end{lemma}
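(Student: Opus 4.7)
I would argue by contradiction: assume such $x, v_l, v_r$ and a path $Q \subseteq \TBFS(x) \setminus \{x\}$ between $v_l$ and $v_r$ all exist, and then build a $K_4$ minor in $G$, contradicting series-parallelness.

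The first step is a sanity check: $x \neq r$, since otherwise the connected-below condition on $\{x,v_l\}$ collapses to requiring a path from $r$ lying entirely inside $\TBFS(v_l)$, which is impossible as $r \notin \TBFS(v_l)$. In particular the tree path $\TBFS(r,x)$ has a nontrivial portion strictly above $x$ (including the parent $p$ of $x$). Next I would invoke the two connected-below hypotheses to obtain paths $P_l, P_r$ from $r$, and truncate each at its first vertex $u_l \in \TBFS(v_l)$ and $u_r \in \TBFS(v_r)$ respectively, so that after truncation $P_l \setminus \{u_l\}$, $P_r \setminus \{u_r\}$, and $\TBFS(r,x) \setminus \{x\}$ all lie inside $V \setminus \TBFS(x)$ and all contain $r$.

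The heart of the plan is the choice of four supernodes. Writing $Q = q_0 q_1 \cdots q_m$ with $q_0 = v_l$, $q_m = v_r$, and setting $j^* := \min\{j : q_j \in \TBFS(v_r)\}$, I would take
\begin{align*}
A &= \{x\}, \qquad C^{\star} = V(\TBFS(v_r)), \\
B &= V(\TBFS(v_l)) \cup \{q_0, q_1, \ldots, q_{j^*-1}\}, \\
D &= (V(P_l) \setminus \{u_l\}) \cup (V(P_r) \setminus \{u_r\}) \cup (V(\TBFS(r, x)) \setminus \{x\}).
\end{align*}
Verifying that these four sets are connected, pairwise disjoint, and pairwise adjacent is then routine: $B$ is connected because $\TBFS(v_l)$ is a subtree and $q_0 \cdots q_{j^*-1}$ is a subpath of $Q$ attached at $v_l$; $D$ is connected because its three constituent subpaths all meet at $r$; $A, B, C^{\star}$ sit inside $\TBFS(x)$ while $D \subseteq V \setminus \TBFS(x)$; $B \cap C^{\star} = \emptyset$ by the choice of $j^*$; and the six required adjacencies are witnessed by $\{x, v_l\}$, $\{x, v_r\}$, $\{p, x\}$, the $Q$-edge $\{q_{j^*-1}, q_{j^*}\}$, and the penultimate edges of the truncated $P_l$ and $P_r$. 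This exhibits a $K_4$ minor and completes the contradiction.

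The main obstacle is defining $B$ correctly. The naive choice $B = V(\TBFS(v_l))$ would fail whenever $Q$ detours through subtrees of children of $x$ distinct from both $v_l$ and $v_r$, since then $Q$ no longer provides a direct cross edge linking $V(\TBFS(v_l))$ to $V(\TBFS(v_r))$. My fix is to absorb the initial segment of $Q$ up to (but not including) its first entry into $\TBFS(v_r)$ into $B$. This is safe because the absorbed vertices lie in $\TBFS(x) \setminus \TBFS(v_r)$ and so remain disjoint from $C^{\star}$ (by the choice of $j^*$) and from $D$ (which is entirely outside $\TBFS(x)$); meanwhile $B$ stays connected since the absorbed vertices form a subpath of $Q$ joined to $V(\TBFS(v_l))$ at $q_0 = v_l$.
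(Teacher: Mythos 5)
Your argument is correct and takes essentially the same route as the paper: both use the cycle formed by $Q$ together with the edges $\{x,v_l\}$ and $\{x,v_r\}$, plus the three paths $\TBFS(r,x)$, $P_l$, $P_r$ (the latter two from the connected-below property), to produce a $K_4$ minor. The only stylistic difference is that the paper packages these three paths as a clawed cycle and invokes the folklore clawed-cycle-implies-$K_4$ fact, whereas you bundle all three paths into a single supernode $D$ and exhibit the four branch sets explicitly, which sidesteps having to worry about overlaps among the claw paths.
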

\begin{proof}
    Suppose for the sake of contradiction that the stated path exists and call it $P$. Let $C$ be the cycle created by taking the union of $P$, $\{x, v_l\}$ and $\{x, v_r\}$. Applying \Cref{lem:conBelowAlternateDfn} to the fact that $e_l$ and $e_r$ are connected below, we have that there exists a path $P_l$ from $r$ to $C$ which does not intersect $\TBFS(x) \setminus \TBFS(v_r)$. Symmetrically, there is a path $P_r$ from $r$ to $C_e$ which does not intersect $\TBFS(x) \setminus \TBFS(v_l)$. Letting $P_x := \TBFS(r, x)$, we have that $C_e$ with paths $P_x$, $P_l$ and $P_r$ forms a clawed cycle, a contradiction.
\end{proof}

Building on the previous lemma, we have our main fact for this section: any subgraph connected below with at most one edge from each equivalence class is a forest. That such a graph is a forest will form the basis of our proof that our hammock decompositions are indeed forests of hammocks.

\begin{figure}
    \centering
    \begin{subfigure}[b]{0.32\textwidth}
        \centering
        \includegraphics[width=\textwidth,trim=150mm 0mm 150mm 30mm, clip]{./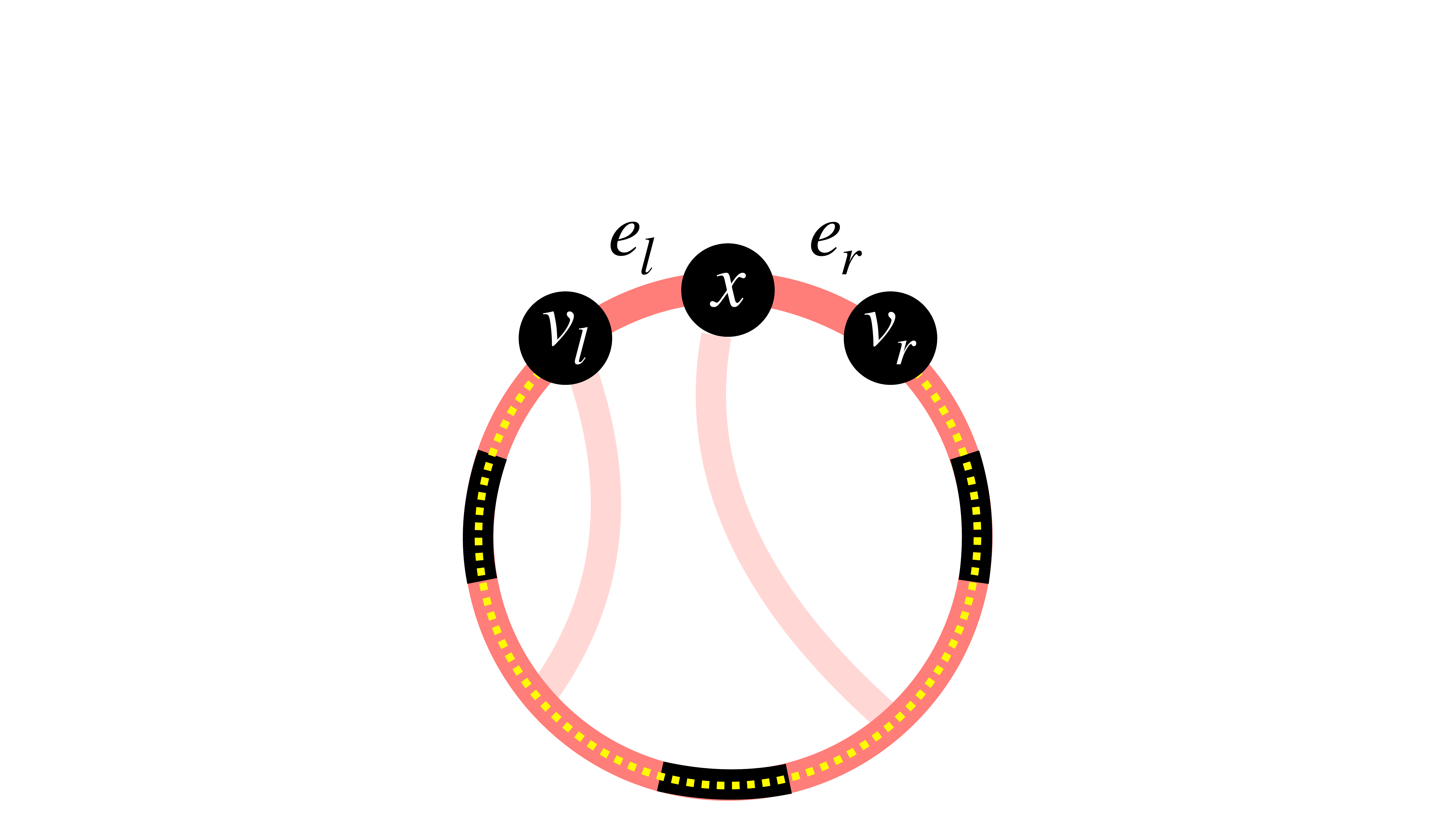}
        \caption{One local max.}\label{sfig:conBel1}
    \end{subfigure}\hfill
    \begin{subfigure}[b]{0.32\textwidth}
        \centering
        \includegraphics[width=\textwidth,trim=150mm 0mm 150mm 30mm, clip]{./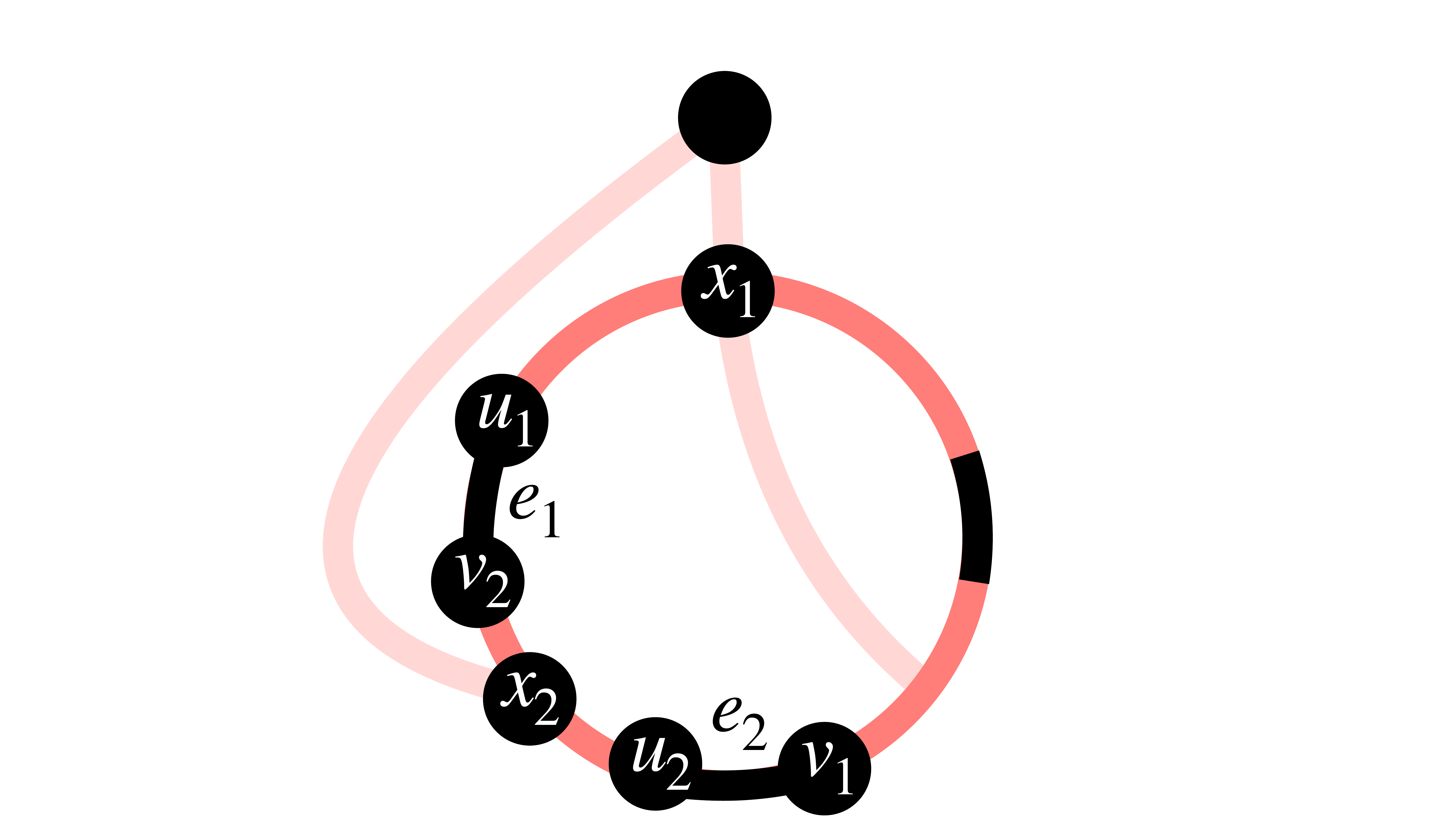}
        \caption{Two local maxes.}\label{sfig:conBel2}
    \end{subfigure}\hfill
    \begin{subfigure}[b]{0.32\textwidth}
        \centering
        \includegraphics[width=\textwidth,trim=150mm 0mm 150mm 30mm, clip]{./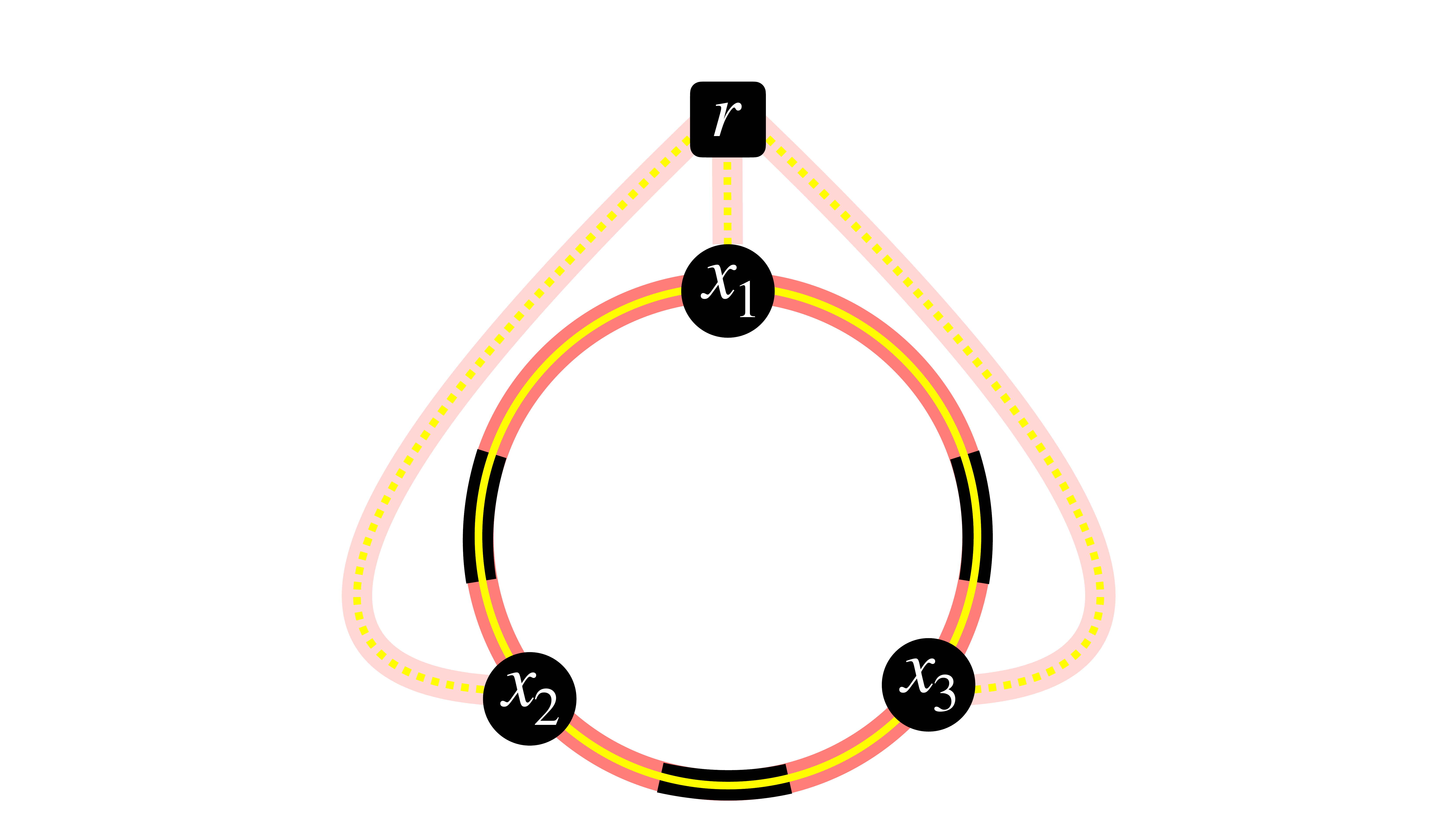}
        \caption{At least three local maxes.}\label{sfig:conBel3}
    \end{subfigure}
    \caption{The three cases of the proof of \Cref{lem:conBelGivesForest}. Edges in $C$ solid, edges outside of $C$ transparent. $\TBFS$ in pink and $E_c$ in black. In (a) we give the path between $v_l$ and $v_r$ contained in $\TBFS(x)\setminus\{x\}$ with a dotted yellow path. In (b) we illustrate give $l(e_1)=l(e_2)$ at the top. In (c) we highlight the cycle and the paths of the clawed cycle in solid and dotted yellow respectively.}
\end{figure}

\begin{lemma}\label{lem:conBelGivesForest}
    Suppose $G_U = (U, E_U)$ is a subgraph of $G$ which is connected below and where $|C_i \cap E_U| \leq 1$ for every $i$. Then $G_U$ is a forest.
\end{lemma}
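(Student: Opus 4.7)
The strategy is proof by contradiction: suppose $G_U$ contains a simple cycle $C$. The plan is to case-split on the number of \emph{local maxes} of $C$, meaning vertices $x \in V(C)$ whose two $C$-neighbors are both strict $\TBFS$-descendants of $x$. Since $C$ is closed and $\TBFS$-heights are bounded, $C$ has at least one local max, so the three cases are exactly one, exactly two, or at least three.

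One local max $x$: the entire cycle minus $x$ lies in $\TBFS(x)\setminus\{x\}$ (otherwise an exit would produce a second local max), so $C$ consists of two down-edges $\{x,v_l\}, \{x,v_r\}$ plus a path $P$ from $v_l$ to $v_r$ contained in $\TBFS(x)\setminus\{x\}$. If the two down-edges are tree edges to children of $x$ then they are connected below by hypothesis, and \Cref{lem:noFundCycle} yields the contradiction directly; a cross-edge down-edge $\{x,v_l\}$ is reduced to this case by descending from $x$ along the unique $\TBFS$-path into $\TBFS(v_l)$ and folding that segment into $P$.

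Two local maxes $x_1, x_2$: $C$ splits into two arcs, each descending from a local max to a local minimum and ascending to the other. At each local minimum at least one incident edge must be a cross edge (two tree edges cannot both go up, since a vertex has a unique parent), so each arc contains a cross edge; isolate one, $e_1$ on the first arc and $e_2$ on the second, both in $E_U$. I would argue that $\lca(e_1)=\lca(e_2)$ — this common ancestor is the ``meeting point'' in $\TBFS$ forced by the fact that the two arcs meet only at $x_1,x_2$ — and that the $\TBFS$-paths from this common lca down to the two ``left'' endpoints (resp.\ ``right'' endpoints) of $e_1$ and $e_2$ go through distinct children of $\lca(e_1)$, so that $l(u_1,u_2)\prec l(e_1)$ and $l(v_1,v_2)\prec l(e_1)$. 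Then $e_1$ and $e_2$ are lca-equivalent and live in the same class $C_i$, contradicting $|C_i\cap E_U|\le 1$.

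At least three local maxes $x_1, x_2, x_3$: at each $x_i$ pick a down-tree-edge $\{x_i,v_i\}\in E_U\cap\TBFS$ (a cross-edge incident to $x_i$ in $C$ is handled analogously to case (a) by descending into the appropriate subtree); each such edge is connected below, so \Cref{lem:conBelowAlternateDfn} supplies a path $P_i$ from $r$ to $\TBFS(v_i)$ that meets $\TBFS(x_i)$ only inside $\TBFS(v_i)$. Together with $C$, suitably trimmed versions of $P_1,P_2,P_3$ form a clawed cycle rooted at $r$ with cycle $C$, contradicting series-parallelness. The main obstacle will be verifying case (b) in full: pinning down that $e_1$ and $e_2$ are actually lca-equivalent (not merely share an lca) requires isolating them as the cross edges that straddle the ``turnaround'' children of $\lca(e_1)$ and using the uniqueness of $x_1$ and $x_2$ as local maxes to rule out stray relationships between the endpoints; secondarily, internal disjointness of the three $P_i$ in case (c) requires careful use of the disjointness of $\TBFS(x_i)\setminus\TBFS(v_i)$ zones across distinct local maxes.
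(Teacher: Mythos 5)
There is a genuine gap, and it starts at the definition of ``local max.'' The paper's local max is a \emph{global} maximal element of the poset $(V(C),\preceq)$: a vertex $v\in C$ such that no other vertex of $C$ is a $\TBFS$-ancestor of $v$; such a vertex always exists by finiteness. Your definition is strictly local: both $C$-neighbors of $x$ must be strict descendants, i.e.\ both cycle edges at $x$ are tree edges to children. These are not equivalent, and your version can fail to exist at all. Take the basic hammock-fundamental cycle: two monotone tree paths joined at each end by a cross edge. Every vertex on that cycle is incident either to a cross edge or to its parent edge, so it has no local max in your sense, while it has exactly two in the paper's sense. Your three-way case split therefore silently omits the zero-local-max case, which is precisely the most common cycle shape in this setting. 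A related symptom of the same confusion is the phrase ``cross-edge down-edge $\{x,v_l\}$'': a cross edge never goes to a descendant, since cross edges connect $\TBFS$-unrelated vertices, so if $x$ is a local max under your definition both incident cycle edges are already tree edges and the cross-edge sub-case is vacuous. Likewise, ``the entire cycle minus $x$ lies in $\TBFS(x)\setminus\{x\}$'' does not follow from your local definition: the cycle can leave $\TBFS(x)$ through a cross edge without forcing a second local max in your local sense, whereas with the paper's global definition a unique local max does force $V(C)\subseteq\TBFS(x)$ (any vertex outside would itself give a second global max).

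If you replace your definition with the paper's and rerun the same case split, the skeleton of the argument does match the paper's: one max yields the \Cref{lem:noFundCycle} contradiction, two maxes yield two distinct cross edges of $C$ that must be lca-equivalent, and three or more maxes yield a clawed cycle. Two further points would still want tightening. In case (b), the lca-equivalence claim hinges on the fact that every tree-subpath of $C$ lies entirely in $\TBFS(x_1)$ or $\TBFS(x_2)$, which again comes from the global definition, not the local one; your appeal to a ``meeting point'' lca is not yet a proof. In case (c), the paper deliberately uses the ancestor paths $\TBFS(r,x_i)$, which automatically avoid $C$ internally because the $x_i$ are global maxes; using connected-below witness paths instead would require you to separately argue disjointness of the three paths from each other and from $C$, which you flag but do not resolve.
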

\begin{proof}
    Assume for the sake of contradiction that $G_U$ has a cycle $C$. Since $G_U$ has at most one edge from each $C_i$, we know that  $|C_i \cap C| \leq 1$ for all $i$ and that $C$ contains at least one edge from $E_c$. Say that vertex $v \in C$ is a local max if no vertex in $C$ is an ancestor of $v$; clearly there is at least one such local max in $C$. We case on the number of local maxes in $C$. For each of our three cases we illustrate the contradiction we arrive at in Figures \ref{sfig:conBel1}, \ref{sfig:conBel2} and \ref{sfig:conBel3} respectively.
    
    \begin{enumerate}
        \item Suppose there is $1$ local max $x$ in $C$. Let $e_l=\{v_l, x\}$ and $e_r = \{v_r, x\}$ be the two edges of $C$ incident to $x$. Since every vertex in $C$ is a descendant of $x$ we know that $e_l$ and $e_r$ are in $\TBFS$ and are therefore connected below since otherwise we would have an edge in $E_c$ from a vertex to one of its descendants. However, it follows that the subgraph of $C$ connecting $v_r$ and $v_l$ which does not contain $x$---call it $P$---is contained in $\TBFS(x) \setminus \{x\}$. Thus $P$ along with $e_l$ and $e_r$ contradicts \Cref{lem:noFundCycle}.
        
        \item Suppose there are $2$ local maxes $x_1$ and $x_2$ in $C$. We will contradict the fact that $|C \cap C_i| \leq 1$ for every $i$. To do so, we first claim that there are distinct edges $e_1 = \{u_1, v_2\}, e_2 = \{u_2, v_1\} \in E_c \cap C$ where $u_1, v_1 \in \TBFS(x_1)$ and $v_2, u_2 \in \TBFS(x_2)$. To see this, notice that each subpath of $C$ in $\TBFS$ is contained in either $\TBFS(x_1)$ or $\TBFS(x_2)$ and so since $C$ is a cycle two such edges must exist. Next, notice that it therefore follows that $e_1$ and $e_2$ are lca-equivalent: $l(u_2, v_2) \in \TBFS(x_2)$ and $l(u_1, v_1) \in \TBFS(x_1)$ but $l(x_1, x_2) = l(e_1) = l(e_2)$ where $x_1, x_2 \prec l(x_1, x_2)$ by our assumption that $x_1$ and $x_2$ are local maxes. Thus, we know that $e_1$ and $e_2$ lie in the same $C_i$, a contradiction to the fact that $|C \cap C_i| \leq 1$ for every $i$.
        
        
        \item Suppose there are at least $3$ local maxes; let $x_1$, $x_2$ and $x_3$ be an arbitrary but distinct three of these maxes. Then cycle $C$ along with paths $P_1 := \TBFS(r, x_1)$, $P_2 := \TBFS(r, x_2)$ and $P_3 := \TBFS(r, x_3)$ form a clawed cycle since each $x_i$ is an ancestor of or unrelated to every other vertex in $C$. \qedhere
    \end{enumerate}
\end{proof}

\subsubsection{Constructing $\hat{\mcH}$}

We now define subgraph $\hat{H}_i$ and argue that $\hat{H}_i$ is indeed a hammock. Our initial hammocks will connect all ``lca-free minimal'' cross edge paths.

\begin{definition}[Minimal, lca-Free Cross Edge Paths]\label{dfn:lcaFree}
    We say that a cross edge path $P \subseteq G$ is lca-free if $l(e_f), l(e_l) \not \in P$. We say that $P$ is minimal if only its first and last edges are in $E_c$.
\end{definition}

\begin{definition}[$\hat{H}_i$]
    $\hat{H}_i$ is the subgraph of $G$ induced by all lca-free minimal cross edge paths between edges in $C_i$.
\end{definition}
As a reminder we let $\hat{T}_i$, $\hat{T}_i'$, $\hat{r}_i$ and $\hat{r}_i'$ refer to the two hammock trees and hammock roots of hammock $\hat{H}_i$. Similarly, we let $\hat{\mcH} := \{\hat{H}_i\}_i$ be the collection of all of our initial hammocks and let $\hat{H}$ be its corresponding induced subgraph of $G$. We illustrate $\hat{\mcH}$ in \Cref{sfig:initHams}.
\begin{figure}
    \centering
    \begin{subfigure}[b]{0.49\textwidth}
        \centering
        \includegraphics[width=\textwidth,trim=0mm 0mm 0mm 0mm, clip]{./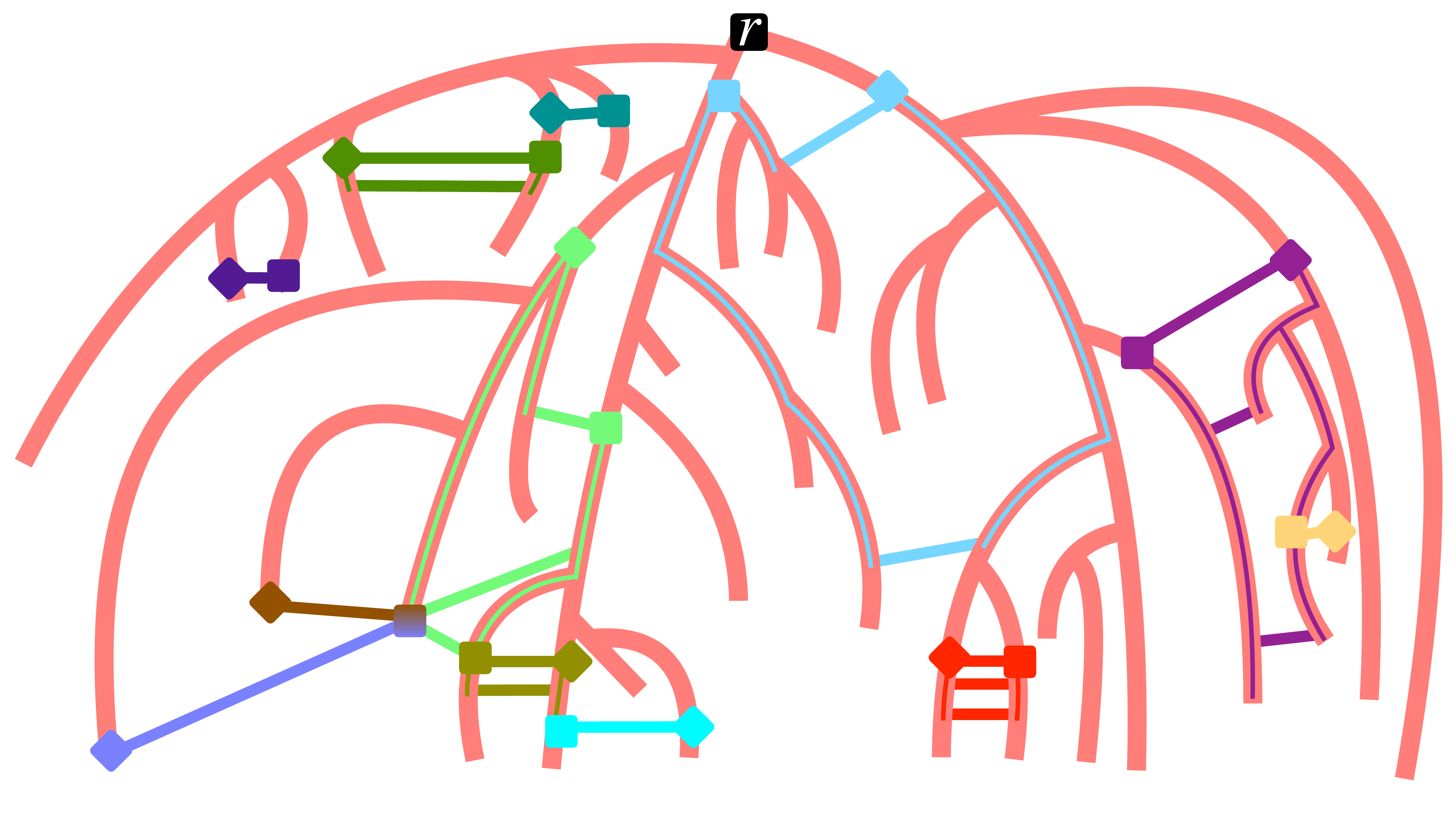}
        \caption{Initial hammocks $\hat{\mcH}$.}\label{sfig:initHams}
    \end{subfigure}
    \hfill
    \begin{subfigure}[b]{0.49\textwidth}
        \centering
        \includegraphics[width=\textwidth,trim=0mm 0mm 0mm 0mm, clip]{./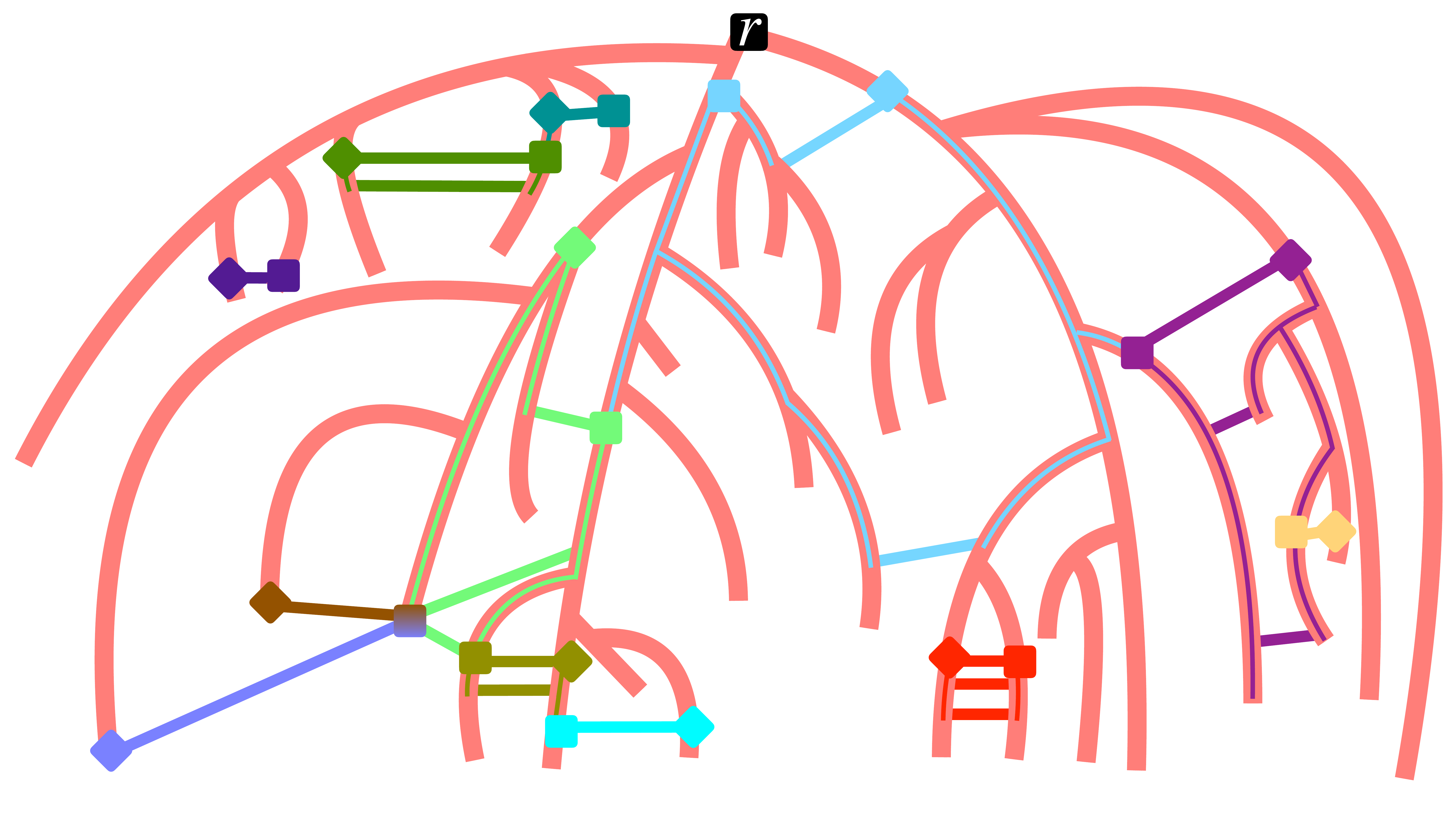}
        \caption{Extending $\hat{\mcH}$ to $\bar{\mcH}$.}\label{sfig:extendHams}
    \end{subfigure}
    \caption{An illustration of our initial hammocks $\hat{\mcH} = \{\hat{H}_i\}_i$ and how we extend them to our final hammocks $\bar{\mcH}$. Roots and edges of initial hammocks colored according to $i$. Notice that one vertex is the root of two hammocks (and so colored with two colors).}
\end{figure}

\begin{lemma}\label{lem:hatHHam}
    Each $\hat{H}_i \in \hat{\mcH}$ is a hammock.
\end{lemma}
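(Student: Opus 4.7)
The plan is to unpack the lca-equivalence relation to extract the hammock structure of $\hat{H}_i$ explicitly, and then verify each requirement of the hammock definition in turn.

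First I would observe that, because any two edges $e = \{a,b\}$ and $e' = \{a',b'\}$ in $C_i$ satisfy $l(e) = l(e') = l(C_i)$ together with $l(a,a'), l(b,b') \prec l(C_i)$, one can orient the endpoints consistently: there are two distinct children $c_1 \neq c_2$ of $l(C_i)$ such that every edge of $C_i$ has exactly one endpoint in $\TBFS(c_1)$ and the other in $\TBFS(c_2)$. This gives sets $A_i \subseteq \TBFS(c_1)$ and $B_i \subseteq \TBFS(c_2)$ of all ``$c_1$-side'' and ``$c_2$-side'' endpoints of $C_i$.

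Next I would analyze an arbitrary lca-free minimal cross edge path $P$ between $e_1, e_2 \in C_i$. Minimality forces the internal portion of $P$ to be a single tree path in $\TBFS$, and lca-freeness forces this tree path to avoid $l(C_i)$. The only way for a $\TBFS$-path between descendants of $l(C_i)$ in $\TBFS(c_1) \cup \TBFS(c_2)$ to avoid $l(C_i)$ is for both of its endpoints to lie in the same subtree $\TBFS(c_1)$ or $\TBFS(c_2)$, in which case the whole internal portion stays in that subtree. Consequently $V(\hat{H}_i)$ splits as $V_1 \sqcup V_2$ with $V_1 := V(\hat{H}_i) \cap \TBFS(c_1)$ equal to the vertex set of the Steiner subtree of $\TBFS(c_1)$ spanning $A_i$, and symmetrically for $V_2$.

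Given this description the three structural hammock conditions fall out: $\TBFS[V_1]$ and $\TBFS[V_2]$ are connected because Steiner subtrees in a tree are connected; $E_c(V_1,V_2) \supseteq C_i \neq \emptyset$ by construction; and $\highV(\TBFS[V_1]) \in \TBFS(c_1)$ and $\highV(\TBFS[V_2]) \in \TBFS(c_2)$ are both strict descendants of $l(C_i)$ lying under distinct children, hence unrelated. The remaining requirement $\hat{H}_i = G[V(\hat{H}_i)]$ is immediate under the natural reading of ``subgraph induced by \ldots paths'' as $G[V]$ where $V$ is the union of the paths' vertex sets; if instead one tracks edges, it reduces to the observations that every tree edge with both endpoints in $V_1$ (resp.\ $V_2$) is an edge of the Steiner subtree and therefore lies on some internal tree portion of a path, and every cross edge with one endpoint in $V_1$ and one in $V_2$ automatically lies in $C_i$ by a second easy application of the lca-equivalence definition.

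The main obstacle is essentially bookkeeping: pinning down that a single pair $(c_1, c_2)$ of children of $l(C_i)$ really does serve the entire class $C_i$, and then identifying $V_1$, $V_2$ with Steiner subtrees so that connectedness is free. Notably, no use of series-parallelness is needed here—that hypothesis enters only later when arguing that the collection $\hat{\mcH} = \{\hat{H}_i\}_i$ is a forest of hammocks, via clawed-cycle obstructions.
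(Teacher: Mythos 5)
Your proof is correct and follows essentially the same route as the paper's: you establish that the lca-free minimal cross-edge paths defining $\hat H_i$ split according to the two children subtrees of $l(C_i)$ and then verify each hammock condition in turn, which is exactly what the paper does after fixing a reference edge $e_0 \in C_i$ and arguing component by component. Your explicit identification of $c_1, c_2$ and the Steiner-subtree characterization of $V_1, V_2$ is a marginally cleaner bookkeeping device, and your use of it to deduce the unrelatedness of the hammock roots directly (rather than the paper's proof-by-contradiction) is a small improvement, but it is not a genuinely different argument.
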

\begin{proof}
    Consider a fixed $\hat{H}_i$. Fix an arbitrary edge $e_0 = (u_0, v_0) \in C_i$. Now consider an arbitrary $e = \{u, v\} \in C_i$. By definition of $C_i$ we know that (WLOG) $u_0$ and $u$ have an lca which is a descendant of $l(e) = l(e')$; thus, there is a minimal lca-free cross edge path of the form $(v_0, u_0, \ldots, u, v)$ which connects $e_0$ and $e$ which is included in $\hat{H}_i$. Symmetrically, there is a minimal cross edge connecting path of the form $(u_0, v_0, \ldots, v, u)$. We therefore know that $u$ and $v$ are in the same connected components in $\hat{H}_i \setminus E_c$ as $u_0$ and $v_0$ respectively. Moreover, since $\hat{H}_i \setminus E_c \subseteq \TBFS$, it follows that $\hat{H}_i \setminus E_c$ consists of at most two trees (the tree containing $u_0$ and the tree containing $v_0$). Even stronger, since $l(C_i)$ is not contained in any of the constituent paths of $\hat{H}_i$, we know that $\hat{H}_i \setminus E_c$ consists of exactly two trees and these trees are vertex-disjoint. We let these two trees $T$ and $T'$ be the hammock trees of $\hat{H}_i$. Next, to argue that $\hat{H}_i$ is a hammock we must argue that every cross edge between $T$ and $T'$ is included in $\hat{H}_i$. However, any such edge $e' = \{u', v'\}$ has $l(e') = l(C_i)$. Moreover, since $u', v' \neq l(C_i)$, we have that (WLOG) $l(u_0, u')$ and $l(v_0, v')$ are descendants of $l(C_i)$ and so $e' \in C_i$ and therefore $e' \in \hat{H}_i$. Lastly we must show that $\highV(T)$ and $\highV(T')$ are unrelated. However, notice that if (WLOG) $\highV(T')$ where a descendant of $\highV(T)$ then there would have been be a minimal lca-free cross edge path included in $\hat{H}_i$ which would have connected $T$ and $T'$, contradicting the fact that they are vertex disjoint.
\end{proof}

As our $H_i$s must ultimately be edge-disjoint, we will need that our $\hat{H}_i$s are edge-disjoint.

\begin{lemma}\label{lem:hatsDisjoint}
    $\hat{H}_i$ and $\hat{H}_j$ are edge-disjoint for $i \neq j$.
\end{lemma}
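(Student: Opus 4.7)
The plan is to proceed by contradiction: if $\hat{H}_i$ and $\hat{H}_j$ share an edge for some $i \neq j$, I will produce a clawed cycle in $G$, contradicting that $G$ is series-parallel. Since $C_i \cap C_j = \emptyset$, no cross edge can be shared, so the shared edge must be a tree edge $e = \{u,v\}$ with $v \prec u$. Unpacking the definition of $\hat{H}_i$ produces distinct cross edges $e_A^i = \{v_A^i, u_A^i\}$ and $e_B^i = \{v_B^i, u_B^i\}$ in $C_i$ whose same-side endpoints $v_A^i, v_B^i$ are separated by $e$ in $\TBFS$; WLOG $v_A^i \in \TBFS(v)$ and $v_B^i \notin \TBFS(v)$. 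Analogous $e_A^j, e_B^j \in C_j$ exist with $v_A^j \in \TBFS(v)$ and $v_B^j \notin \TBFS(v)$. Writing $c$ for the child of $l(C_i)$ on the $\TBFS$ path to $u$, and $c_i', c_j'$ for the children of $l(C_i)$ and $l(C_j)$ respectively carrying $u_A^i, u_B^i$ and $u_A^j, u_B^j$, the distinctness of the equivalence classes forces $c_i' \neq c_j'$, and the three subtrees $\TBFS(c), \TBFS(c_i'), \TBFS(c_j')$ end up pairwise disjoint, none containing $l(C_i)$. WLOG $l(C_i) \preceq l(C_j)$.

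For the clawed cycle I use the hammock-fundamental cycle $C^{*}$ of $e_A^i, e_B^i$ in $\hat{H}_i$, which is supported on $\TBFS(c) \cup \TBFS(c_i') \cup \{e_A^i, e_B^i\}$ and contains $e$. Set $x := l(v_A^i, v_B^i)$ (so $u \preceq x \preceq c$), $y := l(u_A^i, u_B^i) \preceq c_i'$, and $z := l(v_A^i, v_A^j) \preceq v$. Rooting the claw at $l(C_i)$, the three paths are $P_1 := \TBFS(l(C_i), x)$ descending through $c$, $P_2 := \TBFS(l(C_i), y)$ descending through $c_i'$, and $P_3$ which ascends from $l(C_i)$ through its $\TBFS$ ancestors to $l(C_j)$, descends through $c_j'$ to $u_A^j$, crosses the cross edge $e_A^j$ to $v_A^j$, and then ascends in $\TBFS(v)$ to $z$. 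The three attachment vertices on $C^{*}$ will be $x, y, z$.

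The main work is verifying that $P_1, P_2, P_3$ are pairwise internally vertex-disjoint, each meets $C^{*}$ only at its terminal vertex, and that $x, y, z$ are distinct vertices of $C^{*}$. Distinctness follows because $x \in \TBFS(c) \setminus \TBFS(v)$ (using $x \succeq u \succ v$), $y \in \TBFS(c_i')$, and $z \in \TBFS(v)$ live in three disjoint regions. For path disjointness, $P_1 \subseteq (\TBFS(c) \setminus \TBFS(v)) \cup \{l(C_i)\}$, $P_2 \subseteq \TBFS(c_i') \cup \{l(C_i)\}$, and $P_3$'s only portion inside $\TBFS(l(C_i))$ is its descending subpath through $\TBFS(c_j')$ followed by the ascending subpath from $v_A^j$ to $z$ in $\TBFS(v)$; these regions are pairwise disjoint, so the three paths share only the root $l(C_i)$. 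The most delicate check is $P_3 \cap C^{*} = \{z\}$: since $C^{*} \cap \TBFS(v)$ equals the tree path $\TBFS(v_A^i, v)$ and the relevant portion of $P_3$ in $\TBFS(v)$ is the tree path $\TBFS(v_A^j, z)$, these two monotone $\TBFS$ paths meet exactly at their common top $z = l(v_A^i, v_A^j)$. The resulting clawed cycle rooted at $l(C_i)$ contradicts $G$ being series-parallel, establishing the lemma.
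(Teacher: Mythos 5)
Your proof is correct and takes essentially the same approach as the paper's: both extract a hammock-fundamental cycle from the hammock whose lca-equivalence class has the lower lca, send two paths to the tops of the cycle's two sides, and reach the cycle a third time via a cross edge from the other equivalence class, producing a clawed cycle. The only cosmetic difference is that you root the claw at $l(C_i)$ (giving pairwise internally disjoint paths directly) whereas the paper roots at the BFS root $r$ and lets the first two paths share a prefix.
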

\begin{proof}
    By \Cref{lem:hatHHam} we may assume that $\hat{H}_i$ and $\hat{H}_j$ are hammocks. Suppose for the sake of contradiction that $E(\hat{H}_i) \cap E(\hat{H}_j) \neq \emptyset$ for $i \neq j$ and let $e = \{u,v\}$ be an edge included in $\hat{H}_i$ and $\hat{H}_j$. We illustrate the resulting situation, described below, in \Cref{sfig:edgeDis1}. We will argue that this situation leads to a clawed cycle, illustrated in \Cref{sfig:edgeDis2}. By definition of $\hat{H}_i$ and $\hat{H}_j$, such an edge must an edge in $\TBFS$ and so we assume WLOG $v \prec u$.
    
    We claim that in this case $l(C_i)$ and $l(C_j)$ are related; to see this notice that $l(C_i)$ and $l(C_j)$ must both be ancestors of or equal to $u$ by virtue of the fact that $e \in \hat{H}_i, \hat{H}_j$ and that $\hat{H}_i$ and $\hat{H}_j$ are hammocks. WLOG we assume that $C_j \preceq C_i$. Since $C_j$ contains $e$ we know that $|C_j| \geq 2$ and, in particular, that $\hat{H}_j$ contains a hammock-fundamental cycle containing $e$; let $F_j$ be this cycle with edges $e_j = \{u_j, v_j\}$ and $e_j' = \{u_j', v_j'\}$ in $E_c$ where we imagine WLOG that $u_j$ and $u_j'$ are in the same hammock tree of $\hat{H}_j$ as $e$; we will construct a clawed cycle with this cycle. Let $P_1: = \TBFS(r, l(u_j, u_j'))$ and $P_2 := \TBFS(r, l(v_j, v_j'))$ be our first two paths to $F_j$. By definition of a hammock we know that $l(u_j, u_j') \neq l(v_j, v_j')$; also notice that $u \preceq l(u_j, u_j')$ and $l(v_j, v_j') \not \in \TBFS(l(u_j, u_j'))$.

    Next, notice that since $e \in \hat{H}_i$, there must be some edge $e_i = \{u_i, v_i\} \in C_i$ where $v_i \in \TBFS(v)$. Let $P_3'$ be the path from $v_i$ to $F_j$ along $\TBFS(v_i, v)$ and let $P_3 := \TBFS(r, u_i) \oplus e_i \oplus P_3'$. 
    
    To show that $F_j$ along with $P_1$, $P_2$ and $P_3$ are a clawed cycle it suffices to argue that $P_3$ does not contain $l(u_j, u_j')$ or $l(v_j, v_j')$ and so we verify that none of $e_i$, $P_3'$ or $\TBFS(r, u_i)$ contain $l(u_j, u_j')$ or $l(v_j, v_j')$. 
    \begin{itemize}
        \item First, we argue that $l(u_j, u_j'), l(v_j, v_j') \not \in e_i$. Since $v_i \in \TBFS(v)$ and $l(v_j, v_j') \not \in \TBFS(v)$ we know that $v_i \neq l(v_j, v_j')$. Similarly, since $v_i \preceq v \prec u \preceq l(u_j, u_j')$, we know that $v_i \neq l(u_j, u_j')$. Now, consider $u_i$. It cannot be the case that $u_i \in \TBFS(l(u_j, u_j'))$ since otherwise we would have $l(C_i) \preceq l(u_j, u_j')$ and so $C_i \prec C_j$, a contradiction. It follows that $u_i \neq l(u_j, u_j')$. Similarly, we cannot have $u_i = l(v_j, v_j')$ since then we would have that $C_i = C_j$.
        \item Since $P_3' \subseteq \TBFS(v) \subseteq \TBFS(l(u_j, u_j')) \setminus \{l(u_j, u_j')\}$, we know that $P_3'$ contains neither $l(u_j, u_j')$ nor $l(v_j, v_j')$.
        \item Lastly, suppose for the sake of contradiction that $l(u_j, u_j') \in \TBFS(r, u_i)$. It follows that $l(C_i) \preceq l(u_j, u_j')$, contradicting the fact that $l(u_j, u_j') \prec l(C_j) \preceq l(C_i)$. Lastly, suppose for the sake of contradiction that $l(v_j, v_j') \in \TBFS(r, u_i)$. It follows that $l(C_i) = l(v_i, u_i) \preceq l(l(u_j, u_j'), l(v_j, v_j')) = l(C_j)$ and since $l(C_j) \preceq l(C_i)$, it follows that $l(C_i) = l(C_j)$. Since $u_i \in \TBFS(l(u_j, u_j'))$ and $v_i \in l(v_j, v_j')$ we then would have $C_i = C_j$, a contradiction.\qedhere
    \end{itemize}
     
%
\end{proof}

\begin{figure}
    \centering
    \begin{subfigure}[b]{0.49\textwidth}
        \centering
        \includegraphics[width=\textwidth,trim=0mm 0mm 0mm 0mm, clip]{./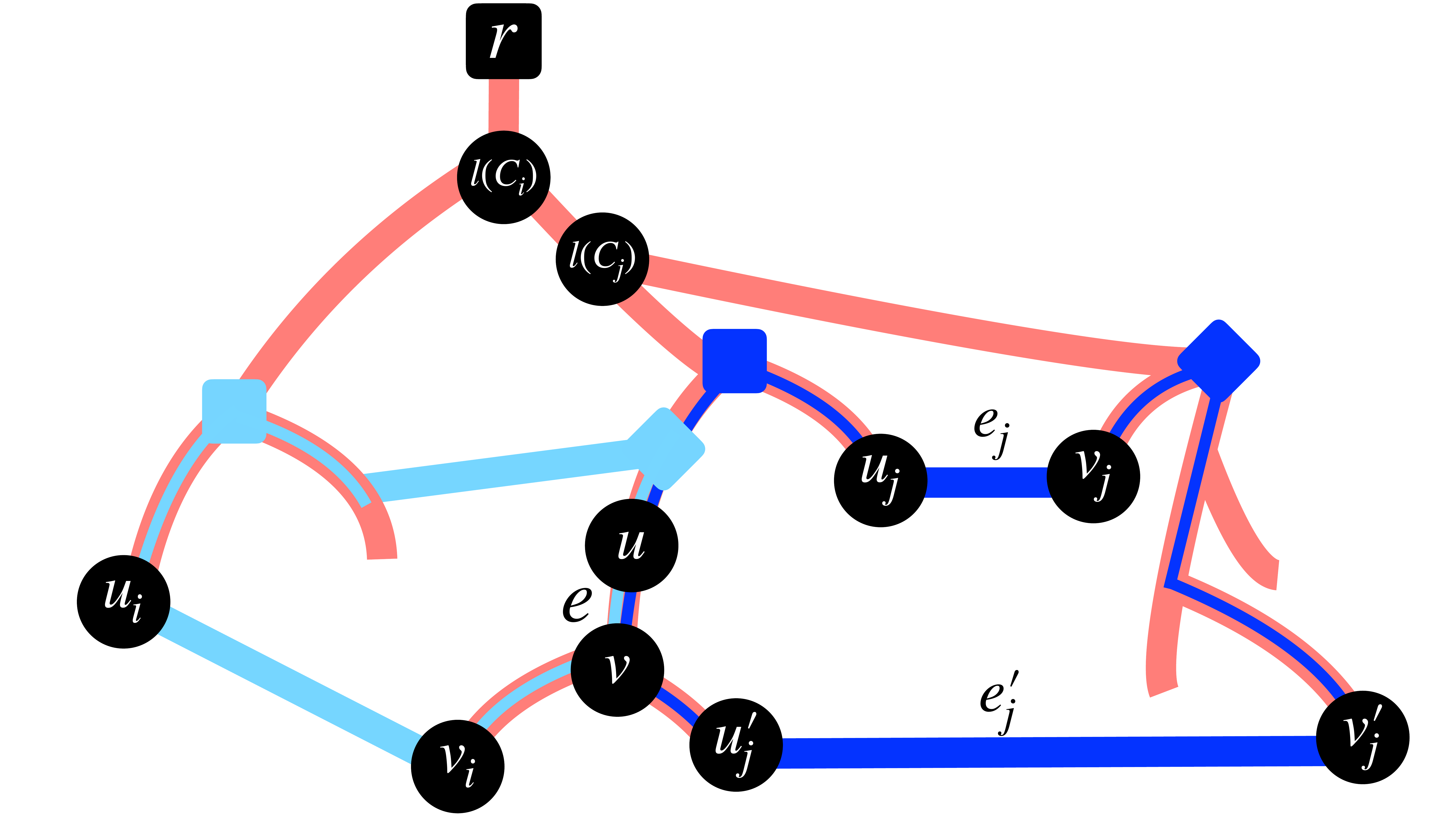}
        \caption{\Cref{lem:hatsDisjoint} proof setup.}\label{sfig:edgeDis1}
    \end{subfigure}
    \hfill
    \begin{subfigure}[b]{0.49\textwidth}
        \centering
        \includegraphics[width=\textwidth,trim=0mm 0mm 0mm 0mm, clip]{./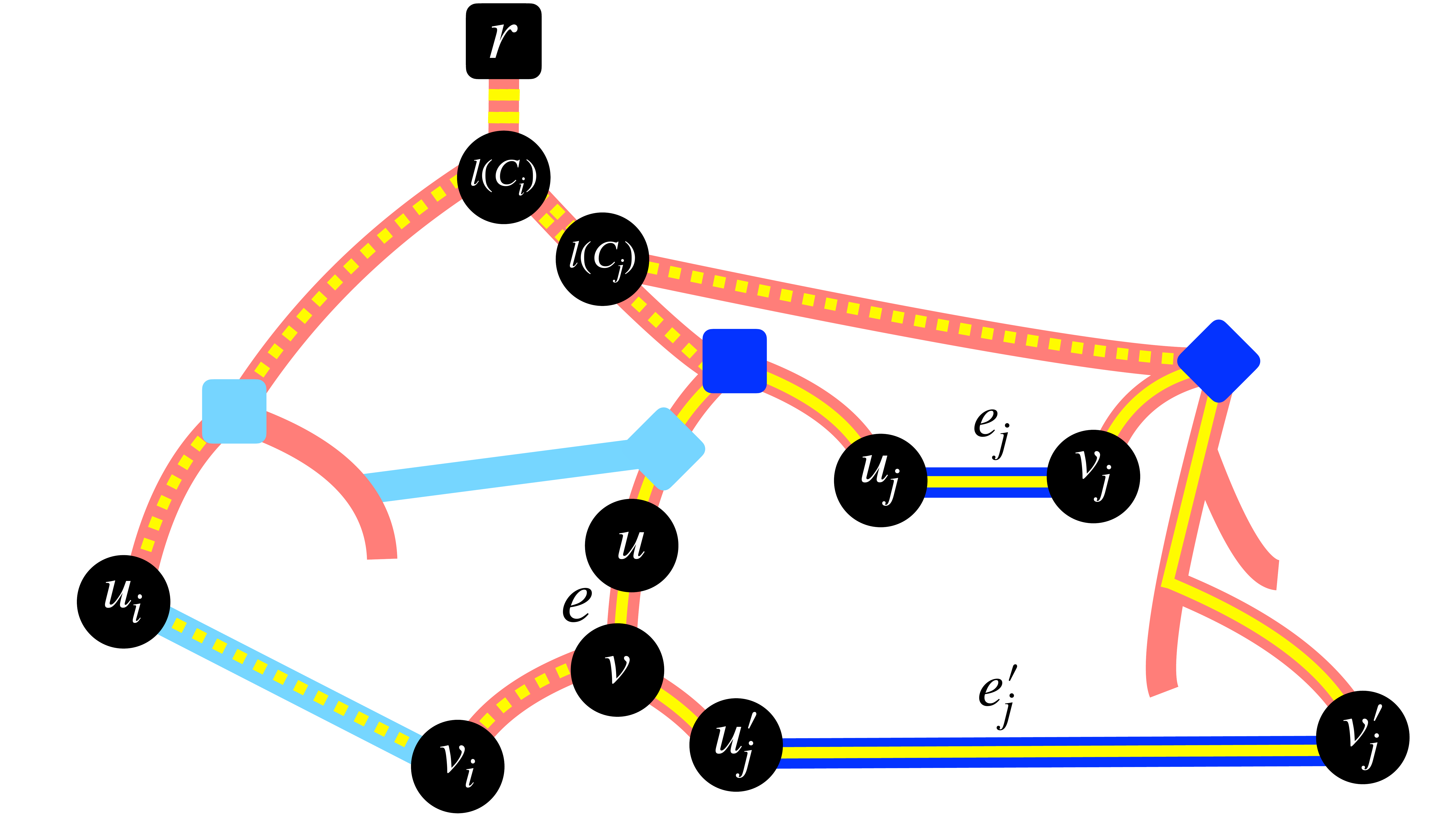}
        \caption{\Cref{lem:hatsDisjoint} proof contradiction.}\label{sfig:edgeDis2}
    \end{subfigure}
    \caption{The contradiction in the proof of \Cref{lem:hatsDisjoint}. In (a) we illustrate $\bar{H}_i$ in light blue and $H_j$ in dark blue. In (b) we highlight $F_j$ in solid yellow and $P_1$, $P_2$ and $P_3$ in dotted yellow.}
\end{figure}

Lastly, as alluded to above, we must establish that each of our $\hat{H}_i$s are connected below.

\begin{lemma}\label{lem:thirdPath}
    Every $\hat{H}_i$ is connected below for every $i$.
\end{lemma}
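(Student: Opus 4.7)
The plan is to show, for each tree edge $e=\{u,v\} \in E(\hat{H}_i) \cap E(\TBFS)$ (with $v \prec u$), how to exhibit a path from $r$ to $\TBFS(v)$ that touches $\TBFS(u)$ only inside $\TBFS(v)$. The key is to exploit the cross edges of $C_i$ as ``escape routes'' from $\TBFS(u)$. The case $|C_i|=1$ is vacuous, since then $\hat{H}_i$ is a single cross edge and $E(\hat{H}_i)\cap E(\TBFS)=\emptyset$, so assume $|C_i|\ge 2$.

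First, I would unpack the structure of $\hat{H}_i$ near $e$. By definition of $\hat{H}_i$, the edge $e$ lies on some minimal lca-free cross edge path $P$ whose first and last edges $e_1=\{a_1,b_1\}$ and $e_2=\{a_2,b_2\}$ are distinct edges of $C_i$. Since $P$ is minimal, its interior contains only tree edges, and since the two hammock trees $\hat{T}_i,\hat{T}_i'$ are vertex-disjoint, this tree interior lies entirely inside a single hammock tree; WLOG $e \in \hat{T}_i$ and the tree interior is a path in $\hat{T}_i$ from $a_1$ to $a_2$ (both in $\hat{T}_i$, with $b_1,b_2\in \hat{T}_i'$). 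This $\TBFS$-path from $a_1$ to $a_2$ ascends to some $\TBFS$-ancestor and then descends; the edge $e$ lies on either the ascending or the descending half. In either case, one of the endpoints $a_1,a_2$—call it $a_*$, with partner $b_*\in \hat{T}_i'$ and $e_*:=\{a_*,b_*\}\in C_i$—satisfies $a_* \preceq v$, i.e., $a_* \in V(\TBFS(v))$.

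Next, I would use the hammock definition to place $b_*$ outside $\TBFS(u)$. Since $\highV(\hat{T}_i)$ and $\highV(\hat{T}_i')$ are unrelated in $\TBFS$, the subtrees $\TBFS(\highV(\hat{T}_i))$ and $\TBFS(\highV(\hat{T}_i'))$ are disjoint (in a rooted tree, two unrelated vertices share no common descendant). Since $u \in \hat{T}_i \subseteq \TBFS(\highV(\hat{T}_i))$ we have $\TBFS(u) \subseteq \TBFS(\highV(\hat{T}_i))$, while $b_* \in \hat{T}_i' \subseteq \TBFS(\highV(\hat{T}_i'))$; therefore $b_* \notin \TBFS(u)$. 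Consequently, every ancestor of $b_*$ also lies outside $\TBFS(u)$, so the $\TBFS$-path $\TBFS(r,b_*)$ is disjoint from $\TBFS(u)$.

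Finally, I would take
\[
P^* \;:=\; \TBFS(r,b_*) \;\oplus\; e_* \;\oplus\; \TBFS(a_*,v),
\]
which is a path from $r$ to $v \in V(\TBFS(v))$, and verify $P^* \cap \TBFS(u) \subseteq \TBFS(v)$: the prefix $\TBFS(r,b_*)$ avoids $\TBFS(u)$ entirely, the edge $e_*$ has one endpoint $b_* \notin \TBFS(u)$ and the other $a_* \in \TBFS(v)$, and the suffix $\TBFS(a_*,v)$ stays in $\TBFS(v)$ because $a_*\preceq v$. Thus $e$ is connected below, and since $e$ was arbitrary, $\hat{H}_i$ is connected below. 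The main obstacle is really only bookkeeping—choosing $a_*$ on the correct side of the tree path and confirming, via the unrelatedness of the hammock roots, that the ``escape'' through $b_*$ never re-enters $\TBFS(u)$; once this is isolated, the rest is a direct concatenation.
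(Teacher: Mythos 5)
Your proof is correct and follows essentially the same approach as the paper's: both extract, from the defining lca-free minimal cross edge path, a cross edge of $C_i$ with one endpoint inside $\TBFS(v)$ and the other outside $\TBFS(u)$, then concatenate $\TBFS(r,\cdot)$ with that cross edge. The one place you diverge is in how the escape endpoint is placed outside $\TBFS(u)$: the paper argues directly from lca-freeness (if both endpoints of the cross edge were in $\TBFS(u)$, then $l(e_1) \in \TBFS(u)$ would lie on the tree interior of $P_{12}$, contradicting lca-freeness), while you invoke \Cref{lem:hatHHam} and the unrelatedness of $\hat{H}_i$'s hammock roots to conclude $\hat{T}_i'$ is disjoint from $\TBFS(u)$. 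Both are valid, and since \Cref{lem:hatHHam} is established before this lemma there is no circularity. One small bookkeeping slip: your witness path $P^* = \TBFS(r,b_*)\oplus e_*\oplus\TBFS(a_*,v)$ is not strictly ``between $r$ and $\TBFS(v)$'' in the paper's sense, because the suffix $\TBFS(a_*,v)$ puts internal vertices of $P^*$ inside $\TBFS(v)$; dropping that suffix and stopping at $a_* \in \TBFS(v)$ gives exactly the required witness, and is also precisely what the paper's own proof does with $\TBFS(r,u_1)\oplus e_1$.
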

\begin{proof}
    Consider an edge $e \in \hat{H}_i \cap \TBFS$. By definition of $\hat{H}_i$, we know that $e \in \hat{H}_i$ because there are edges $e_1 = \{u_1, v_1\}, e_2 = \{u_2, v_2\} \in C_i$ which are connected by an lca-free minimal cross edge path $P_{12}$ between $e_1$ and $e_2$ containing $e$. Thus, one of $e_1$ and $e_2$ has an endpoint in $\TBFS(v)$; WLOG suppose $e_1$ does. We may also assume that $e_1$ has at most one endpoint in $\TBFS(u)$ (and so has exactly one endpoint in $\TBFS(v)$) since if both endpoints of $e_1$ were in $\TBFS(u)$ then we would have $l(e_1) \in \TBFS(u)$ and so $P_{12}$ would contain $l(e_1)$, contradicting its lca-freeness. Thus, we assume that $v_1 \in \TBFS(v)$ but $u_1 \not \in \TBFS(u)$. Next, we let $P := \TBFS(r, u_1) \oplus e_1$; we know that $P$ is from $r$ to $\TBFS(v)$ and is internally vertex disjoint from $\TBFS(u)$ since $u_1 \not \in \TBFS(u)$, thereby demonstrating that $e$ is connected below.
\end{proof}

Notice that it immediately follow from \Cref{lem:conBelGivesForest}, \Cref{lem:hatHHam}, \Cref{lem:hatsDisjoint} and \Cref{lem:thirdPath} that $\hat{\mcH}$ is a forest of hammocks.

\subsection{Extending $\hat{\mcH}$ to $\bar{\mcH}$ by Hammock-Joining Paths}

We now describe how we extend $\hat{\mcH}$ to $\bar{\mcH}$ via what we call ``hammock-joining'' paths. We do so to ensure that our final forest of hammocks contains all shortest cross edge paths. We illustrate this process in \Cref{sfig:extendHams}.

\begin{definition}[Hammock-Joining Paths]\label{dfn:hamJoin}
    We say that path $P \subseteq \TBFS$ is a hammock-joining path if
    \begin{enumerate}
        \item $P$ is between distinct hammocks $\hat{H}_i, \hat{H}_j \in \hat{\mcH}$;
        \item $l(C_i), l(C_j) \not \in P$.
    \end{enumerate}
\end{definition}

We let $\mcPHJ$ be the subgraph induced by all hammock-joining paths for the rest of this section.

\begin{lemma}\label{lem:pathsConBel}
    Every $e \in \mcPHJ$ is connected below.
\end{lemma}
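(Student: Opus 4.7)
The plan is to exhibit, for every edge $e = \{u,v\} \in \mcPHJ$ with $v \prec u$, an escape path from $r$ to $\TBFS(v)$ that only meets $\TBFS(u)$ inside $\TBFS(v)$. Fix a hammock-joining path $P$ containing $e$ between distinct hammocks $\hat{H}_i$ and $\hat{H}_j$, with endpoints $v_i, v_j$. Since $P \subseteq \TBFS$, it has a unique highest vertex $w$ and decomposes into a monotone ascending subpath from $v_i$ to $w$ and a monotone descending subpath from $w$ to $v_j$. The edge $e$ lies on one of these, and after possibly swapping the roles of $\hat{H}_i$ and $\hat{H}_j$ I may assume $e$ is on the descending subpath, so that the portion of $P$ from $v$ to $v_j$ is monotone descending from $v$; in particular $v_j \in \TBFS(v)$.

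The key intermediate claim is that $l(C_j)$ is a strict ancestor of $u$. I would prove this by contradiction: since $l(C_j) \notin P$ we have $l(C_j) \ne u, v$, so assuming $l(C_j) \in \TBFS(u)$ forces $l(C_j)$ into some $\TBFS(c)$ for a child $c$ of $u$; because $\hat{H}_j \subseteq \TBFS(l(C_j))$ and $v_j \in \hat{H}_j \cap \TBFS(v)$, I get $c = v$, so $l(C_j)$ is a strict descendant of $v$ that is also an ancestor of $v_j$. But then the $\TBFS$-path from $v$ down to $v_j$ (a subpath of $P$) necessarily passes through $l(C_j)$, contradicting $l(C_j) \notin P$. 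Hence $l(C_j) \succ u$.

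Now I case on the location of the hammock root $\hat{r}_j$ of the hammock tree containing $v_j$ (say $v_j \in \hat{T}_j$). Since $\hat{r}_j$ is an ancestor of $v_j \in \TBFS(v)$, it lies on the tree path from $r$ to $v_j$, so either $\hat{r}_j \succeq u$ or $\hat{r}_j \preceq v$. If $\hat{r}_j \succeq u$, then $\hat{T}_j$ is a connected subtree of $\TBFS$ containing $\hat{r}_j$ and $v_j$, so it contains the tree path $\hat{r}_j \to \cdots \to u \to v \to \cdots \to v_j$, in particular $e \in \hat{T}_j \subseteq \hat{H}_j$, and \Cref{lem:thirdPath} immediately gives the desired connected-below path for $e$. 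If $\hat{r}_j \preceq v$, then $\hat{T}_j \subseteq \TBFS(v)$, and because $\hat{r}_j'$ is a strict descendant of $l(C_j)$ unrelated to $\hat{r}_j$, it lies in a subtree of a child $c'$ of $l(C_j)$ distinct from the one containing $u$; hence $\hat{T}_j' \cap \TBFS(u) = \emptyset$. Picking any cross edge $e_j = \{x_j, y_j\} \in C_j$ with $y_j \in \hat{T}_j$ and $x_j \in \hat{T}_j'$, I form the witness $\TBFS(r, x_j) \oplus e_j$: the $\TBFS$-path portion stays out of $\TBFS(u)$ because $x_j \notin \TBFS(u)$, and the path only enters $\TBFS(u)$ at the terminal vertex $y_j \in \TBFS(v)$.

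The main obstacle is the structural argument that $l(C_j) \notin \TBFS(u)$; once that is nailed down, the two subcases on $\hat{r}_j$ are essentially direct. The proof leans heavily on the two defining features of a hammock-joining path: it is contained in $\TBFS$ (so it is literally the tree path between its endpoints, letting me extract monotone pieces) and it avoids $l(C_j)$ (which is the only ingredient preventing $\hat{H}_j$ from being trapped strictly below $v$).
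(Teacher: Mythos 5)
Your proof is correct and takes essentially the same approach as the paper's: orient the hammock-joining path so that the relevant hammock's endpoint sits in $\TBFS(v)$, use lca-freeness of the path to push that hammock's lca strictly above $u$, and then assemble the escape path as $\TBFS(r,\cdot)$ concatenated with a cross edge from the hammock's equivalence class. The paper collapses your dichotomy on $\hat{r}_j$ into a single step (it directly locates a leaf of the hammock tree below $v$ and pairs it with a cross-edge endpoint outside $\TBFS(u)$, using the same $l(C)\notin\TBFS(u)$ fact you prove), but the content is the same; your explicit Case 1 ($\hat{r}_j \succeq u$) is handled soundly via \Cref{lem:thirdPath} even though it can only occur in degenerate configurations.
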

\begin{proof}
    Consider an edge $e = \{u, v\}$ where $u$ is the parent of $v$ in $\TBFS$ which is part of a hammock-joining path $P_{ij}$ between $\hat{H}_i$ and $\hat{H}_j$. 
    We will construct a path $P$ from $r$ to $\TBFS(v)$ where $P \cap \TBFS(u) \subseteq \TBFS(v)$ as required by the definition of connected below. WLOG we assume $P_{ij}$'s endpoint in $\hat{H}_i$ is in $\TBFS(u)$. Since every vertex in $\hat{H}_i$ is a descendant of $l(C_i)$ and $l(C_i) \not \in P$ since $P$ is hammock-joining, we know that $l(C_i) \not \in \TBFS(u)$. Notice that, by construction, each leaf of each of $\hat{H}_i$'s hammock trees is incident to an edge of $C_i$. Thus, we know that there is some $e_i = \{u_i, v_i\} \in C_i$ where WLOG $v_i \in \TBFS(u)$ but $u_i \not \in \TBFS(u)$. Then, we can let $P$ be $\TBFS(r, u_i)$. We know that $P$ is internally vertex-disjoint from $\TBFS(u)$ since $u_i \not \in \TBFS(u)$. Thus, $P$ is from $r$ to $F$ and $P \cap \TBFS(u) \subseteq \TBFS(v)$ as required.
\end{proof}

Next, we describe how we will assign each collection of connected components in $\mcPHJ$ to one of our hammocks. We always assign such a component to an incident hammock, hence the following definition.

\begin{definition}[$I(T)$]
    For a connected component $T$ of $\mcPHJ \setminus \hat{H}$, we let $I(T) := \{i : V(\hat{H}_i) \cap V(T) \neq \emptyset\}$ be the indices of initial hammocks which intersect $T$.
\end{definition}

The following definition formalizes the notion of a valid assignment of connected components of $\mcPHJ$ to hammocks.

\begin{definition}[Valid Assignment of Connected Components]
    Let $\pi$ be a mapping from components of $\mcPHJ \setminus \hat{H}$ to non-negative integers. Then we say that $\pi$ is an assignment of the connected components of $\mcPHJ \setminus \hat{H}$ if it maps each component $T \subseteq \mcPHJ \setminus \hat{H}$ to an index in $I(T)$. We say that $\bar{\mcH} = \{\bar{H}_i\}_i$ results from $\pi$ if $\bar{H}_i = \hat{H}_i \cup \bigcup_{T : \pi(T) = i} T$. We say that $\pi$ is valid if it holds that when $\pi(T) = i$ then $l(C_i) \not \in T$ for every component $T \subseteq \mcPHJ \setminus \hat{H}$.
\end{definition}

We now show that, provided connected components of $\mcPHJ \setminus \hat{H}$ are assigned in a valid way, the result is a forest of hammocks.

\begin{lemma}\label{lem:allAssignGiveForest}
    Let $\pi$ be a valid assignment of the connected components of $\mcPHJ \setminus \hat{H}$. Then if $\bar{\mcH}$ results from $\pi$ then $\bar{\mcH}$ is a forest of hammocks.
\end{lemma}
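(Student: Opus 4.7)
The plan is to check the three defining conditions for $\bar{\mcH}$ being a forest of hammocks: that each $\bar{H}_i$ is a hammock, that the $\bar{H}_i$'s are pairwise edge-disjoint, and that every simple cycle in $G[\bar{\mcH}]$ lies entirely in a single $\bar{H}_i$. For the hammock property, validity of $\pi$ forces every component $T$ with $\pi(T) = i$ to avoid $l(C_i)$, so $T$ lies inside the subtree of $\TBFS$ rooted at a single child of $l(C_i)$. Because $\hat{r}_i, \hat{r}_i'$ are unrelated, they sit below two distinct children of $l(C_i)$, so $T$ must attach to exactly one of $\hat{T}_i, \hat{T}_i'$. Consequently $\bar{T}_i, \bar{T}_i'$ remain vertex-disjoint connected subtrees of $\TBFS$ under distinct children of $l(C_i)$, their hammock roots $\bar{r}_i, \bar{r}_i'$ stay unrelated, and any cross edge between $V(\bar{T}_i)$ and $V(\bar{T}_i')$ must lie in $C_i$, hence already in $\hat{H}_i \subseteq \bar{H}_i$, so $\bar{H}_i = G[V(\bar{H}_i)]$. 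Edge-disjointness is inherited from \Cref{lem:hatsDisjoint} combined with the facts that distinct components of $\mcPHJ \setminus \hat{H}$ are edge-disjoint from each other and from every $\hat{H}_j$, and that $\pi$ assigns each component to a single hammock.

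The nontrivial property is the cycle condition. The unifying observation is that every edge of $G[\bar{\mcH}]$ is connected below --- tree edges of hammocks by \Cref{lem:thirdPath}, tree edges of hammock-joining components by \Cref{lem:pathsConBel}, and cross edges vacuously --- so that \Cref{lem:conBelGivesForest} forbids any connected-below subgraph of $G[\bar{\mcH}]$ with at most one cross edge per class from containing a cycle. I would then argue by contradiction, choosing a simple cycle $C \subseteq G[\bar{\mcH}]$ that meets at least two hammocks with the minimum possible number of cross edges. If $|C \cap C_i| \leq 1$ for every $i$, then $C$ contradicts \Cref{lem:conBelGivesForest} directly. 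Otherwise there is some $i^*$ with $|C \cap C_{i^*}| \geq 2$; pick two such edges $e, e' \in C \cap C_{i^*}$, form their hammock-fundamental cycle $F \subseteq \hat{H}_{i^*} \subseteq \bar{H}_{i^*}$, and consider $C \triangle F$. This is an edge-disjoint union of simple cycles of $G[\bar{\mcH}]$ with strictly fewer cross edges in total, since $e, e'$ cancel and $F$ has no other cross edges. If some cycle in $C \triangle F$ still meets two hammocks, minimality of $C$ is violated; otherwise every such cycle lies in a single hammock, and reassembling $C = (C \triangle F) \triangle F$ together with $F \subseteq \bar{H}_{i^*}$ forces $C$ itself into a single hammock, contradicting the choice of $C$.

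The main obstacle I anticipate is the reassembly in the last step: from ``every cycle of $C \triangle F$ lies in one hammock'' it takes care to conclude that $C$ does. The cleanest route should be to observe that cycles of $C \triangle F$ sharing a vertex with $F$ must, via the edge-disjointness already established and the tree-of-hammocks structure of $\hat{\mcH}$ inherited so far, share the hammock $\bar{H}_{i^*}$ containing $F$; gluing those to $F$ and iterating then collapses the hammock indices. If this bookkeeping proves fragile, the fallback is to extract a clawed cycle at a boundary where $C$ transitions between hammocks, using the three connected-below escape paths that \Cref{lem:thirdPath} and \Cref{lem:pathsConBel} guarantee, thereby contradicting series-parallelness.
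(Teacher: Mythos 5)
Your verification of the hammock property and edge-disjointness tracks the paper's proof closely. For the hammock property the paper routes the claim ``no new cross edge appears between $\bar{T}_i$ and $\bar{T}_i'$'' through \Cref{lem:conBelGivesForest} via a contradiction, whereas you argue it directly from the lca geometry; both are correct. (The paper also explicitly invokes connected-belowness and \Cref{lem:noFundCycle} to rule out $\TBFS$-fundamental cycles appearing inside $\bar H_i$ after the extension, i.e.\ to justify that $\bar T_i,\bar T_i'$ remain trees; your write-up leaves this step implicit, but it is the same point.)

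For the cycle property you and the paper pursue genuinely different reductions to \Cref{lem:conBelGivesForest}. The paper uses a one-pass ``shortcutting'' argument: given a simple cycle $C$ meeting at least two hammocks, for each hammock $\bar H_i$ with $|E(C)\cap C_i|\ge 2$ it replaces the arc of $C$ between the first and last $\bar H_i$-vertices by a path inside $\bar H_i$ using at most one edge of $C_i$. Iterating produces a single concrete cycle $C'$ that still meets at least two hammocks but has at most one cross edge per class, contradicting \Cref{lem:conBelGivesForest}. You instead minimize over cross-edge count and take the symmetric difference with a hammock-fundamental cycle $F$.

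The gap is exactly where you flag it, and the fix you sketch does not close it. From ``every cycle $D$ in the decomposition of $C\triangle F$ lies in a single hammock'' it does not follow that $C$ does. Your proposed rescue---that a $D$ sharing a vertex with $F$ must live in $\bar H_{i^*}$ because of ``the tree-of-hammocks structure of $\hat{\mcH}$ inherited so far''---does not hold: that structure constrains $G[\hat{\mcH}]$, but the $D$'s live in $G[\bar{\mcH}]$ and may use edges of $\mcPHJ\setminus\hat H$ that are not in $\hat{\mcH}$ at all. Worse, even if $\bar{\mcH}$ \emph{were} already known to be a forest of hammocks, a cycle contained in $\bar H_j$ can meet $\bar H_{i^*}$ at a shared articulation vertex without being contained in $\bar H_{i^*}$, so sharing a vertex with $F$ carries no such implication; and you cannot assume the forest-of-hammocks structure of $\bar{\mcH}$ here without circularity, since that is the conclusion being proved. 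The paper's shortcutting sidesteps reassembly entirely by never leaving the world of single cycles: instead of $C\triangle F$, it replaces the offending arc of $C$ by an in-hammock path, which both strictly decreases the cross-edge count and (because the complementary arc of $C$ still carries an edge outside $\bar H_{i^*}$) preserves the property of meeting at least two hammocks. If you want to keep your minimality framing, swap your $\triangle F$ step for that in-place replacement and the induction closes.
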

\begin{proof}
    We let $\bar{H}$ be the subgraph induced by $\bar{\mcH}$ for the remainder of this proof. We first note that by \Cref{lem:thirdPath} and \Cref{lem:pathsConBel}, $\bar{H}$ is connected below; we will use this fact several times in this proof.
    
    We begin by verifying that every $\bar{H}_i$ is indeed a hammock. By \Cref{lem:hatHHam} $\bar{H}_i$ is a hammock when we initialize it to $\hat{H}_i$. Next, we claim that adding connected components of $\mcPHJ \setminus \hat{H}$ to $\bar{H}_i$ does not violate its hammockness. First, notice that $\bar{H}$ does not contain any fundamental cycles of $\TBFS$ since $\bar{H}$ is connected below so such a cycle would be a connected below subgraph violating \Cref{lem:noFundCycle}. Thus, since each of the connected components of $\mcPHJ \setminus \hat{H}$ we add to $\bar{H}_i$ intersect with $\hat{H}_i$ on some vertex, we know $\bar{H}_i$ continues to be spanned by two subtrees of $\TBFS$. To verify that $\bar{H}_i$ is a hammock we must also show that the edges between these two trees in $E_c$ are exactly $C_i$; since $C_i \subseteq \hat{H}_i$, it suffices to show that no additional edges from $E_c$ become incident to the two hammock trees of $\bar{H}_i$ as we add connected components of $\mcPHJ \setminus \hat{H}$ to $\bar{H}_i$. However, notice that the existence of such an edge would give us a cycle in $\bar{H}$ incident to two lca-equivalence classes, contradicting \Cref{lem:conBelGivesForest}. Lastly, it remains to argue that the roots of $\bar{H}_i$s hammock trees are unrelated. This is immediate from our assumption that $\pi$ is valid---i.e.\ if $\pi(T) = i$ then $l(C_i) \not \in T$  for every component $T \subseteq \mcPHJ \setminus \hat{H}$---since the roots of $\bar{H}_i$'s hammock trees would only become related if $l(C_i)$ were included in some $T$ assigned to $i$.
    
    Next, notice that the $\bar{H}_i$s are edge-disjoint by construction. In particular, the $\hat{H}_i$s are disjoint by \Cref{lem:hatsDisjoint}. Moreover, each connected component of $\mcPHJ \setminus \hat{H}$ is pair-wise edge-disjoint by definition and also disjoint from any $\hat{H}_i$ by construction. Thus, in constructing $\bar{H}_i$ by adding connected components from $\mcPHJ \setminus \hat{H}$ to  $\hat{H}_i$, our resulting $\bar{H}_i$s must be edge-disjoint.
    
    Lastly, we verify our cycle property: we must show that any cycle $C$ in $\bar{H}$ satisfies $|\bar{H}_i : E(C) \cap \bar{H}_i \neq \emptyset| = 1$. We use a ``shortcutting'' argument---which we illustrate in \Cref{fig:sCutting}---and \Cref{lem:conBelGivesForest}. In particular, suppose that $\bar{H}$ contains a cycle $C$ where $|\bar{H}_i : E(C) \cap \bar{H}_i \neq \emptyset| \geq 2$. Then, we claim that $\bar{H}$ also contains a cycle $C'$ where $|\bar{H}_i : E(C') \cap \bar{H}_i \neq \emptyset| \geq 2$ \emph{and} $|C_i \cap E(C)|  \leq 1$ for all $i$. To see this, notice that, since our $\bar{H}_i$s are edge-disjoint we may ``shortcut'' $C$ through the trees of $\bar{H}_i$. In particular, let $\bar{H}_i$ be one of our hammocks where $|C_i \cap E(C)| \geq 2$ and let $x$ and $y$ be the first and last vertices of $\bar{H}_i$ visited by $C$ (for some arbitrary cyclic ordering of the edges in $C$). Then, if $x$ and $y$ are in the same hammock tree of $\bar{H}_i$ then we replace the portion of $C$ between $x$ and $y$ with $\TBFS(x,y)$. If $x$ and $y$ are in different hammock trees, we replace the portion of $C$ between $x$ and $y$ with an arbitrary path in $\bar{H}_i$ which uses at most one edge of $C_i$. Doing these replacements does not affect $|\bar{H}_i : E(C) \cap \bar{H}_i \neq \emptyset|$ and reduces the number of $\bar{H}_i$ with $|C_i \cap E(C)| \geq 2$ by at least $1$; thus, after iterating a finite number of times we produce our desired $C'$.
    
    The existence of $C'$ allows us to arrive at a contradiction. In particular, since $\bar{H}$ is connected below we know that $C'$ is connected below where $|C_i \cap C'| \leq 1$; but $C'$ is a cycle, contradicting \Cref{lem:conBelGivesForest}.
\end{proof}

\begin{figure}
    \centering
    \begin{subfigure}[b]{0.49\textwidth}
        \centering
        \includegraphics[width=\textwidth,trim=0mm 0mm 0mm 0mm, clip]{./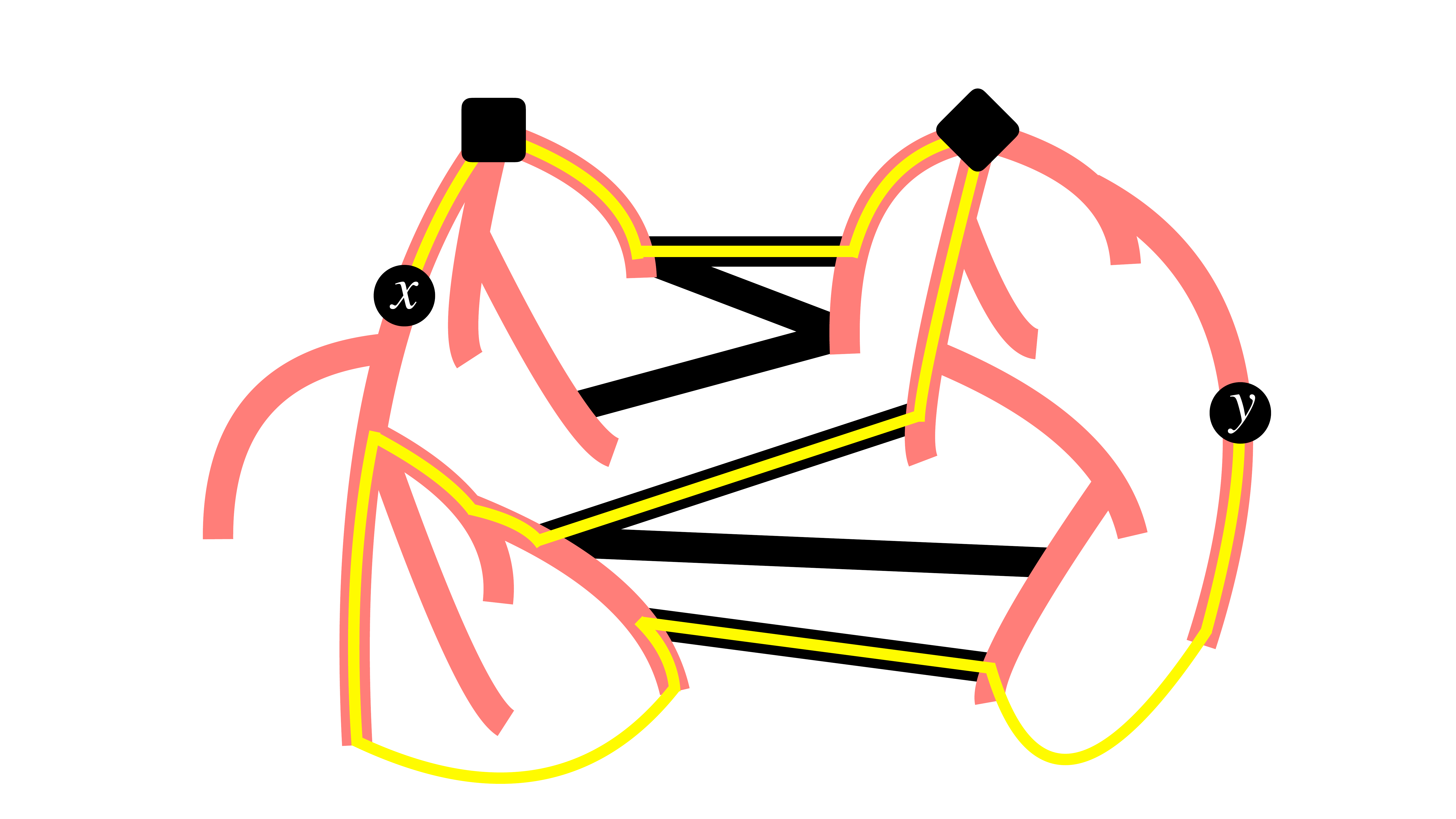}
        \caption{$C$ in $\bar{H}_i$.}\label{sfig:scut1}
    \end{subfigure}
    \hfill
    \begin{subfigure}[b]{0.49\textwidth}
        \centering
        \includegraphics[width=\textwidth,trim=0mm 0mm 0mm 0mm, clip]{./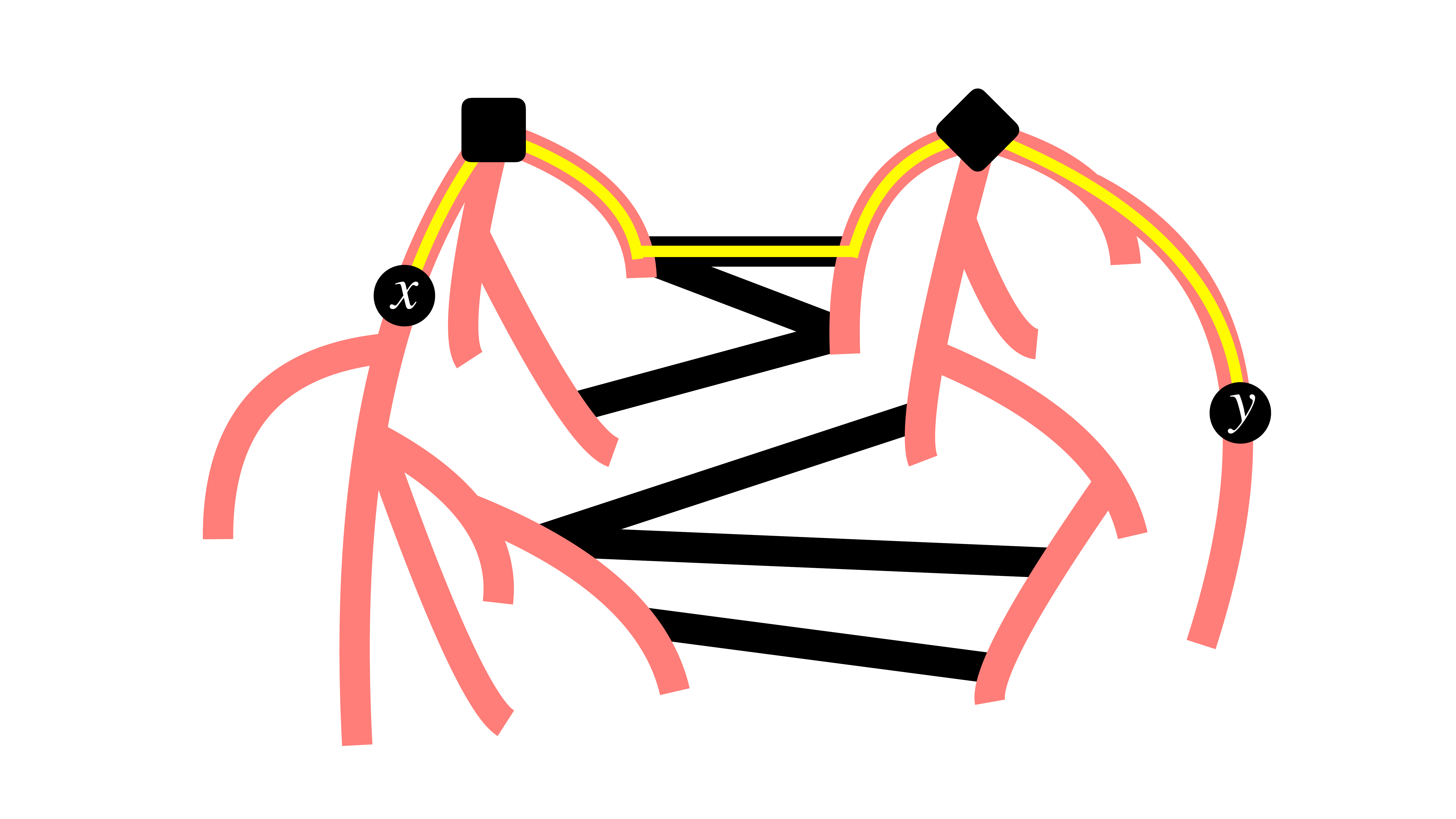}
        \caption{Shortcutting $C$ in $\bar{H}_i$.}\label{sfig:scut2}
    \end{subfigure}
    \caption{An illustration of how in the proof of \Cref{lem:allAssignGiveForest} we may assume that $|C \cap C_i| \leq 1$ by shortcutting $C$ in $\bar{H}_i$. We highlight $C$ in yellow both before and after shortcutting.}\label{fig:sCutting}
\end{figure}

Henceforth we will assume that the root hammock of each tree of hammocks $\mcT$ in $\bar{\mcH}$ is $\max_{i : \bar{H}_i \in \mcT}h(C_i)$ where we break ties arbitrarily. As a reminder, $h(C_i)$ is the height of $l(C_i)$ in $\TBFS$.

\begin{lemma}\label{lem:someParent}
    For any connected component $T$ of $\mcPHJ \setminus \hat{H}$ there is some $i \in I(T)$ such that for any valid assignment $\pi$ of the connected components of $\mcPHJ \setminus \hat{H}$ with $\bar{\mcH}$ as the resulting forest of hammocks we have $\bar{H}_j \preceq_{\bar{\mcH}} \bar{H}_i$ for all $j \in I(T)$.
\end{lemma}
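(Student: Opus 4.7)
The natural candidate is $i := \argmax_{j \in I(T)} h(C_j)$ with ties broken by hammock index. The plan is to show (1) this $i$ is uniquely defined because $\{l(C_j) : j \in I(T)\}$ forms a chain in $\TBFS$, and then (2) under any valid $\pi$ the hammock $\bar{H}_i$ is an ancestor of every $\bar{H}_j$ for $j \in I(T)$.

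For step (1), I would fix for each $j \in I(T)$ some $v_j \in V(T) \cap V(\hat{H}_j)$. Since every vertex of $\hat{H}_j$ is a strict descendant of $l(C_j)$ and $v_j \preceq \highV(T)$ because $v_j \in V(T)$, both $l(C_j)$ and $\highV(T)$ are ancestors of $v_j$ in $\TBFS$, hence comparable. I would then rule out $\highV(T) \succ l(C_j)$. In that case the monotone $\TBFS$-path from $v_j$ up to $\highV(T)$ lies entirely in $T$ (as $T \subseteq \TBFS$ is a subtree) and in particular contains the parent edge $e^*$ of $l(C_j)$. Since $e^* \in T \subseteq \mcPHJ$, the edge $e^*$ sits on some hammock-joining path $P$ between $\hat{H}_a$ and $\hat{H}_b$; because $l(C_j)$ is an interior vertex of $P$ and hammock-joining paths involving $\hat{H}_j$ must avoid $l(C_j)$, we have $a,b \neq j$. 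Combining the portion of $P$ reaching the side of $l(C_j)$'s parent with two distinct cross edges of $C_j$ at $l(C_j)$ (which span the two child-subtrees of $l(C_j)$ by definition of a hammock), the ancestor path $\TBFS(r, l(C_j))$, and the connected-below witnesses from \Cref{lem:thirdPath} and \Cref{lem:pathsConBel}, I would assemble a clawed cycle, contradicting series-parallelness. So $l(C_j) \succeq \highV(T)$ for every $j \in I(T)$, the $l(C_j)$'s all sit on the $\TBFS$-chain from $\highV(T)$ to $r$, and $i$ is well-defined.

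For step (2), I would fix any valid $\pi$, set $k := \pi(T)$, and let $\mcT$ denote the tree of hammocks of $\bar{\mcH}$ containing $\bar{H}_k$; this exists and is a tree by \Cref{lem:allAssignGiveForest}. For every $j \in I(T) \setminus \{k\}$, the vertex $v_j$ lies in both $\bar{H}_j$ and $T \subseteq \bar{H}_k$, so $\bar{H}_j$ and $\bar{H}_k$ share a vertex and are therefore adjacent in $T_{\bar{\mcH}}$. Let $R$ be the root of $\mcT$, chosen to maximize $h(C)$ over $\mcT$. The unique tree-path in $T_{\bar{\mcH}}$ from any $\bar{H}_j$ with $j \in I(T)$ to $R$ takes the form $\bar{H}_j \to \bar{H}_k \to N \to \cdots \to R$ (with the initial edge collapsing if $j = k$). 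Step (1) gives that $l(C_i)$ strictly dominates $l(C_{j'})$ for every other $j' \in I(T) \setminus \{i\}$, and $h(C_R) \geq h(C_i)$; monotonicity of $h(C)$ along root-directed paths in $T_{\bar{\mcH}}$ (which follows from the chain structure of step (1) combined with the hammock-joining-path constraint that no $\bar{H}_{j'}$ with $h(C_{j'}) < h(C_i)$ can separate $\bar{H}_i$ from a hammock of larger $h(C)$) forces $N = \bar{H}_i$ whenever $k \neq i$. Hence $\bar{H}_i$ is on the path from $\bar{H}_j$ to $R$, yielding $\bar{H}_j \preceq_{\bar{\mcH}} \bar{H}_i$ for every $j \in I(T)$.

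The main obstacle is the clawed-cycle construction in step (1) ruling out $\highV(T) \succ l(C_j)$: selecting three genuinely disjoint $r$-to-cycle paths requires careful case analysis on the endpoints of the hammock-joining path $P$ relative to $\TBFS(l(C_j))$, and the degenerate case $|C_j| = 1$ (where $\hat{H}_j$ carries no hammock-fundamental cycle of its own) must be handled by instead using the $\TBFS$-path between the two endpoints of the unique edge of $C_j$ together with that edge to produce the cycle. A secondary concern is the monotonicity claim at the end of step (2); if it does not follow cleanly from step (1) alone, a short supplementary lemma tracking how $h(C)$ behaves along edges of $T_{\bar{\mcH}}$---essentially a local version of the later \Cref{lem:parChildEquiv}---will suffice.
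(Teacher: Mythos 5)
You take a genuinely different route from the paper, and the route has a serious gap. The paper never tries to pin down \emph{which} hammock is the always-ancestor. It introduces the notion of a ``local max'' (a hammock in $I(T)$ not descended from any other hammock in $I(T)$ under the given assignment), shows via \Cref{lem:pathStruct} that every valid assignment produces exactly one local max, and then, comparing two valid assignments $\pi,\pi'$, shows the local max is the same: the key observation is that ``$i$ is a local max under $\pi$'' forces every path between $\hat H_k$ and $\hat H_j$ inside $\hat H\cup\mcPHJ$ to contain an edge of $\hat H_i$, which is an assignment-independent statement. You instead try to \emph{construct} the candidate as $i:=\argmax_{j\in I(T)} h(C_j)$, and that is where things break.

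The break is in step~(2), which quietly imports a monotonicity statement that is not available at this point and in fact depends on the very lemma you are proving. You invoke ``monotonicity of $h(C)$ along root-directed paths in $T_{\bar\mcH}$'' and identify it as ``essentially a local version of \Cref{lem:parChildEquiv}.'' But \Cref{lem:parChildEquiv} is proved only for the specific forest $\bar\mcH$ built from the assignment $\bar\pi$ that \Cref{lem:someParent} is used to define; its proof (via \Cref{lem:leftChild} and \Cref{lem:conBelow}) repeatedly relies on \Cref{lem:rootedHDForest}, i.e.\ that $V(\bar H_i)\cap V(\bar H_j)=\{\bar r_j\}$ whenever $\bar H_i$ is the parent of $\bar H_j$, and the proof of \Cref{lem:rootedHDForest} explicitly uses $\bar\pi$'s always-ancestor property to rule out $v_{ij}=\hat r_j$. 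For an arbitrary valid assignment that conclusion can simply fail (a component of $\mcPHJ\setminus\hat H$ incident to $\hat r_j$ may be assigned to $j$, pushing $\bar r_j$ strictly above the shared vertex), so your step~(2) requires a monotonicity lemma for \emph{all} valid assignments proved without $\bar\pi$'s structure --- a fresh, substantial argument, not a short supplement. There is also a smaller but genuine defect in the choice of $i$: two distinct classes in $I(T)$ can share the same lca and hence the same $h(C)$ value (e.g.\ one class joining child subtrees $a,b$ of $u$ and another joining $a,c$), and in that case the always-ancestor is determined structurally, not by hammock index --- an index-based tie-break picks a relabeling-dependent hammock and cannot be correct. (You also acknowledge that the step~(1) clawed-cycle construction is unfinished; that part seems repairable, but the step~(2) circularity is the real obstacle.)
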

\begin{proof}
    Let $\bar{H} := \hat{H} \cup \mcPHJ$ where $\hat{H} = G[\hat{\mcH}]$. Fix a connected component of $\bar{H}$; notice that we always choose the same root hammock for this component regardless of $\pi$; we let $\bar{H}_k$ be said root hammock. Fix a component $T$ of $\mcPHJ \setminus \hat{H}$ and let $I := I(T)$ for the rest of this proof. 
    
    If $\bar{H}_k$ is incident to $T$ then the claim trivially holds since $\bar{H}_j \preceq_{H} \bar{H}_k$ for any assignment and $j \in I \setminus \{k\}$ and so we will assume that $k \not \in I$. 
    
    Fix an arbitrary valid assignment $\pi$ of the connected components of $\mcPHJ \setminus \hat{H}$, let $\bar{\mcH}$ be the resulting forest of hammocks as per \Cref{lem:allAssignGiveForest} and say that $i$ is a local max with respect to $\pi$ if there is no $j \in I \setminus \{i\}$ where $\bar{H}_i$ is a descendant of $\bar{H}_j$ in $\bar{\mcH}$. Notice that there is at least $1$ local max since, by definition of $\mcPHJ$, there are at least two distinct hammocks among $\bar{\mcH}$ with a vertex in $T$. To prove our claim it suffices to show that there is $1$ local max under each assignment and this local max is always the same.
    
    We claim that the number of local maxes with respect to $\pi$ is at most $1$. To see this, assume for the sake of contradiction that there are $2$ local maxes $\bar{H}_i$ and $\bar{H}_j$ where $i, j \in I$. Since $\bar{H}_i$ and $\bar{H}_j$ are both local maxes neither of which is the root hammock $\bar{H}_k$ then by virtue of $\bar{\mcH}$ being a forest of hammocks we know that any path from $\bar{H}_i$ to $H_j$ in $G[\bar{\mcH}]$ must contain at least one vertex of the parent of $\bar{H}_i$ as per \Cref{lem:pathStruct}. On the other hand, since $i, j \in I$, we know there is a path $P \subseteq  T$ between $\bar{H}_i$ and $\bar{H}_j$. Letting $\bar{H}_{i'}$ be the parent of $\bar{H}_i$ in $\bar{\mcH}$, it follows that $V(\bar{H}_{i'}) \cap P \neq \emptyset$ and so $i' \in I$, contradicting the fact that $\bar{H}_i$ is a local max.
    
    Thus, for any assignment we know that the number of local maxes is $1$. We proceed to show that this is always the same local max. In particular, we show that if $\bar{H}_i$ is the local max under some assignment $\pi$, then $\bar{H}_i$ is the local max under any other assignment $\pi'$. Suppose for the sake of contradiction that $i$ is the local max under $\pi$ but $j$ for $j \neq i$ is the local max under some other $\pi'$. Let $\bar{\mcH}$ and $\bar{\mcH}'$ be the forest of hammocks resulting from $\pi$ and $\pi'$ as per \Cref{lem:allAssignGiveForest}. Let $\bar{H}_i$ and $\bar{H}_i'$ be $i$'s hammock in $\bar{\mcH}$ and $\bar{\mcH}'$ and let $\bar{H}_j$ and $\bar{H}_j'$ be $j$'s hammock in $\bar{\mcH}$ and $\bar{\mcH}'$. We emphasize that $\bar{H}_i$ and $\bar{H}_j'$ need not be edge-disjoint since they are hammocks in two different forests of hammocks. 
    
    Now by virtue of the fact that $i$ is a local max in $\bar{\mcH}$ and $i \neq k$, we know by \Cref{lem:pathStruct} that every path between $\bar{H}_k$ and $\bar{H}_j$ contains an edge of $\bar{H}_i$; it follows that every path between $\hat{H}_k$ and $\hat{H}_j$ contains an edge of $\bar{H}_i$. We additionally make the stronger claim that every path between $\hat{H}_k$ and $\hat{H}_j$ contains an edge of $\hat{H}_i$. To see this, suppose for the sake of contradiction that there was a path $P$ between $\hat{H}_k$ and $\hat{H}_j$ whose only edges in $\bar{H}_i$ are contained in $\bar{H}_i \setminus \hat{H}_i \subseteq \mcPHJ \setminus \hat{H}$. Let $\bar{H}_{i'}$ be the parent of $\bar{H}_i$ in $\bar{\mcH}$ and let $P'$ be the subpath of $P$ restricted to $\bar{H}_i$ where $P'$ is between vertices $u$ and $v$ where $u \in \bar{H}_{i'}$. Since $u \in \bar{H}_{i'}$ and $P' \subseteq \mcPHJ \setminus \hat{H}$ it follows that $i' \in I$, contradicting our assumption that $i$ is a local max under $\pi$. Thus, indeed, every path between $\hat{H}_k$ and $\hat{H}_j$ contains an edge of $\hat{H}_i$.
    
    On the other hand, by virtue of the fact that $j$ is a local max in $\bar{\mcH}'$, we know by \Cref{lem:pathStruct} that in $G[\bar{\mcH}']$ there is a path from $\bar{H}_k'$ to $\bar{H}_j'$ which does not have an edge in $\bar{H}_i'$. Extending this path through $\bar{H}_k'$ and $\bar{H}_j'$ on either end, we have that there is a path from $\hat{H}_k$ to $\hat{H}_j$ which does not have an edge in $\bar{H}_i'$ and therefore no edge in $\hat{H}_i$. However, this contradicts the above claim that every path between $\hat{H}_k$ and $\hat{H}_j$ contains an edge of $\hat{H}_i$.
\end{proof}

\begin{lemma}\label{lem:atLeastOneVal}
    There is at least one valid assignment of the components of $\mcPHJ \setminus \hat{H}$.
\end{lemma}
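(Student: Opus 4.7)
The plan is to assign, for each connected component $T$ of $\mcPHJ \setminus \hat{H}$, the index $\pi(T) := i^*$ where $i^* \in I(T)$ maximizes $h(C_i)$ (ties broken arbitrarily), and to show this assignment is valid: $l(C_{i^*}) \notin V(T)$. I would prove validity by contradiction, extracting a clawed cycle in $G$ from the assumption $l(C_{i^*}) \in V(T)$ and thereby contradicting the series-parallelness of $G$.

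First, I would establish that some hammock root of $\hat{H}_{i^*}$ lies in $V(T)$. Fix any $v \in V(\hat{H}_{i^*}) \cap V(T)$ (nonempty since $i^* \in I(T)$), WLOG in $\hat{T}_{i^*}$. Since $T$ is a connected subgraph of $\TBFS$, the unique $\TBFS$-path from $v$ up to $l(C_{i^*})$ must lie entirely in $T$. However, any prefix of this path inside $\hat{T}_{i^*}$ uses edges of $\hat{H}_{i^*} \subseteq \hat{H}$, contradicting $T \subseteq \mcPHJ \setminus \hat{H}$, unless $v = \hat{r}_{i^*}$. Hence $\hat{r}_{i^*} \in V(T)$, and the entire $\TBFS$-path from $\hat{r}_{i^*}$ up to $l(C_{i^*})$ lies in $T$; every edge of this path therefore lies in some hammock-joining path.

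Next, I would analyze the edge $e_0 = \{\hat{r}_{i^*}, p\}$ where $p$ is the $\TBFS$-parent of $\hat{r}_{i^*}$. Since $e_0$ lies in some hammock-joining path $P$ between $\hat{H}_a$ and $\hat{H}_b$ (with $l(C_a), l(C_b) \notin V(P)$), tracing along $P$ in both directions from $e_0$ until we leave $T$ yields boundary vertices in $V(T) \cap V(\hat{H}_c)$ for certain hammocks $c \in I(T)$ (either $c \in \{a,b\}$ if $P$'s endpoint is reached, or the index of whichever hammock contains the $\hat{H}$-edge that truncated the $T$-subpath). The maximality of $h(C_{i^*})$ forces $h(C_c) \leq h(C_{i^*})$, and the lca-freeness of $P$ forces $l(C_c) \neq l(C_{i^*})$; together these place each such hammock's lca strictly below $l(C_{i^*})$ in $\TBFS$.

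Finally, I would build a clawed cycle: take any cross edge $e = \{u, u'\} \in C_{i^*}$ with $u \in \hat{T}_{i^*}, u' \in \hat{T}_{i^*}'$, and let $C := e \cup \TBFS(u, u')$, which passes through $l(C_{i^*})$. The three distinguished vertices on $C$ are $u$, $u'$, and $l(C_{i^*})$; the three vertex-disjoint paths from $r$ would be $\TBFS(r, l(C_{i^*}))$ reaching $l(C_{i^*})$ from above, a splice of $\TBFS(r, u_?)$ with a prefix of $P$ followed by the descending segment in $\hat{T}_{i^*}$ to reach $u$, and a symmetric path attacking $u'$ via $\hat{T}_{i^*}'$ (using either $\hat{r}_{i^*}'$ if it also lies in $V(T)$, or another hammock-joining path incident to the $\hat{T}_{i^*}'$ side of $T$ obtained by the same argument). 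The main obstacle will be verifying vertex-disjointness of these three paths; I expect this to follow from the strict ancestry bound on the neighboring hammocks' lcas established above, combined with the defining lca-free property of each hammock-joining path, which keeps each auxiliary path from wandering back across $l(C_{i^*})$ or into the wrong subtree of $\hat{H}_{i^*}$.
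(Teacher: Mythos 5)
The paper's proof of this lemma is a short, direct argument that never needs a clawed cycle: fix the highest vertex $x$ of $T$ (which is a subtree of $\TBFS$), take a hammock-joining path $P_{ij}$ through $x$, and assign $T$ to $i$. If $l(C_i)$ were in $T$ it would be a descendant of $x$; since $P_{ij}$'s endpoint in $\hat{H}_i$ is a strict descendant of $l(C_i)$ while $x$ is a (weak) ancestor of $l(C_i)$, the $\TBFS$-path from that endpoint to $x$ --- a subpath of $P_{ij}$ --- would contain $l(C_i)$, contradicting the lca-freeness in the definition of a hammock-joining path.

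Your proposal takes a genuinely different route (max-height assignment plus a clawed-cycle contradiction), and as written it has at least two real gaps. First, you assert that $h(C_c)\leq h(C_{i^*})$ together with $l(C_c)\neq l(C_{i^*})$ ``place each such hammock's lca strictly below $l(C_{i^*})$ in $\TBFS$.'' That inference is invalid: height ordering plus non-equality does not give ancestry; $l(C_c)$ and $l(C_{i^*})$ could simply be unrelated in $\TBFS$, and nothing you have established forces them to be comparable. Second, the clawed cycle is never actually produced. Your argument for one leg relies on $\hat{r}_{i^*} \in V(T)$, but the symmetric statement for $\hat{r}_{i^*}'$ does not follow from the same argument (you only placed one hammock tree's root in $T$), and you explicitly defer the vertex-disjointness of the three claw legs (``I expect this to follow\ldots'') rather than proving it. Beyond these gaps, the approach overshoots what the lemma needs: the paper gets away with a purely local observation about the top vertex of $T$, whereas your plan imports the full clawed-cycle machinery for a claim that doesn't require it.
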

\begin{proof}
    Fix a connected component $T$ of $\mcPHJ \setminus \hat{H}$. Recall that $T$ is a connected subtree of $\TBFS$. Let $x$ be the highest vertex in $\TBFS$ in $T$. Since $x$ is included in $\mcPHJ$ it must lie on a hammock-joining path $P_{ij}$ between some $\hat{H}_i$ and $\hat{H}_j$. We know that neither $l(C_i)$ nor $l(C_j)$ lie on $P_{ij}$ and so neither $l(C_i)$ nor $l(C_j)$ are in $T$. Thus, assign $T$ to $i$. Doing so for each such $T$ results in a valid assignment.
\end{proof}

By \Cref{lem:atLeastOneVal} there is at least one valid assignment of the components of $\mcPHJ \setminus \hat{H}$. It follows that by \Cref{lem:someParent} there is some valid assignment $\bar{\pi}$ which by \Cref{lem:allAssignGiveForest} results in a forest of hammocks $\bar{\mcH}$ where we have $\bar{H}_j \preceq_{\bar{\mcH}} \bar{H}_i$ for all $j \in I(T)$. We use this assignment in our construction. In particular, henceforth we let $\bar{\mcH}$ be the forest of hammocks which results from $\bar{\pi}$ and let $\bar{H} := G[\bar{\mcH}]$ be its induced subgraph. We will let $\bar{T}_i$, $\bar{T}_i'$, $\bar{r}_i$ and $\bar{r}_i'$ refer to the two hammock trees and hammock roots of $\bar{H}_i$ henceforth.

%

We proceed to argue that $\bar{\mcH}$ is a rooted forest of hammocks. To do so, we first prove two simple technical lemmas. Recall from \Cref{lem:artPoint} that two hammocks share at most one vertex in a rooted tree of hammocks.
 
\begin{lemma}\label{lem:parPath}
    Let $\mcT$ be a rooted tree of hammocks and let $\bar{H}_i$ be the parent of $\bar{H}_j$ in $\mcT$ where $v_{ij}$ is the one vertex in $V(\bar{H}_i) \cap V(\bar{H}_j)$. Then there is a path $P$ from $r$ to $v_{ij}$ where $P \cap \bar{H}_j = \{v_{ij}\}$.
\end{lemma}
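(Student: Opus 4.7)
I would prove the lemma by induction on the depth of $\bar H_i$ in the tree representation $T_\mcT$, but with a slightly strengthened inductive hypothesis. The strengthened claim is: for every parent--child pair $(\bar H_\alpha, \bar H_\beta)$ in $\mcT$ with shared vertex $v_{\alpha\beta}$, there is a path $P$ from $r$ to $v_{\alpha\beta}$ in $G$ whose intersection with $V(\bar H_\beta) \cup \bigcup_{\bar H_\delta} V(\bar H_\delta)$, taken over all descendants $\bar H_\delta$ of $\bar H_\beta$ in $T_\mcT$, equals exactly $\{v_{\alpha\beta}\}$.

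\textbf{Inductive step.} Suppose $\bar H_i$ has a parent $\bar H_{i'}$ in $\mcT$ with shared vertex $v_{i'i}$. Applying the strengthened hypothesis to the pair $(\bar H_{i'}, \bar H_i)$ would yield a path $P'$ from $r$ to $v_{i'i}$ that is disjoint from $\bar H_i$ and from every descendant of $\bar H_i$ in $T_\mcT$ except at $v_{i'i}$ itself. I would then extend $P'$ by any path $Q$ inside $\bar H_i$ from $v_{i'i}$ to $v_{ij}$, which exists since $\bar H_i$ is connected. By \Cref{lem:artPoint} we have $V(\bar H_i) \cap V(\bar H_j) = \{v_{ij}\}$, and by the $T_\mcT$ tree structure together with \Cref{lem:artPoint}, $\bar H_i$ shares no vertex with any strict descendant of $\bar H_j$ (two hammocks in $\mcT$ share a vertex only when one is the parent of the other). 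Hence $Q$ meets $\bar H_j$ and all its descendants exactly at $v_{ij}$, and the concatenation $P = P' \oplus Q$ satisfies the strengthened claim.

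\textbf{Base case.} When $\bar H_i = \bar H_k$ is the root of $\mcT$, one must construct the path directly. Here I would exploit the convention set just before the lemma: the root hammock of $\mcT$ maximizes $h(C_k)$ among all hammocks of $\mcT$. Consequently, $l(C_l)$ cannot be a strict ancestor of $l(C_k)$ in $\TBFS$ for any $\bar H_l \in \mcT$ (such an $l(C_l)$ would have strictly larger height). Since every vertex of $\bar H_l$ is a strict descendant of $l(C_l)$ in $\TBFS$ (its hammock-tree roots $\bar r_l, \bar r_l'$ are strict descendants of $l(C_l)$), no proper ancestor of $l(C_k)$ lies in any $\bar H_l \in \mcT$, and $l(C_k)$ itself is not contained in any hammock of $\bar{\mcH}$ by construction. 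Hence $\TBFS(r, l(C_k))$ is disjoint from every hammock of $\mcT$. I would then descend from $l(C_k)$ into $\bar H_k$ via a suitably chosen cross edge of $C_k$ and tree edges of $\bar T_k$ or $\bar T_k'$, and navigate inside $\bar H_k$ to $v_{kj}$, using $V(\bar H_k) \cap V(\bar H_j) = \{v_{kj}\}$ to ensure $\bar H_j$ is avoided except at the endpoint.

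\textbf{Main obstacle.} The delicate point is the base case. Even though $\TBFS(r, l(C_k))$ cleanly avoids all hammocks of $\mcT$, the route from $l(C_k)$ into $\bar H_k$ and then to $v_{kj}$ must thread through $\TBFS$-subtrees that can overlap with the descendants of $\bar H_k$ in $T_\mcT$. I expect this to be handled by a case analysis on which hammock tree of $\bar H_k$ contains $v_{kj}$ and on the position of $\bar r_j, \bar r_j'$ relative to $\bar r_k, \bar r_k'$, together with the connected-below property (\Cref{lem:thirdPath} and \Cref{lem:pathsConBel}) applied to an appropriate cross edge of $C_k$ to guarantee a clean route from $r$ into $\bar H_k$.
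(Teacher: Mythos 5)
Your inductive framework is not wrong in spirit, but it has two genuine gaps, and the one you flag yourself as ``the main obstacle'' is precisely where the entire content of the paper's proof lives.

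The paper does not induct. It reduces the general case to the root case in one step: by \Cref{lem:pathStruct}, if you have a path from $r$ into the root hammock $\bar H_k$ that is vertex-disjoint from $\bar H_j$, you may extend it through $G[\mcT]$ to $v_{ij}$ and the extension will meet $\bar H_j$ only at $v_{ij}$. So the lemma is equivalent to finding a path from $r$ to $\bar H_k$ disjoint from $\bar H_j$, which is exactly what your base case would have to do anyway. Your induction therefore does not reduce the difficulty; it only postpones it while introducing a new obligation.

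That new obligation is your claim that ``two hammocks in $\mcT$ share a vertex only when one is the parent of the other,'' needed so that $Q \subseteq \bar H_i$ avoids all strict descendants of $\bar H_j$. This is not stated or proved anywhere in the paper, and it is not immediate from \Cref{lem:artPoint} (which only says the shared vertex is an articulation point) or from \Cref{lem:pathStruct} (which is about paths, and degenerates on a single shared vertex). It may well be true for $\bar\mcH$, but it would need its own argument. The paper sidesteps the need for it entirely.

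The more serious gap is the base case. You correctly observe that $\TBFS(r, l(C_k))$ misses every hammock of $\mcT$ (because $\bar H_k$ maximizes $h(C_k)$). But descending from $l(C_k)$ into $\bar H_k$ is the crux. The two candidate descents are $\TBFS(l(C_k), \bar r_k)$ and $\TBFS(l(C_k), \bar r_k')$, and \emph{a priori} either could pass through $\bar H_j$. The paper's key step is showing at least one of them avoids $\bar H_j$: if both hit $\bar H_j$, then either both enter the same hammock tree $T_j$ --- forcing $l(C_k) \in V(T_j)$ and hence $l(C_k) \prec l(C_j)$, contradicting maximality of $h(C_k)$ --- or they enter different hammock trees $T_j, T_j'$, forcing $l(C_k) = l(C_j)$ and then (via the positions of $\bar r_j, \bar r_j'$) that $C_k$ and $C_j$ are the same lca-equivalence class, contradicting $j \ne k$. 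Your sketch invokes connected-belowness (\Cref{lem:thirdPath}, \Cref{lem:pathsConBel}) and ``a suitably chosen cross edge,'' which is a different mechanism; connected-belowness of an edge gives a path from $r$ avoiding a subtree, but it does not obviously control which hammock the lca path enters, and you give no argument that this route can be made to land cleanly in $\bar H_k$ while avoiding $\bar H_j$. Without this, the proof does not go through.
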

\begin{proof}
    
    Let $\bar{H}_k$ be the root hammock of $\mcT$. Notice that it suffices to prove that there is a path from $r$ to $\bar{H}_k$ which is vertex disjoint from $\bar{H}_j$ since we can continue such a path through $G[\mcT]$ to $v_{ij}$ and such a path will only intersect $\bar{H}_{j}$ at $v_{ij}$ by \Cref{lem:pathStruct}.
    
    Let $P_0 := \TBFS(r, l(C_k))$. We claim that $V(\bar{H}_j) \cap P_0 = \emptyset$; to see this notice that if $\bar{H}_j$ had a vertex in $P_0$ then we would have $l(C_k) \prec l(C_j)$, contradicting our choice of $\bar{H}_k$.
    
    Next, let $\bar{r}_k$ and $\bar{r}_k'$ be the roots of $\bar{H}_k$'s trees and let $P := \TBFS(l(C_k), \bar{r}_k)$ and $P' := \TBFS(l(C_k), \bar{r}_k')$ be the paths from the lca of $\bar{H}_k$ to its roots. We claim that either $P \cap V(\bar{H}_j) = \emptyset$ or $P' \cap V(\bar{H}_j) = \emptyset$; this is sufficient to show our claim since $P_0$ concatenated with the non-intersecting path will give us our required path to $\bar{H}_k$.
    
    Suppose for the sake of contradiction that $P \cap V(\bar{H}_j) \neq \emptyset$ and $P' \cap V(\bar{H}_j) \neq \emptyset$. Let $T_j$ and $T_j'$ be the hammock trees of $\bar{H}_j$ with respective roots $\bar{r}_j$ and $\bar{r}_j'$. We cannot have that both $P$ and $P'$ have a vertex in $T_j$ since then it would follow that $l(C_k) \in V(T_j)$ and so $l(C_k) \prec l(C_j)$ (since $u \prec l(C_j)$ for every $u \in T_j$), again contradicting our choice of $\bar{H}_j$. Thus, it must be the case that (WLOG) $P$ has a vertex in $T_j$ and $P'$ has a vertex in $T_j'$. It follows that $l(C_k) = l(C_j)$. However, we claim that it also follows that $C_k = C_j$. In particular, any edges $e_k = \{u_k, v_k\} \in C_k$ and $e_j = \{u_j, v_j\} \in C_j$ then satisfy (WLOG) $l(u_k, u_j) \preceq \bar{r}_j$ and $l(v_k, v_j) \preceq \bar{r}_j'$. Since $l(C_j) = l(C_k)$, we know that $\bar{r}_j \neq l(C_j)$ and $\bar{r}_j' \neq l(C_j)$ and so, indeed, $e_j$ and $e_k$ are lca-equivalent. This is a contradiction since we have assumed that $j \neq k$ in assuming that $\bar{H}_j$ has a parent $\bar{H}_i$ in $\mcT$.
\end{proof}
    
\begin{lemma} \label{lem:fundHatCycle}   
    Fix $i$. Suppose there is some $x \in \hat{H}_i$ in WLOG $\hat{T}_i$ where $x \neq \hat{r}_i$.  Then $\hat{H}_i$ has a hammock-fundamental cycle $C$ containing $x$ where $x \neq \highV(V(C) \cap V(\hat{T}_i))$.
\end{lemma}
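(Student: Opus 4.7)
The plan is to produce two distinct edges $e_1, e_2 \in C_i$ whose $\hat{T}_i$-side endpoints $v_1, v_2$ lie on opposite sides of $x$ in $\TBFS$, specifically with $v_1 \preceq x$ but $v_2 \notin V(\TBFS(x))$. The hammock-fundamental cycle of $e_1$ and $e_2$ will then automatically pass through $x$ on its $\hat{T}_i$-side tree path, while its highest $\hat{T}_i$-vertex, $\lca(v_1, v_2)$, will be a strict ancestor of $x$, giving exactly what the lemma asks for.

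To produce $v_1$, I would use the fact that, by construction as the union of tree portions of lca-free minimal cross edge paths between edges of $C_i$ (cf.\ the proof of \Cref{lem:hatHHam}), $\hat{T}_i$ is the minimal subtree of $\TBFS$ spanning the set $S_i$ of $\hat{T}_i$-side endpoints of edges in $C_i$. In particular every leaf of $\hat{T}_i$ lies in $S_i$. Since $x \in V(\hat{T}_i)$, either $x$ itself is a leaf of $\hat{T}_i$ (so $x \in S_i$ and I take $v_1 := x$), or $x$ has a strict descendant leaf in $\hat{T}_i$ which then gives $v_1 \in S_i$ with $v_1 \prec x$; in either case I pick any $e_1 \in C_i$ incident to $v_1$.

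To produce $v_2$, I would argue by contradiction: if every element of $S_i$ were in $V(\TBFS(x))$, then the spanning subtree $\hat{T}_i$ would lie entirely in $\TBFS(x)$, forcing $\hat{r}_i \preceq x$; combined with $x \preceq \hat{r}_i$ (which follows from $x \in V(\hat{T}_i)$ and $\hat{r}_i$ being its topmost vertex), this would give $\hat{r}_i = x$, contradicting the hypothesis. Since $v_1 \preceq x$ while $v_2 \notin V(\TBFS(x))$, we have $v_1 \neq v_2$, and I can pick $e_2 \in C_i$ incident to $v_2$ with $e_2 \neq e_1$.

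Finally, letting $v_1', v_2'$ be the $\hat{T}_i'$-side endpoints of $e_1, e_2$, the hammock-fundamental cycle $C := \hat{T}_i(v_1, v_2) \oplus e_1 \oplus \hat{T}_i'(v_2', v_1') \oplus e_2$ satisfies $V(C) \cap V(\hat{T}_i) = V(\hat{T}_i(v_1, v_2))$ by the vertex-disjointness of the two hammock trees. The path $\hat{T}_i(v_1, v_2)$ ascends from $v_1$ through $x$ up to $\lca(v_1, v_2)$, and the latter is a strict ancestor of $x$ because $v_2 \notin V(\TBFS(x))$; hence $x \in V(C) \cap V(\hat{T}_i)$ but $x \neq \highV(V(C) \cap V(\hat{T}_i))$ as required. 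The only mildly delicate part is the leaf argument in the first step, which rests on the ``minimal spanning subtree'' description of $\hat{T}_i$ that is implicit in \Cref{lem:hatHHam}; everything else is routine ancestor/descendant bookkeeping in $\TBFS$.
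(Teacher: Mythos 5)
Your proof is correct and takes essentially the same approach as the paper: both rely on the observation that every leaf of $\hat{T}_i$ is the $\hat{T}_i$-side endpoint of some edge of $C_i$, then produce two cross edges of $C_i$ whose $\hat{T}_i$-side endpoints straddle $x$ so that the hammock-fundamental cycle's tree path through $\hat{T}_i$ passes through $x$ but tops out strictly above it. Your bookkeeping for the second endpoint (argue by contradiction that not all of $S_i$ lies in $\TBFS(x)$) is a mildly cleaner substitute for the paper's appeal to ``$\hat{r}_i$ has at least two children,'' but the underlying argument is the same.
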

\begin{proof}
    By construction of $\hat{H}_i$ we know that every leaf of $\hat{T}_i$ is incident to an edge of $E_c(\hat{T}_i, \hat{T}_i')$ and that $\hat{r}_i$ has at least two children. Thus, there is a path which contains $x$ from $\hat{r}_i$ to some vertex $u$ where $\{u, u'\} \in E_c(\hat{T}_i, \hat{T}_i')$ and another edge-disjoint path in $\hat{T}_i$ from $\hat{r}_i$ to some $v$ where $\{v, v'\} \in E_c(\hat{T}_i, \hat{T}_i')$. Connecting these paths in $\hat{T}_i'$ gives the stated fundamental cycle.
\end{proof}

%
%
Concluding, we have that $\bar{\mcH}$ is indeed a rooted forest of hammocks containing all shortest cross edge paths.
\begin{lemma}\label{lem:rootedHDForest}
    $\bar{\mcH}$ is a rooted forest of hammocks where if $\bar{H}_i$ is a parent of $\bar{H}_j$ then $\bar{H}_i \cap \bar{H}_j = \{\bar{r}_j\}$. Furthermore, $G[\bar{\mcH}]$ contains all shortest cross edge paths.
\end{lemma}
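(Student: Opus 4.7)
The statement has three pieces: (a) $\bar{\mcH}$ is a rooted forest of hammocks, (b) parent-child pairs in $\bar{\mcH}$ share exactly one hammock-root vertex $\bar r_j$, and (c) $G[\bar{\mcH}]$ contains every shortest cross edge path. Part (a) is essentially immediate from preceding lemmas: by \Cref{lem:atLeastOneVal} a valid assignment exists, by \Cref{lem:someParent} we can pick the assignment $\bar\pi$ that always sends a component $T$ of $\mcPHJ\setminus\hat H$ to its always-ancestor hammock, and \Cref{lem:allAssignGiveForest} guarantees that the resulting $\bar{\mcH}$ is a forest of hammocks. Designating the hammock of largest $h(C_k)$ per tree of hammocks (as already stipulated) makes this rooted.

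For (b), \Cref{lem:artPoint} says $\bar H_i\cap\bar H_j$ is a single articulation vertex $v$, and \Cref{lem:parPath} supplies a path $P$ from $r$ to $v$ with $P\cap\bar H_j=\{v\}$. Suppose for contradiction $v\notin\{\bar r_j,\bar r_j'\}$; WLOG $v\in\bar T_j$ with $\bar r_j$ strictly above $v$ in $\TBFS$. The plan is to exhibit a fundamental cycle $C$ of $\bar H_j$ passing through $v$ with $v\neq w:=\highV(V(C)\cap V(\bar T_j))$; letting $w':=\highV(V(C)\cap V(\bar T_j'))$, the cycle $C$ together with the three paths $\TBFS(r,w)$, $\TBFS(r,w')$, and $P$ will form a clawed cycle---the two $\TBFS$-paths meet $C$ only at $w,w'$ by maximality, and $P$ meets $C$ only at $v$ since $C\subseteq\bar H_j$---contradicting series-parallelness. \Cref{lem:fundHatCycle} supplies such a $C$ directly whenever $v\in\hat T_j\setminus\{\hat r_j\}$. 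The remaining subcases---$v=\hat r_j$ with $\bar r_j\neq\hat r_j$, or $v\in\bar T_j\setminus\hat T_j$---require an additional argument combining the connected-below property of $\bar H$ (\Cref{lem:thirdPath}, \Cref{lem:pathsConBel}) with the vertex-disjointness of distinct components of $\mcPHJ\setminus\hat H$ to either produce the needed cycle inside $\bar H_j$ or to rule out the subcase outright.

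For (c), let $Q$ be a shortest cross edge path, and decompose $Q$ at its cross edges into consecutive segments $e_a\oplus Q_a\oplus e_{a+1}$ with $e_a,e_{a+1}\in E_c$ and $Q_a\subseteq\TBFS$. Each cross edge of $Q$ lies in some $C_i\subseteq\hat H_i\subseteq G[\bar{\mcH}]$. For a middle segment with $e_a\in C_i$ and $e_{a+1}\in C_j$: when $i=j$, the shortest-path property forbids $Q_a$ from visiting $l(C_i)$ (a strictly shorter detour through $\hat T_i$ or $\hat T_i'$ would otherwise exist), so $e_a\oplus Q_a\oplus e_{a+1}$ is an lca-free minimal cross edge path, hence contained in $\hat H_i$ by construction. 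When $i\neq j$, the same argument excludes both $l(C_i)$ and $l(C_j)$ from $Q_a$ and also forbids any internal vertex of $Q_a$ from entering $\hat H_i\cup\hat H_j$ (any such incursion permits a shortcut), so $Q_a$ meets \Cref{dfn:hamJoin} and is a hammock-joining path, hence a subgraph of $\mcPHJ\subseteq G[\bar{\mcH}]$. Concatenating these pieces gives $Q\subseteq G[\bar{\mcH}]$.

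The main obstacle is the case analysis in (b): handling shared vertices $v$ lying on a hammock-joining extension above $\hat r_j$ or on a dangling subtree below an internal vertex of $\hat T_j$, where \Cref{lem:fundHatCycle} does not directly apply. I expect this to require a careful combinatorial argument that leverages the vertex-disjointness of components of $\mcPHJ\setminus\hat H$ together with the fact that our specific $\bar\pi$ assigns each component to its always-ancestor hammock, so that only one of $\bar H_i,\bar H_j$ can contain the relevant component and the shared vertex is forced onto a top of $\bar T_j$.
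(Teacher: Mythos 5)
Your overall strategy matches the paper's: part (a) is assembled from the same three preceding lemmas, and part (c) follows the same decomposition-into-minimal-segments argument (in fact you make the lca-freeness of shortest cross edge paths and the minimality of the segments more explicit than the paper does, which is a minor improvement). The crux is part (b), and there you correctly identify the clawed-cycle strategy and correctly isolate the two problematic subcases, but you explicitly leave them unresolved. That is a genuine gap.

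The paper closes those subcases with two short arguments that both lean on the always-ancestor choice of $\bar\pi$. First, $v_{ij}\neq\hat r_j$: if $v_{ij}=\hat r_j$, then since $v_{ij}\in\bar H_i$ as well, any component $T$ of $\mcPHJ\setminus\hat H$ incident to $v_{ij}$ has $i\in I(T)$; by \Cref{lem:someParent} such a $T$ cannot be assigned to $j$ (it would go to an ancestor), so nothing above $\hat r_j$ is added to $\bar T_j$, forcing $\hat r_j=\bar r_j$ and contradicting $v_{ij}\neq\bar r_j$. Second, $v_{ij}\in\hat H_j$: if instead $v_{ij}\in V(T)$ for some component $T$ with $\pi(T)=j$, then since the edge of $\bar H_i$ at $v_{ij}$ cannot lie in $T$ (hammocks are edge-disjoint and $T\subseteq\bar H_j$) and cannot lie in a different component (components are vertex-disjoint), we get $v_{ij}\in V(\hat H_i)$ and hence $i\in I(T)$; the always-ancestor property then forces $\bar H_i\preceq_{\bar{\mcH}}\bar H_j$, contradicting that $\bar H_i$ is the parent of $\bar H_j$. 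With $v_{ij}\in\hat T_j\setminus\{\hat r_j\}$ established, \Cref{lem:fundHatCycle} yields the cycle $C\subseteq\hat H_j\subseteq\bar H_j$ with highest-in-tree vertices distinct from $v_{ij}$, and \Cref{lem:parPath} together with $\TBFS(r,\cdot)$ paths gives the clawed cycle exactly as you anticipated. Your guess that ``only one of $\bar H_i,\bar H_j$ can contain the relevant component'' is precisely the right intuition, but the proposal as written does not supply the argument, so part (b) is not complete.
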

\begin{proof}
    Since by \Cref{lem:allAssignGiveForest} $\bar{\mcH}$ is a forest of hammocks, it remains to show that $v_{ij} = \bar{r}_j$ for any $i,j$ pair where $\bar{H}_i$ is a parent of $\bar{H}_j$ in $\bar{\mcH}$ and, as before, $v_{ij} \in V(\bar{H}_i) \cap V(\bar{H}_j)$ is the one vertex in both $\bar{H}_i$ and $\bar{H}_j$. Let $T_j$ and $T_j'$ be the two hammock trees of $\bar{H}_j$ with roots $\bar{r}_j$ and $\bar{r}_j'$ and let $\hat{T}_j \subseteq T_j$ and $\hat{T}_j' \subseteq T_j'$ be the corresponding hammock trees of $\hat{H}_j$ with roots (i.e.\ highest vertices in $\TBFS$) $\hat{r}_j$ and $\hat{r}_j'$. We assume WLOG that $v_{ij} \in T_j$ and so it suffices to show that $v_{ij} = \bar{r}_j$.
    
    Assume for the sake of contradiction that $v_{ij} \neq \bar{r}_j$. We claim that it follows that $v_{ij} \neq \hat{r}_j$: if $v_{ij} = \hat{\bar{r}_j}$ then since $v_{ij} \in \bar{H}_i$, by how we construct $\bar{\mcH}$ we know that no component of $\mcPHJ \setminus \hat{H}$ incident to $v_{ij}$ would be assigned to $j$ (because it could have been assigned to $i$) and so if $v_{ij}$ were $\hat{r}_j$ (which is to say it is the highest vertex in $\hat{T}_j$), then it would also be the highest vertex in $T_j$ as no component of $\mcPHJ \setminus \hat{H}$ assigned to $\bar{H}_j$ could have a vertex higher than $v_{ij}$, contradicting our assumption that $v_{ij} \neq \bar{r}_j$. On the other hand it must be the case that $v_{ij} \in \hat{H}_j$: no component of $\mcPHJ \setminus \hat{H}$ which $v_{ij}$ is incident to will be assigned to $\bar{H}_i$ and so it cannot be the case that $v_{ij} \in \bar{H}_j \setminus \hat{H}_j$.
    
    Applying \Cref{lem:fundHatCycle}, it follows that $\bar{H}_j$ contains a fundamental cycle $C$ with highest vertices $v_j$ and $v_j'$ in $T_j$ and $T_j'$ respectively where $v_j, v_j' \neq v_{ij}$. Such a cycle gives a contradiction since it follows that $C$ along with $\TBFS(r, v_j)$, $\TBFS(r, v_j')$ and the path from $r$ to $v_{ij}$ as guaranteed by \Cref{lem:parPath} gives a clawed cycle.
    
    It remains to verify that $G[\bar{\mcH}]$ indeed contains every shortest cross edge path. It suffices to show that $G[\bar{\mcH}]$ contains all lca-free cross edge paths (recall that an lca-free cross edge path is one between two cross edges which contains neither of the cross edges' lcas) since every shortest cross edge path is lca-free. Let $P$ be an arbitrary lca-free cross edge path. 
    \begin{enumerate}
        \item If $P$ is between two edges $e,e'$ where $e, e' \in C_i$ for some $i$ then we know that $E(P) \subseteq \hat{H}_i$ by definition of $\hat{H}_i$. 
        \item On the other hand, suppose $P$ is between $e \in C_i$ and $e' \in C_j$ for $i \neq j$. Let $P'$ be the edges of $P$ which are not in $\hat{H}_i$ or $\hat{H}_j$. $P'$ must be a connected subpath by definition of $\hat{H}_i$ and $\hat{H}_j$ and so $P'$ is a hammock-joining path between $\hat{H}_i$ and $\hat{H}_j$; thus every edge of $P$ will be included in $H$.
    \end{enumerate}
    
    It follows that $G[\bar{\mcH}]$ contains all shortest cross edge paths. \qedhere
    
\end{proof}

\subsection{Extending $\bar{\mcH}$ to $\tilde{\mcH}$ by Lca Paths}

The second to last step in the construction of our hammock decomposition is to extend $\bar{H}_i$ along the path between $\bar{r}_i'$ and $\lca(\bar{r}_i, \bar{r}_i')$ to the child of $\lca(\bar{r}_i, \bar{r}_i')$ in $\TBFS$ which contains $\bar{r}_i'$ in its $\TBFS$ subtree. This will allow us to argue that our final hammock decomposition is lca-respecting.

\subsubsection{Hammock Ancestry $\leftrightarrow$ Lca Ancestry in $\bar{\mcH}$}
The key to extending along lca paths will be to show that the lca structure of our lca equivalence classes reflects the ancestry structure of the hammocks in our forest of hammocks. As with the proofs of the previous section, we will leverage the connected belowness of certain edges, as summarized in the following lemma. 


\begin{lemma}\label{lem:conBelow}
    Suppose $V(\bar{H}_i) \cap V(\bar{H}_j) = \{v_{ij}\}$ where $C_j \prec C_i$ and let $e = \{l(C_j),v\}$ be the child edge of $l(C_j)$ in $\TBFS$ satisfying $v_{ij} \in \TBFS(v)$. Then $e$ is connected below.
\end{lemma}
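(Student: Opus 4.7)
The plan is to split on the position of $\bar{r}_i$ relative to $v$ in $\TBFS$. Let $\bar{T}_i$ be the hammock tree of $\bar{H}_i$ containing $v_{ij}$ (WLOG). Then $\bar{r}_i$ is an ancestor of $v_{ij}$ in $\TBFS$, and so is $v$ (since $v_{ij} \in \TBFS(v)$), so $\bar{r}_i$ and $v$ are comparable in $\TBFS$, which I would use to split into two cases.

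\textbf{Case I: $\bar{r}_i \preceq v$.} Here $\bar{T}_i \subseteq \TBFS(\bar{r}_i) \subseteq \TBFS(v)$. I would pick any cross edge $\{u',v'\} \in C_i = E_c(\bar{T}_i, \bar{T}_i')$ with $u' \in \bar{T}_i$ and $v' \in \bar{T}_i'$---one exists because $\bar{H}_i$ is a hammock---and produce the path $P := \TBFS(r, v') \oplus \{v', u'\}$. Verifying connected-below then amounts to checking $P \cap \TBFS(l(C_j)) \subseteq \TBFS(v)$. The key observation is that because $\bar{r}_i$ and $\bar{r}_i'$ are unrelated strict descendants of $l(C_i)$, and $C_j \prec C_i$, the subtree $\TBFS(l(C_j))$ sits entirely inside the $l(C_i)$-child-subtree containing $\bar{r}_i$, hence is disjoint from $\TBFS(\bar{r}_i')$ and from the ancestor chain $\TBFS(r, l(C_i))$. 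Thus $P$ only enters $\TBFS(l(C_j))$ at its final vertex $u' \in \bar{T}_i \subseteq \TBFS(v)$.

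\textbf{Case II: $v \prec \bar{r}_i$.} Since $v$ is a child of $l(C_j)$, this forces $\bar{r}_i \succeq l(C_j)$. The subtree $\bar{T}_i \subseteq \TBFS$, containing both $\bar{r}_i$ and $v_{ij}$, therefore contains the entire $\TBFS$-path from $\bar{r}_i$ down to $v_{ij}$, which passes through $l(C_j)$ and $v$. Consequently $e = \{l(C_j), v\}$ is an edge of $\bar{T}_i \subseteq \bar{H}_i \subseteq \bar{H}$. Every $\TBFS$-edge of $\bar{H}$ lies either in some $\hat{H}_k$ or in $\mcPHJ$, and both are connected below by \Cref{lem:thirdPath} and \Cref{lem:pathsConBel} respectively, so $e$ is connected below.

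\textbf{Main obstacle.} Case II is essentially immediate: once one notes that $\bar{T}_i$ swallows the $\TBFS$-path from $\bar{r}_i$ down to $v_{ij}$, the edge $e$ itself sits in $\bar{H}$ and the existing connected-below lemmas finish the job. The real work is Case I, where I must produce an explicit path entering $\TBFS(l(C_j))$ only through $\TBFS(v)$. The hammock structure of $\bar{H}_i$ supplies a $C_i$-cross edge whose far endpoint $v'$ lives in a sibling subtree of $l(C_i)$, and the hypothesis $C_j \prec C_i$ makes that sibling subtree entirely disjoint from $\TBFS(l(C_j))$---this is precisely what lets $P$ sidestep $l(C_j)$ and land in $\TBFS(v)$.
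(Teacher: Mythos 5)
Your proof is correct and takes essentially the same route as the paper. The paper splits on $e\in\bar H$ versus $e\notin\bar H$ (deducing $l(C_j)\notin \bar H_i$ in the latter case), while you split on $\bar r_i\preceq v$ versus $v\prec\bar r_i$; these are equivalent bifurcations, since your Case~II forces $e\in\bar T_i\subseteq\bar H$ and your Case~I forces $l(C_j)\notin\bar H_i$. In the hard case the paper routes its path as $\TBFS(r,\bar r_i)$ followed by a path inside $\bar H_i$ using one cross edge, whereas you go directly down $\TBFS(r,v')$ to the far tree and then take a single cross edge $\{v',u'\}$ back into $\bar T_i\subseteq\TBFS(v)$ --- a marginally cleaner path --- but the key idea (sidestep $\TBFS(l(C_j))$ by entering through the sibling $l(C_i)$-subtree containing $\bar r_i'$) is identical.
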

\begin{proof}
    Our aim is to construct a path from $r$ to $\TBFS(v)$ whose only vertex in $\TBFS(l(C_j))$ is its endpoint in $\TBFS(v)$. We assume WLOG that $v_{ij} \in \bar{T}_i'$.
    
    If $e \in H$ then we know that $e$ is connected below since by \Cref{lem:thirdPath} and \Cref{lem:pathsConBel} $H$ is connected below. Thus, we may assume that $e \not \in H$. It follows that $l(C_j) \not \in \bar{H}_i$: if $l(C_j)$ were in $\bar{H}_i$ then it would have to be in $\bar{T}_i'$ since $v_{ij} \in \bar{T}_i'$ and no vertex in $\bar{T}_i$ is related to a vertex in $\bar{T}_i'$ but $v_{ij}$ and $l(C_j)$ are related; however if $l(C_j)$ were in $\bar{T}_i'$ then since $v_{ij}$ is also in $\bar{T}_i'$ and $\bar{T}_i'$ is a connected subtree of $\TBFS$, it would follow that $e \in \bar{H}_i$ and therefore $e \in H$ and connected below.
    
    Given that $l(C_j) \not \in \bar{H}_i$, we can construct our path $P$ from $r$ to $\TBFS(v)$ as follows. First, take the path $P_1 := \TBFS(r, \bar{r}_i)$ from $r$ to $\bar{r}_i$. Continue this through an arbitrary path $P_2 \subseteq \bar{H}_i$ to any vertex in $\TBFS(v)$, using a single edge of $E_c$; $P_2$ is well-defined since $v_{ij} \in \TBFS(v) \cap \bar{T}_i'$. Let $P = P_1 \oplus P_2$ be the resulting path. Since $v_{ij} \in \TBFS(v)$ this path is indeed from $r$ to $\TBFS(v)$. 
    
    It remains to verify that $P \cap \TBFS(l(C_j)) \subseteq \TBFS(v)$ and in particular we show that $P \cap \TBFS(l(C_j)) = \{v_{ij}\}$. It cannot be the case that $P_1$ intersects $\TBFS(l(C_j))$ since it would then follow that $l(C_j)$ is an ancestor of $\bar{r}_j$ and since $l(C_j)$ is an ancestor of $v_{ij}$ and also therefore an ancestor of $\bar{r}_i'$ so $l(C_i) \preceq l(C_j)$, contradicting our assumption that $C_j \prec C_i$. Similarly, we know that $\bar{T}_i$ does not contain any vertices of $\TBFS(l(C_j))$ since it would then follow that $l(C_i) \preceq l(C_j)$. Thus, edges of $P_2$ in $\bar{T}_i$ have no vertices from $\TBFS(l(C_j))$; similarly, edges of $P_2$ in $\bar{T}_i'$ cannot contain vertices of $\TBFS(l(C_j)) \setminus \TBFS(v)$ since $\bar{T}_i'$ is a connected subtree of $\TBFS(v)$ by our assumption that $l(C_j) \not \in \bar{H}_i$. Concluding, it follows that $P \cap \TBFS(l(C_j)) \subseteq \TBFS(v)$ as desired.
\end{proof}

In arguing that the lca structure reflects the hammock ancestry structure, we will distinguish between hammocks based on which of their parent's trees they connect to.
\begin{definition}[Left, right child]
    Let $\bar{H}_j$ be a child of $\bar{H}_i$ in $\bar{\mcH}$. Then $\bar{H}_j$ is a left child of $\bar{H}_i$ if $\bar{r}_j \in \bar{T}_i$ and a right child of $\bar{H}_i$ if $\bar{r}_j \in \bar{T}_i'$.
\end{definition}

The following will allow us to argue that the paths we construct for left children in our forest of hammocks interact with their parents in an appropriate way.
\begin{lemma}\label{lem:leftChild}
    If $\bar{H}_j$ is a left child of $\bar{H}_i$ and $\bar{H}_i$ is not a root hammock then $l(C_j) \in \TBFS(\bar{r}_i, \bar{r}_j) \subseteq \bar{T}_i$.
\end{lemma}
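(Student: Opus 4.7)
The plan is to argue by contradiction: suppose $l(C_j) \notin \TBFS(\bar{r}_i, \bar{r}_j)$. Since $\bar{r}_j \in \bar{T}_i$ (rooted at $\bar{r}_i$) we have $\bar{r}_j \preceq \bar{r}_i$, and since $\bar{r}_j, \bar{r}_j'$ are unrelated in $\TBFS$, $l(C_j)$ is a strict ancestor of $\bar{r}_j$. The only way the conclusion can fail is thus $l(C_j) \succ \bar{r}_i$ strictly, which I assume for contradiction.

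I would then pin down the relevant structural facts and set up for an application of \Cref{lem:conBelow}. Let $v_1$ (resp.\ $v_2$) be the distinct children of $l(C_j)$ in $\TBFS$ whose subtrees contain $\bar{r}_j$ (resp.\ $\bar{r}_j'$); by the failure assumption, $\bar{r}_i$, and hence all of $\bar{T}_i$, lies in $\TBFS(v_1)$. By \Cref{lem:rootedHDForest} we have $V(\bar{H}_i) \cap V(\bar{H}_j) = \{\bar{r}_j\}$, and since $\bar{r}_j \in \bar{H}_i$ gives $\bar{r}_j \prec l(C_i)$, both $l(C_i)$ and $l(C_j)$ are strict ancestors of $\bar{r}_j$ and thus related in $\TBFS$. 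The principal case is $C_j \prec C_i$, in which \Cref{lem:conBelow} directly applies and yields that the tree edge $\{l(C_j), v_1\}$ is connected below. (The opposite direction $C_i \prec C_j$ and the boundary case $l(C_i) = l(C_j)$ can each be ruled out by symmetric applications of \Cref{lem:conBelow} combined with \Cref{lem:noFundCycle}, or by the observation that the hammocks would then fail to intersect in exactly the single vertex $\bar{r}_j$ as required.)

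To finish, I would construct an explicit clawed cycle contradicting series-parallelness. Pick any cross edge $e_j = \{u_j, u_j'\} \in C_j$ with $u_j \in \bar{T}_j$ and $u_j' \in \bar{T}_j'$, and let $F = e_j \oplus \TBFS(u_j, u_j')$ be the corresponding fundamental cycle; $F$ passes through $l(C_j)$ as well as through $\bar{r}_j$ and $\bar{r}_j'$. The three paths from the claw root $r$ to distinct vertices of $F$ are: (i) $\TBFS(r, l(C_j))$, ending at $l(C_j) \in F$; (ii) the path from $r$ to $\bar{r}_i$ provided by \Cref{lem:parPath} (which exists precisely because $\bar{H}_i$ is not a root hammock) concatenated with $\TBFS(\bar{r}_i, \bar{r}_j) \subseteq \bar{T}_i$, ending at $\bar{r}_j \in F$; and (iii) the connected-below witness for $\{l(C_j), v_1\}$, continued through $\bar{T}_j$ and across $e_j$ to $u_j' \in F$. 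The strict inequality $l(C_j) \succ \bar{r}_i$ is exactly what lets path (ii) reach $\bar{r}_j$ without using $l(C_j)$, and the connected-below property is exactly what lets path (iii) reach the opposite side of $F$ without using $l(C_j)$, so that after contracting along the three paths into the corresponding cycle endpoints we get $K_4$ as a minor.

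The main obstacle I expect is verifying that paths (ii) and (iii) can be arranged so as to yield a genuine $K_4$ minor, since they both live in $\TBFS(v_1)$ and so could a priori overlap within $\bar{T}_i$ or $\bar{T}_j$. Controlling this will require choosing the cross edge $e_j$ carefully and exploiting $V(\bar{H}_i) \cap V(\bar{H}_j) = \{\bar{r}_j\}$ to separate the two routes inside $\TBFS(v_1)$; a standard shortcutting argument analogous to the one used in the proof of \Cref{lem:allAssignGiveForest} should suffice to produce the required clawed-cycle subgraph.
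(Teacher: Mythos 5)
Your overall strategy (contradiction, then clawed cycle via connected-belowness) is reasonable, but your specific construction breaks in a way you do not flag, and it differs from the paper's in an important respect: the paper builds its cycle inside the \emph{parent's} initial hammock $\hat{H}_i$ (via \Cref{lem:fundHatCycle}), whereas you build the cycle $F$ as the tree-fundamental cycle of a single cross edge $e_j$ in the \emph{child} hammock $\bar{H}_j$. That choice is fatal given your contradiction hypothesis: since $\bar r_j \preceq \bar r_i \prec l(C_j)$, the vertex $\bar r_i$ and indeed the entire stretch $\TBFS(\bar r_j, l(C_j))$ lie on $F$. Your path (ii) ends with $\TBFS(\bar r_i, \bar r_j)$, so its last several internal vertices (including $\bar r_i$) are on $F$; your path (iii) literally routes "through $\bar T_j$ and across $e_j$," where $u_j$ and $e_j$ are themselves part of $F$. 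Both violate the requirement that a claw path be internally disjoint from the cycle, so the object you build is not a clawed cycle and yields no $K_4$ minor. The obstacle you do anticipate (paths (ii) and (iii) possibly overlapping \emph{each other}) is actually not fatal — claw paths may share internal vertices as long as they are disjoint from the cycle — so your self-diagnosis points at the wrong issue.

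Two smaller gaps compound this. First, your "principal case $C_j \prec C_i$" is exactly the hypothesis of \Cref{lem:conBelow}, but at this point in the development $C_j \preceq C_i$ has not been established (that fact is \Cref{lem:parChildEquiv}, whose proof depends on \Cref{lem:leftChild}), and your one-sentence disposal of the $C_i \prec C_j$ and $l(C_i)=l(C_j)$ cases is not an argument. Second, the connected-below witness from \Cref{lem:conBelow} is for the edge $\{l(C_j), v_1\}$ and lands in $\TBFS(v_1)$ — the same side of $l(C_j)$ as $\bar r_j$, $\bar r_i$, and $u_j$ — so it does not "reach the opposite side of $F$" as you assert; you would need a witness into $\TBFS(v_2)$ (or a second cross edge of $C_j$ disjoint from $e_j$, which need not exist). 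The paper sidesteps all of this: it picks a hammock-fundamental cycle $C\subseteq\hat H_i$, takes the two tree descents $\TBFS(r,u)$ and $\TBFS(r,u')$ to the two high vertices of $C$, and builds the third leg as $P_1\oplus P_2$ where $P_1\subseteq\bar T_i$ runs from $\bar r_j$ into $C$ and $P_2$ from $r$ to $\bar r_j$ is routed through $\bar H_j$ via $\bar r_j'$ and $\TBFS(l(C_j),\bar r_j')$, and it verifies disjointness from $\bar H_i$ directly from the intersection structure rather than via \Cref{lem:conBelow} — no case split on the order of $l(C_i)$ and $l(C_j)$ is needed.
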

\begin{proof}
    First, notice that since $\bar{r}_j \in \bar{T}_i$ we know that both $l(C_j)$ and $l(C_i)$ are ancestors of $\bar{r}_j$ and so $l(C_j)$ and $l(C_i)$ are related. Thus, if $l(C_j) \not \in \TBFS(\bar{r}_i, \bar{r}_j)$ then it must be the case that $\bar{r}_i \prec l(C_j)$; suppose for the sake of contradiction that indeed $\bar{r}_i \prec l(C_j)$.
    
    We first claim that there is a path $P_1$ with edges contained in $\bar{T}_i$ from $\bar{r}_j$ to a vertex $v$ which is in a cycle $C$ in $\hat{H}_i$. Further, if $u = \highV( V(\bar{T}_i) \cap V(C))$ and $u' = \highV( V(\bar{T}_i') \cap V(C))$ then we will have $v \neq u, u'$. Let us argue why such a $P_1$ exists. It suffices to argue that there is some $v \in \hat{T}_i$ where $v \neq \hat{r}_i$ and a path between $\bar{r}_j$ and $\hat{T}_i$ ending in $v$ since by \Cref{lem:fundHatCycle} the existence of such an $v$ would give us our required cycle. By noting that $\hat{r}_i = \bar{r}_i$ we can further simplify what we must prove. In particular, let $\bar{H}_{i'}$ be the parent of $\bar{H}_i$ in $\bar{\mcH}$ which we know exists by our assumption that $\bar{H}_i$ is not a root hammock. Additionally, notice that we always know that $\bar{r}_i = \hat{r}_i$ since if $\bar{r}_i$ were strictly higher than $\hat{r}_i$ it would be because it lies in a connected component of $\mcPHJ \setminus \hat{H}$ incident to both $\bar{r}_i$ and $\hat{r}_i$. Such a component is incident to $\bar{H}_{i'}$ by definition of a forest of hammocks. Even stronger, such a component must be incident to $\hat{H}_{i'}$ since every component of $\mcPHJ \setminus \hat{H}$ incident to $\bar{H}_{i'}$ is also incident to $\hat{H}_{i'}$ which means that any such component would always be assigned to $i'$.
    
    Thus, it suffices to argue that that there is some $v \in \hat{T}_i$ where $v \neq \bar{r}_i$ and a path between $\bar{r}_j$ and $\hat{T}_i$ ending in $v$. Note that $\bar{r}_j$ was added to $\bar{T}_i$ either because a component of $\mcPHJ \setminus \hat{H}$ was assigned to $i$ or because it was already $\hat{T}_i$. In particular, either $\bar{r}_j \in \hat{T}_i$ or $\bar{r}_j \in \bar{T}_i \setminus \hat{T}_i$.
    
    \begin{enumerate}
        \item Suppose $\bar{r}_j \in \hat{T}_i$. In this case we can let our path from $\bar{r}_j$ to $\hat{T}_i$ be the trivial path consisting only of $\bar{r}_j$; it remains only to show that $\bar{r}_j \neq \bar{r}_i$. However, it cannot be the case that $\bar{r}_j = \bar{r}_i$ since otherwise we would have that $H_{j}$ is a child of $\bar{H}_{i'}$ not $H_{i}$ in $\bar{\mcH}$, a contradiction.
        \item Suppose $\bar{r}_j \in \bar{T}_i \setminus \hat{T}_i$. It follows that $\bar{r}_j$ is incident to a component $F \subseteq \TBFS$ of $\mcPHJ \setminus \hat{H}$ which is assigned to $i$ with exactly one vertex $v$ in $V(\hat{T}_i)$. In this case we can let our path from $\bar{r}_j$ to $\hat{T}_i$ be the path in $F$ from $\bar{r}_j$ to $v$; it remains only to argue that $v \neq \bar{r}_i$. However, if $v$ were equal to $\bar{r}_i$ then we would have that $F$ shares a vertex with $\bar{H}_{i'}$ and, even stronger, $F$ shares a vertex with $\hat{H}_{i'}$. It follows that $F$ would be assigned to $\bar{H}_{i'}$, not $\bar{H}_i$, a contradiction.
    \end{enumerate}
    
    Next, we claim that there is a path $P_2$ from $r$ to $\bar{r}_j$ such that $P_2 \cap V(\bar{H}_i) = \{\bar{r}_j\}$. Let $P_2'$ an arbitrary path contained in $\bar{H}_j$ from $\bar{r}_j'$ to $\bar{r}_j$. Since $\bar{H}_i$ and $\bar{H}_j$'s vertex sets only intersect on $\bar{r}_j$, it follows that $P_2' \cap V(\bar{H}_i) = \{\bar{r}_j\}$. Next we form $P_2$ by concatenating $P_2'$ with $\TBFS(l(C_j), \bar{r}_j')$. We claim that $\TBFS(l(C_j), \bar{r}_j') \cap V(\bar{H}_i) = \emptyset$. Suppose for the sake of contradiction that there is some $x \in \TBFS(l(C_j), \bar{r}_j') \cap V(\bar{H}_i)$. If $x \in \bar{T}_i$ then $\bar{r}_i$ is an ancestor of both $\bar{r}_j$ and $\bar{r}_j'$ and so $l(C_j) \prec l(C_i)$, contradicting our assumption that $\bar{r}_i \prec l(C_j)$. On the other hand, if $x \in \bar{T}_i'$ then we have that $\bar{r}_i'$ is an ancestor of $\bar{r}_j'$; since $\bar{r}_j \in \bar{T}_i$, it follows that $C_i = C_j$, contradicting our assumption that $C_i$ and $C_j$ are distinct lca-equivalence classes. Thus, $P_2$ is indeed from $r$ to $\bar{r}_j$ and satisfies $P_2 \cap V(\bar{H}_i) = \{\bar{r}_j\}$.
    
    We conclude by observing that the above gives us a clawed cycle with $C$ as the cycle with paths $\TBFS(r, u)$, $\TBFS(r, u')$ and $P_1 \oplus P_2$, a contradiction.
\end{proof}

We now demonstrate that the tree structure of $\bar{\mcH}$ reflects the structure of our lca-equivalence classes. As the previous lemma handled the left child case, most of this proof will be focused on the right child case.
\begin{lemma}\label{lem:parChildEquiv}
    $\bar{H}_j \preceq_{\bar{\mcH}} \bar{H}_i$ implies $C_j \preceq C_i$ for every $i,j$.
\end{lemma}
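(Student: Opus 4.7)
My approach is to reduce to the parent-child case by transitivity of the partial orders $\preceq_{\bar{\mcH}}$ and $\preceq$. Assuming $\bar{H}_i$ is the immediate parent of $\bar{H}_j$ in $\bar{\mcH}$, \Cref{lem:rootedHDForest} gives $V(\bar{H}_i)\cap V(\bar{H}_j)=\{\bar{r}_j\}$, so $\bar{r}_j\in\bar{T}_i\cup\bar{T}_i'$ is a strict descendant of $l(C_i)$; since $\bar{r}_j$ and $\bar{r}_j'$ are unrelated hammock roots of $\bar{H}_j$, $\bar{r}_j$ is also a strict descendant of their common ancestor $l(C_j)$. Therefore $l(C_i)$ and $l(C_j)$ both lie on the root-to-$\bar{r}_j$ path in $\TBFS$, are comparable, and it suffices to rule out $l(C_i)\prec l(C_j)$, which I would assume for contradiction.

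I would then handle the easy cases. If $\bar{H}_i$ is the root hammock of its tree of hammocks, our convention that root hammocks maximize $h(C_k)$ over the tree forces $h(C_i)\ge h(C_j)$, which with comparability gives $l(C_j)\preceq l(C_i)$ and contradicts $l(C_i)\prec l(C_j)$. If $\bar{H}_i$ is non-root and $\bar{H}_j$ is a left child (i.e., $\bar{r}_j\in\bar{T}_i$), then \Cref{lem:leftChild} applies directly, placing $l(C_j)\in\TBFS(\bar{r}_i,\bar{r}_j)\subseteq\bar{T}_i$, so $l(C_j)\preceq\bar{r}_i\prec l(C_i)$, again a contradiction.

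The main obstacle is the remaining subcase: $\bar{H}_i$ non-root and $\bar{H}_j$ a right child, $\bar{r}_j\in\bar{T}_i'$. \Cref{lem:leftChild} does not apply directly because its identity $\bar{r}_i=\hat{r}_i$ (which follows from $\bar{r}_i$ being shared with $\bar{H}_i$'s parent in $\bar{\mcH}$) has no counterpart for $\bar{r}_i'$. My plan is to mimic the clawed-cycle construction from the proof of \Cref{lem:leftChild}, but on the $\hat{T}_i'$ side of the hammock. From $l(C_i)\prec l(C_j)$, $l(C_j)$ is a strict ancestor of $\bar{r}_i'$; letting $c,c'$ denote the children of $l(C_j)$ containing $\bar{r}_j$ and $\bar{r}_j'$ respectively, one has $V(\bar{H}_i)\subseteq\TBFS(c)$ while $\TBFS(r,\bar{r}_j')\subseteq\TBFS(r,l(C_j))\cup\TBFS(c')$ is entirely disjoint from $V(\bar{H}_i)$. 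I would then pick a vertex $v\in V(\hat{T}_i')\setminus\{\hat{r}_i'\}$ reachable from $\bar{r}_j$ via a path $P_1'\subseteq\bar{T}_i'$, apply (the WLOG version of) \Cref{lem:fundHatCycle} to obtain a fundamental cycle $C\subseteq\hat{H}_i$ with $v\in V(C)$ and $v$ distinct from both top vertices of $C$, and combine $C$ with the $\TBFS$-paths from $r$ to its two top vertices and the composite path $\TBFS(r,\bar{r}_j')\oplus P_2'\oplus P_1'$ (where $P_2'\subseteq\bar{H}_j$ runs from $\bar{r}_j'$ to $\bar{r}_j$ and meets $V(\bar{H}_i)$ only at $\bar{r}_j$ by \Cref{lem:rootedHDForest}) to exhibit a clawed cycle, contradicting series-parallelness.

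The most delicate step is producing $v$ and $P_1'$ for the various positions $\bar{r}_j$ can occupy inside $\bar{T}_i'$, mirroring the subcase analysis in the proof of \Cref{lem:leftChild}: either $\bar{r}_j\in\hat{T}_i'$ (in which case I would set $v=\bar{r}_j$ when $\bar{r}_j\neq\hat{r}_i'$) or $\bar{r}_j\in\bar{T}_i'\setminus\hat{T}_i'$ (in which case $v$ is the contact vertex of the $\mcPHJ\setminus\hat{H}$ component containing $\bar{r}_j$ with $\hat{T}_i'$ whenever this differs from $\hat{r}_i'$). When the natural candidate coincides with $\hat{r}_i'$, the fallback is to instead choose $v\in V(\hat{T}_i)\setminus\{\hat{r}_i\}$ and route $P_1'$ across a cross edge of $C_i$ from $\bar{T}_i'$ to $\bar{T}_i$; ensuring this remains compatible with the chosen fundamental cycle $C$ (in particular, avoiding reuse of a cycle edge when $|C_i|$ is small) is expected to be the main technical challenge of the proof.
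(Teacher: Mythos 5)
Your reduction to the parent–child case, the comparability argument, the disposal of the root-hammock case via the convention that the root maximizes $h(C_k)$, and the left-child case via \Cref{lem:leftChild} are all correct and match steps the paper implicitly uses. The gap is exactly where you flag it: the non-root, right-child case. You propose a direct, single-pair clawed-cycle argument mirroring the proof of \Cref{lem:leftChild}, but the paper does \emph{not} prove the right-child case by a local argument about one parent–child pair, and I believe no such local argument exists. The proof of \Cref{lem:leftChild} leans crucially on $\bar{r}_i = \hat{r}_i$, which follows from $\bar{r}_i$ being the contact vertex with the \emph{parent} $\bar{H}_{i'}$; as you observe, there is no analogous constraint pinning $\bar{r}_i'$ to $\hat{r}_i'$. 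Your fallback---route $P_1'$ across a $C_i$ cross edge to land at some $v\in\hat{T}_i\setminus\{\hat{r}_i\}$---faces obstructions you do not resolve and which are not merely ``technical challenges'': (a) when $|C_i|=2$ the unique hammock-fundamental cycle $C$ uses both cross edges, so $P_1'$ cannot use one without colliding with $C$; (b) $P_1'$ must pass through $\hat{r}_i'$, and \Cref{lem:fundHatCycle} gives no control over whether $\hat{r}_i'\in V(C)$, so the third claw-path may meet $C$ at a vertex other than $v$; and (c) the portion of $P_1'$ inside $\hat{T}_i'$ before the crossing may also intersect $C$.

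The paper's actual proof takes a global route that you miss entirely: it assumes a \emph{minimal} counterexample, so that every parent--child pair strictly above $(\bar{H}_i,\bar{H}_j)$ already respects the lca ordering, and combines this with the root-maximality convention to extract a chain $\bar{H}_0,\bar{H}_1,\dots,\bar{H}_{\alpha+1}$ in which all middle hammocks share the \emph{same} lca $u$ and $u\prec l(C_0), l(C_{\alpha+1})$. It then shows (using \Cref{lem:leftChild} applied to each link) that each step in the chain is a right child, invokes \Cref{lem:conBelow} on the two endpoints of the chain to get two connected-below child edges of $u$, and reaches a contradiction with \Cref{lem:noFundCycle} by analyzing the repeated-child pattern of the sequence $(v_1,\dots,v_{\alpha+1})$ (the $\gamma=1$, $\gamma=2$, $\gamma\ge 3$ cases). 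The intuition is that the local right-child structure is too weak to force the lca ordering---the constraint only materializes once you chase the chain up to the root hammock---so the single fundamental cycle inside $\hat{H}_i$ that your approach relies on cannot by itself manufacture a clawed cycle. To repair your proof you would need to abandon the direct right-child argument and adopt some version of the minimal-counterexample chain.
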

\begin{proof}
    It suffices to show that if $\bar{H}_j$ is a child of $\bar{H}_i$ in $\bar{\mcH}$ then $C_j \preceq C_i$; thus, let $\bar{H}_i$ and $\bar{H}_j$ be an arbitrary parent-child pair in $\bar{\mcH}$. Since $\bar{\mcH}$ is a rooted forest of hammocks by \Cref{lem:rootedHDForest} we know that $V(\bar{H}_i) \cap V(\bar{H}_j) = \bar{r}_j$ and since $l(C_i)$ and $l(C_j)$ are ancestors of every vertex in $\bar{H}_i$ and $\bar{H}_j$ respectively, we know that  $l(C_i)$ and $l(C_j)$ are both ancestors of $\bar{r}_j$ and therefore either $C_i \prec C_j$ or $C_j \preceq C_i$.
    
    Suppose for the sake of contradiction that $C_i \prec C_j$ and suppose minimality of this counter-example; in particular, suppose that there are no ancestors $\bar{H}_{i'}$ and $\bar{H}_{j'}$ of $\bar{H}_j$ in $\bar{\mcH}$ where $\bar{H}_{i'}$ is the parent of $\bar{H}_{j'}$ but $l(C_{i'}) \prec l(C_{j'})$.  It follows that for every pair of parent-child pairs $\bar{H}_{i'}$ and $\bar{H}_{j'}$ which are ancestors of $\bar{H}_j$ in $\bar{\mcH}$ we know that $C_{j'} \preceq C_{i'}$. Moreover, letting $\bar{H}_k$ be the root hammock of the tree of hammocks containing $\bar{H}_i$ and $\bar{H}_j$ in $\bar{\mcH}$, it follows that $C_{i'} \preceq C_k$ for any ancestor $\bar{H}_{i'}$ of $\bar{H}_j$. Additionally, since we chose $\bar{H}_k$ to maximize $h(C_k)$ and $C_j \prec C_i$, there must be some pair of ancestors $\bar{H}_{j'}$ and $\bar{H}_{i'}$ of $\bar{H}_j$ where $C_{j'} \prec C_{i'}$ (since otherwise we would have $C_k \prec C_{j}$).
    
    It follows that there is a sequence of hammocks $\bar{H}_0, \bar{H}_1, \bar{H}_2, \bar{H}_3, \ldots, \bar{H}_\alpha, \bar{H}_{\alpha+1}$ where $l(C_l) = l(C_{l'})$ for $l, l' \in [\alpha]$ and $\bar{H}_{l+1}$ is the child of $\bar{H}_{l}$ in $\bar{\mcH}$ for $0 \leq l \leq \alpha$ where $\alpha \geq 1$ but $l(C_1) = \ldots = l(C_{\alpha}) \prec l(C_0),l(C_{\alpha+1})$; here, we have slightly abused notation and relabeled $\bar{H}_{i'}$, $\bar{H}_{j'}$, $\bar{H}_i$ and $\bar{H}_j$ to $\bar{H}_0$, $\bar{H}_1$, $\bar{H}_{\alpha}$ and $\bar{H}_{\alpha + 1}$ respectively. We let $u := l(C_l)$ for $l \in [\alpha]$ and let $v_l$ be the child of $u$ in $\TBFS$ whose subtree contains $r_l$ Moreover, we claim that by \Cref{lem:leftChild} we know that for each $l \in [\alpha]$ we must have that $\bar{H}_{l+1}$ is a \emph{right} child of $\bar{H}_l$: to see why, notice that if $\bar{H}_{l+1}$ is a left child of $\bar{H}_l$ then \Cref{lem:leftChild} tells us that $l(C_{l+1}) \in T_l$ and so $l(C_{l+1}) \preceq r_l \prec u$; if $l < \alpha$ this contradicts our assumption that $l(C_{l+1}) = u$ and if $l = \alpha$ this contradicts our assumption that $l(C_{\alpha}) = u \prec l(C_{\alpha+1})$.
    
    We will use the existence of such a sequence of right children to contradict \Cref{lem:noFundCycle}.
    
    First, we claim that $v_1$ and $v_{\alpha+1}$ are connected in $\TBFS(u) \setminus \{u\}$. To see why, first notice that the graph induced by the union of $\bar{H}_1, \bar{H}_2, \ldots, \bar{H}_{\alpha}$ is connected by virtue of each hammock in this sequence being the parent of the next hammock in this sequence. $v_1$ and $v_{\alpha+1}$ are therefore connected in $\TBFS(u) \setminus \{u\}$ since we also know that $r_1$ and $r_{\alpha+1}$ are both contained in this graph and descendants of $v_1$ and $v_{\alpha+1}$ respectively.
    
    Notice that $\bar{H}_0$ and $\bar{H}_1$ intersect at $r_1 \preceq v_1$ and $\bar{H}_\alpha$ and $\bar{H}_{\alpha + 1}$ intersect at $r_{\alpha + 1} \preceq v_{\alpha + 1}$. Thus, since $\bar{H}_{0} \prec \bar{H}_1$ and $\bar{H}_{\alpha + 1} \prec \bar{H}_{\alpha}$, it follows by \Cref{lem:conBelow} that $\{u, v_1\}$ and $\{u, v_{\alpha + 1}\}$ are connected below.
    
    Thus, if we can show that $v_1 \neq v_{\alpha+1}$ then we will have contradicted \Cref{lem:noFundCycle}. Suppose for the sake of contradiction that $v_1 = v_{\alpha+1}$. Then, there must be some contiguous subsequence $(v_{\beta}, v_{\beta+1}, \ldots v_{\beta + \gamma})$ of $(v_1, \ldots, v_{\alpha+1})$ for $\gamma \geq 1$ where $v_{\beta} = v_{\beta + \gamma}$ but $v_{\beta + x} \neq v_{\beta + y}$ for all $x,y < \gamma$ where $y \neq x$. To simplify notation we assume that $\beta = 1$. We will show that all choices of $\gamma$ contradict the assumption that $v_1 = v_{\gamma + 1}$ and so it must be the case that $v_1 \neq v_{\alpha+1}$. We illustrate each of the following contradictions in Figures \ref{sfig:desc1}, \ref{sfig:desc2} and \ref{sfig:desc3}.
    \begin{enumerate}
        \item Suppose $\gamma = 1$ (i.e. there is a single hammock $\bar{H}_1$ we are considering with an lca of $u$). Then we would have that $r_1$ and $r_1'$ are both descendants of $v_1$ and $v_{\gamma + 1}$, meaning $l(C_1) \preceq v_1 = v_{\gamma+1} \prec u$, a contradiction to the fact that $l(C_1) = u$.
        \item Suppose $\gamma = 2$. Since $\bar{H}_2$ is a right child of $\bar{H}_1$, we know that $r_2 \preceq v_2$ \emph{and} $r_1' \preceq v_2$. Moreover, we know that $r_3 \preceq v_3$ and $r_3 \in T_2'$ and so $r_2' \preceq v_3 = v_1$. Summarizing, we have $r_1, r_2' \preceq v_1$ and $r_1', r_2 \preceq v_2$; however since $v_1, v_2 \neq u$, it follows that $C_1 = C_2$, contradicting our assumption that they are distinct lca-equivalence classes.
        \item Suppose $\gamma \geq 3$. Let $P$ be an arbitrary path connecting $r_{1}$ and $r_{\gamma+1}$ in the graph induced by $\bar{H}_1, \bar{H}_2, \ldots, \bar{H}_{\gamma}$. By our assumption that $\bar{H}_{l+1}$ is a right child of $\bar{H}_l$ for $l \in [\gamma]$, we know that such a path uses at least one edge in $C_l$ for $l \in [\gamma]$.  Moreover, we may assume WLOG that $P$ uses exactly one edge from $C_l$ for $l \in [\gamma]$: if $x_l$ and $y_l$ are the first and last vertices that $P$ visits in $C_l$ then by definition of a hammock there is a path from $x_l$ to $y_l$ which uses exactly one edge of $C_l$; we assume that $P$'s subpath restricted to $C_l$ is this path. Next, since we have assumed that $v_1$ and $v_{\gamma + 1}$ are equal, we know that both $r_{1}$ and $r_{\gamma+1}$ lie in $\TBFS(v_{1})$. By our choice of $\gamma$, we have that the graph induced by $\TBFS(r_1, r_{\gamma + 1}) \cup P$ contains a cycle $C$ with exactly one edge from at least three $C_l$. However, we can easily form a clawed cycle from such a cycle. In particular, let $P_1$, $P_2$ and $P_3$ be paths to $C$ from $u$ in $\TBFS$ via $v_1$, $v_2$ and $v_3$ respectively. Then $C$ with these paths is a clawed cycle, a contradiction.\qedhere
    \end{enumerate}
\end{proof}

\begin{figure}
    \centering
    \begin{subfigure}[b]{0.32\textwidth}
        \centering
        \includegraphics[width=\textwidth,trim=140mm 40mm 150mm 35mm, clip]{./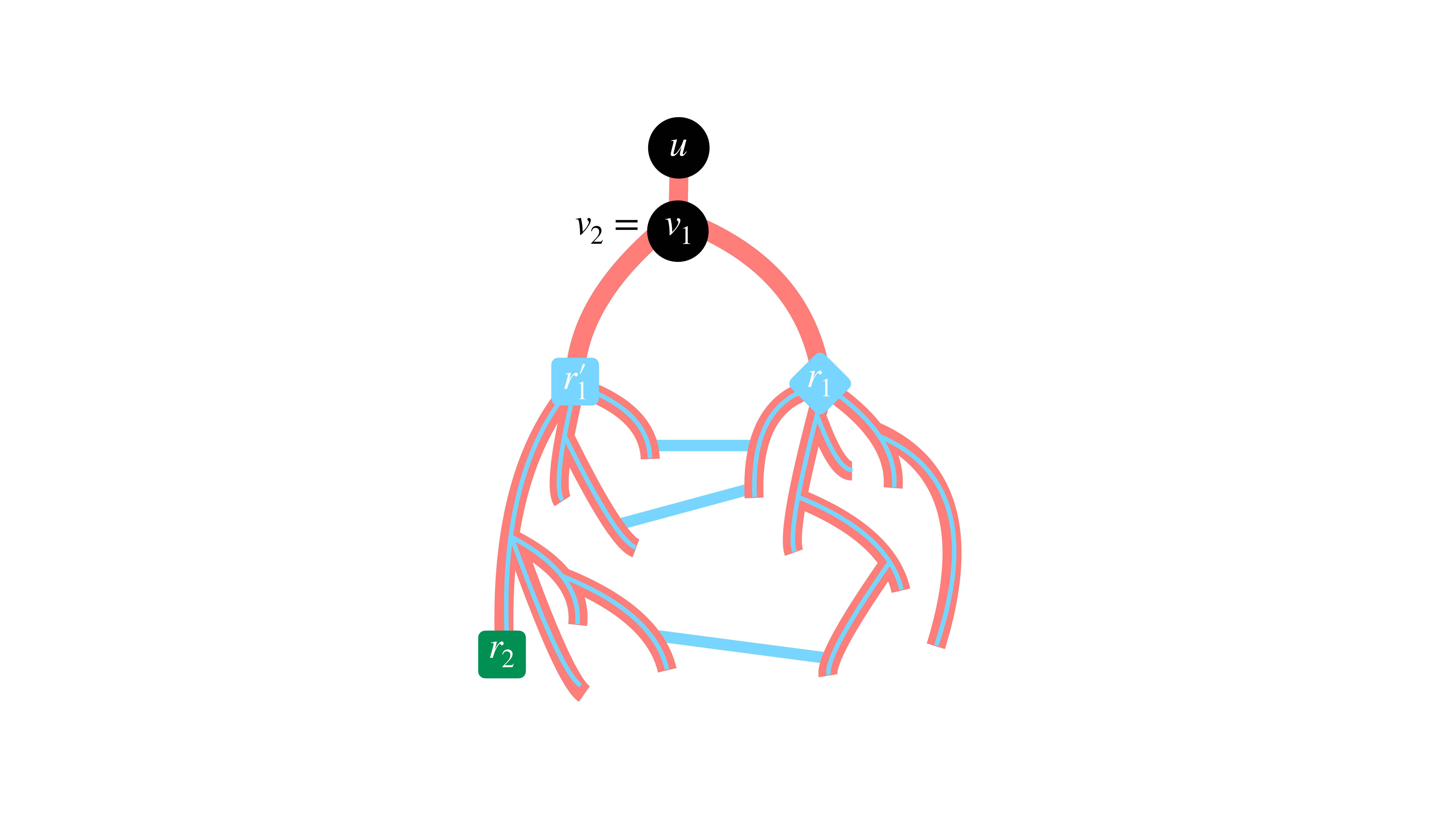}
        \caption{$\gamma=1$: $l(C_1) \prec u$}\label{sfig:desc1}
    \end{subfigure}
    \hfill
    \begin{subfigure}[b]{0.32\textwidth}
        \centering
        \includegraphics[width=\textwidth,trim=140mm 40mm 150mm 35mm, clip]{./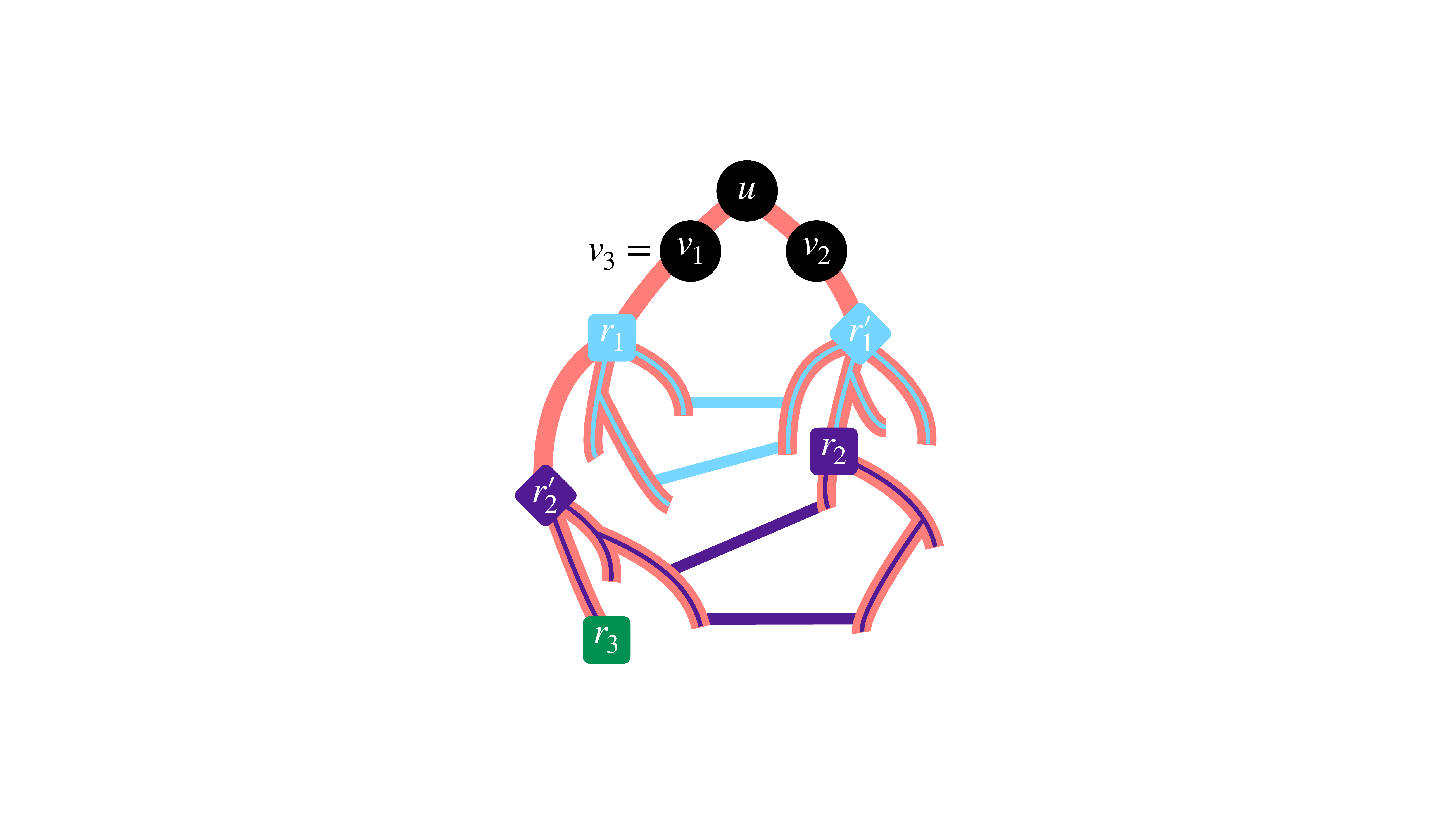}
        \caption{$\gamma = 2$: $C_1 = C_2$}\label{sfig:desc2}
    \end{subfigure}
    \begin{subfigure}[b]{0.32\textwidth}
        \centering
        \includegraphics[width=\textwidth,trim=140mm 40mm 150mm 35mm, clip]{./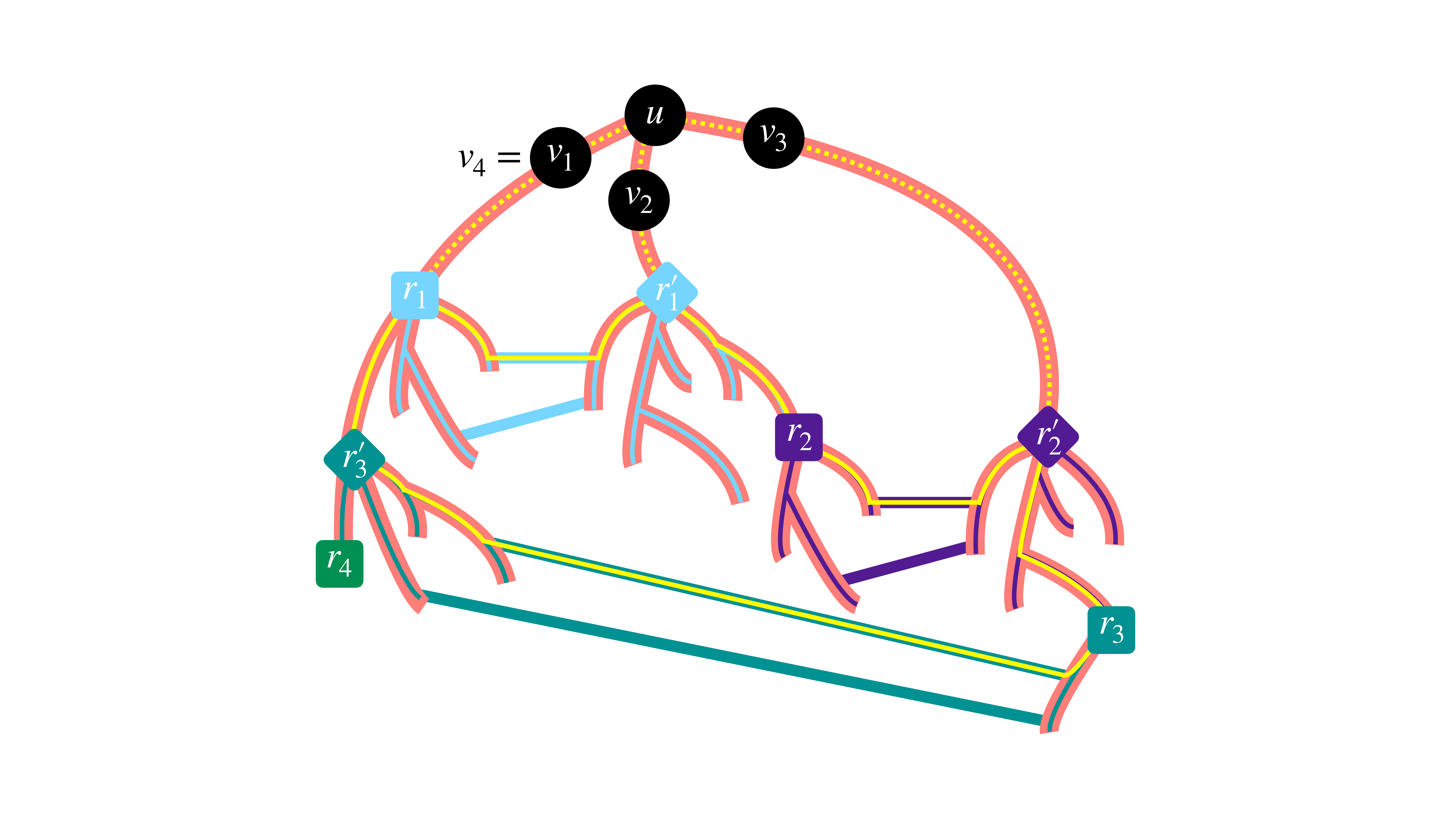}
        \caption{$\gamma \geq 3$: a clawed cycle}\label{sfig:desc3}
    \end{subfigure}
    \caption{Each of the three contradictions we arrive at in the proof of \Cref{lem:parChildEquiv} based on the value of $\gamma$. In (c) we highlight the cycle of our clawed cycle in solid yellow and each of its paths in dotted yellow.}
\end{figure}



\subsubsection{Extending Along Lca Paths}

In this section we show that the above mentioned paths are appropriately disjoint from one another as well as from the thus far computed hammocks. We will then use this disjointness to extend $\bar{\mcH}$ to $\tilde{\mcH}$. We formally define these paths---which we illustrate in illustrate in \Cref{sfig:hamConnPaths}---as follows.

\begin{definition}[Lca Paths $P_i$, $P_i'$, $\mcP_i$ $\mcP$]
    For each $i$ we let $P_i := \intV(\TBFS(\bar{r}_i, l(C_i)))$ and $P_i' := \intV(\TBFS(\bar{r}_i', l(C_i)))$ be the lca paths from $\bar{r}_i$ and $\bar{r}_i'$ respectively to $l(C_i)$, excluding the endpoints. We let $\mcP_i$ be the set which always contains $P_i'$ and which additionally contains $P_i$ if $\bar{H}_i$ is a root hammock in $\bar{\mcH}$. Lastly, we let  $\mcP := \bigcup_i \mcP_i $ be the collection of all relevant lca paths.
\end{definition}

\begin{figure}
    \centering
    \begin{subfigure}[b]{0.49\textwidth}
        \centering
        \includegraphics[width=\textwidth,trim=0mm 0mm 0mm 0mm, clip]{./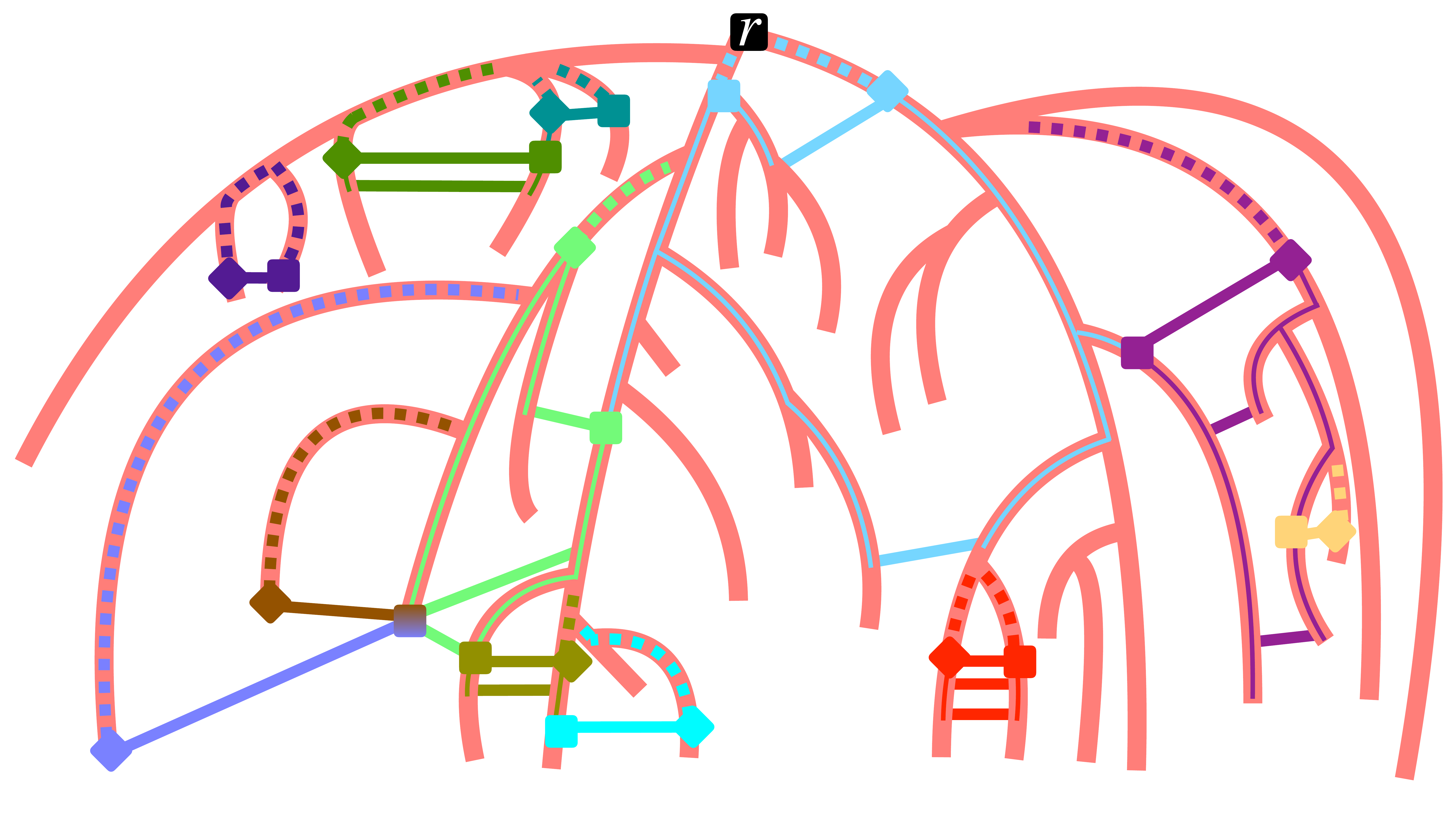}
        \caption{Lca paths.}\label{sfig:hamConnPaths}
    \end{subfigure}
    \hfill
    \begin{subfigure}[b]{0.49\textwidth}
        \centering
        \includegraphics[width=\textwidth,trim=0mm 0mm 0mm 0mm, clip]{./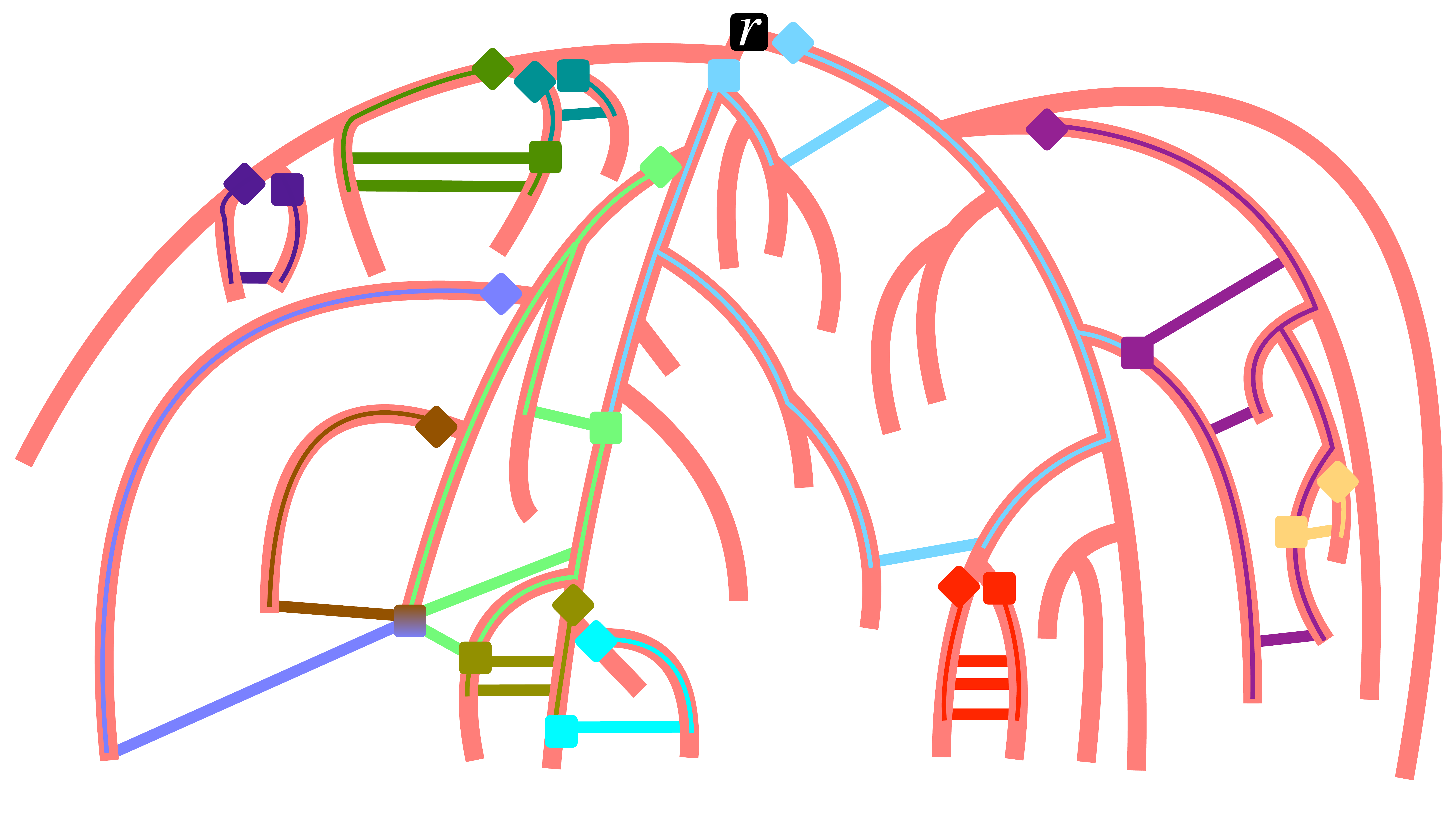}
        \caption{Extending $\bar{\mcH}$ to $\tilde{\mcH}$ via lca paths.}\label{sfig:addPaths}
    \end{subfigure}
    \caption{An illustration of the paths of our hammock decomposition. Each such path dotted and given in a color corresponding to its constituent hammock. On the left we give the paths and on the right we give the result of adding these paths to $\bar{\mcH}$, resulting in $\tilde{\mcH}$. Notice that each root hammock has two paths in $\mcP$.}
\end{figure}

Given the above paths we now describe how we construct $\tilde{\mcH}$ from $\bar{\mcH}$ be extending our hammocks along these paths. Formally, we let $\tilde{H}_i$ be the graph induced in $G$ by $V(\bar{H}_i)$ along with all vertices in paths in $\mcP_i$. We also let the root hammocks of $\tilde{\mcH}$ be the same as the root hammocks of $\bar{\mcH}$. We illustrate this in \Cref{sfig:addPaths}.

To show that this yields an lca-respecting tree of hammocks we first observe that all of the above paths are appropriately disjoint.

\begin{lemma}\label{lem:disjointPaths}
    For each $P \in \mcP$ and $i$ we have $V(P) \cap V(\bar{H}_i)= \emptyset$. Similarly, for distinct $P, P' \in \mcP$ we have $V(P) \cap V(P') = \emptyset$.
\end{lemma}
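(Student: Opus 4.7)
The plan is to prove both parts by contradiction: in each configuration that would violate disjointness, exhibit a clawed cycle in $G$, contradicting the fact that $G$ is series-parallel.

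For part one, suppose some $v$ lies in $V(P) \cap V(\bar{H}_i)$ for some $P \in \mcP$ and index $i$. By symmetry it suffices to handle $P = P_j'$, so $\bar{r}_j' \prec v \prec l(C_j)$. I would first rule out $i = j$: if $v \in \bar{T}_j'$ then $v \preceq \bar{r}_j'$, contradicting $v \succ \bar{r}_j'$; and if $v \in \bar{T}_j$ then $v \preceq \bar{r}_j$ combined with $v \succ \bar{r}_j'$ would make $\bar{r}_j$ a strict ancestor of $\bar{r}_j'$, contradicting that the hammock roots $\bar{r}_j, \bar{r}_j'$ are unrelated in $\TBFS$.

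For the remaining case $i \neq j$, assume without loss of generality $v \in \bar{T}_i$, so $v \preceq \bar{r}_i$. Let $x$ and $y$ be the children of $l(C_j)$ in $\TBFS$ whose subtrees contain $\bar{r}_j$ and $\bar{r}_j'$ respectively; these are distinct since $\bar{r}_j$ and $\bar{r}_j'$ are unrelated, and $v \preceq y$ since $v$ lies on the $\TBFS$-path from $l(C_j)$ down to $\bar{r}_j'$. I would then derive a contradiction to \Cref{lem:noFundCycle} applied at the branchpoint $l(C_j)$ with children $x$ and $y$. The required path from $x$ to $y$ contained in $\TBFS(l(C_j)) \setminus \{l(C_j)\}$ is immediate: descend from $x$ through $\TBFS(x)$ to $\bar{r}_j$, traverse any cross edge of $C_j$ to reach $\bar{T}_j'$, then ascend through $\TBFS(y)$ from $\bar{r}_j'$ to $y$. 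It remains to show both $\{l(C_j), x\}$ and $\{l(C_j), y\}$ are connected below. For $\{l(C_j), y\}$ the key is the vertex $v \in \bar{T}_i \cap \TBFS(y)$: traveling through $\bar{T}_i$ from $v$ to $\bar{r}_i$ and then taking a cross edge of $C_i$ produces an access route from $r$ to $\TBFS(y)$ avoiding $\TBFS(l(C_j)) \setminus \TBFS(y)$, after a short case analysis on the position of $\bar{r}_i'$. For $\{l(C_j), x\}$, when $\bar{H}_j$ has a parent hammock $\bar{H}_k$ in $\bar{\mcH}$ with $l(C_j) \prec l(C_k)$, \Cref{lem:conBelow} applies directly using \Cref{lem:rootedHDForest} and \Cref{lem:parChildEquiv}; the root-hammock case and the tie case $l(C_j) = l(C_k)$ are handled by analogous arguments exploiting the maximality of $h(C_k)$ underlying our choice of root hammock.

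Part two is handled analogously. The subcase $P = P_j, P' = P_j'$ is immediate: any common internal vertex would be a strict ancestor of both $\bar{r}_j$ and $\bar{r}_j'$ that is a strict descendant of $l(C_j)$, contradicting the defining minimality of the lca $l(C_j)$. For distinct hammocks $j \neq k$ whose lca paths would share a vertex, one repeats the construction of part one using the hammocks associated with the two paths as the sources of the extra access route. The main obstacle throughout is establishing connected belowness of $\{l(C_j), y\}$ in the subcase where $l(C_i) \preceq l(C_j)$ forces all of $\bar{H}_i$ to lie within $\TBFS(y)$; here we cannot simply exit $\TBFS(y)$ via a cross edge of $\bar{H}_i$, and instead must identify a clawed cycle whose three paths route simultaneously through cross edges of $\bar{H}_i$ and $\bar{H}_j$, leveraging \Cref{lem:parChildEquiv} to pin down the relative positioning.
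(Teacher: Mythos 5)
Your approach is genuinely different from the paper's, but it has gaps that are not merely cosmetic. The paper's proof of the first part does \emph{not} reason from the branchpoint $l(C_j)$. Instead it observes that if some vertex of $P_j'$ lay in $V(\bar{H}_i)$, then the parent edge of $\bar{r}_j'$ in $\TBFS$ would lie on a hammock-joining path and hence be assigned to some hammock $\bar{H}_l$; it then shows $\bar{H}_l$ is necessarily a \emph{child} of $\bar{H}_j$ meeting it at $\bar{r}_j'$ (it cannot be a parent or sibling because those meet $\bar{H}_j$ only at $\bar{r}_j \in \bar{T}_j$ by \Cref{lem:rootedHDForest}), invokes \Cref{lem:fundHatCycle} to extract a hammock-fundamental cycle through $\bar{r}_j'$ inside $\hat{H}_l$, and completes a clawed cycle using a $\bar{H}_l$-avoiding path through $\bar{H}_j$ together with $\TBFS$-paths to the two cycle apexes. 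This sidesteps the entire connected-belowness bookkeeping you are forced into. For part two the paper's argument is also considerably cleaner than ``repeat part one'': once part one is known, if $P_i'$ and $P_j'$ shared a vertex $x$, the concatenation $\TBFS(\hat{r}_i',\bar{r}_i') \oplus \TBFS(\bar{r}_i', x) \oplus \TBFS(x, \bar{r}_j') \oplus \TBFS(\bar{r}_j', \hat{r}_j')$ would be a hammock-joining path and hence contained in $\bar{H}$, immediately contradicting part one.

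The concrete gaps in your version are precisely the ones you flag but do not close. For $\{l(C_j), x\}$ being connected below, you rely on \Cref{lem:conBelow}, which requires $C_j \prec C_k$ for the parent hammock $\bar{H}_k$; \Cref{lem:parChildEquiv} only yields $C_j \preceq C_k$, so the tie case $l(C_j) = l(C_k)$ genuinely falls outside that lemma, and when $\bar{H}_j$ is a root hammock there is no parent at all. Saying these are ``handled by analogous arguments exploiting the maximality of $h(C_k)$'' is not a proof — the maximality of $h(C_k)$ concerns the \emph{choice} of root hammock inside a tree of hammocks, and it is not clear it produces the required $r$--$\TBFS(x)$ path. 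For $\{l(C_j), y\}$ the situation is worse: you yourself observe that if $l(C_i)$ is a descendant of $y$ then all of $\bar{H}_i$ sits inside $\TBFS(y)$ and the exit route you proposed (up $\bar{T}_i$ to $\bar{r}_i$, then across a cross edge of $C_i$) stays trapped inside $\TBFS(y)$, so it provides no certificate of connected belowness. That case is real — nothing in the hypotheses excludes it — and ``identify a clawed cycle whose three paths route simultaneously through cross edges of $\bar{H}_i$ and $\bar{H}_j$'' is a plan, not an argument. Until those cases are resolved, the proof does not go through. I would recommend pivoting to the paper's strategy: locate the child hammock $\bar{H}_l$ owning the parent edge of $\bar{r}_j'$ and build the clawed cycle from a fundamental cycle inside $\hat{H}_l$, which eliminates the case analysis you are struggling with.
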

\begin{proof}
     First, let us show $V(P) \cap V(\bar{H}_i) = \emptyset$. Let $P \in \mcP_j$; we assume WLOG that $P = P_j'$. Suppose that there is some vertex $v \in V(P_j')$ which is contained in $V(\bar{H}_i)$ and let $v$ be the lowest such vertex in $P_j'$. It follows that the path between $\bar{r}_j'$ and $v$ must be a hammock joining path between $\bar{H}_i$ and $\bar{H}_j$.
    
    Continuing, let $e$ be the parent edge of $\bar{r}_j'$ in $\TBFS$; $e$ is in the above hammock-joining path and so is contained in some hammock $\bar{H}_l$ which intersects $\bar{H}_j$ on $\bar{r}_j'$. Since $\bar{H}_l$ and $\bar{H}_j$ intersect on $\bar{r}_j'$, $\bar{H}_l$ must be either the parent, sibling or child of $\bar{H}_l$ in $\bar{\mcH}$ as per \Cref{lem:pathStruct}. However, $\bar{H}_l$ cannot be the parent or sibling of $\bar{H}_j$ since it intersects $\bar{H}_j$ on $\bar{T}_j'$; thus $\bar{H}_l$ is the child of $\bar{H}_j$. We claim that $e \in \hat{H}_l$; indeed this is immediate from the fact that $e$ is incident to $\bar{r}_j'$ and since $\bar{H}_j$ is the parent of $\bar{H}_l$, $e$ would have been included in $\bar{H}_j$, not $\bar{H}_l$, if it weren't in $\hat{H}_l$. However, since $v$ is the parent of $\bar{r}_j'$ and $\bar{r}_j' \in \bar{H}_l$, we know that $\bar{r}_j' \neq \hat{r}_l, \hat{r}_l'$. 
    
    Thus, applying \Cref{lem:fundHatCycle} we know that there is a hammock-fundamental cycle $C \subseteq \hat{H}_l$ containing $\bar{r}_j'$ where $\bar{r}_j'$ is not the highest vertex in either $V(C) \cap \bar{T}_l$ or $V(C) \cap \bar{T}_l'$; let these highest vertices be $v_l$ and $v_l'$ and notice that to arrive at a contradiction it suffices to argue that there is a path $P$ from $r$ to $\bar{r}_j'$ which is internally vertex disjoint from $V(\bar{H}_l)$ since $C$ along with paths $\TBFS(r, v_l)$, $\TBFS(r, v_l')$ and $P$ would then form a clawed cycle. Let $P$ be the concatenation of $\TBFS(r, \bar{r}_j)$ with an arbitrary path in $\bar{H}_j$ connecting $\bar{r}_j$ and $\bar{r}_j'$ and $e$. $\bar{r}_j'$ is the only vertex which the latter shares with $\bar{H}_l$ so it suffices to show that $\TBFS(r, \bar{r}_j)$ shares no vertices with $\bar{H}_l$. Suppose for the sake of contradiction that $x \in \TBFS(r, \bar{r}_j) \cap V(\bar{H}_l)$. Then, if $x$ and $\bar{r}_j'$ were in the same hammock trees of $\bar{H}_l$ then we would have $l(C_j) \prec l(C_l)$, contradicting $l(C_l) \preceq l(C_j)$ by \Cref{lem:parChildEquiv} and the fact that $\bar{H}_l$ is a child of $\bar{H}_j$. On the other hand, if $x$ and $\bar{r}_j'$ were in different hammock trees of $\bar{H}_l$ then we would have that $C_j = C_l$, contradicting the fact that $C_j$ and $C_l$ are distinct. Thus, we conclude that $V(P) \cap V(\bar{H}_i) = \emptyset$.

    Next, let us show $V(P) \cap V(P') = \emptyset$ for distinct $P, P' \in \mcP$. We may assume that it is not the case that $P, P' \in \mcP_i$ for some $i$ since then $P$ and $P'$ are vertex-disjoint by construction. Thus, WLOG we assume $P = P_i'$ and $P' = P_j'$ for $i \neq j$. Suppose for the sake of contradiction that $P_i'$ and $P_j'$ share a vertex $x$. By definition of $P_i'$ and $P_j'$ we know that $x \neq \bar{r}_i', \bar{r}_j'$. Moreover, since in the first part of this proof we showed that $V(P_i), V(P_j) \cap V(\bar{H}) = \emptyset$, it follows that $\TBFS(\hat{r}_i',\bar{r}_i') \oplus \TBFS(\bar{r}_i', x) \oplus \TBFS(x, \bar{r}_j') \oplus \TBFS(\bar{r}_j', \hat{r}_j')$ is a hammock-joining path (\Cref{dfn:hamJoin}) and so would have been included in $\bar{H}$. However, then $x$ would have been included in $\bar{H}$ (by definition of $\bar{H}$). But, this contradicts the fact that $V(P_i'), V(P_j') \cap V(\bar{H}) = \emptyset$.
\end{proof}

We conclude that our forest of hammocks is indeed lca-respecting (\Cref{dfn:lcaRespecting}).
\begin{lemma}\label{lem:tildeFor}
    $\tilde{\mcH}$ is an lca-respecting rooted forest of hammocks where for each root hammock $\tilde{H}_k \in \tilde{\mcH}$ we have $\lca(\tilde{r}_k, \tilde{r}_k')$ is the parent of $\tilde{r}_k$ and $\tilde{r}_k'$ in $\TBFS$.
\end{lemma}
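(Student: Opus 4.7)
My plan is to verify, in turn, the pieces of \Cref{dfn:lcaRespecting} for $\tilde{\mcH}$ (the version without base tree) and then separately check the root-hammock clause. The two key tools are \Cref{lem:disjointPaths}, which says the lca paths in $\mcP$ are pairwise disjoint and disjoint from $\bar{H}$, and \Cref{lem:parChildEquiv}, which matches hammock ancestry in $\bar{\mcH}$ with lca-ancestry of the corresponding equivalence classes.

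First I would confirm each $\tilde{H}_i$ remains a hammock. For non-root $\tilde{H}_i$, its vertex set partitions as $V(\bar{T}_i) \sqcup (V(\bar{T}_i') \cup V(P_i'))$, and symmetrically for root hammocks. Both parts induce connected subtrees of $\TBFS$ because $\bar{T}_i'$ and $P_i'$ are subtrees of $\TBFS$ meeting at the $\TBFS$-parent of $\bar{r}_i'$. The new hammock root $\tilde{r}_i'$ is the child of $l(C_i)$ in $\TBFS$ whose subtree contains $\bar{r}_i'$, and $\tilde{r}_i = \bar{r}_i$ (or, for root hammocks, the analogous child of $l(C_i)$ containing $\bar{r}_i$); since $\bar{r}_i$ and $\bar{r}_i'$ are descendants of distinct children of $l(C_i) = \lca(\bar{r}_i, \bar{r}_i')$, the new roots remain unrelated. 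Edge-disjointness of the $\tilde{H}_i$'s follows from edge-disjointness of the $\bar{H}_i$'s (\Cref{lem:rootedHDForest}) combined with \Cref{lem:disjointPaths}. For the cycle condition, any cycle in $G[\tilde{\mcH}]$ whose edges all lie in $\bigcup_i \bar{H}_i$ is already confined to a single $\bar{H}_i \subseteq \tilde{H}_i$ by \Cref{lem:rootedHDForest}; any cycle incident to a vertex of an lca path $P \in \mcP_i$ stays inside $\tilde{H}_i$ because \Cref{lem:disjointPaths} forces every edge of $G[\tilde{\mcH}]$ incident to such a vertex to have its other endpoint in $V(\tilde{H}_i)$ rather than in any $V(\tilde{H}_j)$ with $j \neq i$.

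Next I would inherit the parent-child structure from $\bar{\mcH}$ and check the first lca-respecting clause. For non-root $\bar{H}_j$ the left side is unextended, so $\tilde{r}_j = \bar{r}_j$, and combining \Cref{lem:rootedHDForest} with \Cref{lem:disjointPaths} gives $V(\tilde{H}_i) \cap V(\tilde{H}_j) = \{\bar{r}_j\} = \{\tilde{r}_j\}$. The substance of the argument is the second lca-respecting clause. By construction $\tilde{r}_j'$ is the $\TBFS$-child of $l(C_j)$ whose subtree contains $\bar{r}_j'$, so its $\TBFS$-parent is $l(C_j)$, and since $\tilde{r}_j$ is a descendant of $l(C_j)$ that is unrelated to $\tilde{r}_j'$, we get $\lca(\tilde{r}_j, \tilde{r}_j') = l(C_j)$; the same reasoning gives $\lca(\tilde{r}_i, \tilde{r}_i') = l(C_i)$. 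It remains to show $l(C_j) \in V(\tilde{H}_i) \cup \{l(C_i)\}$. \Cref{lem:parChildEquiv} places $l(C_j)$ on the monotone $\TBFS$-path from $\bar{r}_j$ up to $l(C_i)$, and I would split on whether $\bar{H}_j$ is a left or right child of $\bar{H}_i$: in the left case \Cref{lem:leftChild} directly puts $l(C_j) \in \TBFS(\bar{r}_i, \bar{r}_j) \subseteq V(\bar{T}_i) \subseteq V(\tilde{H}_i)$; in the right case $l(C_j)$ lies on $\TBFS(\bar{r}_j, \bar{r}_i') \cup \TBFS(\bar{r}_i', l(C_i))$, and is therefore either in $V(\bar{T}_i') \subseteq V(\tilde{H}_i)$, an internal vertex of $P_i' \subseteq V(\tilde{H}_i)$, or equal to $l(C_i) = \lca(\tilde{r}_i, \tilde{r}_i')$, each of which is acceptable.

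Finally, for the root-hammock claim, by construction both $P_k$ and $P_k'$ are added to $\tilde{H}_k$, so $\tilde{r}_k$ and $\tilde{r}_k'$ become the two distinct children of $l(C_k)$ in $\TBFS$ whose subtrees contain $\bar{r}_k$ and $\bar{r}_k'$ respectively, giving $\lca(\tilde{r}_k, \tilde{r}_k') = l(C_k)$ as their common $\TBFS$-parent. I expect the main obstacle to be the right-child sub-case analysis in the third paragraph, which requires tracking $l(C_j)$ along the combined $\TBFS$-path through $\bar{T}_i'$ and up along $P_i'$; the rest is bookkeeping on top of \Cref{lem:rootedHDForest}, \Cref{lem:disjointPaths}, \Cref{lem:parChildEquiv}, and \Cref{lem:leftChild}.
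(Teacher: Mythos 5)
Your proposal follows the same overall route as the paper's (brief) proof---disjointness of the lca paths to preserve hammock- and forest-ness, then lca bookkeeping via \Cref{lem:parChildEquiv} for the lca-respecting conditions---and most of the elaboration you supply is correct. Two places need tightening.

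First, the left-child step of your second-clause argument invokes \Cref{lem:leftChild}, but that lemma carries the hypothesis that $\bar{H}_i$ is \emph{not} a root hammock, which you never check. When $\bar{H}_i$ is a root hammock, the lemma's conclusion $l(C_j) \in \TBFS(\bar{r}_i,\bar{r}_j) \subseteq \bar{T}_i$ can fail: $l(C_j)$ may sit strictly above $\bar{r}_i$. The fix is available but does not go through \Cref{lem:leftChild}: since $\bar{H}_i$ is a root hammock, $\mcP_i$ also contains $P_i$, so the entire monotone path $\TBFS(\bar{r}_j, l(C_i))\setminus\{l(C_i)\}$ lies in $V(\bar{T}_i)\cup V(P_i)\subseteq V(\tilde{H}_i)$; \Cref{lem:parChildEquiv} places $l(C_j)$ on that path or at $l(C_i)$ itself, so the required containment still holds. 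Second, your cycle argument as written does not close: showing that every edge of $G[\tilde{\mcH}]$ incident to a lca-path vertex $v$ has its other endpoint in $V(\tilde{H}_i)$ constrains only the two edges at $v$, not the rest of the cycle, which could in principle pass through $\tilde{H}_i$ and exit again. What \Cref{lem:disjointPaths} actually buys you---and what the paper asserts---is the stronger fact that \emph{no} simple cycle of $G[\tilde{\mcH}]$ can touch a lca-path vertex at all: any cross edge from $v\in P_i'$ into $V(\bar{H}_i)$ would be lca-equivalent to $C_i$, forcing $v\in\hat{H}_i$ and contradicting the disjointness; so the only $G[\tilde{\mcH}]$-edges at $v$ are the one or two $\TBFS$-edges of $P_i'\cup\{\bar{r}_i'\}$, and the top vertex of $P_i'$ has degree one because its $\TBFS$-parent $l(C_i)$ is in no $\tilde{H}_j$. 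Hence every cycle of $G[\tilde{\mcH}]$ is already a cycle of $G[\bar{\mcH}]$, and \Cref{lem:rootedHDForest} finishes. With these two repairs your argument matches the paper's intent.
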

\begin{proof}
    It is easy to see by the disjointness of the paths in $\mcP$ that adding all vertices in $\mcP_i$ (along with all induced edges) to $\bar{H}_i$ indeed results in a hammock. To see that the result is a rooted forest of hammocks we need only verify that any cycle $C$ continues to only be incident to at most $1$ hammock. However, notice that the disjointness properties of our paths guarantee that no cycle can use an edge incident to a vertex in a path in $\mcP$; thus, since $\bar{\mcH}$ was a forest by \Cref{lem:rootedHDForest}, so too is $\tilde{\mcH}$.
    
    Lastly, to see that the resulting rooted forest of hammocks is lca-respecting notice that $\tilde{H}_j$ is a child of $\tilde{H}_i$ in $\tilde{\mcH}$ iff $\bar{H}_i$ is a parent of $\bar{H}_j$ in $\bar{\mcH}$ and so the first condition of lca-respecting---$V(\tilde{H}_i) \cap V(\tilde{H}_j) = \{\tilde{r}_j\}$ whenever $\tilde{H}_i$ is a parent of $\tilde{H}_j$ in $\tilde{\mcH}$---still holds by \Cref{lem:rootedHDForest}. The second condition---the parent of $\tilde{r}_j'$ in $\TBFS$ is $\lca(\tilde{r}_j, \tilde{r}_j')$ and $\lca(\tilde{r}_j, \tilde{r}_j') \in \tilde{H}_i \cup \{\lca(r_i, r_i')\}$ where $\tilde{H}_i$ is the parent of $\tilde{H}_j$---holds by definition of the paths in $\mcP$ and \Cref{lem:parChildEquiv}. The condition on each hammock root similarly follows.
\end{proof}

\subsection{Extending $\tilde{H}$ to $\mcH$ by Adding Dangling Subtrees}
The last step of our hammock decomposition involves adding subtrees of $\TBFS$ which consist of unassigned nodes to hammocks. We do this to ensure that our hammock decompositions indeed partitions all edges of the input series-parallel graph.


Formally, we construct $\mcH$ from $\tilde{\mcH}$ as follows. Let $E_p$ be the collection of the parent edges in $\TBFS$ of $\tilde{r}_i'$ for every hammock $H_i$ (where as usual we use $\tilde{r}_i$ and $\tilde{r}_i'$ to denote the hammock roots of $\tilde{H}_i$). We let $T_0$ be the connected component of $\TBFS \setminus (E_p \cup E(G[\tilde{\mcH}]))$ which contains $r$. For each connected component (i.e.\ dangling tree) $T \neq T_0$ of $\TBFS \setminus (E_p \cup E(G[\tilde{\mcH}]))$ we add $T$ to an arbitrary hammock which contains $\highV(T)$. As a reminder, $\highV(T)$ gives the vertex in $T$ which is highest in $\TBFS$. Lastly we designate each hammock $H_k$ in $\mcH$ such that $r_k \in V(T_0)$ and $\tilde{H}_k$ was designated a root hammock in $\tilde{\mcH}$ as a root of $\mcH$. We illustrate the result of adding the dangling trees in \Cref{sfig:dangTrees} and the final hammock decomposition in \Cref{sfig:finalHD}.

\begin{figure}
    \centering
    \begin{subfigure}[b]{0.49\textwidth}
        \centering
        \includegraphics[width=\textwidth,trim=0mm 0mm 0mm 0mm, clip]{./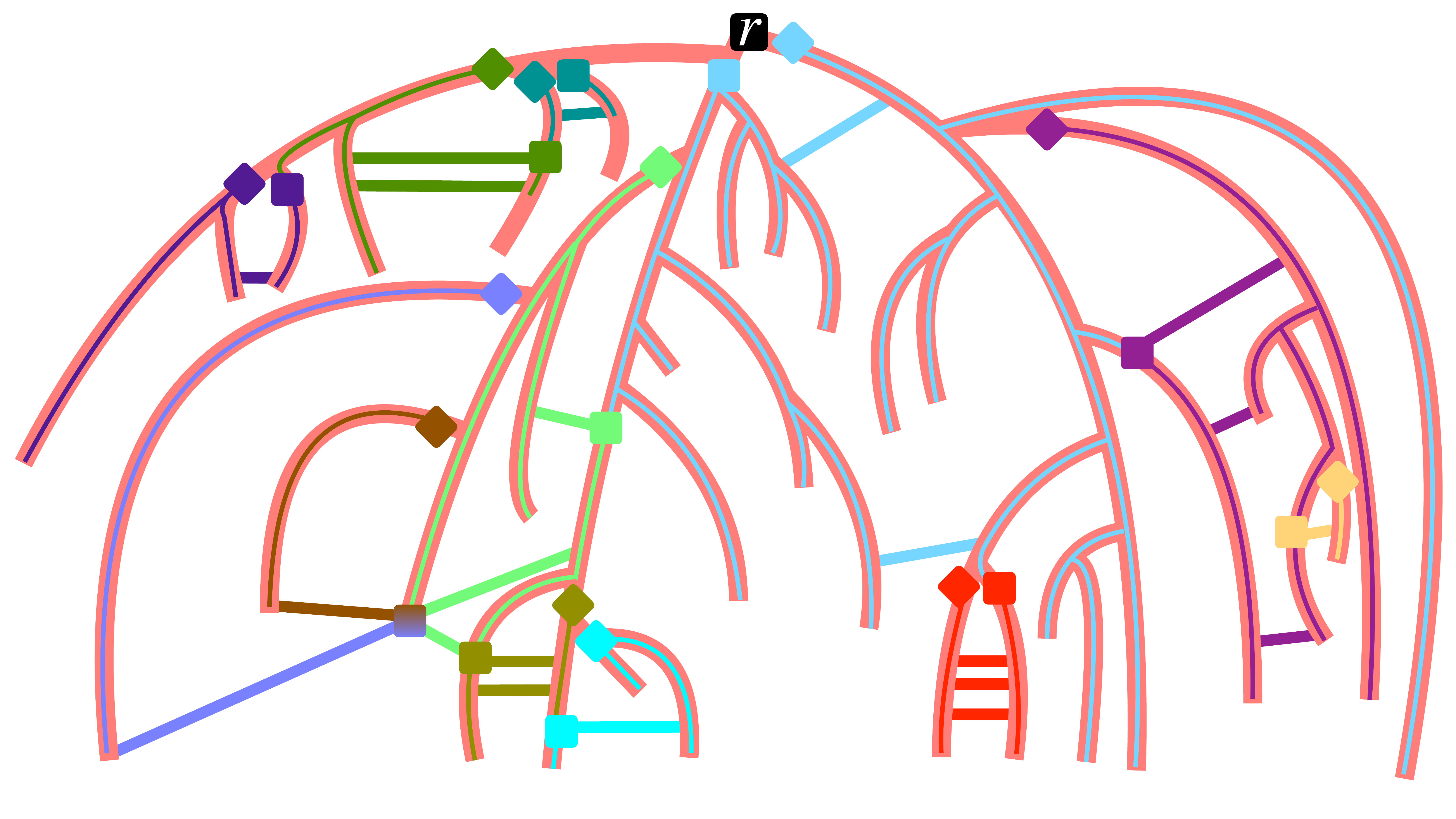}
        \caption{Adding dangling trees.}\label{sfig:dangTrees}
    \end{subfigure}
    \hfill
    \begin{subfigure}[b]{0.49\textwidth}
        \centering
        \includegraphics[width=\textwidth,trim=0mm 0mm 0mm 0mm, clip]{./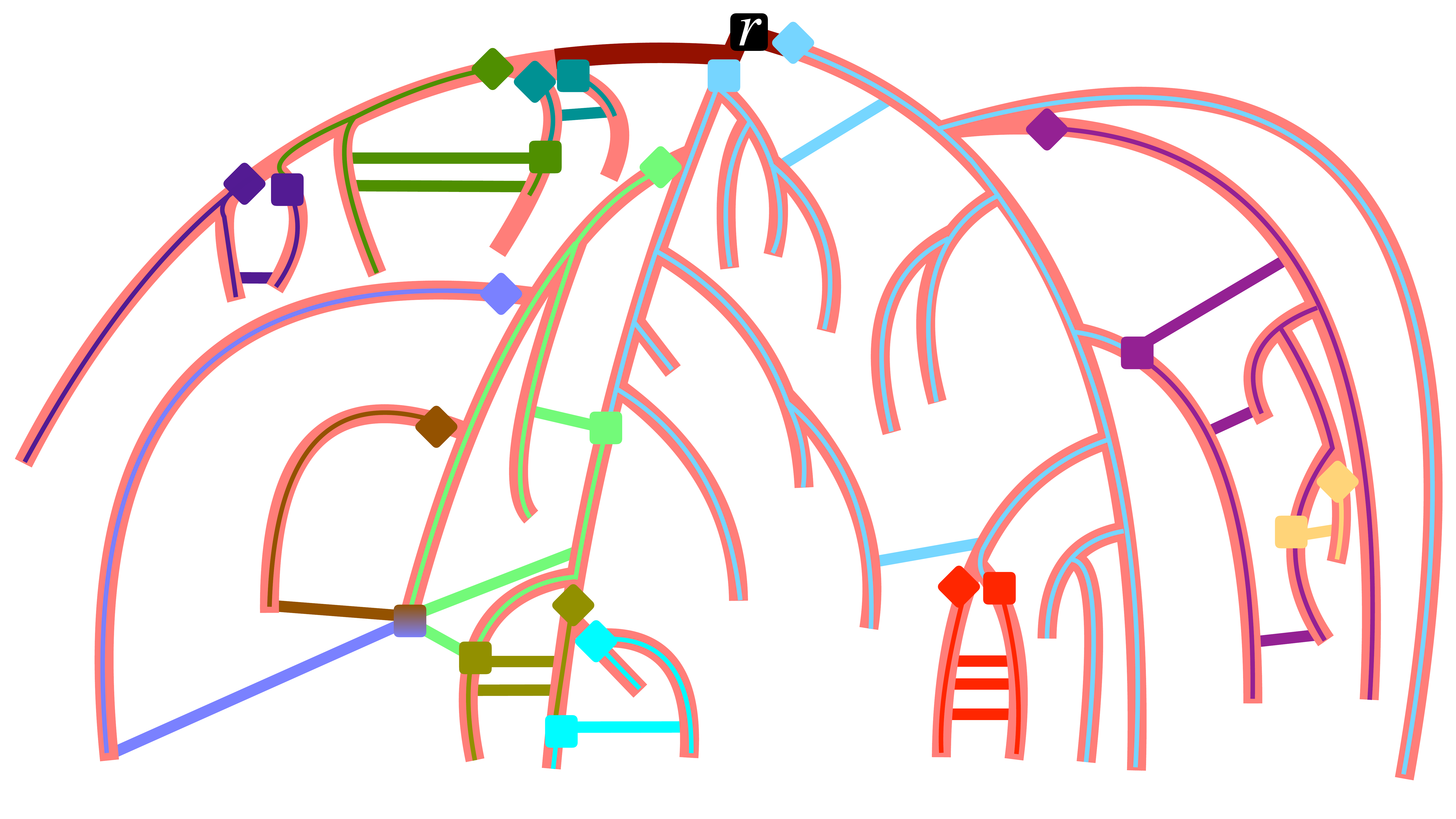}
        \caption{Final hammock decomposition.}\label{sfig:finalHD}
    \end{subfigure}
    \caption{The result of adding the dangling trees to each of our hammocks and the final hammock decomposition. On the right we give $T_0$ in dark red.}
\end{figure}

We proceed to show that this indeed results in a hammock decomposition (\Cref{dfn:PCHD}). We begin by observing that $\mcH$ is an appropriate forest of hammocks. 

\begin{lemma}\label{lem:finalFH}
    $\mcH$ is an lca-respecting rooted forest of hammocks with base tree $T_0$.
\end{lemma}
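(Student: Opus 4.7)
The plan is to build on \Cref{lem:tildeFor}, which already establishes that $\tilde{\mcH}$ is an lca-respecting rooted forest of hammocks (with each root hammock $\tilde{H}_k$ satisfying the required lca property in $\TBFS$). What remains to verify is that extending each $\tilde{H}_i$ by the dangling $\TBFS$-subtrees it is assigned preserves all of these properties and additionally fits the base tree condition with $T_0$. I would organize the argument around four small claims: (i) each extended $H_i$ is still a hammock, (ii) the edges of $\mcH$ remain pairwise disjoint, (iii) the cycle/forest condition continues to hold, and (iv) the lca-respecting conditions (including the base tree requirements) are inherited.

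For (i), the key observation is that whenever a dangling tree $T$ is assigned to $\tilde{H}_i$, we have $\highV(T) \in V(\tilde{H}_i)$ and every other vertex of $T$ is a strict descendant of $\highV(T)$ in $\TBFS$. Thus if $\highV(T) \in V(\tilde{T}_i)$ (WLOG) then $V(T) \subseteq \TBFS(\tilde{r}_i)$, and $\TBFS[V(\tilde{T}_i)\cup V(T)]$ remains a connected subtree of $\TBFS$ with the same root $\tilde{r}_i$. Taking $V_1$ and $V_2$ to be $V(\tilde{T}_i)$ and $V(\tilde{T}_i')$ each augmented by their assigned dangling trees gives a partition satisfying \Cref{dfn:PCHD}'s hammock conditions: the two subtrees are connected, the cross edges of $C_i \subseteq E(\tilde{H}_i)$ still lie between them, and the hammock roots are unchanged and therefore still unrelated.

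For (ii) and (iii), I would argue edge-disjointness as follows. The edges newly added to $H_i$ are tree edges in $\TBFS \setminus (E_p \cup E(G[\tilde{\mcH}]))$, so they are disjoint from every edge already in $\tilde{\mcH}$; and since distinct dangling components of $\TBFS \setminus (E_p \cup E(G[\tilde{\mcH}]))$ are by definition vertex-disjoint, tree edges added to different $H_i$'s are also distinct. No new cross edges are introduced by this step since each added subtree consists only of $\TBFS$-edges. For the cycle condition, any simple cycle in $G[\mcH]$ must use at least one cross edge, and the distribution of cross edges among hammocks is exactly the one in $\tilde{\mcH}$; attaching an acyclic subtree at a single vertex of an existing hammock neither creates new cycles nor alters the set of hammocks intersected by an existing cycle, so the cycle condition transfers from \Cref{lem:tildeFor}.

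For (iv), the roots $r_i,r_i'$ of each $H_i$ coincide with $\tilde{r}_i,\tilde{r}_i'$ because dangling trees only hang below existing vertices, so the lca $\lca(r_i, r_i')$ is the same as for $\tilde{H}_i$. Parent--child relationships in $\mcH$ match those in $\tilde{\mcH}$ (the unique shared articulation vertex from \Cref{lem:rootedHDForest} is not altered by the extension), so the two conditions in \Cref{dfn:lcaRespecting} for non-root hammocks carry over verbatim from \Cref{lem:tildeFor}. For root hammocks $H_k$, the construction explicitly requires $r_k \in V(T_0)$ and inherits from \Cref{lem:tildeFor} that $\lca(r_k, r_k')$ is the parent of both $r_k$ and $r_k'$; the fact that $\lca(r_k,r_k') \in V(T_0)$ then follows because the parent edge of $r_k'$ lies in $E_p$ and the parent edge of $r_k$ lies outside $E(G[\tilde{\mcH}])$ and outside $E_p$ (so it is in $T_0$, which contains $r$ and extends down through such edges).

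The main obstacle I expect is point (ii)/(iii): verifying that the extension does not break edge-disjointness or the cycle condition, given that hammocks are defined as induced subgraphs and the dangling subtrees may share boundary vertices with multiple hammocks in $\tilde{\mcH}$. The delicate step is checking that no cross edge of $G$ ``reappears'' as an induced edge in a second hammock when we enlarge vertex sets; I plan to handle this by using that every cross edge lies in a unique lca-equivalence class $C_k$, is already placed in $\hat{H}_k \subseteq \tilde{H}_k$, and cannot acquire both endpoints in some $V(H_i)$ with $i \neq k$ without producing a cycle that violates the hammock forest structure already established for $\tilde{\mcH}$.
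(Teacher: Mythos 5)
Your proposal mirrors the paper's overall strategy---inherit from \Cref{lem:tildeFor} and argue that adding dangling trees preserves each required property---but two of your steps have genuine gaps.

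For (ii)/(iii), the assertion that each dangling tree is ``attached at a single vertex of an existing hammock'' is precisely what needs to be proved, not a fact you can take for granted. A dangling tree $T$ is a connected component of $\TBFS \setminus (E_p \cup E(G[\tilde{\mcH}]))$, and nothing a priori prevents an interior vertex of $T$ from being the root of some other hammock $\tilde H_j$ or from being incident to an $E_p$ edge. If $T$ touches both $\tilde H_i$ (at $\highV(T)$) and some $\tilde H_j$ with $j\neq i$ at a second vertex, then after $T$ is assigned to $\tilde H_i$ a cycle meeting two hammocks could arise through $T$. Similarly, since a hammock is required to be an \emph{induced} subgraph, $G[V(\tilde H_i)\cup V(T)]$ could in principle acquire a cross edge from some $C_k\subseteq \tilde H_k$ with $k\neq i$, violating edge-disjointness. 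You flag this as ``the delicate step'' but only sketch a plan. The paper resolves it concretely: it first shows, using the lca-respecting structure of $\tilde{\mcH}$ and the root-hammock clause of \Cref{lem:tildeFor}, that for any dangling-tree edge $\{u,v\}$ with $v\prec u$ there is no cross edge from $\TBFS(v)$ leaving $\TBFS(u)$; it then deduces that any cycle touching a dangling-tree edge would have both incident cycle edges at the edge's top vertex be child edges, which forces one of those two edges into $E_p$---impossible since $E_p$ is disjoint from $E(\mcH)$ by the partition in \Cref{dfn:PCHD}. You need an argument at this level of concreteness, not a promissory note.

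For (iv), the claim that ``parent--child relationships in $\mcH$ match those in $\tilde{\mcH}$'' is false. The construction designates $H_k$ a root of $\mcH$ only when $\tilde H_k$ was a root of $\tilde{\mcH}$ \emph{and} $r_k\in V(T_0)$. A root hammock $\tilde H_j$ of $\tilde{\mcH}$ whose $\tilde r_j$ lands in a dangling tree $T$ (rather than in $T_0$) loses its root status: $T$ is assigned to some $H_i$, and $H_j$ becomes a child of $H_i$ in $\mcH$, a parent--child pair that did not exist in $\tilde{\mcH}$. These new pairs are exactly the ones for which the lca-respecting conditions do not ``carry over verbatim'' from \Cref{lem:tildeFor}'s non-root case. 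The paper treats this case explicitly by noting that such an $\tilde H_j$ must have been a root in $\tilde{\mcH}$ and then invoking the root-hammock guarantees of \Cref{lem:tildeFor} (that $\lca(\tilde r_j,\tilde r_j')$ is the parent of both $\tilde r_j$ and $\tilde r_j'$) to verify conditions (1) and (2) of \Cref{dfn:lcaRespecting} for the new pair $(H_i,H_j)$. Your argument skips these new pairs entirely.
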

\begin{proof}
    We first observe that each $H_i$ is indeed a hammock: since we only add a dangling tree $T$ to $\tilde{H}_i$ if $\highV(T)$ is contained in $\tilde{H}_i$ and so the roots of $H_i$ are unrelated. Moreover all the $H_i$ must be edge disjoint by the edge disjointness of the $\tilde{\mcH}$ and how we construct our $H_i$.
    
    Next, we will argue the cycle property of forests of hammocks; namely that any cycle's edges are fully contained within a single hammock, thereby showing that $\mcH$ is a forest of hammocks.
    
    First, we observe that if $\{u, v\}$ is an edge in a dangling tree $T$ where $v \prec u$ then there does not exist a cross edge $e_c = \{u_c, v_c\} \in E_c$ such that $u_c \in \TBFS(v)$ but $v_c \not \in \TBFS(u)$. To see why this is the case suppose that such an $e_c$ existed and belonged to hammock $\tilde{H}_i \in \tilde{\mcH}$ and recall that by \Cref{lem:tildeFor} $\tilde{\mcH}$ is an lca-respecting rooted forest of hammocks. By definition of $e_c$ we know that $ u \prec \lca(\tilde{r}_i, \tilde{r}_i')$ which is to say that $\lca(\tilde{r}_i, \tilde{r}_i')$ lies strictly higher in $\TBFS$ than $u$. But, since $\tilde{\mcH}$ is lca-respecting and $\lca(\tilde{r}_k, \tilde{r}_k')$ is the parent of $\tilde{r}_k'$ for any root hammock $\tilde{H}_k$ by \Cref{lem:tildeFor}, it follows that $\TBFS(u_c, \lca(\tilde{r}_i, \tilde{r}_i')) \setminus \lca(\tilde{r}_i, \tilde{r}_i')$ is contained in $\tilde{H}$ and so $\{u, v\}$ is contained in $\tilde{H}$, contradicting our assumption that it is in a dangling tree.
    
    It follows that if a cycle uses two edges $\{w, u\}$ and $\{u, v\}$ and $\{u, v\}$ lies in a dangling tree then it must be the case that both edges are child edges of $u$, i.e. $w,v \prec u$.
    
    On the other hand, notice that if we have a cycle which uses $\{w, u\}$ and $\{u, v\}$ then there must be some cross edge between $\TBFS(w)$ and $\TBFS(u)$. Such a cross edge will belong to some hammock $\tilde{H}_i$ for which either $w$ or $v$ is $\tilde{r}_i'$ and so either $\{w, u\}$ or $\{u, v\}$ will be in $E_p$, contradicting the fact that our cycle cannot use any edges of $E_p$. Thus, no cycle uses an edge of a dangling tree and so $\mcH$ is a forest of hammocks.
    
    Next, observe that by construction every connected component of $G[\mcH]$ will contain exactly one hammock $\tilde{H}_k$ of $\tilde{\mcH}$ which was designated a root in $\tilde{\mcH}$ where $\tilde{r}_k = r_k \in T_0$ and so $\mcH$ is a \emph{rooted} forest of hammocks (i.e.\ it assigns exactly one root per connected component of $G[\mcH]$).
    
    It remains to argue that $\mcH$ is lca-respecting with base tree $T_0$. In particular, we must show if $H_j$ is the child of $H_i$ then $V(H_i) \cap V(H_j) = \{r_j\}$ and the parent of $r_j'$ in $\TBFS$ is $\lca(r_j, r_j')$ and $\lca(r_j, r_j') \in V(H_i)$ or $\lca(r_j, r_j') = \lca(r_i, r_i')$. Notice that if $\tilde{H}_j$ is the child of $\tilde{H}_i$ in $\tilde{\mcH}$ then we know that $H_j$ will be the child of $H_i$ in $\mcH$ and so this is immediate from \Cref{lem:tildeFor}. On the other hand, if $H_j$ is the child of $H_i$ but $\tilde{H}_j$ was not the child of $\tilde{H}_i$ then it must be the case that there is some dangling tree $T$ which connects $\tilde{r}_j$ to $\tilde{H}_i$ which was added to $\tilde{H}_i$. It follows that $\tilde{H}_j$ was a root hammock in $\tilde{\mcH}$ and so by \Cref{lem:tildeFor} we know that $V(H_i) \cap V(H_j) = \{r_j\}$ and the parent of $r_j' = \tilde{r}_j'$ in $\TBFS$ is $\lca(r_j, r_j')$ and $\lca(r_j, r_j') \in V(H_i)$. Additionally we must show that for each root hammock $H_k$ we have $r_k \in V(T_0)$ and the parent of $r_k$ and $r_k'$ in $\TBFS$ is $\lca(r_k, r_k')$ and $\lca(r_k, r_k') \in V(T_0)$. The former is immediate by how we choose our roots and the latter follows from \Cref{lem:tildeFor}.
\end{proof}

%
%
%
%
%
%
%
%
%

Concluding, we have our hammock decomposition.
\begin{lemma}
    $(T_0, \mcH, E_p)$ is a hammock decomposition which can be computed in deterministic poly-time.
\end{lemma}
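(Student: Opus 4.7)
The plan is to verify each clause of \Cref{dfn:PCHD} for the triple $(T_0, \mcH, E_p)$, leveraging the lemmas established throughout the construction, and then argue that every step is implementable in deterministic polynomial time.

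First I would check the structural conditions. By construction $T_0$ is defined as the connected component of $\TBFS \setminus (E_p \cup E(G[\tilde{\mcH}]))$ containing $r$, so it is immediate that $T_0 \subseteq \TBFS$ is a subtree containing $r$. The second condition---that $\mcH$ is an lca-respecting rooted forest of hammocks with base tree $T_0$---is exactly the content of \Cref{lem:finalFH}. For the containment of all shortest cross edge paths, note that by construction $\bar{\mcH}$'s induced subgraph is a subset of $G[\mcH]$ (we only grew each $\bar{H}_i$ into $\tilde{H}_i$ and then into $H_i$ by adding vertices), so the desired property is inherited from \Cref{lem:rootedHDForest}. Finally, $E_p$ is defined to be exactly the set of parent edges of each $r_i' = \tilde{r}_i'$ in $\TBFS$ by construction.

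The main verification left is the edge partition $E = E(T_0) \sqcup E(\mcH) \sqcup E_p$. First I would show the three sets are pairwise disjoint: $E_p$ consists of cross-tree-to-hammock edges which by construction are never added to any $\tilde{H}_i$ (they are precisely the edges excluded when constructing $T_0$ and the dangling trees), so $E_p \cap E(\mcH) = \emptyset$ and $E_p \cap E(T_0) = \emptyset$; meanwhile $E(T_0) \cap E(\mcH) = \emptyset$ because $T_0$ is the $r$-containing component of $\TBFS \setminus (E_p \cup E(G[\tilde{\mcH}]))$ and every dangling tree (all other components) is absorbed into some hammock during the final step. For coverage, every $e \in E_c$ lies in some lca-equivalence class $C_i$ and hence in $\hat{H}_i \subseteq H_i$, while every $e \in E(\TBFS)$ either lies in $E(G[\tilde{\mcH}]) \subseteq E(\mcH)$, or lies in $E_p$, or belongs to a connected component of $\TBFS \setminus (E_p \cup E(G[\tilde{\mcH}]))$ which is either $T_0$ (giving $e \in E(T_0)$) or a dangling tree that is assigned to some hammock (giving $e \in E(\mcH)$). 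The main subtlety I would be careful about is confirming that the dangling-tree assignment rule is well-defined---i.e., every dangling tree $T \neq T_0$ has $\highV(T)$ contained in some hammock of $\tilde{\mcH}$---but this is immediate because the only way $T$ can be separated from $T_0$ in $\TBFS \setminus (E_p \cup E(G[\tilde{\mcH}]))$ is for an ancestor edge of $\highV(T)$ to be in $E_p \cup E(G[\tilde{\mcH}])$, and if that edge is in $E_p$ then $\highV(T) = \tilde{r}_i'$ for some $i$, while if it is in $E(G[\tilde{\mcH}])$ then $\highV(T)$ lies in that hammock.

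For deterministic polynomial time, I would observe that each stage of the construction is clearly implementable in polynomial time: computing $\TBFS$ and all lca values takes $O(|V| + |E|)$ preprocessing, partitioning $E_c$ into lca-equivalence classes $\{C_i\}$ can be done by grouping edges by $\lca(e)$ and then refining within each group by checking the descendant relationship of endpoints; constructing each $\hat{H}_i$ by taking the union of tree paths between endpoints of edges in $C_i$ is straightforward; enumerating hammock-joining paths and their connected components in $\mcPHJ \setminus \hat{H}$ is a standard graph traversal; the valid assignment $\bar{\pi}$ selecting the always-ancestor hammock guaranteed by \Cref{lem:someParent} can be computed explicitly by, for each component $T$, taking $\pi(T)$ to be the $i \in I(T)$ maximizing $h(C_i)$ (breaking ties consistently); the lca-path extension and dangling-tree absorption are both simple tree traversals. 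Combining these steps gives a deterministic polynomial-time algorithm, completing the proof. The expected main obstacle is simply the careful bookkeeping of the edge partition, but no new ideas beyond previously established lemmas are needed.
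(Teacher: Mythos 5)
Your proposal follows the same route as the paper's own proof: verify the three clauses of \Cref{dfn:PCHD} using \Cref{lem:finalFH} and \Cref{lem:rootedHDForest}, verify the edge partition, and then argue each stage of the construction runs in deterministic polynomial time. The paper is terser (dispatching the edge-partition claim with ``by construction'' where you spell out disjointness and coverage, which is a reasonable elaboration), but there is no conceptual difference.

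The one place you genuinely deviate is the suggested implementation of $\bar{\pi}$, and that shortcut is not obviously correct. You propose, for each component $T$ of $\mcPHJ\setminus\hat{H}$, to set $\pi(T)$ to the $i\in I(T)$ maximizing $h(C_i)$, ``breaking ties consistently.'' Two things would need to be checked and you do not check them: (a) the resulting $\pi$ is a \emph{valid} assignment, i.e.\ $l(C_i)\notin T$ for the chosen $i$ --- this does not follow trivially from height-maximality; and (b) when two hammocks in $I(T)$ share the same lca height (which can happen, since distinct lca-equivalence classes can share an lca), an ``arbitrary consistent'' tie-break might pick the wrong one of the two, i.e.\ not the always-ancestor guaranteed by \Cref{lem:someParent}. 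The always-ancestor property of $\bar{\pi}$ is used later (e.g.\ inside the proof of \Cref{lem:leftChild}, which argues that any component of $\mcPHJ\setminus\hat{H}$ incident to $\hat{H}_{i'}$ is always assigned to $i'$), so assigning $T$ to the wrong tied hammock would break downstream arguments. The paper avoids this by first computing \emph{any} valid assignment (\Cref{lem:atLeastOneVal}), building the resulting forest of hammocks, and then reading off for each $T$ the unique hammock in $I(T)$ that is an ancestor of all others (\Cref{lem:someParent} guarantees that this hammock is the same regardless of which valid assignment was used, so the output is well-defined). That is the safe way to compute $\bar{\pi}$ in polynomial time, and you should replace your max-height heuristic with it.
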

\begin{proof}
    First, notice that $(T_0, \mcH, E_p)$ indeed partitions all edges by construction. Moreover, by \Cref{lem:finalFH} we know that $\mcH$ is an lca-respecting rooted forest of hammocks with base tree $T_0$. $G[\mcH]$ contains all shortest cross edge paths since $G[\bar{\mcH}]$ contains all shortest cross edge paths by \Cref{lem:rootedHDForest} and $G[\bar{\mcH}]$ is a subgraph of $G[\mcH]$.
    
    Lastly, we note that the above hammock decomposition is easily computable in deterministic poly-time. In particular $\hat{\mcH}$ is trivial to compute, $\bar{\mcH}$ can be computed by applying one valid assignment, seeing which is the hammock guaranteed to exist by \Cref{lem:someParent} in $I(T)$ for each $T$ in $\mcPHJ \setminus \hat{H}$ and then assigning $T$ to this hammock. $\tilde{\mcH}$ is trivially computable from $\bar{\mcH}$ and $\mcH$ is trivially computable from $\tilde{\mcH}$.
\end{proof}
The above lemma immediately gives \Cref{thm:hamDecPC}.

\section{Scattering Chops via Hammock Decompositions}\label{sec:scatChopsWithHDs}
In this section, we prove \Cref{thm:main}. In particular, we show that every series-parallel graph is $O(1)$-scatter-choppable (\Cref{dfn:scatChop}) which by \Cref{lem:scatterChop} demonstrates that every series-parallel graph is $O(1)$-scatterable (in polynomial time). Combining this fact with \Cref{thm:SPRReduction} will give our SPR solution. We will demonstrate that every series-parallel graph is $O(1)$-scatter-choppable by using our hammock decompositions and, in particular, \Cref{thm:hamDecPC}.

The concrete lemma we will prove in this section is as follows. Clearly, the lemma below combined with \Cref{lem:scatterChop} proves \Cref{thm:main}.

\begin{lemma}\label{lem:sec9-scattering}
Consider a graph $G$ with a hammock decomposition $\{ H_j \}$. Then, there exists a $c$-fuzzy, $O(1)$-scattering, $\Delta$-chop of $G$ with respect ot an arbitrary root $r\in V$.
\end{lemma}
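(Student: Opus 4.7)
The plan is to construct the required chop by starting from the unperturbed BFS-based $\Delta$-chop with respect to $r$ and then locally perturbing its boundaries using the hammock decomposition so that every shortest path of length at most $\Delta$ is cut $O(1)$ times. I assume the hammock decomposition $(T_0, \{H_j\}, E_p)$ is the one with respect to $r$ and some BFS tree $\TBFS$ rooted at $r$, as guaranteed by \Cref{thm:hamDecPC}.

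First, I would form the baseline chop $\mcA^{(0)} = \{A_i^{(0)}\}_i$ with $A_i^{(0)} := \{v : (i-1)\Delta \le d_G(r,v) < i\Delta\}$. This is already a $0$-fuzzy $\Delta$-chop. The key observation is that any shortest path $P$ in $G$ decomposes as $P = P_\uparrow \oplus P_m \oplus P_\downarrow$, where $P_\uparrow$ is a monotone BFS prefix in $\TBFS$, $P_\downarrow$ is a monotone BFS suffix in $\TBFS$, and $P_m$ (possibly empty) is a shortest cross edge path and thus is contained in $G[\mcH]$ by \Cref{dfn:PCHD}. Since $P_\uparrow$ and $P_\downarrow$ are monotone of length at most $|P| \le \Delta$, each is cut at most once by $\mcA^{(0)}$, so only $P_m$ can incur many cuts, and we focus our perturbation on $G[\mcH]$.

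Second, I would perturb $\mcA^{(0)}$ hammock-by-hammock, proceeding top-down along the tree-of-hammocks ancestry. For each hammock $H_j$ with hammock roots $r_j, r_j'$ and lca $\lca(r_j, r_j')$, let $i_j := \lceil d_G(r, \lca(r_j, r_j'))/\Delta \rceil$ be its reference annulus index, and reassign every $u \in V(H_j)$ whose BFS-depth lies within $c\Delta/2$ of $i_j\Delta$ to annulus $A_{i_j}$. Intuitively, this collapses the part of $H_j$ near a depth-$i_j\Delta$ boundary into a single fuzzy annulus, so that cross edges zig-zagging across that boundary inside $H_j$ no longer contribute cuts. The lca-respecting property (\Cref{dfn:lcaRespecting}) guarantees that if $H_{j'}$ is a descendant of $H_j$ in $\mcH$ then $\lca(r_{j'}, r_{j'}')$ is a descendant of $\lca(r_j, r_j')$ in $\TBFS$, so the perturbations applied in nested hammocks agree on the articulation vertex $r_{j'}$ shared by $H_j$ and $H_{j'}$ per \Cref{lem:artPoint}, and do not conflict.

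Third, I would verify the three required properties. Fuzziness is immediate since every reassignment moves $u$ by at most $c\Delta/2$ in BFS-depth. For scattering, I handle $P_\uparrow, P_m, P_\downarrow$ separately. The monotone pieces remain cut only $O(1)$ times because by \Cref{lem:pathStruct} each enters/leaves any single hammock through articulation vertices, adding at most a constant per boundary crossing. For $P_m$, \Cref{lem:pathStruct} tells us $P_m$ visits a sequence $H_{j_1}, \ldots, H_{j_t}$ of hammocks along a tree-of-hammocks path, and inside each $H_{j_k}$ the perturbation ensures that all of $P_m \cap V(H_{j_k})$ lying within $c\Delta/2$ of depth $i_{j_k}\Delta$ is merged into one annulus, leaving $O(1)$ cuts per hammock visit; a potential argument (since consecutive hammock transitions must contribute $\Omega(c\Delta)$ to the length of $P_m \le \Delta$) caps $t = O(1/c)$, giving $O(1)$ cuts overall.

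The main obstacle is the tension between perturbing boundaries enough to absorb the zig-zagging behavior of cross edge paths, while not introducing new cuts along BFS paths or violating fuzziness. The lca-respecting forest-of-hammocks structure is precisely the tool that resolves this tension: because the hammock ancestry mirrors the $\TBFS$ lca-ancestry, perturbations in nested hammocks stack in a consistent radial pattern rather than scrambling the level sets of $d_G(r,\cdot)$. Selecting a constant $c < 1$ small enough that the $\Omega(c\Delta)$ gap between successive perturbation bands dominates, while still making the fuzziness slack large enough to absorb all the reassignments, will be the main constant-tuning step.
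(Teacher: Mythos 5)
Your high-level strategy---start from the unperturbed $\Delta$-chop, decompose a shortest path into a monotone BFS prefix and suffix plus a cross-edge middle, and perturb boundaries hammock-by-hammock so that the cross-edge portion incurs $O(1)$ cuts---matches the paper's plan, and your observation that the lca-respecting structure keeps perturbations consistent at shared articulation vertices is the right intuition. But the scattering argument for the cross-edge part $P_m$ has two genuine gaps.

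First, your perturbation rule merges vertices of $H_j$ only near the single depth $i_j\Delta$ determined by $\lca(r_j,r_j')$, but a hammock can span many annuli boundaries, and the cross edges inside $H_j$ (hence the zig-zags they enable) can lie at depths far from $i_j\Delta$. Your rule does nothing at those other boundaries; the paper's step~2(a) instead ranges over \emph{every} $A_i$ that $H_j$ meets, deciding to push vertices of $H_j\cap A_i$ up or down according to whether the hammock root $r_j$ is within $c\Delta$ of the boundary at depth $i\Delta$. Second, the claim that ``consecutive hammock transitions must contribute $\Omega(c\Delta)$ to the length of $P_m$'' is false: transitions occur at articulation vertices, and two consecutive articulation vertices on $P_m$ can be arbitrarily close, so the number $t$ of hammocks visited is not bounded by a length argument. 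The paper never bounds $t$; instead its \Cref{lem:cross edge path} shows, by a case analysis on the perturbation rule, that after the first cut edge on a suitably short piece of $P_m$, all further cut edges must lie in a single fixed hammock, and then its \Cref{lem:single-hammock} (a shortest path contained in one hammock uses at most two cross edges, proven by a clawed-cycle argument) caps the cuts there. Your proposal neither proves nor invokes anything playing the role of \Cref{lem:single-hammock}, and also skips the paper's preliminary partitioning of the shortest path into $O(1)$ subpaths of length less than $c\Delta$, which is what makes the length bounds in those two lemmas go through.
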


\begin{figure}\centering
\includegraphics[scale=.8]{./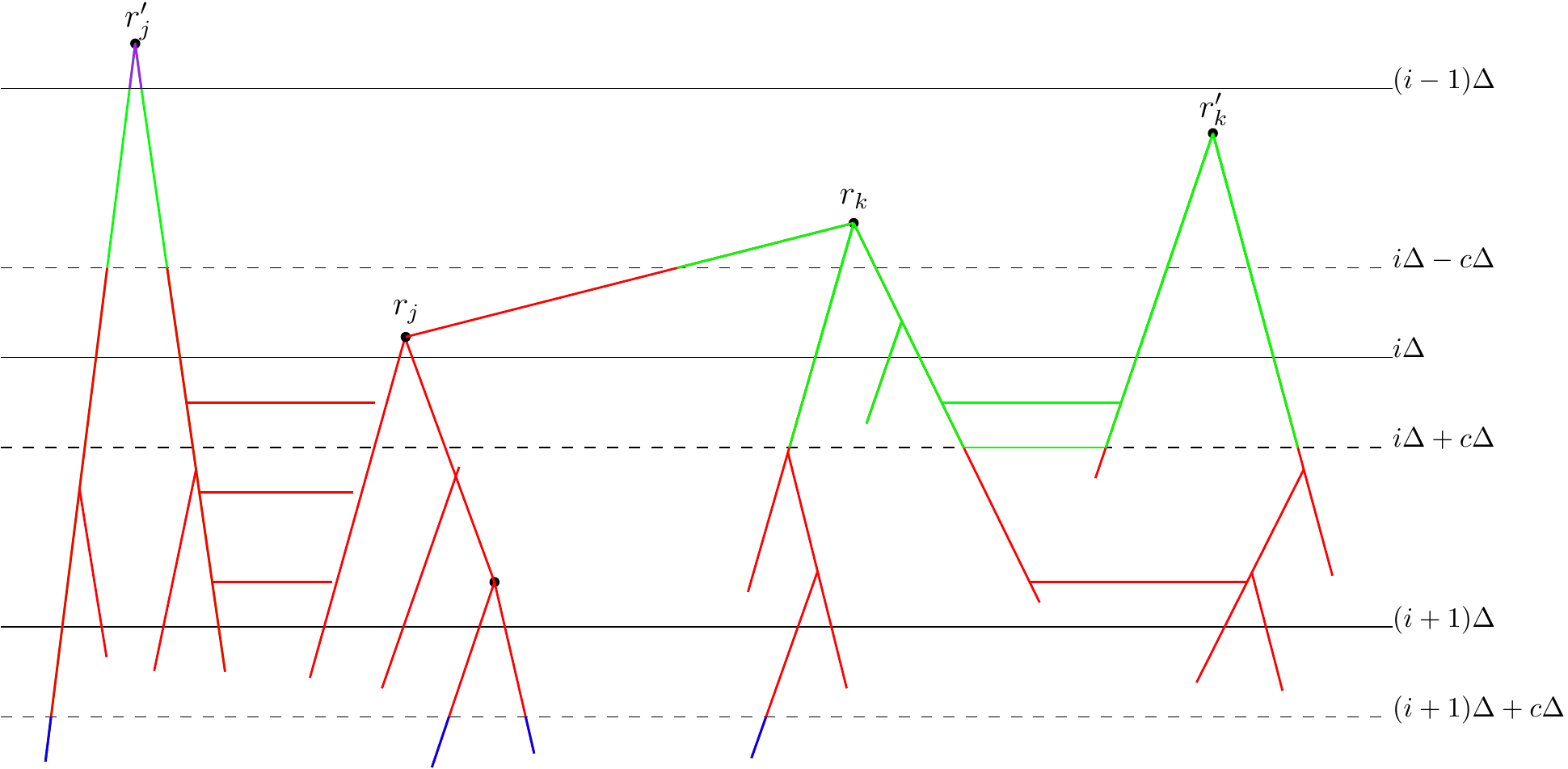}
\caption{Construction of a $c$-fuzzy, $O(1)$-scattering, $\Delta$-chop. There are two hammocks in the picture, $H_j$ and $H_k$.}
\label{fig:chops1}
\end{figure}

Our $c$-fuzzy, $O(1)$-scattering, $\Delta$-chop construction is as follows; see also Figure~\ref{fig:chops1}.

\begin{enumerate}
\item Compute the $\Delta$-chops $A_i=\{v\in V:(i-1)\Delta\le d(r,v)<i\Delta\}$, and initialize $A'_i=A_i$ for all $i$
\item For each hammock $H_j$, independently in parallel:
  \begin{enumerate}
  \item For each $A_i$ with $V(H_j)\cap A_i\ne\emptyset$, independently in parallel:
    \begin{enumerate}
    \item If $r_j\in A_i$ and $d(r,r_j)\ge i\Delta- c\Delta$, then move all of $\{v\in V(H_j')\cap A_i : i\Delta-c\Delta\le d(r,v)<i\Delta\}$ from $A'_i$ to $A'_{i+1}$
    \item If $d(r,r_j)<i\Delta-c\Delta$, then move all $\{v\in V(H_j')\cap A_i:d(r,v)\le (i-1)\Delta+c\Delta\}$ from $A'_i$ to $A'_{i-1}$
    \end{enumerate}
  \end{enumerate}
In other words, $\mathcal A=\{A_i\}$ are the original $\Delta$-chops and $\mathcal A'=\{A_i'\}$ are the sets after the modifications above. All steps can be performed independently in parallel because $c=1/3$ means that there is no interference between different $A_i$.
  \item For the base tree $T_0$, treat it as a subgraph $H_j$ with root $r_j=r$ (the root of $T_{BFS}$), and apply step~2(a) to it.
\end{enumerate}

We begin with the following auxiliary lemmas:
\begin{lemma}\label{lem:single-hammock}
Let $G$ be a series-parallel graph, and consider a (unique) shortest $u$--$v$ path in $G$ that is contained in hammock $H_i$. Then, $P$ contains at most $2$ cross edges in $H_i$.
\end{lemma}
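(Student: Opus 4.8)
The claim is that a shortest path $P$ contained in a single hammock $H_i$ uses at most two cross edges of $H_i$. Recall that a hammock consists of two $\TBFS$-subtrees $T_i$ and $T_i'$ together with all cross edges $E_c(T_i,T_i')$ between them, and that any two cross edges $e_1 = \{v,v'\}, e_2 = \{u,u'\}$ of the hammock span a unique hammock-fundamental cycle $T_i(v,u) \oplus e_1 \oplus T_i'(u',v') \oplus e_2$. The plan is to suppose for contradiction that $P$ uses at least three cross edges $e_1, e_2, e_3 \in E_c(T_i,T_i')$, in the order they appear along $P$, and derive that $P$ is not a shortest path by exhibiting a strictly shorter $u$--$v$ walk (hence path, after shortcutting) with the same endpoints.

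\textbf{Key steps.} First I would set up notation: write $P$ as alternating between maximal subpaths in $T_i$ and subpaths in $T_i'$, with the cross edges $e_1,\dots,e_k$ ($k\ge 3$) as the transitions; between consecutive cross edges $e_j$ and $e_{j+1}$, $P$ travels along a path inside one of the two trees, which (since it is a path in a tree) is exactly the unique $\TBFS$-path between the relevant endpoints. Second, I would focus on three consecutive cross edges $e_1, e_2, e_3$ and the portion $Q$ of $P$ from the first endpoint of $e_1$ to the second endpoint of $e_3$. The segment of $Q$ between $e_1$ and $e_2$ lies entirely in one tree, say $T_i$, and the segment between $e_2$ and $e_3$ lies entirely in the other, $T_i'$ — because $P$ must alternate trees at each cross edge. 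Third, I would use the fact that tree paths compose: the endpoints of $e_1$ and $e_3$ on the $T_i$ side are connected by a $\TBFS$-path of length $d_{T_i}$, and similarly on the $T_i'$ side; together with $e_1$ and $e_3$ this forms the hammock-fundamental cycle of $e_1$ and $e_3$. The shortcut is then: instead of going $e_1 \to (T_i\text{-segment}) \to e_2 \to (T_i'\text{-segment}) \to e_3$, go directly along the $T_i$-side tree path from $e_1$'s endpoint to $e_3$'s endpoint (or along the $T_i'$-side, whichever is needed to match up the parity of which tree $Q$ starts and ends in), skipping $e_2$ entirely. The main computation is to check that this replacement is strictly shorter: the replaced portion of $Q$ contains at least the two edges $e_2$ and the (nonempty, since the roots of a hammock are unrelated and so the two trees are genuinely separated) tree segments, while the replacement is a single tree path. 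Here I would invoke uniqueness of shortest paths (assumed WLOG in the preliminaries) and the triangle-inequality-style fact that within a tree the path between two vertices is the unique geodesic.

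\textbf{Main obstacle.} The subtle point — and the part I expect to require the most care — is handling the \emph{parity}: after removing $e_2$, the shortcut must still connect the correct endpoints of the retained prefix and suffix of $P$, so I must be careful about which side ($T_i$ or $T_i'$) the prefix ending just before $e_1$ sits on and which side the suffix starting just after $e_3$ sits on. Removing exactly one cross edge changes the tree-parity, so the right move may be to remove two of the three cross edges (say $e_2$ and one of $e_1,e_3$) and reroute through a single tree path, rather than removing just $e_2$. A clean way to organize this: if $P$ uses cross edges $e_1,\dots,e_k$ with $k\ge 3$, consider $e_1$ and $e_3$; the sub-walk of $P$ between the first endpoint of $e_1$ and the second endpoint of $e_3$ leaves and re-enters the same tree, so we can replace it by the direct $\TBFS$-path within that tree (which exists and is unique since both endpoints lie in the same hammock tree, a connected subtree of $\TBFS$), strictly decreasing length because the original sub-walk included $e_1, e_2, e_3$ plus nontrivial tree segments. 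This contradicts uniqueness/minimality of $P$, completing the proof. A secondary detail to verify is that the resulting object is still a valid $u$--$v$ path (no repeated vertices); this follows by the standard observation that any $u$--$v$ walk strictly shorter than the unique shortest path would, after removing cycles, still be strictly shorter, contradicting minimality — so it suffices to exhibit a shorter walk.
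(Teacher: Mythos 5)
Your proposal has a genuine gap at the step you yourself flag as "the main computation": the claim that replacing the sub-walk $e_1 \to (T_i\text{-segment}) \to e_2 \to (T_i'\text{-segment}) \to e_3$ by the direct $\TBFS$-path inside one hammock tree is strictly shorter. This is not justified and is in general false. The sub-walk you want to replace is itself a subpath of a shortest path in $G$, hence a $G$-geodesic between its endpoints; the tree path inside $T_i$ between those endpoints is a geodesic only \emph{within the tree}, and it can be arbitrarily longer than the $G$-geodesic (it may climb all the way to the two vertices' least common ancestor in $T_i$ and back down). Indeed, the whole reason a shortest path would use three cross edges in the first place is precisely that detouring through the other tree is cheaper than staying inside one tree, so your "shortcut" typically makes the path longer, not shorter. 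Relatedly, your argument never actually invokes the series-parallel hypothesis, which should be a warning sign.

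The paper's proof supplies the missing ingredient. Writing the three consecutive cross edges as $(u_1,v_1),(v_2,u_2),(u_3,v_3)$ with $u_\ell$ in one hammock tree and $v_\ell$ in the other, it first shows that the two conditions ``$\lca(v_1,v_2)$ is a strict descendant of $\lca(v_1,v_2,v_3)$'' and ``$\lca(u_2,u_3)$ is a strict descendant of $\lca(u_1,u_2,u_3)$'' cannot hold simultaneously, because together they yield a clawed cycle and hence a $K_4$ minor. So one of them fails, say the first, giving $\lca(v_1,v_2)=\lca(v_1,v_2,v_3)$. This places a common \emph{ancestor} of $v_3$ on the path $P$ itself, and the replacement segment used is the $\lca(v_1,v_2,v_3)$--$v_3$ path in $\TBFS$, which is \emph{monotone} (ancestor to descendant) and therefore genuinely a shortest path in $G$ by the BFS property. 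Only for such monotone tree paths is the ``cannot increase the length'' comparison valid, and then uniqueness of shortest paths gives the contradiction. Your outline would need this clawed-cycle step (or an equivalent use of $K_4$-minor-freeness) to identify a monotone shortcut; without it, the length comparison does not go through.
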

\begin{figure}\centering
\includegraphics[scale=.6]{./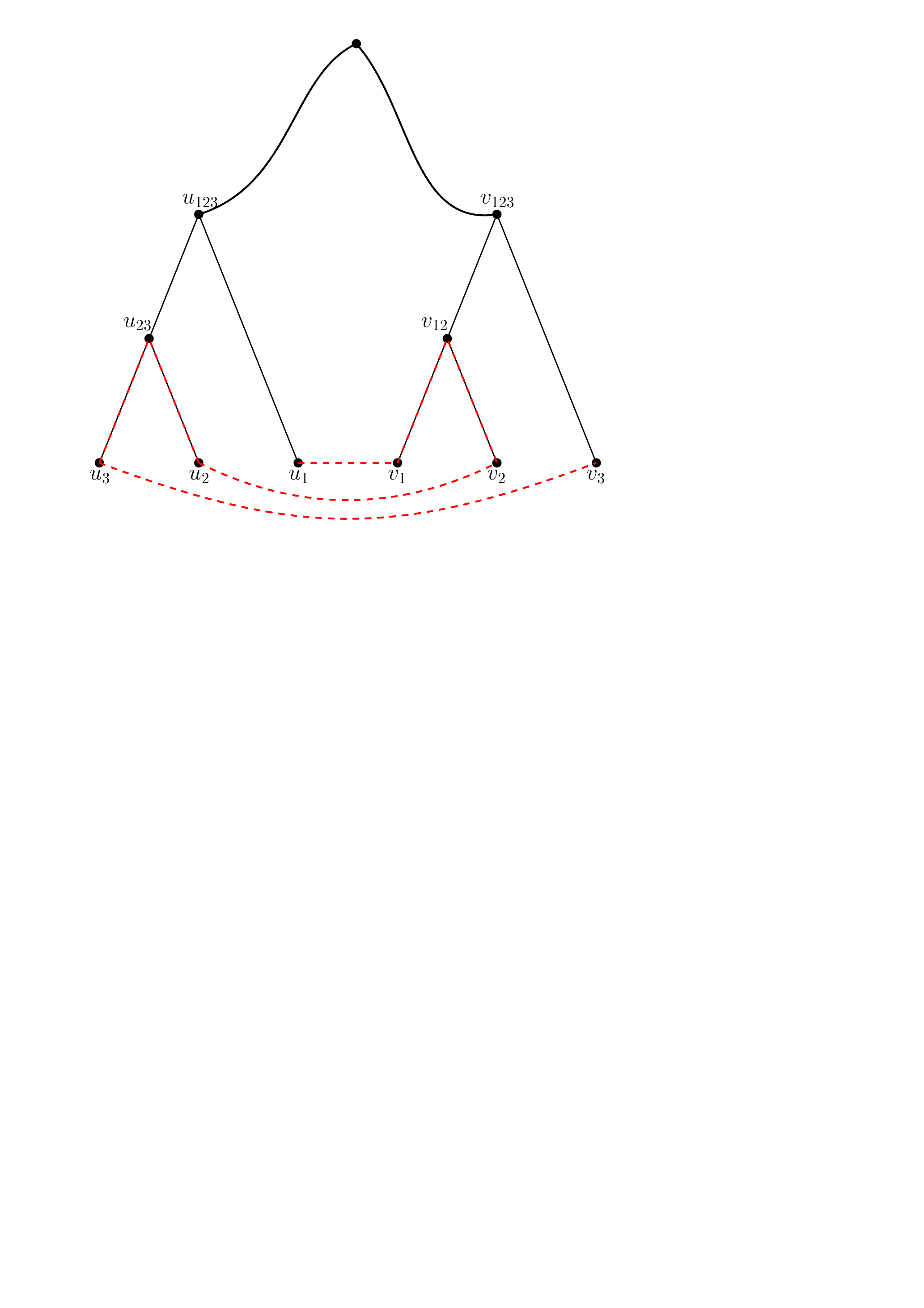} \quad
\includegraphics[scale=.6]{./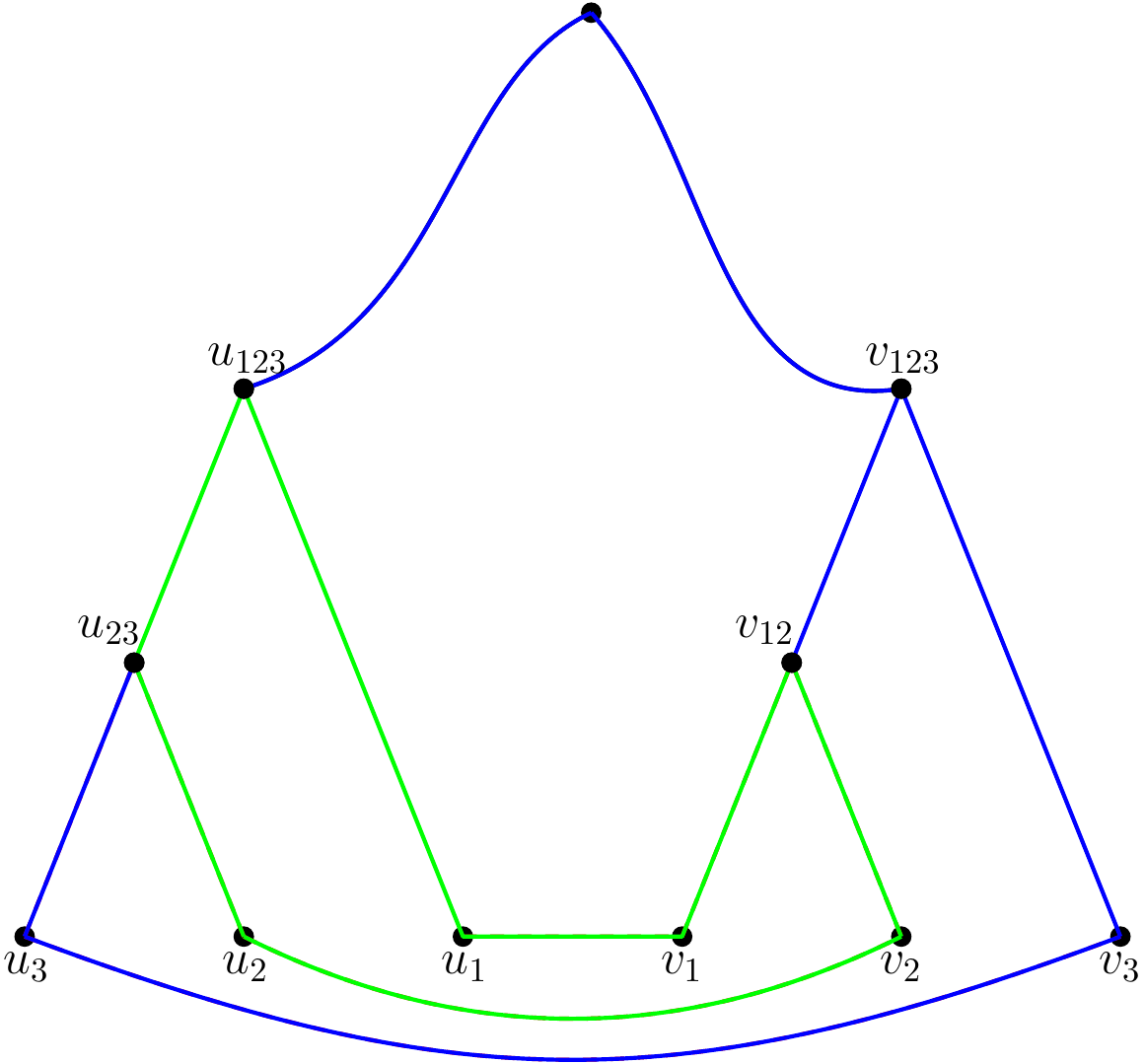}
\caption{Setting for the proof of \Cref{lem:single-hammock}. The path $P$ is the red, dotted path in the left. The green cycle and the blue claw on the right together form a clawed cycle.}
\label{fig:chops2}
\end{figure}
\begin{proof}
Assume for contradiction that $P$ contains at least three cross edges. Then, there exist $u_1,u_2,u_3$ on one tree of the hammock and $v_1,v_2,v_3$ on the other tree such that the path $P$ contains as a subpath the following edges/paths in order (see Figure~\ref{fig:chops2}): the edge $(u,v)$, the $v_1$--$v_2$ path in $T_{BFS}$, the edge $(v_2,u_2)$, the $u_2$--$u_3$ path in $T_{BFS}$, and the edge $(u_3,v_3)$. We first claim that it is impossible for the following two conditions to hold simultaneously, since they would imply that $G$ contains a $K_4$ minor:
 \begin{enumerate}
 \item $\lca(v_1,v_2)$ is a descendant of $\lca(v_1,v_2,v_3)$ (the least common ancestor of $v_1,v_2,v_3$), and
 \item $\lca(u_2,u_3)$ is a descendant of $\lca(u_1,u_2,u_3)$.
 \end{enumerate}
Suppose for contradiction that both of these conditions hold. For ease of notation, define $u_{23}=\lca(u_2,u_3),u_{123}=\lca(u_1,u_2,u_3),v_{12}=\lca(v_1,v_2),v_{123}=\lca(v_1,v_2,v_3)$. Then, we can find a clawed cycle as follows (see Figure~\ref{fig:chops2}). Take the cycle composed of edge $(u_1,v_1)$, the $v_1$--$v_2$ path in $T_{BFS}$, the edge $(v_2,u_2)$, and the $u_2$--$u_1$ path in $T_{BFS}$, which contains $u_{23}$ and $u_{123}$ as distinct vertices. Take the claw rooted at $v_{123}$ with following branches: the $v_{123}$--$u_{123}$ path in $T_{BFS}$, the $v_{123}$--$v_{12}$ path in $T_{BFS}$, and the path composed of the $v_{123}$--$v_3$ path in $T_{BFS}$, the edge $(v_3,u_3)$, and the $u_3$--$u_{23}$ path in $T_{BFS}$. This clawed cycle identifies a $K_4$, contradicting the assumption that $G$ is series-parallel.

Therefore, one of the two conditions is false. Suppose first that (1) is false, which means that $v_{12}=v_{123}$. Replace the segment of $P$ from $v_{12}$ to $v_3$ with the $v_{123}$--$v_3$ path in $T_{BFS}$, which is a shortest $v_{123}$--$v_3$ path, so the length of the path cannot increase after the replacement. It follows that the replacement path is also a shortest $u$--$v$ path, contradicting the assumption that the shortest path is unique.

If (2)\ is false, then we can similarly replace the segment of $P$ from $u_{23}$ to $u_3$ with the $u_{123}$--$u_3$ path in $T_{BFS}$,
\end{proof}

\begin{lemma}\label{lem:monotone}
For each monotone path $P\in T_{BFS}$ of length at most $c\Delta$, the path has $O(1)$ edges cut by $\mathcal A'$.
\end{lemma}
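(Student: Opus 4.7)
The plan is to use the hammock decomposition structure to control the perturbation in a tight local region around $P$. Let $P = (v_0, \dots, v_k)$ be a monotone path in $\TBFS$ with $|P| \leq c\Delta$. By monotonicity and unit weights, $d(r, v_\ell)$ is monotone in $\ell$, so the distances along $P$ lie in an interval $I$ of width at most $c\Delta$. Since $c = 1/3$, the interval $I$ contains at most one original $\Delta$-chop boundary $i\Delta$---so $\mathcal{A}$ itself cuts at most one edge of $P$---and $I$ overlaps at most one ``active zone'' $[i\Delta - c\Delta,\, i\Delta + c\Delta]$ around a single boundary, since consecutive active zones are separated by a gap of width $\Delta - 2c\Delta = c\Delta$. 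Hence only the perturbations triggered by annuli $A_i, A_{i+1}$ near this one boundary $i\Delta$ can possibly affect $P$.

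The key structural observation in my plan is that the vertex sets $\{V(T'_j)\}_j$ over all hammocks of the decomposition are pairwise disjoint. Indeed, if $v \in V(T'_j) \cap V(T'_k)$ for distinct hammocks, then by \Cref{lem:artPoint} $v$ is an articulation vertex between $H_j$ and $H_k$; by the lca-respecting property this articulation vertex equals the root $r_{\text{child}}$ of the child hammock, which by construction lies in $V(T_{\text{child}})$ rather than $V(T'_{\text{child}})$, a contradiction. Consequently each vertex of $V$ is affected by at most one move rule. Assigning each $v \in P$ a move type $m(v) \in \{-1, 0, +1\}$ (down/stay/up), an edge of $P$ is cut by $\mathcal{A}'$ iff $a(v) + m(v)$ differs across it; since $a$ changes at most once along $P$, it suffices to bound the number of transitions of $m$.

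The move-up set $S^+ = \{v : m(v) = +1\}$, restricted to the top of $A_i$, is the union over hammocks $j$ with $r_j$ near the top of $A_i$ of $V(T'_j) \cap \{v : d(r, v) \geq i\Delta - c\Delta\}$; the move-down set $S^-$ is analogous in the bottom of $A_{i+1}$. Because each $V(T'_j)$ meets the monotone path $P$ in a contiguous subpath and the $V(T'_j)$ are pairwise disjoint, both $S^+ \cap P$ and $S^- \cap P$ are already disjoint unions of subpaths. Hence the transitions of $m$ along $P$ correspond to the $O(1)$ boundaries of the active zone and of the distance ranges appearing in the rules, plus the boundaries of the constituent subpaths making up $S^+ \cap P$ and $S^- \cap P$.

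The main obstacle is bounding the number of subpaths in $S^+ \cap P$ (and in $S^- \cap P$) by $O(1)$. The plan is a clawed-cycle argument: if $S^+ \cap P$ consisted of $\omega(1)$ maximal subpaths, one could choose three of the corresponding hammocks $H_{j_1}, H_{j_2}, H_{j_3}$ with $r_{j_\ell}$ each in the top of $A_i$ and $V(T'_{j_\ell}) \cap P \neq \emptyset$, whose occurrences on $P$ are interleaved by intervening vertices outside $\bigcup_j V(T'_j)$. Combining, for each $\ell$, a cross edge and a path inside $H_{j_\ell}$ joining $r_{j_\ell}$ to a point of $V(T'_{j_\ell}) \cap P$, together with the $\TBFS$-paths from $r$ down to the lca vertices of the three hammocks and the subpath of $P$ that threads through the three selected points, one assembles a clawed cycle in $G$, contradicting the assumption that $G$ is series-parallel. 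Summing the $O(1)$ contributions then yields the claimed bound.
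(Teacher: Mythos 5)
Your approach differs substantially from the paper's. The paper's proof tracks distances directly: for each of the three ranges $[i\Delta,i\Delta+c\Delta]$, $[i\Delta-c\Delta,i\Delta)$, and $(i\Delta+c\Delta,(i+1)\Delta-c\Delta)$, it shows that if an edge of $P$ is cut there then $P$ exits some hammock $H_j$ at a vertex $v'$ with $d(r,v')<d(r,r_j)$ (because a monotone path exits upward in $\TBFS$ through $r_j$ or through $\lca(r_j,r_j')$, both of which sit above $H_j$), and since $P$ is monotone of length at most $c\Delta$, it never re-enters that range. This yields at most one cut per range with no counting of hammocks and crucially leverages the lca-respecting property. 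Your setup---localizing to a single boundary, noting the disjointness of $\{V(T'_j)\}_j$, and reducing to counting maximal subpaths of $S^+\cap P$ and $S^-\cap P$---is sound, but it replaces the paper's distance-tracking step with a combinatorial count that you then want to bound via a clawed cycle.

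That last step is the gap. You claim that three hammocks $H_{j_1},H_{j_2},H_{j_3}$ with roots in the top band of $A_i$, whose $T'_{j_\ell}$ all meet $P$ interleaved by intervening vertices, give a clawed cycle, but the assembly is never carried out, and there are real obstructions. The pieces you name---the $\TBFS$-paths from $r$ to $\lca(r_{j_\ell},r'_{j_\ell})$ and the subpath of $P$---all live inside the tree $\TBFS$, so they can coincide rather than form a cycle plus three disjoint claws; for example $\lca(r_{j_\ell},r'_{j_\ell})$ can itself lie on $P$, since $P$ exits $T'_{j_\ell}$ upward precisely through that vertex. Moreover, three hammocks threaded by a single monotone path form a monotone chain in $T_{\mcH}$ with lcas nested along one root-to-leaf path of $\TBFS$, a configuration that by itself is benign; you would need to combine it with the roots-in-band constraint and the $c\Delta$-length bound in a genuinely new way (more in the spirit of the sequence analysis in \Cref{lem:parChildEquiv}) to extract a $K_4$-minor, and it is not clear this can be done. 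The paper's proof avoids this difficulty entirely because, after a single cut, the lca-respecting property forces the exit distance below the active band, so no further cuts can occur. As written, your key counting step is not established.
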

\begin{proof}
First, suppose that the path $P$ is disjoint from the base tree $T_0$. We handle the base tree $T_0$ at the end of the proof.

Suppose that for some $i$, the path $P$ has an edge $(u,v)$ cut by $\mathcal A'$ that satisfies $d(r,u),d(r,v)\in[i\Delta,i\Delta+c\Delta]$. The only way for  this to happen is if $u$ and $v$ belong to two different hammocks $H_j,H_\ell$, and step~(a)ii.\ was applied to one hammock (say $H_j$) but not the other. In that case, we have $r_j\notin A_j$, so $d(r,r_j)<i\Delta$. Let $v'$ be the first vertex on $P$ after exiting $H_j$. Then, we must have $d(r,v')<d(r,r_j)$ regardless of whether $P$ exits $H_j$ via $r_j$ or via $\lca(r_j,r_j')$. It follows that $d(r,v')<d(r,r_j)<i\Delta$, and since $P$ has length at most $c\Delta$ for $c=1/3$, it will never again visit a vertex whose distance from $r$ is in $[i'\Delta,i'\Delta+c\Delta]$ for some integer $i'$.

Suppose next that $P$ has an edge $(u,v)$ cut by $\mathcal A'$ that satisfies $d(r,u),d(r,v)\in[i\Delta-c\Delta,i\Delta)$. The only way for this to happen is if $u$ and $v$ belong to two different hammocks $H_j,H_\ell$, and step~(a)i.\ was applied to one hammock (say $H_j$) but not the other.  In that case, we have $d(r,r_\ell)<i\Delta-c\Delta$. By the same argument from before, $P$ must exit $H_\ell$ at some vertex $v'$ satisfying $d(r,v')<d(r,r_\ell)<i\Delta-c\Delta$, and it will never again visit a vertex whose distance from $r$ is in $[i'\Delta-c\Delta,i'\Delta)$ for some integer $i'$.

Finally, it is easy to see by steps~(a)i.~and~(a)ii. that $P$ cannot have any edge $(u,v)$ cut by $\mathcal A'$ that satisfy $d(r,u),d(r,v)\in(i\Delta+c\Delta,(i+1)\Delta-c\Delta)$.

It follows that $P$ cuts at most one edge in each of the ranges $\bigcup_{i\in\mathbbm Z}[i\Delta,i\Delta+c\Delta]$, $\bigcup_{i\in\mathbbm Z}[i\Delta-c\Delta,i\Delta)$, and $\bigcup_{i\in\mathbbm Z}(i\Delta+c\Delta,(i+1)\Delta-c\Delta)$, and at most $O(1)$ edges in between the ranges.

Finally, we consider the case when $P$ intersects the base tree $T_0$. Let $P_0=P\cap T_0$ be the subpath of $P$ inside $T_0$, and let $P'=P-P_0$ be the remainder of $P_0$, the latter of which has $O(1)$ edges cut by $\mathcal A'$ from before.  Since $T_0$ is a tree, $P_0$ cuts at most one edge in $T_0$ by construction, so in total, $P=P_0\cup P'$ has $O(1)$ edges cut by $\mathcal A'$.
\end{proof}

\begin{lemma}\label{lem:cross edge path}
For each shortest cross edge path $P$ of length less than $c\Delta$, the path has at most $O(1)$ edges cut by $\mathcal A'$.
\end{lemma}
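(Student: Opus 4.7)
The plan is to leverage the same three-tool kit already in the paper: the hammock decomposition contains $P$, the single-hammock lemma controls cross edges within each hammock, and the monotone lemma controls tree portions. The main geometric observation that makes everything align is that $c = 1/3$ confines $P$'s cuts to a single boundary region.

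First, I would observe that since $P$ has length less than $c\Delta = \Delta/3$, for any vertices $u,v \in P$ we have $|d(r,u) - d(r,v)| \le d(u,v) < c\Delta$, so $V(P)$ lies in a depth band of width at most $c\Delta$. The modification regions around the boundary hyperplanes $\{d(r,\cdot) = i\Delta\}$ each have width $2c\Delta$, separated by gaps of width $\Delta - 2c\Delta = \Delta/3 > 0$. Hence every cut that $\mathcal{A}'$ makes on $P$ must straddle a single fixed boundary, say between $A_i$ and $A_{i+1}$, and, in particular, no edge of $P$ with both endpoints outside $[i\Delta-c\Delta,\, i\Delta+c\Delta]$ is cut.

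Next, because $P$ is a shortest cross edge path, $G[\mathcal{H}]$ contains $P$ by the defining property of a hammock decomposition (\Cref{dfn:PCHD}). By \Cref{lem:pathStruct}, $P$ traverses a sequence of hammocks $H_{j_0}, H_{j_1}, \ldots, H_{j_l}$ along a path in $T_{\mathcal{H}}$, with the sub-path $P_x := P \cap H_{j_x}$ lying entirely in $H_{j_x}$ and with articulation-vertex transitions. Each $P_x$ is itself a shortest path, so by \Cref{lem:single-hammock} it contains at most $2$ cross edges. Consequently, $P$ decomposes into a sequence of monotone $T_{\text{BFS}}$-subpaths (each of length at most $c\Delta$) interleaved with cross edges, where the monotone subpaths contribute $O(1)$ cuts each by \Cref{lem:monotone}.

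The main obstacle, and the thing I expect to require the most care, is to bound the total number of cuts when $l$ is large so that we cannot simply charge $O(1)$ cuts per hammock. I would handle this by analyzing the chop construction inside each hammock: for $H_{j_x}$, only vertices of $T_{j_x}'$ near the boundary $i\Delta$ get moved, and the move direction is determined entirely by where $r_{j_x}$ sits. I would argue that for an intermediate hammock in the traversal, the articulation vertices at which $P$ enters and exits are hammock roots, so the portion of $P_x$ inside the modification band interacts with at most one "side" of $H_{j_x}$'s perturbation. Combining this with the fact that only cross edges actually present on $P$ can be affected by the $T_{j_x}'$-move across hammocks — and there are at most $2(l+1)$ such edges total, each individually either contributing a cut or not — I would charge each cut either to one of a constant number of maximal monotone subpaths of $P$ (via \Cref{lem:monotone}) or to one of the at most two cross edges per hammock that actually lies in the perturbation band. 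A short amortization shows the latter count is $O(1)$ by reducing to the situation of \Cref{lem:single-hammock}, because a long chain of hammocks contributing cross-edge cuts would give a shortest path with too many cross edges inside one "super-hammock," contradicting the single-hammock bound or producing a $K_4$ minor via a clawed cycle as in \Cref{lem:single-hammock}'s proof. Thus $P$ is cut $O(1)$ times by $\mathcal{A}'$.
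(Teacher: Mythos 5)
Your reduction steps are the same as the paper's: you invoke property~(1) of \Cref{dfn:PCHD} to place $P$ in $G[\mathcal H]$, \Cref{lem:pathStruct} to structure the hammock traversal, \Cref{lem:single-hammock} to bound cross edges per hammock, and \Cref{lem:monotone} for tree portions. You also correctly pinpoint the real difficulty: the path can traverse an unbounded number $l$ of hammocks, so charging $O(1)$ cuts per hammock or per monotone subpath does not close the argument (you yourself count up to $2(l+1)$ cross edges and up to $O(l)$ maximal monotone pieces).

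However, the resolution you propose for that difficulty does not work. Your ``super-hammock'' contradiction appeals to \Cref{lem:single-hammock} as if a long chain of hammocks could be treated as one hammock, but that lemma's proof is intrinsically about the two-tree structure of a \emph{single} hammock (it builds the clawed cycle from three cross edges sharing the same pair of hammock trees); a union of consecutive hammocks along $T_{\mathcal H}$ is not a hammock, so there is no contradiction to be had, and a shortest path passing through many hammocks may legitimately contain many cross edges. The paper closes the gap differently and more precisely: after isolating the subpath $P\cap H_m$ at the minimum-depth hammock $H_m$ (bounded by \Cref{lem:single-hammock}), each remaining arm $P_1$ is monotone in $T_{\mathcal H}$; one then takes the \emph{first} cut edge $(x,y)$ of $P_1$, identifies the hammock $H_j$ containing $y$, and proves via a two-case distance analysis (based on $d(r,r_j)$ versus $i\Delta-c\Delta$, on whether step~(a)i.\ or (a)ii.\ fired, and on the fact that leaving $H_j$ requires passing through $r_j$ or $\lca(r_j,r_j')$) that \emph{every} cut edge of $P_1$ lies in $H_j$, whence \Cref{lem:single-hammock} finishes. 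Nothing in your sketch establishes that all cuts of an arm are confined to one hammock, and without that confinement the amortization you gesture at has no base to amortize against.
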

\begin{proof}
By property~(1) of \Cref{dfn:PCHD}, the shortest cross edge path $P$ is contained in $G[\mathcal H]$, where $\mathcal H$ is the set of hammocks.
Let $H_m$ be the hammock of lowest depth (in $\mathcal T$) that shares an edge with $P_m$. Remove the subpath $P_m\cap H_m$ from $P_m$, which splits it further into two paths $P_1,P_2$, each of which is monotone with respect to the tree structure $\mathcal T$ in the following sense: the set of hammocks that share edges with $P_1$ form a monotone path in the tree $\mathcal T$ of hammocks, and the same for $P_2$. By \Cref{lem:single-hammock}, $P_m\cap H_m$ is cut at most twice by $\mathcal A'$. It suffices to argue that $P_1$ is cut $O(1)$ times by $\mathcal A'$; the argument for $P_2$ will be symmetric.

Let $(x,y)$ be the first edge on $P_1$ that is cut by $\mathcal A'$. (If none exist, then we are done.) Suppose that $x\in A'_{i+1}$ and $y\in A'_{i}$, and let $H_j$ be a hammock containing $y$. Our goal is to show that all edges on $P_1$ that are cut by $\mathcal A'$ must belong to $H_j$. Assuming this, we are done by \Cref{lem:single-hammock}.


We have two cases:
\begin{enumerate}
\item $x=r_\ell$ is an endpoint for a child hammock $H_\ell$ of $H_j$. In this case, regardless of whether step~(a)i.\ is applied, we must have $i\Delta-c\Delta\le d(r,r_\ell)\le i\Delta$. We first claim that $P_1$ cannot cross between $A'_{i}$ and $A'_{i-1}$. Suppose otherwise that it contains an edge $(u,v)$ with $u\in A'_{i}$ and $v\in A'_{i-1}$. Then, we must have $d(r,v)\le(i-1)\Delta+c\Delta$ (regardless if step (a)ii.\ was applied). So the path $P_1$ has length at least $d(r,r_\ell)-d(r,v)\ge\Delta-2c\Delta = c\Delta$, contradicting the assumption that $P_1$ has length less than $c\Delta$.

We now claim that $P_1$ cannot cross between $A'_i$ and $A'_{i+1}$ in hammock $H_j$ or beyond. In this case, we must have $d(r,r_j)<i\Delta-c\Delta$, since otherwise step~(a)i.\ would add all vertices in $V(H_j)\cap A_i$ to $A_{i+1}'$. Therefore, in order to cross between $A'_i$ and $A'_{i+1}$ in $H_j$, the path $P_1$ must reach a vertex $v$ with $d(r,v)>i\Delta+c\Delta$ since step~(a)ii.\ was applied to $A_i$. But then the path $P_1$ has length at least $d(r,v)-d(r,r_\ell)\ge c\Delta$, contradicting the assumption that $P_1$ has length less than $c\Delta$.
\item Hammock $H_j$ contains both $x,y$. In this case, we must have $d(r,r_j)<i\Delta-c\Delta$, since otherwise step~(a)i.\ would add all vertices in $V(H_j)\cap A_i$ to $A_{i+1}'$. The argument that $P_1$ cannot cross between $A'_i$ and $A'_{i-1}$ is identical. We now claim that $P_1$ cannot cross between $A'_i$ and $A'_{i+1}$ in any hammock after $H_j$. We have $d(r,x)\ge i\Delta$ (regardless if step~(a)ii.\ was applied), and in order for $P_1$ to travel to any hammock after $H_j$, it must reach either $r_j$ or $\lca(r_j,r_j')$. But then it has length at least $d(r,x)-d(r,r_j)\ge\Delta$, a contradiction.
\end{enumerate}
\end{proof}

With the above lemmas in hand, we finally prove \Cref{lem:sec9-scattering}.
Consider two vertices $u,v$ of distance at most $\Delta$ in $G$. We want to show that the (unique) shortest $u$--$v$ path $P_0$ has $O(1)$ edges cut by $\mathcal A'$. Partition $P_0$ into $O(1)$ many subpaths of length less than $c\Delta$ each. We will transform each subpath into a different shortest path between the two endpoints, and which has $O(1)$ edges cut by $\mathcal A'$.

We further subdivide path $P$ as follows. First, if $P$ contains no cross edges, then it consists of at most two monotone paths, so by \Cref{lem:monotone}, it is cut $O(1)$ times by $\mathcal A'$. Therefore, for the rest of the proof, we assume that $P$ has at least one cross edge. Let $P_l$ be the subpath of $P$ from the beginning to just before the first cross edge of $P$, and let $P_r$ be the subpath that begins just after the last cross edge of $P$ and continues until the end. Let $P_m$ be the remaining subpath after removing $P_l,P_r$, so that $P_l,P_m,P_r$ partition the edges of $P$. The subpaths $P_l,P_r$ consist of at most two monotone paths in $T_{BFS}$, so by \Cref{lem:monotone}, they are each cut $O(1)$ times by $\mathcal A'$. The subpath $P_m$ is a cross edge path, so by \Cref{lem:cross edge path}, it is cut $O(1)$ times.

%

\section{Future Work}
We conclude with directions for future work. First, we reiterate the main open question in this area as previously stated by many prior works:
\begin{enumerate}
    \item For fixed $h$ does every $K_h$-minor-free graph admit an $O(1)$-SPR solution?
\end{enumerate}
The most exciting next step in settling this long-standing open question would be to tackle the planar case. In particular, if one could demonstrate the existence of a forest-like structure similar to hammock decompositions for planar graphs then the techniques introduced in this work would solve the planar case. Notably, it is not too hard to see that the reduction of \citet{filtser2020scattering} of $O(1)$-SPR to $O(1)$-scattering partitions works even if every shortest path is only approximately preserved by the scattering partition. In particular, the reduction of \citet{filtser2020scattering} still works if one can only provide a low-diameter partition where for any vertices $u$ and $v$ there is \emph{some} path between $u$ and $v$ with length at most $O(1) \cdot d_G(u,v)$ which is incident to at most $O(1)$ parts. Consequently, the arguments presented in this paper show that one need only demonstrate the existence of a forest-like structure which \emph{approximates} the distances between cross edges up to a constant to make use of the scattering chops introduced in this work. Thus, an extremely promising next avenue would be to show that every planar graph has a forest-like subgraph which approximately preserves the distances between all cross edges.

\bibliographystyle{plainnat}
\bibliography{abb,main}

\appendix

\section{Metric Nested Ear Decompositions from Hammock Decompositions}\label{sec:metricNestedEar}

A classic result of \citet{eppstein1992parallel} shows that a graph is a 2-vertex-connected series-parallel graph if it has a ``nested ear decomposition.'' These nested ear decompositions need not be unique and, in general, may have little to do with the metric induced by the series-parallel graph. In this section we show how to apply our hammock decompositions to find, among all possible nested ear decompositions, a nested ear decomposition with strong properties regarding how the metric and the nested ear decomposition interact.

We now give a series of definitions and theorems which formalize the nested ear decompositions of \citet{khuller1989ear} and \citet{eppstein1992parallel}.
\begin{definition}[Open Ear Decompositions]
    An ear is a path whose endpoints may coincide. An ear decomposition is a partition of the edges of a graph into (the edges of) ears $P_1, P_2, \ldots$ where for each $i$ $\intV(P_i)$ is disjoint from all vertices in $P_1, \ldots, P_{i-1}$ and the two endpoints of $P_i$ are contained among the vertices of $P_1, \ldots, P_{i-1}$. An ear decomposition is open if the two endpoints of each $P_i$ for $i \geq 2$ are distinct.
\end{definition}
\citet{khuller1989ear} introduced the notion of a tree ear decomposition and \citet{eppstein1992parallel} strengthened this to the idea of a nested ear decomposition.
\begin{definition}[Tree, Nested Ear Decomposition]
    A tree ear decomposition is an open ear decomposition where for $i > 1$ we have that $P_i$ has both endpoints in the same $P_j$. A nested ear decomposition is a tree ear decomposition where for each $P_j$ the collection of ears with both endpoints in $P_i$ form a collection of nested intervals.
\end{definition}

Recall that a graph is 2-vertex connected if the deletion of any one vertex still leaves the graph connected. 
\begin{theorem}[\citet{eppstein1992parallel}]
    A 2-vertex-connected graph is series-parallel if and only if it has a nested ear decomposition.
\end{theorem}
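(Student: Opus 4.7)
The plan is to prove both directions of the biconditional separately. The forward direction (nested ear decomposition implies series-parallel) proceeds by induction on the number of ears, while the backward direction (2-vertex-connected series-parallel implies nested ear decomposition) leverages the hammock decomposition machinery developed earlier in this paper.

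For the forward direction, I would induct on the number of ears $k$. The base case $k=1$ is a single cycle, which is series-parallel. For the inductive step, suppose $G_{k-1} = P_1 \cup \cdots \cup P_{k-1}$ is series-parallel; I would argue that adding a nested ear $P_k$ preserves series-parallelness by showing $G_k$ contains no $K_4$-minor, equivalently (by the clawed cycle characterization given earlier) no clawed cycle. A clawed cycle in $G_k$ would consist of a cycle together with three internally vertex-disjoint paths meeting at a common root vertex; any such configuration would force two ears with endpoints on a common parent ear to be ``crossing'' rather than nested, contradicting the nesting hypothesis.

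For the backward direction, I would compute a hammock decomposition $(T_0, \mcH, E_p)$ of $G$ with respect to some BFS tree $\TBFS$ rooted at an arbitrary vertex $r$. Using 2-vertex-connectedness, I expect $T_0$ to consist of just $r$ (or some minimal structure), and $\mcH$ to consist of a single tree of hammocks. The ears would be built via a top-down traversal of the tree representation $T_\mcH$: the root hammock contributes a first ``outer'' cycle as the initial ear $P_1$; each additional cross edge within a hammock contributes a path forming a new ear, with endpoints on previously built ears. The lca-respecting property (\Cref{dfn:lcaRespecting}) of the decomposition would naturally yield the tree ear decomposition structure, with each descendant hammock's ears attaching to its parent hammock's ears along the articulation vertex shared between parent and child hammocks.

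The main obstacle will be verifying the nesting condition: that ears sharing a common parent ear form a nested collection of intervals rather than crossing ones. I expect to rely on the crucial forest-of-hammocks property that every cycle in $G[\mcH]$ is contained within a single hammock (\Cref{dfn:forsHam}), together with the clawed cycle characterization. Specifically, a violation of nesting would produce two ears whose four endpoints interleave along a common parent ear; combined with the cycles formed through these ears one could construct a clawed cycle, contradicting series-parallelness. A secondary concern is ensuring that every edge of $G$ appears in some ear, which follows from the fact that the hammock decomposition edge-partitions $G$.
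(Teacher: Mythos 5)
The paper does not actually prove this statement; it is cited verbatim from \citet{eppstein1992parallel} and used as a black box. The only related argument in the paper is the proof of the \emph{strengthened} theorem immediately following it in Appendix~\ref{sec:metricNestedEar}, which establishes only the ``series-parallel $\Rightarrow$ nested ear decomposition'' direction via hammock decompositions. Your backward-direction sketch does align closely with that argument: you use the hammock decomposition $(T_0,\mcH,E_p)$, observe that $2$-vertex-connectedness collapses the structure to a single tree of hammocks, process hammocks in a top-down (BFS) order of $T_\mcH$, and build each ear from a cross edge plus two monotone tree paths to previously placed vertices. One small factual slip: the paper shows $T_0$ is a \emph{star} centered at $r$ (with the children of $r$ as hammock roots), not merely the singleton $\{r\}$. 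You also omit the specific choice of cross edge (the one with no other cross edge having a strictly lower endpoint on either side), which the paper needs in order to argue the nesting condition and to find a clawed cycle when no such candidate exists; without pinning down which cross edge is picked, the nesting verification remains informal. But the overall route is the same.

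The forward direction is where the genuine gap lies. It is not addressed anywhere in the paper, so you are on your own, and your central claim --- that any clawed cycle in $G_k$ ``would force two ears with endpoints on a common parent ear to be crossing'' --- is asserted without justification and is not obviously true. A clawed cycle consists of a cycle $C$ together with three internally disjoint paths from a shared root to three distinct vertices of $C$; such a configuration can spread across many ears at many different depths in the ear tree, and it is not immediate that it localizes to two sibling ears on one parent. In particular, the cycle $C$ itself need not lie within two ears, and the ``root'' of the claw may sit deep in a descendant ear. The standard proof of this direction (and the one implicitly behind Eppstein's theorem) does not argue minor-freeness directly: instead it shows that a nested ear decomposition yields a two-terminal series-parallel composition --- the base ear is a cycle (a parallel composition of two paths), and each ear attached between $a,b$ on its parent is composed in parallel with the $a$--$b$ segment of the parent together with the already-processed nested ears inside that interval, with nesting guaranteeing that this recursion is well defined. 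To make your clawed-cycle approach work, you would need to show how to localize an arbitrary clawed cycle to a contradiction with nesting; as written, that inference is a hole in the proof.
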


Concluding, we apply our hammock decompositions to strengthen the result of \citet{eppstein1992parallel} to respect the input metric in the following way. Recall that a shortest cross edge path is a path that starts and ends with a cross edge and is also a shortest path.
\begin{theorem}
    Let $G = (V, E)$ be a 2-vertex-connected series-parallel graph with unit weights and unique shortest path lengths. Fix a root $r \in V$ and a BFS tree $\TBFS$ with cross edges $E_c := E \setminus E(\TBFS)$. Then there is a collection of edges $E_p \subseteq E(\TBFS)$ such that $G$ has a nested ear decomposition $P_1, P_2, \ldots$ where
    \begin{enumerate}
        \item Each $P_i$ is of the form $P_u \oplus \{u, v\} \oplus P_v$ where $P_u$ and $P_v$ are monotone paths from $u$ and $v$ towards $r$ and $\{u,v\} \in E_c$.
        \item $|E(P_i) \cap E_p| \leq 1$ and $(V, E \setminus E_p)$ contains all shortest cross edge paths. Furthermore after deleting all ears which contain an edge of $E_p$ any two ears in the same connected component contain cross edges with the same lca.
    \end{enumerate}
\end{theorem}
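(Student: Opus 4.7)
My plan is to build the nested ear decomposition $P_1, P_2, \ldots$ directly from a hammock decomposition $(T_0, \mcH, E_p)$ guaranteed by \Cref{thm:hamDecPC}, producing exactly one ear per cross edge in $E_c$. First, I would observe that 2-vertex-connectedness forces $T_0$ to consist of just the root $r$: any edge $\{u, v\} \in E(T_0)$ with $v$ the child would mean $\TBFS(v)$ contains no cross edges, making $u$ an articulation vertex; similarly, the forest $\mcH$ must be a single rooted tree of hammocks. I would then process the hammocks in a top-down (BFS) order on this tree, generating the ears of a hammock $H_j$ only after those of its parent have been placed in the sequence, so that the ``boundary'' vertices a child ear needs are already covered.

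\textbf{Per-hammock construction.} For a hammock $H_j$ with cross edges $e^{(j)}_1, \ldots, e^{(j)}_{m_j} \in C_j$ (with $e^{(j)}_i = \{u_i, v_i\}$, $u_i \in T_j$, $v_i \in T_j'$), I would first fix an ordering of $C_j$ so that the $u_i$'s appear along $T_j$ in nested order (and the $v_i$'s along $T_j'$ compatibly). The first ear $P_1^{(j)}$ uses $e^{(j)}_1$: if $H_j$ is the root hammock, this ear is the cycle $\TBFS(\lca(r_j, r_j'), u_1) \oplus e^{(j)}_1 \oplus \TBFS(v_1, \lca(r_j, r_j'))$, which I designate as $P_1$; otherwise it is
\begin{align*}
    \TBFS(u_1, r_j) \ \oplus\ e^{(j)}_1 \ \oplus\ \TBFS(v_1, r_j') \ \oplus\ e_j,
\end{align*}
where $e_j \in E_p$ is the parent edge of $r_j'$ reaching $\lca(r_j, r_j')$. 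By the lca-respecting property of $\mcH$, both $r_j$ and $\lca(r_j, r_j')$ lie in $H_j$'s parent hammock, whose ears have already been placed. For $i \geq 2$, I would take
\begin{align*}
P_i^{(j)} \;=\; \TBFS(u_i, a_i)\ \oplus\ e^{(j)}_i\ \oplus\ \TBFS(v_i, a_i'),
\end{align*}
where $a_i$ and $a_i'$ are chosen to be the lcas of $u_i$ (resp.\ $v_i$) with the appropriate vertex from a previously-placed ear, so that both endpoints land on $P_1^{(j)}$ and the ears within $H_j$ nest.

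\textbf{Verification and the main obstacle.} Property~(1) holds by construction. For property~(2), each edge of $E_p$ appears in exactly one ear, namely the first ear of the hammock whose parent edge it is, and these ``bridging'' ears are the only connection between ear-segments of distinct hammocks; deleting them separates the remaining ears into components indexed by hammocks, and within each hammock all ears correspond to cross edges of $C_j$, which share the common lca $l(C_j)$. That $(V, E \setminus E_p)$ contains every shortest cross edge path is inherited directly from the fact that $G[\mcH]$ does and $E_p \cap E(\mcH) = \emptyset$. The main obstacle I expect is the \emph{nested} property: one must show $C_j$ admits an ordering under which the $u_i$'s and $v_i$'s form compatibly nested intervals along $T_j$ and $T_j'$, so that the chosen ancestors $a_i, a_i'$ really do carve out nested subintervals of $P_1^{(j)}$. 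This is where $K_4$-minor-freeness is essential: if two cross edges $\{u, v\}, \{u', v'\} \in C_j$ ``cross'' in the sense that $u' \prec u$ but $v \prec v'$, then the hammock-fundamental cycle through them together with the three disjoint paths $\TBFS(r, r_j)$, $\TBFS(r, r_j')$, and $\TBFS(r, l(C_j))$ (or small perturbations thereof) produce a clawed cycle, contradicting series-parallelness via \Cref{dfn:clawedCyc}. Ruling this out establishes nestability and completes the construction.
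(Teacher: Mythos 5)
Your overall strategy mirrors the paper's: deduce from the hammock decomposition (\Cref{thm:hamDecPC}) and 2-vertex-connectivity that $T_0$ degenerates and $\mcH$ is a single rooted tree of hammocks, then produce one ear per cross edge while processing hammocks in BFS order so that parent edges in $E_p$ become the ``bridging'' ears, and finally argue that $K_4$-minor-freeness forces nestedness. That high-level plan is sound, and your reading-off of property~(2) from the hammock decomposition is essentially what the paper does. (Two very minor slips: 2-connectivity forces $T_0$ to be a \emph{star} centered at $r$, not the singleton $\{r\}$, since the root hammock's tree root $r_k$ can be a child of $r$ with $\{r,r_k\}\in E(T_0)$; and your claim that $\{u,v\}\in E(T_0)$ implies $\TBFS(v)$ is cross-edge-free is not how it's forced --- there can be cross edges with one endpoint below $v$. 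Neither slip is damaging.)

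The real gap is exactly where you flag ``the main obstacle.'' Your claim that two cross edges $\{u,v\},\{u',v'\}\in C_j$ with $u'\prec u$ and $v\prec v'$ force a clawed cycle is \emph{false}. Take $\TBFS$ consisting of two length-3 paths $\ell\!-\!a\!-\!b\!-\!c$ and $\ell\!-\!a'\!-\!b'\!-\!c'$ hanging off the root $\ell$, with cross edges $\{b,c'\}$ and $\{c,b'\}$. These lie in the same lca-equivalence class (lca $\ell$; $l(b,c)=b\prec\ell$ and $l(c',b')=b'\prec\ell$) and they cross in your sense, yet the graph is 2-connected and $K_4$-minor-free: the only degree-$3$ vertices are $b$ and $b'$, and one checks by hand that no four pairwise-adjacent connected supernodes exist. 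The three paths $\TBFS(r,r_j),\TBFS(r,r_j'),\TBFS(r,l(C_j))$ you propose to use as the claw are also not vertex-disjoint (they share the $r$--$l(C_j)$ prefix), and no perturbation produces a claw here because the hammock-fundamental cycle $b\!-\!c\!-\!b'\!-\!c'\!-\!b$ can only be reached from above via the two vertices $b$ and $b'$. So ``compatible nestedness of $C_j$'' is simply not available, and you cannot build the ordering $e^{(j)}_1,\dots,e^{(j)}_{m_j}$ you want. What does hold is a weaker fact that rescues the construction: although the cross-edge endpoints $(u_i,v_i)$ may cross, the \emph{ear endpoints} $a_i,a_i'$ (the places where the ear first meets previously-placed material) do not --- in the example the second ear is $b\!-\!c\!-\!b'$ with both endpoints on the first, so nestedness is trivial there. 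A correct argument must be phrased at the level of ear endpoints and rule out, via clawed cycles, having \emph{three or more} interleaved intervals (which does give $K_4$: three mutually crossing cross edges produce one), rather than forbidding all crossing of the cross edges themselves. For what it's worth, the paper's own appendix proof is terse at exactly this point (``a similar argument shows\ldots''), and the stated ``either $u\prec u'$ or $v\prec v'$'' selection rule also fails on the example above --- so your proposal reproduces, rather than introduces, the weak spot in this argument; but as written the nestedness step is not established.
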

\begin{proof}
    Let $(T_0, \mcH, E_p)$ be the hammock decomposition (\Cref{dfn:PCHD}) as guaranteed by \Cref{thm:hamDecPC}.
    
    We claim that since $G$ is 2-vertex-connected it must be the case that $T_0$ is a star with center $r$ and $\mcH$ consists of a single tree of hammocks where $r_k$ and $r_k'$ are children of $r$. To see why, notice that since $G$ is $2$-vertex connected, for any children $u$ and $v$ of $r$ there must be some cross edge with one endpoint in $\TBFS(u)$ and another in $\TBFS(v)$ (since otherwise the deletion of $r$ would separate the graph into 2 connected components). It follows that such a cross edge will belong to a tree of hammocks which contains both $u$ and $v$ as hammock roots. Applying this to all pairs of children of $r$ shows that all children of $r$ must belong in the same hammock, thereby showing $T_0$ is a star and $\mcH$ consists of a single tree of hammocks.
    
    Continuing, we can construct the claimed nested ear decomposition as follows. We let $D$ be our nested ear decomposition so far. We will process hammocks in $\mcH$ in a BFS order according to the parent-child relationships induced between hammocks in $\mcH$. To process a single hammock $H_i$ we simply find a candidate cross edge $\{u, v\} = e \in D \setminus E(H_i) \cap E_c$ and add to our nested ear decomposition $D$ the ear consisting of the concatenation of $P_u$, $P_v$ and $e$ where $P_u$ is the path gotten by going from $u$ towards $v$ until hitting a vertex in $D$ and $P_v$ is defined symmetrically. We always choose as our cross edge $\{u,v\}$ an edge such that there does not exist another cross edge $\{u', v'\}$ in $E(H_i)$ with either $u \prec u'$ or $v \prec v'$.
    
    It is easy to verify that if no candidate cross edge with the above properties exists then there exists a clawed cycle.
    
    The above construction satisfies properties (1) and (2) in the above theorem statement by the properties of our hammock decompositions and so it remains only to argue that the result is a nested ear decomposition.
    
    First, we observe that the result of the above process will indeed partition all edges. Every cross edge is included by construction. An edge in $\TBFS$ will be added to $D$ the first time any cross edge with an endpoint below it is processed. Since our graph is 2-vertex connected every edge in $\TBFS$ has some cross edge with an endpoint below it and so every edge in $\TBFS$ will end up in $D$
    
    Our ear decomposition will be open by definition of our hammock decompositions and, in particular, by the fact that the roots of a hammock are unrelated and by the fact that $\mcH$ is lca-respecting.
    
    Next, we verify that the result of this process is a tree ear decomposition. Suppose for the sake of contradiction that the result of our construction is not a tree ear decomposition. In particular, suppose $P_l$ is an ear with one endpoint in $P_i$ and another endpoint in $P_j$. By definition of our construction it must be the case that $P_i$ and $P_j$ are derived from cross edges, say $e_i$ and $e_j$, which belong to the same hammock. However, it then follows that the hammock fundamental cycle formed by $e_i$ and $e_j$ can be used to construct a clawed cycle, a contradiction. A similar argument shows that if our ear decomposition fails to be nested then we can find a clawed cycle.
\end{proof}

\section{Hammock Decomposition Construction Figures}
In this section we give all of the illustrations of the construction of our hammock decompositions on a single graph in \Cref{fig:allConstructionsAtOnce}.
\begin{figure}
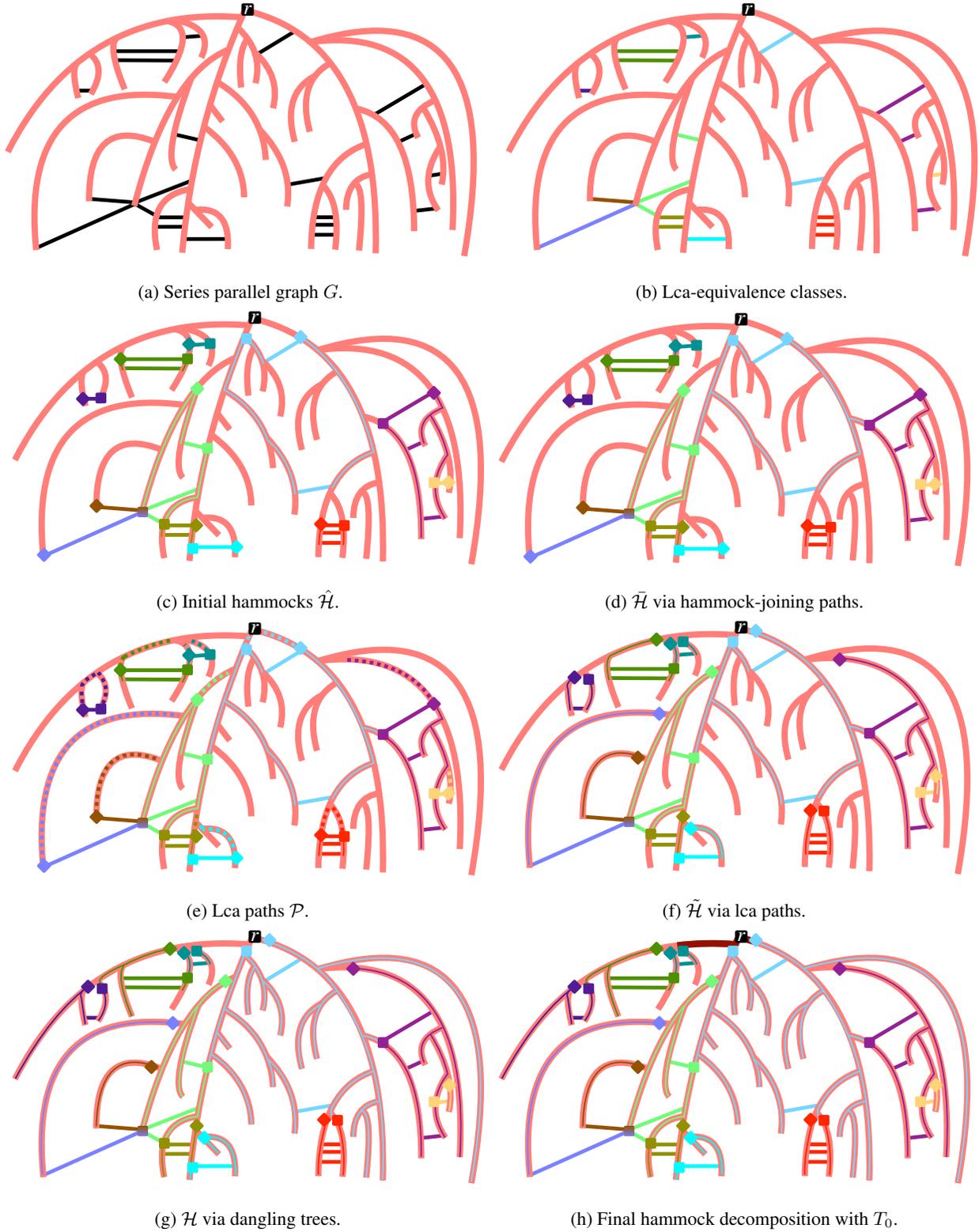

    \centering
    \begin{subfigure}[b]{0.49\textwidth}
        \centering
        \includegraphics[width=\textwidth,trim=0mm 0mm 0mm 0mm, clip]{./figures/HDCon1.pdf}
        \caption{Series parallel graph $G$.}
    \end{subfigure}
    \hfill
    \begin{subfigure}[b]{0.49\textwidth}
        \centering
        \includegraphics[width=\textwidth,trim=0mm 0mm 0mm 0mm, clip]{./figures/HDCon2.pdf}
        \caption{Lca-equivalence classes.}
    \end{subfigure}
    \begin{subfigure}[b]{0.49\textwidth}
    \centering
    \includegraphics[width=\textwidth,trim=0mm 0mm 0mm 0mm, clip]{./figures/HDCon3.pdf}
    \caption{Initial hammocks $\hat{\mcH}$.}
    \end{subfigure}
    \begin{subfigure}[b]{0.49\textwidth}
        \centering
        \includegraphics[width=\textwidth,trim=0mm 0mm 0mm 0mm, clip]{./figures/HDCon4.pdf}
        \caption{$\bar{\mcH}$ via hammock-joining paths.}
    \end{subfigure}
    \begin{subfigure}[b]{0.49\textwidth}
        \centering
        \includegraphics[width=\textwidth,trim=0mm 0mm 0mm 0mm, clip]{./figures/HDCon5.pdf}
        \caption{Lca paths $\mcP$.}
    \end{subfigure}
    \begin{subfigure}[b]{0.49\textwidth}
        \centering
        \includegraphics[width=\textwidth,trim=0mm 0mm 0mm 0mm, clip]{./figures/HDCon6.pdf}
        \caption{$\tilde{\mcH}$ via lca paths.}
    \end{subfigure}
    \begin{subfigure}[b]{0.49\textwidth}
        \centering
        \includegraphics[width=\textwidth,trim=0mm 0mm 0mm 0mm, clip]{./figures/HDCon7.pdf}
        \caption{$\mcH$ via dangling trees.}
    \end{subfigure}
        \begin{subfigure}[b]{0.49\textwidth}
        \centering
        \includegraphics[width=\textwidth,trim=0mm 0mm 0mm 0mm, clip]{./figures/HDCon8.pdf}
        \caption{Final hammock decomposition with $T_0$.}
    \end{subfigure}
    \caption{An illustration of the construction of our hammock decomposition on a series-parallel graph.}\label{fig:allConstructionsAtOnce}
\end{figure}

\end{document}